\keywords{
  distributed security setting, information flow, declassification, code mobility, flow policy, distributed non-interference, migration control, global computing}
\theoremstyle{plain} %
\newcommand{\off}[1]{ }
\newcommand{\backup}[1]{\off{1}}
\newcommand{\mentionind}[2]{\index{#1}#2}
\newcommand{\alarm}[1]{} 
\newcommand{\Var}{\textbf{\textit{Var}}}
\newcommand{\Val}{\textbf{\textit{Val}}}     \def\Val{\textbf{\textit{Val}}}
\newcommand{\Ref}{\textbf{\textit{Ref}}}
\newcommand{\Exp}{\textbf{\textit{Exp}}}     \def\Expr{\textbf{\textit{Exp}}}   
\newcommand{\Typ}{\textbf{\textit{Typ}}}
\newcommand{\Dom}{\textbf{\textit{Dom}}}     \newcommand{\kD}[0]{\textbf{\textit{Dom}}}
\newcommand{\Nam}{\textbf{\textit{Nam}}}     
\newcommand{\Pri}{\textbf{\textit{Pri}}}        
\newcommand{\Lev}{\textbf{\textit{Lev}}}
\newcommand{\Pse}{\textbf{\textit{Pse}}}     
\newcommand{\Flo}{\textbf{\textit{Flo}}}
\def\SecFPCLTplus{\mathcal{FPC}^{+}_\kw{LT}}
\def\SecFPC{\mathcal{FPC}}
\def\SecDNI{\mathcal{DNI}}
\def\SecNDN{\mathcal{NDN}}
\def\SecDND{\mathcal{DND}}
\def\Proc{\mathcal{P}\kern-1pt\mathit{roc}}
\def\semhigh{\mathcal{H}}
\def\semconf{\mathcal{CTC}}   
\def\semconfLT{\mathcal{CLT}}
\def\semconfTC{\mathcal{CTC}}
\newcommand{\klmeet}[0]{\curlywedge}
\newcommand{\kljoin}[0]{\curlyvee}
\newcommand{\kltop}[0]{\mho}
\newcommand{\klbot}[0]{\Omega}
\newcommand{\klpreceq}{\preccurlyeq}
\renewcommand{\preceq}[0]{\sqsubseteq}
\newcommand{\join}[0]{\sqcup} 
\newcommand{\meet}[0]{\sqcap} 
\newcommand{\Fjoin}[1]{\sqcup^{#1}} 
\newcommand{\Fmeet}[1]{\sqcap^{#1}} 
\newcommand{\Ftop}[1]{\top^{#1}}
\newcommand{\Fbot}[1]{\bot^{#1}}
\newcommand{\Fpreceq}[1]{\sqsubseteq^{#1}}
\def\upclo#1#2{#2\uparrow_{#1}}   
\def\secpolcon#1{\lceil#1\rceil}
\def\bool{\mathsf{bool}}
\def\rf{\mathsf{ref}}
\def\vrai{\mathit{tt}}
\def\faux{\mathit{ff}}
\def\unit{\mathsf{unit}}
\def\nil{(\!)}
\newcommand{\xarr}[2]{\xrightarrow[\vspace{50pt}#1]{\vspace{-50pt}#2}} 
\def\rarr{\twoheadrightarrow}
\def\hbisim{\mathcal{H}}          
\newcommand{\memeqF}[3]{=^{{#1}}_{#2,#3}}
\newcommand{\biseqt}[3]{{\approx}^{#1,#2}_{#3}}
\newcommand{\biseqtold}[3]{{\dot{\approx}}^{#1,#2}_{#3}}
\newcommand{\sbiseqt}[3]{{\sim}^{W,#1,#2}_{#3}}
\def\loctyp{\textsc{Loc}}
\def\vartyp{\textsc{Var}}
\def\abstyp{\textsc{Abs}}
\def\reftyp{\textsc{Ref}}
\def\dereftyp{\textsc{Der}}
\def\assigntyp{\textsc{Assign}}
\def\niltyp{\textsc{Nil}}
\def\boolttyp{\textsc{BT}}
\def\boolftyp{\textsc{BF}}
\def\condtyp{\textsc{Cond}}
\def\apptyp{\textsc{App}}
\def\seqtyp{\textsc{Seq}}
\def\rectyp{\textsc{Rec}}
\def\flowtyp{\textsc{Flow}}
\def\allowtyp{\textsc{Allow}}
\def\migtyp{\textsc{Mig}}
\def\loctypI{\textsc{Loc}$_{\textsc{I}}$}
\def\vartypI{\textsc{Var}$_{\textsc{I}}$}
\def\abstypI{\textsc{Abs}$_{\textsc{I}}$}
\def\reftypI{\textsc{Ref}$_{\textsc{I}}$}
\def\dereftypI{\textsc{Der}$_{\textsc{I}}$}
\def\assigntypI{\textsc{Ass}$_{\textsc{I}}$}
\def\niltypI{\textsc{Nil}$_{\textsc{I}}$}
\def\boolttypI{\textsc{Bt}$_{\textsc{I}}$} 
\def\boolftypI{\textsc{Bf}$_{\textsc{I}}$} 
\def\condtypI{\textsc{Cond}$_{\textsc{I}}$}
\def\apptypI{\textsc{App}$_{\textsc{I}}$}
\def\seqtypI{\textsc{Seq}$_{\textsc{I}}$}
\def\rectypI{\textsc{Rec}$_{\textsc{I}}$}
\def\migtypI{\textsc{Mig}$_{\textsc{I}}$}
\def\allowtypI{\textsc{Allow}$_{\textsc{I}}$}
\def\flowtypI{\textsc{Flow}$_{\textsc{I}}$}
\newcommand{\rfrt}[2]{#1 \rkw{ref}_{#2}}
\newcommand{\tef}[2]{{#1} , {#2}}		
\newcommand{\fra}[2]{\frac{\displaystyle\raisebox{1ex}{$#1$}}
                          {\displaystyle\raisebox{-1ex}{$#2$}}}
\newcommand{\lkw}[1]{{\mathrm{#1}}~}
\newcommand{\rkw}[1]{{~\mathrm{#1}}}
\newcommand{\mkw}[1]{{~\mathrm{#1}}~}
\newcommand{\kw}[1]{{\mathrm{#1}}}
\renewcommand{\ell}[0]{l}
\newcommand{\fA}[0]{\mathcal{A}}
\newcommand{\fB}[0]{\mathcal{B}}
\newcommand{\fH}[0]{\mathcal{H}}
\newcommand{\fL}[0]{\mathcal{L}}
\newcommand{\fS}[0]{\mathcal{S}}
\newcommand{\fR}[0]{\mathcal{R}}
\newcommand{\fT}[0]{\mathcal{T}}
\newcommand{\secmap}[1]{\Sigma_{#1}}
\newcommand{\typenv}[0]{\Gamma}
\newcommand{\low}{\mathit{low}}
\newcommand{\EC}[1]{\mathrm{E}\mathbf{[}#1\mathbf{]}}
\newcommand{\cE}[1]{\mathrm{#1}}
\newcommand{\cC}[2]{\mathrm{#1}\mathbf{[}#2\mathbf{]}}
\newcommand{\dom}[1]{\kw{dom}(#1)}
\newcommand{\rn}[1]{\kw{rn}(#1)}
\newcommand{\dn}[1]{\kw{dn}(#1)}
\newcommand{\fv}[1]{\kw{fv}(#1)}
\renewcommand{\tt}[0]{\mathit{tt}}
\newcommand{\ff}[0]{\mathit{ff}}
\newcommand{\assign}[2]{{(#1 := #2)}}
\newcommand{\seq}[2]{{(#1;#2)}}
\newcommand{\cond}[3]{{(\lkw{if} #1 \mkw{then} #2 \mkw{else} #3)}}
\newcommand{\allowed}[3]{{(\lkw{allowed} #1 \mkw{then} #2 \mkw{else} #3)}}
\newcommand{\lam}[2]{{(\lambda #1.#2)}}
\newcommand{\app}[2]{{(#1 ~#2)}}
\newcommand{\rfr}[2]{(\kw{ref}_{#1}~ #2)}
\newcommand{\rfrl}[3]{(\kw{ref}_{#1,#2}~ #3)}
\newcommand{\deref}[1]{(\lkw{!} #1)}
\newcommand{\fix}[2]{(\varrho #1. #2)}
\newcommand{\threadnat}[3]{(\kw{thread}_{#1}~#2 \mkw{at} #3)}
\newcommand{\threadanot}[4]{(\kw{thread}_{#1}^{#4}~#2 \mkw{at} #3)}
\def\flow#1#2{(\lkw{flow} #1 \mkw{in} #2)}
\newcommand{\iconft}[3]{{\langle #2,#1,#3 \rangle}}
\newcommand{\confd}[2]{{\langle #1,#2 \rangle}}
\newcommand{\equas}[1]{ ~{\buildrel {\text{#1}} \over {\Leftrightarrow} }~ }
\newcommand{\byinov}[2]{ \vdash^{#1}}
\newcommand{\byund}[3]{ \vdash_{#1}^{#2}}
\newcommand{\byint}[3]{ \vdash_{#1,#3}^{#2}}
\def\extrf#1{\lceil#1\rceil}
\newcommand{\figline}[0]{\rule{\linewidth}{0.3mm} }
\newcommand{\substi}[3]{ {\{#1 \mapsto #2\}}#3 }
\newcommand{\update}[3]{ [#1:=#2]#3 }
\newcommand{\trou}[0]{[]}
\newcommand{\reff}[1]{ #1.r }
\newcommand{\weff}[1]{ #1.w }
\newcommand{\teff}[1]{ #1.t }
\newcommand{\hide}[1]{}   %
\newcommand{\cut}[1]{}
\newcommand{\duvida}[1]{\textbf{}} 
\newcommand{\NEW}[0]{}
\newcommand{\eft}[3]{{\langle #1,#2,#3 \rangle}}	%
\newcommand{\var}[1]{#1}	%
\newcommand{\annot}[1]{\kw{annot}^{\secmap{},\typenv}(#1)}
\newcommand{\trans}[0]{\hookrightarrow}
\newcommand{\semvdash}[0]{\vdash^{\secmap{},\Upsilon}} 
\newcommand{\klpreceqtyp}[0]{\klpreceq}
\newcommand{\kleqtyp}[0]{\approx}
\newcommand{\klpseudominus}[0]{\smile}  %
\newcommand{\realbiseq}[1]{{\sim}^{{#1}}}
\newcommand{\myexample}[1]{\[#1\]} %
\newcommand{\numbexample}[1]{\begin{equation}#1\end{equation}} %
\begin{document}

\bibliographystyle{alpha}

\title{Information flow in a distributed security setting} %

\author[A. Almeida Matos]{Ana Almeida Matos}
\address{Instituto de Telecomunica\c c\~ oes (SQIG) and
Instituto Superior T\'ecnico, Lisbon, Portugal}
\email{ana.matos@ist.utl.pt}

\thanks{Partially funded by FCT within Project Elven (POCI-01-0145-FEDER-016844), 
  Project 9471 - Refor\c car a Investiga\c c\~ao, o Desenvolvimento Tecnol\'ogico e a Inova\c c\~ao (Project 9471-RIDTI) and by Fundo Comunit\'ario Europeu FEDER.  The first author is also partially funded by FCT within Project DeDuCe (PTDC/CCI-COM/32166/2017)}

\author[J. Cederquist]{Jan Cederquist}
\address{Instituto de Telecomunica\c c\~ oes (SQIG) and
  Instituto Superior T\'ecnico, Lisbon, Portugal}
\email{jan.cederquist@ist.utl.pt}
%

\date{empty}

\begin{abstract}
\noindent Information flow security is classically formulated in terms of the absence of illegal information flows, with respect to a security setting consisting of a single flow policy that specifies what information flows should be permitted in the system.  However, this security model does not suit scenarios of distributed and mobile computing where security policies are decentralized and distinct policies might simultaneously apply to different parts of a program depending on their dynamically changing location.  The challenge of managing compliance to heterogeneous security policies is further fledged when migrating programs conform to varying policies at different points of their executions, as in when declassification must be performed.
  
We investigate the security issues that emerge in \emph{distributed security settings}, where each computation domain establishes its own local security policy, and where programs may exhibit location-dependent behavior.  In particular, we study the interplay between two distinct flow policy \emph{layers}:  the \emph{declared flow policy}, established by the program itself, and the \emph{allowed flow policy}, established externally to the program by each computation domain.  We refine two security properties that articulate how the behaviors of programs comply to the respective flow policies:
\emph{Distributed Non-disclosure}, for \emph{enabling} programs to declare locally scoped flow policies;
and \emph{Flow Policy Confinement}, for \emph{controlling} the flow policies that are declared by programs.

We present enforcement mechanisms that are based on type and effect systems, ranging from purely static mechanisms to hybrid combinations with dynamic migration control, for enforcing the above properties on an expressive ML-like language with concurrent threads and code migration, and which includes an allowed flow policy construct that dynamically tests the allowed flow policy of the current context.

Finally, we show that the combination of the above two properties guarantees that actual information flows do not violate the relevant allowed flow policies.  To this end we propose and use the \emph{Distributed Non-Interference} property, a natural generalization of Non-Interference to a distributed security setting that ensures that information flows in a program respect the allowed flow policy of the domains where they originate.
\end{abstract}

\maketitle

\section{Introduction}

Crucial services of today's societies, such as health care, energy, communication, transportation, finance, security and defense are increasingly supported by and dependent of the well-functioning of interconnected networks of computing devices that can be seen as a cyber ecosystem~\cite{US11}.  %
At the same time, the impact of cyber attacks is continuously escalating, having become increasingly frequent and widespread, now holding an unprecedented potential for damage.  Indeed, the new possibilities offered by existing distributed computing technologies are exploited by parties with hazardous intentions, by means of programs that are able to automatically propagate throughout computation sites, persistently occupy those sites, and set up communication pathways for conducting cyber attacks in a concerted manner.  In other words, they utilize the global computing paradigm, and must be tackled using mechanisms that act with at least the same level of sophistication~\cite{US11}.  
The present paper proposes to contribute to this goal by setting the distributed security problem in terms of global computing concepts, by means of mechanisms drawn from language based security.

Access control serves a crucial role in cyber-security, by ensuring that access to computer resources is only granted to subjects that hold the required clearances.  However, it does not offer the necessary refinement for controlling how programs manipulate the information that is held by a resource once access to it is granted.  
\emph{Information flow} security regards the control of illegal flows of information across resources of different security levels during program execution, such as the prevention of confidentiality violations that are entailed by flows from private to publicly available resources.
This kind of program behavior is established by means of dependencies that are encoded in programs, and can be detected by employing information flow analyses. %
The language-based approach to security has placed a lot of attention on the study of information flow properties and enforcement mechanisms in a variety of programming paradigms~\cite{SM03}.
Information flow properties range in strictness from pure absence of information leaks, known as Non-interference~\cite{GM82}, to more flexible properties that allow for \emph{declassification} to take place in a controlled manner~\cite{SS05}.  

\paragraph{\emph{Distributed security setting.}}

The specification of \emph{information flow policies}, or what information flows should be permitted during program execution~\cite{BDS15}, is classically formulated in terms of %
a single security setting consisting of a security labeling, that assigns security levels to programming language information resources, and a security lattice, that orders security levels according to their degree of confidentiality and/or integrity~\cite{Den76}.
However, in the context of distributed and mobile computing, security policies are decentralized, and distinct policies might simultaneously apply to different parts of a program depending on their location. Enforcement of confidentiality in networks must therefore deal with distributed security policies, since different \emph{computation domains} (or \emph{sites}) follow different security orientations.  This issue is particularly challenging when migrating programs are given the flexibility to perform declassifying operations, as these might be acceptable or not depending on the policy of the thread's current computation domain.   
For example, migrating programs that were conceived to comply to certain security policies do not necessarily respect those of the computational locations they might end up executing at.  
Furthermore, when programs consist of more than one thread running concurrently, the same program might need to comply to more than one security policy simultaneously.  
This problem seems to be beyond the grasp of single program constructs that can restrict by whom, when, what, or where in the program declassification can be performed \cite{SS05}, since here the question is: \emph{in which context?}  

So far, most studies of declassification have been directed towards local computation scenarios, thus overlooking security issues that result from heterogeneous security requirements.
In order to study this problem, we consider as a starting point a \emph{distributed security setting} where each computation domain establishes its own local flow policy in the form of a security lattice.  More precisely, we consider a refinement of the notion of flow policy, and study the compliance of programs, consisting of one or more migrating threads, to two distinct information flow policy \emph{layers}:  the \emph{declared flow policy}, established by the program itself, representing a specification mechanism for \emph{enabling} declassification; the \emph{allowed flow policy}, established externally to the program by each computation domain, as an actual boundary on information leaks that are performed by programs.  During computation, both of these flow policies can change dynamically.  While the former is changed via declassification declarations, the latter is changed via program migrations.

\paragraph{\emph{Flow policy awareness.}}

At the programming language level, we assume that intention to perform declassification is expressed by means of a purely declarative declassification-enabling construct, while inspection of the relevant allowed flow policies is facilitated by means of a flow policy context testing construct that provides the programmer with flow policy awareness.  Let us take a closer look at the concrete programming constructs that are considered in this paper for these roles.

In order to enable local dynamic changes to the valid flow policy, the programming language may be enriched with a \emph{flow declaration} construct
$\flow F M$
that simply declares the flow policy $F$ as valid in its scope $M$. %
It is then easy to declare more flexible flow policy environments for delimited blocks of code, as for instance the part of a program that is executed by authenticated users:
\numbexample{\cond {\textit{authenticated}} {\flow {F_\textit{permissive}} {M}} {N} }
This program declares that flows in $M$ conform to a policy that is extended by $F_\textit{permissive}$. %
In other words, $M$ may contain declassifications that comply~to~$F_\textit{permissive}$.

At the moment that a program is written, it might be hard to anticipate which flow policies will be imposed at execution time by the domains where the program will run.  In a distributed context with code mobility, the problem becomes more acute, since the computation site might change \emph{during} execution, along with the allowed flow policy with which the program must comply.  In order to provide programs with some awareness regarding the flow policy that is ruling in the current computation domain, we introduce the \emph{allowed-condition}, written $\allowed F M N$, that tests whether the flow policy $F$ is allowed by the current domain and executes branches $M$ or $N$ accordingly.  Programs can then offer alternative behaviors to be taken in case the domains they end up at do not allow declassifications of the kind they intended to perform:
\numbexample{\allowed {F_\textit{disclose\_secret}} {M} {\textit{plan\_B}}}
The allowed-condition brings no guarantees that the $\textit{plan\_B}$ of the above program does not disclose just as much as the $M$ branch. However, misbehaving programs can be rejected by the domains where they would migrate, so its chances of being allowed to execute are increased by covering portions of code containing declassifications with appropriate allowed-conditions.

When moving within a distributed security setting, the allowed-condition makes program behavior depend on location.  As a consequence, new information flows can be encoded using this language construct, and give rise to a form of illegal flow known as \emph{migration leak}, where a program's location is leaked~\cite{AC11}.

\paragraph{\emph{Modular security properties and enforcement mechanisms.}}

Separating the problems of declaring and of controlling declassification enables the decomposition of security requirements into specialized properties that can be defined and enforced independently.  Here we treat the former as an information flow control problem, in the frame of the \emph{non-disclosure} property~\cite{AB09}, while the latter is isolated as the problem of ensuring that declassifications that are performed by mobile code respect the flow policy that is allowed at the computation domain where they are performed.  

The non-disclosure property can be seen as a generalization of Non-interference.  It uses information provided by the program semantics describing which flow policies are valid at different points of the computation, to ensure that, at each step, all information flows comply with the declared flow policy.  The property serves to guarantee consistency between the declared flow policies and the actual flows that programs perform.  It does not impose any limit on the usage of declarations.  The imposition of one or more levels of control can be treated as an independent problem.

Enforcing compliance of declassifications with allowed flow policies of computation domains raises new challenges when code moves in a distributed security setting. For instance, a computation domain $d$ might want to impose a limit to the flexibility of the flow declarations that it executes, and prevent incoming code from containing:
\numbexample{\flow {F_{\textit{all\_is\_allowed}}} M}
In the above example, the flow declaration validates any declassifications that might occur in $M$, regardless of what is considered acceptable by $d$.  However, in order to reason about whether this declaration is acceptable, it is necessary to take into consideration what are the security policies of the domains where this code might run.
We formulate the problem of ensuring that programs' threads can only declare declassifications that comply with the allowed flow policy of the domain where they are performed in terms of confinement with respect to an allowed flow policy.  %
This property is not concerned with the actual leaks that are taking place, but only with the declared intentions of enabling a declassification.

To sum up, two security properties that articulate the compliance of program behavior to the respective flow policies are studied in this paper:
\begin{enumerate}
\item \emph{Distributed Non-disclosure}, which requires information leaks that are encoded by programs to respect the declared flow policies of the context in which they are executing, in a language that includes a notion of locality, migration capabilities and location-dependent behavior.  This property is intended as a tool for \emph{enabling} the programmer to introduce declassification in the program. 

\item \emph{Flow Policy Confinement} requires that declassifications that are used by programs respect the allowed flow policy of the context in which they execute.  This property formalizes a simple form of declassification \emph{control}.
\end{enumerate}
Both properties can be enforced statically by means of a type and effect system.  However, the latter depends on the allowed flow policies of each domain that the program might migrate to.  This motivates the proposal to shift part of the burden of the declassification control to runtime, in order to improve efficiency and precision of the mechanism.
We argue for this shift by presenting mechanisms, centered on migration control, that range from purely static to hybrid combinations with dynamic features. %

\paragraph{\emph{Declassification control as migration control.}}

The problem of controlling the use of declassification across a distributed security setting can be reduced to a migration control problem, by preventing programs from migrating to sites if they would potentially violate that site's allowed flow policy.  To achieve this, domains must be able to check incoming code against their own allowed flow policies, ideally assisted by certificates that are carried by the program, and then decide upon whether those programs should be let in.
We thus address the technical problem of inferring what are the flexible flow policies that a program might set up for its own executions.  %
A certificate could consist of information about all the flow policies that are declared in the program and are \emph{not} validated by the allowed-branch of an allowed-condition. %
We call this flow policy the \emph{declassification effect} of the program.  Then, while the program
\numbexample{\allowed {F_1} M {\flow {F_2} N}}
would have a declassification effect that includes $F_2$ -- meaning that it should only be allowed to run in domains where $F_2$ is allowed --, the program
\numbexample{\allowed {F} {\flow {F} M} N}
(assuming that $M$ and $N$ have no flow declarations) would have an empty  declassification effect -- meaning that it could be safely allowed to run in any domain.

We propose an efficient hybrid mechanism based on statically annotating programs with the declassification effect of migrating code.  This is done by means of an informative type and effect pre-processing of the program, and is used for supporting runtime decisions.

\paragraph{\emph{Semantic Validation.}}

The coherence of the above properties should be supported by showing that combining Distributed Non-disclosure and Flow Policy Confinement has the intended meaning.  Ultimately, it is desirable to establish the absence of illegal information flows for programs that satisfy both properties.  In the presence of a single global allowed flow policy, this amounts to classic Non-interference.  However, in a distributed security setting, it is necessary to accommodate the reality of that distributed and mobile programs must obey different security settings at different points of their computation, depending on their location. This raises the question of what are the relevant allowed flow policies that the actual information flows should comply to, in order to be legal.
More concretely, which site or sites should have a say on the legality of flows performed by a thread that, during computation, reads and writes to resources as it migrates between different sites?  We argue that confidentiality of information that is \emph{read} at a certain site should be preserved according to that site's allowed flow policy.  
This idea leads to what we propose as the natural generalization of Non-interference to a distributed security setting, which we call \emph{Distributed Non-interference} property, that ensures that propagation of information respects the allowed flow policy of the domains where the information originates.

This paper proposes a generalized definition of Non-interference that accommodates the reality of that distributed and mobile programs must obey different security settings at different points of their computation, depending on their location.  It is formalized for a simple and general network model where computation domains are units of abstract allowed information flow policies. %

\paragraph{\emph{Contributions.}}

This paper presents a simple language-based framework for studying information flow security in distributed security settings in the presence of code mobility.  In this model, computation domains are units of \emph{allowed flow policies}, which have a scope that is local to each domain.
While the formulation of the security properties is largely language-independent, a concrete language is defined and considered for the purpose of examples and as a target to the proposed enforcement mechanisms. %
It consists of an expressive distributed higher-order imperative lambda-calculus with remote thread creation.  The latter language feature implies in particular that programs might need to comply to more than one dynamically changing allowed flow policy simultaneously.  
The main technical contributions are:
\begin{enumerate}

\item A new programming construct $\allowed F M N$ that tests the flexibility of the allowed flow policy imposed by the domain where it is currently located and can act accordingly.

\item A refinement of the Non-disclosure for Networks property~\cite{AC11} that is more suitable for settings where migration is subjective (i.e., cannot be induced by external threads), called \emph{Distributed Non-disclosure}\footnote{The new designation for the property highlights the fact that it is intended for distributed security settings, independently of whether the underlying memory and communication model is closer to that of a full-fledged network, or is reduced to simply logically distributed computation domains.}.

\item A new security property, named \emph{Flow Policy Confinement}, that regards the compliance of declassification operations that are performed by programs to the valid allowed flow policies where they take place.  

\item A comparative study of three enforcement mechanisms for flow policy confinement in the form of migration control mechanisms for deciding whether or not certain programs should be allowed to execute at each site.  These are based on a type and effect system, and differ on the emphasis that is placed on static and runtime effort:

\begin{enumerate}
\item A purely static type and effect system for enforcing flow policy confinement.\label{one}  %
\item A type and effect system for checking migrating threads at runtime, that is more precise than the one in point~\ref{one}. \label{two}
\item A static-time informative pre-processing type and effect system for annotating programs with a \emph{declassification effect}, for a more efficient and precise mechanism than the one in point~\ref{two}.  
\label{three}
\end{enumerate}

\item A new information flow property, named Distributed Non-interference, that naturally generalizes classical Non-interference to Distributed security settings.

\item A study of the semantic coherence between the proposed definition of distributed Non-interference and other relevant information flow properties, namely (local) Non-interference, Distributed Non-disclosure and Flow Policy Confinement.

\end{enumerate}
This paper revises, unifies and expands work that is presented in the conference articles~\cite{Alm09},~\cite{AC13}, and part of~\cite{AC14}.
Proofs that make use of techniques that are not novel to this paper are omitted for space reasons, but are available in the Appendix for convenience of the reviewer.

\paragraph{\emph{Outline of the paper.}}

We start by defining the formal security and language setting of the paper (Section~\ref{sec-setting}).  Two main sections follow, dedicated to the security analyses of Distributed Non-disclosure (Section~\ref{sec-iflow}) and Flow Policy Confinement (Section~\ref{sec-confinement}).  In each, the formal properties are proposed (Subsections~\ref{subsec-iflow-property} and~\ref{subsec-confinement-property}), enforcement mechanisms are presented (Subsections~\ref{subsec-iflow-typesystem} and \ref{subsec-confinement-static} to~\ref{subsec-confinement-decleffect}), and their soundness is proved.
In the latter, we study the efficiency and precision of three type and effect-based mechanisms for enforcing confinement by means of migration control, that place different weight over static and run time.
We then propose a definition of Distributed Non-interference against which we verify the semantic coherence between the proposed and other relevant information flow properties, namely (local) Non-interference, Distributed Non-disclosure and Flow Policy Confinement (Section~\ref{sec-distnonint}).
Finally we discuss related work (Section~\ref{sec-related}) and conclude (Section~\ref{sec-concl}).

\section{Distributed Security Setting} \label{sec-setting}

\subsection{Security Setting} \label{subsec-secsetting} 

We adopt the classic representation of a security policy as a \emph{security lattice} of security levels~\cite{Den76}, corresponding to security clearances, such as read-access rights. The lattice defines a \emph{flow relation} between security levels which determines their relative degree of confidentiality.
Security levels can then be associated %
to observable information holders in the programming language by means of a \emph{security labeling}, which maps resources to security levels.
Hence, a security lattice represents a baseline information \emph{flow policy}, establishing that information pertaining to resources labeled with $l_1$ can be legally transferred to resources labeled with $l_2$ only if $l_2$ is at least as confidential as $l_1$ according to the lattice.
Other flow policies, which enable additional legal flow directions between security levels, can be seen as relaxations of the baseline policy.  Such flow policies can be formalized as \emph{downward closure operators} that collapse security levels of the baseline security lattice into lower ones~\cite{AS12}.

\subsubsection{Abstract requirements}

\paragraph{\emph{Baseline lattice of security levels.}}  %
Security levels $l,j,k \in \Lev$ are assumed to be structured according to their confidentiality by means of an abstract lattice $\fL = \langle \Lev, \preceq, \meet, \join, \top, \bot \rangle$, where: the partial-order $\preceq$, when relating two security levels $l_1,l_2$ as $l_1 \preceq l_2$, means that $l_2$ is at least as confidential %
as $l_1$; the meet operation $\meet$ gives, for any two security levels $l_1,l_2$, the most confidential security level that is at least as permissive as both $l_1$ and $l_2$; the join operation $\join$ gives, for any two security levels $l_1,l_2$, the least confidential security level that is at least as restrictive as both $l_1$ and $l_2$; the highest confidential security level $\top$ is the most restrictive one; and the lowest confidential level $\bot$ is the most permissive one.

The information flow policy that is represented by the baseline security lattice $\fL$ asserts as legal those flows that respect the baseline flow relation $\preceq$, since when information flows from $l_1$ to $l_2$, its original confidentiality requirements are preserved.

\paragraph{\emph{Relaxed lattice of security levels.}}
A downward closure operator on a baseline lattice $\fL = \langle \Lev, \preceq, \meet, \join, \top, \bot \rangle$, is an operator $F:\Lev\rightarrow\Lev$ that is monotone, idempotent and restrictive (for every level $l\in\Lev, F(l)\preceq l$). The image of a downward closure operator $F$, equipped with the same order relation as $\fL$, is a sub-lattice of $\fL$, denoted by $F(\fL) = \langle F(\Lev), \Fpreceq{F}, \Fmeet{F}, \Fjoin{F}, \Ftop{F}, \Fbot{F} \rangle$, where, for every two security levels $l_1,l_2\in F(\Lev)$, and set of security levels $I\subseteq F(\Lev)$, we have that (i)~ $l_1\Fpreceq{F}l_2$ if $l_1\preceq l_2$; (ii)~ $\Fmeet{F}I=F(\meet I)$; (iii)~ $\Fjoin{F}I=\join I$; (iv)~ $\Ftop{F}=F(\top)$; and (v)~ $\Fbot{F}=\bot$.

The new more general flow relation $\Fpreceq{F}$ that is determined by the flow policy $F$ now enables the information flows that are allowed by $F$.

\paragraph{\emph{Lattice of flow policies.}}
Flow policies $A,F \in \Flo$ can be ordered according to their permissiveness by means of a \emph{permissiveness relation} $\klpreceq$, where $F_1 \klpreceq F_2$ means that $F_1$ is at least as permissive as $F_2$.
We assume an abstract lattice of flow policies that supports a pseudo-subtraction operation $\langle \Flo, \klpreceq, \klmeet, \kljoin, \kltop, \klbot, \klpseudominus \rangle$, where: the meet operation $\klmeet$ gives, for any two flow policies $F_1,F_2$, the strictest policy that allows for both $F_1$ and $F_2$; the join operation~$\kljoin$ gives, for any two flow policies $F_1,F_2$, the most permissive policy that only allows what both $F_1$ and $F_2$ allow; the most restrictive flow policy~$\kltop$ does not allow any information flows; and the most permissive flow policy~$\klbot$ that allows all information flows.  
Finally, the pseudo-subtraction %
operation $\klpseudominus$ between two flow policies $F_1$ and $F_2$ \footnote{This operation is used for refining the static analysis of the policy-testing construct, and is not a requirement of the security properties that are studied here.} represents the most permissive policy that allows everything that is allowed by the first ($F_1$), while excluding all that is allowed by the second ($F_2$); it is defined as the relative pseudo-complement of $F_2$ with respect to $F_1$, i.e. the greatest $F$ such that $F \klmeet F_2 \klpreceq F_1$. %

The lattice of flow policies can be achieved using downward closure operators.  Given the lattice $\fL$, the set of all downward closure operators on $\Lev$ form a lattice $\langle \Flo, \klpreceq, \klmeet, \kljoin, \kltop, \klbot, \klpseudominus \rangle$, where: (i) $F_1\klpreceq F_2 \Leftrightarrow \forall l\in\Lev.F_1(l)\preceq F_2(l)$; (ii)~ $(\klmeet K)(l) = l$ if $\forall F\in K.F(l)=l$; (iii)~ $(\kljoin K)(l) = \join\{ F(l) \| F\in K \}$; (iv)~ $\kltop(l) = \bot$; and (v)~ $\klbot(l) =l$,

This lattice can be interpreted as a lattice of relaxations of the original security setting.

\subsubsection{Concrete example} \label{subsubsec-concrete}

The following concrete security setting  meets the abstract requirements defined above and provides helpful intuitions. %

\paragraph{\emph{Baseline lattice of security levels.}}
The security levels are the subsets of the principals, $l \subseteq \Pri$, similar to read-access lists.    In this setting, security levels are ordered by means of the flow relation $\supseteq$.  
\paragraph{\emph{Lattice of flow policies.}}
Flow policies then consist of binary relations on $\Pri$, which can be understood as representing additional directions in which information is allowed to flow between principals:  a pair $(p,q)\in F$, most often written $p\prec q$, is to be understood as ``information may flow from $p$ to $q$''.  New more permissive security lattices are obtained by collapsing security levels into possibly lower ones, by closing them with respect to the valid flow policy.
Writing $F_1 \klpreceq F_2$ means that $F_1$ allows flows between at least as many pairs of principals as $F_2$.  The relation is here defined as ${F_1 \klpreceq F_2}$ iff ${F_2 \subseteq F_1^*}$ (where $F^*$ denotes the reflexive and transitive closure of $F$):
The meet operation is then defined as $\klmeet = \cup$, the join operation is defined as $F_1 \kljoin F_2 = F_1^* \cap F_2^*$, the top flow policy is given by ${\kltop = \emptyset}$, the bottom flow policy is given by $\klbot = \Pri \times \Pri$, and the pseudo-subtraction operation is given by~$\klpseudominus~=~-$ (set subtraction).

\paragraph{\emph{Relaxed lattice of security levels.}}
In order to define $\Fpreceq{F}$ we use the notion of $F$-\textit{upward closure} of a security level $\ell$, defined as $\upclo{F}{\ell} = \{q ~|~ \exists p\in\ell.\ p~F^*~ q\}$.
The $F$-upward closure of $\ell$ contains all the principals that are allowed by the policy $F$ to read information labeled $\ell$. A more permissive flow relation can now be derived as follows~\cite{ML98,AB09}:
\myexample{l_1 \Fpreceq{F} l_2 ~~\equas{def}~~ \forall q \in l_2~.~\exists p \in l_1 ~:~ p ~F^*~ q   ~\equas{}~  (\upclo F {l_1}) \supseteq (\upclo F {l_2})}
Furthermore, ${l}_1 \Fmeet{F} {l_2} = {l}_1 \cup {l}_2$ and ${l}_1 \Fjoin{F} {l}_2 = (\upclo {F}{{l}_1}) \cap (\upclo {F}{{l}_2})$, $\Ftop{F} = \emptyset$ and $\Fbot{F} = \Pri$.

Notice that $\Fpreceq{F}$ extends $\supseteq$ in the sense that $\Fpreceq{F}$ is larger than $\supseteq$ and that $\Fpreceq{\emptyset} \ =\ \supseteq$.  In other words, for the base security lattice (where the flow policy parameter is $\kltop$), the flow relation coincides with reverse inclusion of security levels, while the join operator is simply given by $\Fjoin{\emptyset} = \cap$.

\subsection{Language Setting}   \label{subsec-language}

We now present the basic language requirements to which the technical developments of this paper apply.  %
We then define a concrete instance of the language that suits these requirements. It will be used for providing illustrative examples, and as a target for the enforcement mechanisms. %

\subsubsection{Abstract requirements} \label{subsubsec-setting-labelings}

\begin{figure}
\centering
\[
\begin{array}{c}
\begin{array}{lrcllrcllcrl}
\textit{Security Levels}  &l,j  &\in&\Lev  \qquad  
  &\textit{Reference Names} &a,b        &\in&\Ref \\ %
\textit{Flow Policies}    &A,F      &\in&\Flo
  &\textit{Thread Names}    &m,n      &\in&\Nam \\
\textit{Types}            &\tau,\sigma,\theta&\in&\Typ
  &\textit{Domain Names}    &d        &\in&\Dom
\end{array}\\[7mm]
\begin{array}{lrcllrcllcrl}
\textit{Values}          &V      &\in&\Val\\
\textit{Expressions}     &M, N &\in&\Exp
\end{array}
\end{array}
\]
\caption{Syntax of basic elements of the language}  \label{fig-syntaxbasiceles}
\figline
\end{figure}

\emph{Networks}\footnote{We adopt, from the global computing community, the term \emph{networks} to designate an interconnected structure of \emph{computation domains}, which consist of hosts where a number of processes compute over resources.} %
are flat juxtapositions of \emph{domains}, each hosting the execution of a
pool of threads under the governance of a local allowed flow policy.  
Information is associated to globally accessible references, which are information holders for values of a designated type. 
As threads can move between computation domains during their execution, their location, or current domain, also carries information.

The basic elements of the language are summarized in Figure~\ref{fig-syntaxbasiceles}.  The names of references, threads and domains are drawn from disjoint countable sets $a, b \in \Ref$, $m,n \in \Nam$, and $d \in \kD \neq \emptyset$, respectively.
As mentioned in Subsection~\ref{subsec-secsetting}, security levels are associated to information holders by means of security labelings.  We define two labelings:
\begin{itemize}
\item A \emph{reference labeling} $\secmap{} : \Ref \rightarrow \Lev \times \Typ$, whose left projection $\secmap{1}: \Ref \rightarrow \Lev$ corresponds to the usual \emph{reference security labeling} that assigns security levels to references, and whose right projection $\secmap{2} : \Ref \rightarrow \Typ$ corresponds to the \emph{type labeling} that determines the type of values that can be assigned to each reference.
\item A \emph{thread security labeling} $\Upsilon{} : \Nam \rightarrow \Lev$, that assigns security levels to thread names.  This mapping represents the security level of the knowledge of the position of a thread in the network.
\end{itemize}
The security labelings are used in the security analysis that is performed in Section~\ref{sec-iflow}.
In the context of examples, the mapping between reference or thread names and their corresponding security annotations and types may be informally denoted as subscript of names. %

\emph{Threads} consist of named expressions (drawn from $\Exp$) and run concurrently in \emph{pools} $P : \Nam \rightarrow \Exp$, which are mappings from thread names to expressions.  Threads %
are denoted by $M^m \in \Exp\times\Nam$, and pools are denoted as sets of threads.
\emph{Stores} $S : \Ref \rightarrow \Val$, or \emph{memories} map reference names to values. %
\emph{Position-trackers} $T : \Nam \rightarrow \Dom$, map thread names to domain names, and are used to keep track of the locations of threads in the network.  The pool $P$ containing all the threads in the network, the position tracker $T$ that keeps track of their positions, and the store $S$ containing all the references in the network, form \emph{configurations}  $\iconft T P S$, over which the evaluation relation is defined in the next subsection. We refer to the pairs $\confd S T$ as \emph{states}, and pairs $\confd P T$ as \emph{thread configurations}. 
The flow policies that are allowed by each domain are kept by the \emph{allowed-policy mapping} $W : \Dom \rightarrow \Flo$ from domain names to flow policies.

The following basic notations are useful for defining properties and modifications to the elements of configurations.
For a mapping $Z$, we define $\dom{Z}$ as the domain of a given mapping~$Z$. %
We say a name is fresh in $Z$ %
if it does not occur %
in $\dom{Z}$.
Given an expression $M$, we denote by $\rn{M}$ and 
$\dn{M}$ the set of %
reference and domain names, respectively, 
that occur in $M$.  This notation is extended in the obvious way to pools of threads.  %
We let $\fv M$ be the set of variables occurring free in $M$.
We restrict our attention to \emph{well formed configurations} $\iconft T P S$ satisfying the conditions that
$\rn{P} \subseteq \dom{S}$, %
that $\dn{P} \subseteq \dom{W}$,
that $\dom{P} \subseteq \dom{T}$,
and that, for every $a \in \dom{S}$, $\rn{S(a)} \subseteq \dom{S}$ and $\dn{S(a)} \subseteq \dom{W}$.
We denote by $\substi{x}{X}M$ the capture-avoiding substitution of the pseudo-value $X$ for the free occurrences of $x$ in $M$.
The operation of adding or updating the image of an object $z$ to $z'$ in a mapping $Z$ is denoted~$\update z {z'} Z$.

The `$W \semvdash$' turnstile gives a security context to the definition of the semantics, making explicit the allowed flow policy of each domain in the network, and the valid reference and thread labelings.  These parameters are fixed, and are not central do this study, so they are omitted in the rest of the paper, written simply `$W \vdash$'.
\renewcommand{\semvdash}[0]{\vdash}
The reduction relation is a transition relation between (well-formed) configurations $\xarr{F}{d}$, which are decorated with %
the name of the domain $d$ where each step is taking place and the flow policy $F$ declared by the evaluation context where they are performed.  
The semantics needs not depend on this information, which is made available for the security analysis.
The relation~$\rarr$ denotes the reflexive closure of the transition relation $\xarr{F}{d}$. %

\subsubsection{Concrete object language}

The distributed language that we use is an imperative higher-order $\lambda$-calculus with reference creation, where we include a flow policy declaration construct %
(for directly manipulating flow policies~\cite{AB09}) and the new flow policy tester construct that branches according to whether a certain flow policy is allowed in the program's computing context, obtained by adding a notion of computing domain, to which we associate an allowed flow policy, and a code migration primitive.  Threads are also named in order to keep track of their position in the network.  Programs executing in different domains are subjected to different allowed flow policies -- this is what distinguishes local computations from global computations, and is the main novelty in this language.  
We opt for the simplest memory model, assuming memory to be shared by all programs and every computation domain, in a transparent form.  As we will see in Section~\ref{sec-iflow}, this allows us to focus on the effects of considering a distributed security setting, while avoiding synchronization issues that are not central to this work.  %

{
\begin{figure}   %
\centering
\[
\begin{array}{lrcllllllllllrcll}
  \textit{Values} 
  &V &::= &{\nil} ~|~ x ~|~ \lam x M ~|~ \mathit{tt} ~|~ \mathit{ff} ~|~ {a} \\[2mm] %
\textit{Pseudo-values} 
  &X &::= &V ~|~  \fix x X \\[2mm]

  \textit{Expressions}  	
  &M, N &::= &X ~|~ \app M N ~|~ \seq M N ~|~ \deref{N} ~|~ \assign M N ~|~ \rfrl{l}{\theta}M ~|\\
  &      &     & {\cond M {N_t} {N_f}} ~|~ { \flow F M } ~|\\
  &&&       {\allowed F {N_t} {N_f} } ~|~  {\threadnat l M d }
\end{array}
\]
\caption{Syntax of expressions}  \label{fig-syntaxexpressions}\label{fig-syntaxconfigurations}
\figline
\end{figure}
}

\paragraph{\emph{Syntax.}} %

The syntax of expressions, ranged over by $M,N \in \Expr$, is defined in Figure~\ref{fig-syntaxexpressions}.  It is based on an imperative higher order $\lambda$-calculus that includes declassification, a context-policy testing construct and remote thread creation.  The values of the language are the nil command $\nil$, variables ranged over by $x,y \in \Var$, function abstraction $\lam x M$, boolean values $\mathit{tt}$ and $\mathit{ff}$, and reference names.  Pseudo-values ranged over by $X$, extend values to include recursion, provided by the $\fix x X$ construct.  Other standard expressions are formed by function application $\app M N$, sequential composition $\seq M N$, the dereferencing operation $\deref M$, assignment $\assign M N$, reference creation $\rfrl{l}{\theta}M$ and conditional branching ${\cond M {N_t} {N_f}}$. 

Reference names are not associated to any security levels or types at the language level (in this aspect we depart from~\cite{Alm09}), and make use of the reference labeling $\secmap{}$ that is defined in Subsection~\ref{subsubsec-setting-labelings} only during the security analysis.  Nevertheless, reference names can be created at runtime, by a construct that is annotated with a type and security level that should be associated with the new reference.

The new features of the language are the flow declaration and the allowed-condition.  %
The flow declaration construct is written $\flow{F}{M}$, %
where $M$ is executed in the context of the current flow policy \textit{extended with} $F$; after termination the current flow policy is restored, that is, the scope of $F$ is $M$. The allowed-condition is similar to a standard boolean condition, with the difference that in $\allowed F {N_t} {N_f}$ the branches $N_t$ or $N_f$ are executed according to whether or not $F$ is allowed by the site's allowed flow policy. %

The remote thread creator $\threadnat l M d$ spawns a new thread with security level $l$ and expression $M$ in domain $d$, to be executed concurrently with other threads at that domain. 
It functions as a migration construct when the new domain of the created thread is different from that of the parent thread.  %
The security level $l$ is the confidentiality level that is associated to the %
position of the thread in the network.

  The following example illustrates the usage of the non-standard constructs of the language:
\numbexample{ \label{exallowed}
  (\threadnat H {(\lkw{allowed} {F_{H \prec L}}~\kw{then}~{\flow {F_{H \prec L}} {\assign {x_L} {\deref {y_H}}}}
  ~\kw{else}~{\textit{plan\_B}})} d)
}
The program creates a remote thread at domain~$d$, tests whether its allowed flow policy allows for the flows in $F_{H \prec L}$, which informally represents a flow policy that allows information to flow from level $H$ to level~$L$, and executes the first or second branch accordingly

\begin{figure}[t!]
\begin{equation*}
\begin{array}{c}
\begin{array}{rcll}

W \semvdash {\iconft{T}{\{\EC{\app {\lam {x} M} V}^m\}}S} &\xarr{\extrf{\cE{E}}}{T(m)}& {\iconft{T}{\{\EC{\substi x V M}^m\}} S} \\[1mm]

W \semvdash {\iconft{T}{\{\EC{\cond {\mathit{tt}} {N_t} {N_f}}^m\}}S} &\xarr{\extrf{\cE{E}}}{T(m)}& {\iconft{T} {\{\EC{N_t}^m\}}S}\\[1mm]

W \semvdash {\iconft{T} {\EC{\cond {\mathit{ff}} {N_t} {N_f}}^m\}} S} &\xarr{\extrf{\cE{E}}}{T(m)}& {\iconft{T} {\{\EC{N_f}^m\}} S}\\[1mm]

W \semvdash {\iconft{T} {\{\EC{\seq V N}^m\}} S} &\xarr{\extrf{\cE{E}}}{T(m)}& {\iconft{T} {\{\EC N^m\}} S}\\[1mm]

W \semvdash {\iconft{T}{\{\EC{\fix x X}^m\}}S}  &\xarr{\extrf{\cE{E}}}{T(m)}&  {\iconft{T}{\{\EC{\app {\substi x {\fix x X}} X}^m\}} S} \\[1mm]

W \semvdash {\iconft{T}{\{\EC{{{{\flow F V}}}}^m\}}{S}} &\xarr{\extrf{\cE{E}}}{T(m)}& {\iconft{T}{\{\EC{{{V}}}^m\}}{S}} \\[1mm]

W \semvdash {\iconft{T}{\{\EC{{{\deref {a}}}}^m\}}{S}} &\xarr{\extrf{\cE{E}}}{T(m)}& {\iconft{T}{\{\EC{{{S(a)}}}^m\}}{S}} \\[1mm]

W \semvdash {\iconft{T}{\{\EC{{{{\assign {a} V}}}}^m\}}{S}} &\xarr{\extrf{\cE{E}}}{T(m)}& {\iconft{T}{\{\EC{{{\nil}}}^m\}}{\update{a}V S}} \\[1mm]

W \semvdash {\iconft{T}{\{\EC{{{\rfrl{l}{\theta} V}}}^m\}}{S}} &\xarr{\extrf{\cE{E}}}{T(m)}& {\iconft{T}{\{\EC{{a}}^m\}}{\update {a} V S}}, \textit{ where}\\[-2mm]
&&\hspace{2pt}~a \textit{ fresh in } S\textit{ and }{\secmap{}(a)=(l,\theta)} %
\end{array}\\[3mm]
\begin{array}{c}
\fra{{W(T(m))\!\klpreceq\!F}}
{{W \semvdash {\iconft{T} {\{\EC{\allowed {F}{N_t}{N_f}}^m\}} S}} ~{\xarr{\extrf{\cE{E}}}{T(m)}}~ {{\iconft{T} {\{\EC{N_t}^m\}} S}}}\\
\fra{{W(T(m))\!\not\klpreceq\!F}}
{{W \semvdash {\iconft{T} {\{\EC{\allowed {F}{N_t}{N_f}}^m\}} S}} ~{\xarr{\extrf{\cE{E}}}{T(m)}}~ {{\iconft{T} {\{\EC{N_f}^m\}} S}}}\\[3mm]
\begin{array}{c}
{W \semvdash {\iconft{T}{\{{\EC{\threadnat{l}{N}{d}}}^{m}\}}{S}}\!\!\xarr{\extrf{\cE{E}}}{T(m)}\!\!{\iconft{\update {n} d T}{\{{\EC{\nil}}^{m},N^{n}\}}{S}}}, {\textit{ where}}\\[-2mm]
\hspace{210pt}{~n {\textit{ fresh in }}T\textit{ and }{\Upsilon{}(n)=l}}
\end{array}\\[3mm]
\fra {W \semvdash \iconft{T}{P}{S} \xarr{F}{d} {\iconft{T'}{P'}{S'}} ~~~ \iconft{T}{P \cup Q}{S} \textit{ is well formed}}
{W \semvdash \iconft{T}{P \cup Q}{S} \xarr{F}{d} {\iconft{T'}{P' \cup Q}{S'}}}
\end{array}
\end{array}
\end{equation*}
\caption{Operational semantics} \label{fig-semantics}
\figline
\end{figure}

\paragraph{\emph{Operational semantics.}}   %

In order to define the operational semantics, expressions are represented using \emph{evaluation contexts}, which specify a call-by-value evaluation order: %
\myexample{
\begin{array}{rl}
\cE{E} ::= &\trou ~|~ {\app {\cE{E}} N} ~|~ {\app V {\cE{E}}} ~|~\seq{\cE{E}}N ~|~ \rfrl{l}{\theta}{\cE{E}} ~|~ \deref{\cE{E}}
~|~   \assign{\cE{E}}N ~|~ \assign V {\cE{E}} ~|\\
& {\cond {\cE{E}} {N_t} {N_f}} ~|~ \boldsymbol{\flow F {\cE{E}}}
\end{array}
}
We write $\cC{E}{M}$ to denote an expression where the sub-expression $M$ is placed in the evaluation context $\cE{E}$, obtained by replacing the occurrence of $\trou$ in $\cE{E}$ by $M$.
The flow policy that is permitted by the evaluation context ${\cE{E}}$ is denoted by $\extrf{\cE{E}}$. It is a lower bound %
to all the flow policies that are declared by the context:
\myexample{
\extrf{\trou} = \kltop \qquad \extrf{\flow F{\cE{E}}} = F \klmeet \extrf{\cE{E}} \qquad
\extrf{\cE{E'}[\cE{E}]} = \extrf{\cE{E}} ~\textit{when $\cE{E'}$ has no flow declarations}
}

The small step operational semantics of the language is defined in Figure~\ref{fig-semantics}.  Most of the rules are standard, except for the allowed-policy mapping $W$, that parameterizes all rules, and the domain name and flow policy that decorate the transitions.  The $W$ parameter is used only by the allowed-condition, to retrieve the allowed flow policy of the domain where the condition is executed.  %
Notice that $W$ is never changed. Furthermore, the semantics does not depend on the flow policy that decorates the arrows.
Thread names are used in two situations:  When a new thread is created, its new fresh name is added to the position-tracker, associated to the parameter domain.  As with reference creation, the security level that is associated to the new thread does not influence the semantics, but is used later by the security analysis.  %
Thread names are also used when an allowed-condition is performed: the tested flow policy is compared to the allowed flow policy of the site where that particular thread is executing. %
The last rule establishes that the execution of a pool of threads is compositional (up to the choice of new names).  

According to the chosen semantics, %
dereferencing and assigning to remote references can be done transparently. In spite of this, since the allowed flow policies are distributed, the behavior of a program fragment may differ on different machines.
As an example, consider the thread
\numbexample{\label{ex-distributed}
{\allowed {F} {\assign {y_L} {1}}{\assign {y_L} {2}}}^{m}
}
running in a network with domains $d_1$ and $d_2$, where $W(d_1) \klpreceq F$ but $W(d_2) \not\klpreceq F$.  The thread will perform different assignments depending on whether $T(m)=d_1$ or $T(m)=d_2$.  
In Section~\ref{sec-iflow} we show that these possible behaviors are distinguishable by an information flow bisimulation relation.
This distributed behavior occurs regardless of the chosen memory model.

One can prove that the semantics preserves well-formedness of configurations, and that a configuration with a single thread has at most one transition, up to the choice of new names.

\section{Controlling Information Flow}   \label{sec-iflow}

As we have seen, the inclusion of the new allowed-condition construct in our programming language has introduced subtle security issues.  Indeed, even though the memory model that we consider is non-distributed, the allowed construct projects the distributed nature of the security policies into location-dependent program behavior.  In other words, the location of a thread in the network becomes an information flow channel.  It is then necessary to clarify what information flow dependencies are introduced by the construct, and to sort out what information flows should be allowed or not.

In this section we consider the problem of controlling information flow in a language that includes the new allowed-condition construct, in the context of a distributed security setting with concurrent threads and code mobility.  Information flows that violate the baseline security policy should conform to declared intentions to perform declassification.  To this end, we define the \emph{Distributed Non-disclosure} property, and analyze the security behavior of the new allowed-condition construct in light of the proposed property.  We show that the new formalization is more permissive and precise than \emph{Non-disclosure for Networks}~\cite{AC11}, and argue for the suitability of the proposed formalization for settings where migration is subjective (i.e., auto-induced), such as the present one.
We present a type and effect system for enforcing the property over the concrete language of Subsection~\ref{subsec-language} and establish soundness of the enforcement mechanism. %

\subsection{Distributed Non-disclosure} \label{subsec-iflow-property}

Non-disclosure states that, at each computation step performed by a program, information flows respect the flow policy that is declared by the evaluation context where the step is performed.  The property is naturally defined in terms of an information flow bisimulation~\cite{SS00,Smi01,BC02,FG95} on concurrent threads.  At each execution point, the bisimulation relates the outcomes of each possible step that is performed over states (stores) that are indistinguishable at a certain observation level, where the notion of indistinguishability is locally tuned to the flow policy that is currently valid for that particular step~\cite{AB09}.  Low-equality, the formalization of indistinguishability, is thus a crucial aspect of the definition of information flow bisimulations, and should reflect what are the observable resources at a given security level.  

As we will see towards the end of this section, in our setting the position of a thread in the network can reveal information about the values in the memory.  In other words, states include other information holders which are also observable (besides the store), namely the position of threads in the network.  Low-equality must then be generalized accordingly in order to relate states that include position trackers.

\paragraph{\emph{Low equality.}} 
We define a notion of indistinguishability between states that are composed of a store~$S$ and a position tracker~$T$, and is parameterized by the corresponding security labelings -- a reference security labeling~$\secmap{1}$ and a thread security labeling~$\Upsilon{}$ (for retrieving the security levels of references and threads).  Low-equality is defined relative to a given flow policy~$F$ and observation label~$l$.  Recall, from Sub-subsection~\ref{subsubsec-setting-labelings}, that the flow policy $F$ can be used to determine the permissiveness of the flow relation.
Intuitively, two stores are said to be low-equal at level~$l$, with respect to a flow policy $F$, if they coincide in the values of all references whose security levels are lower or equal than~$l$ (according to $F$). 
Similarly, two position trackers are said to be low-equal at level~$l$, with respect to a flow policy $F$, %
if they coincide in the location of all thread names whose security levels are lower or equal than~$l$ (according to $F$).  Low-equality between states is then defined point-wise on stores and position trackers:
\begin{defi}[Low-Equality] Given a reference security labeling~$\secmap{1}$ and a thread security labeling~$\Upsilon{}$, low-equality with respect to a flow policy $F$ and a security level~$l$ is defined between well-labelled stores, position-trackers and states as follows:
\begin{itemize}
\item  %
  ${S_1} \memeqF{\secmap{1},\Upsilon}{F}{l} {S_2}$, if
$\{ (a,V) ~|~ (a,V)\!\in\!{S_1} ~\&~ \secmap{1}(a) \Fpreceq{F} l \} = \{ (a,V) ~|~ (a,V)\!\in\!{S_2} ~\&~ \secmap{1}(a) \Fpreceq{F} l \}$
\item  %
  ${T_1} \memeqF{\secmap{1},\Upsilon}{F}{l} {T_2}$, if
$\{ (n,d) ~|~ (n,d)\!\in\!{T_1} ~\&~ \Upsilon{}(n) \Fpreceq{F} l \} = \{ (n,d) ~|~ (n,d)\!\in\!{T_2} ~\&~ \Upsilon{}(n) \Fpreceq{F} l \}$
\item  %
  $\confd{T_1}{S_1} \memeqF{\secmap{1},\Upsilon}{F}{l} \confd{T_2}{S_2}$, if
${S_1} \memeqF{\secmap{1},\Upsilon}{F}{l} {S_2}$ and ${T_1} \memeqF{\secmap{1},\Upsilon}{F}{l} {T_2}$.
\end{itemize}
\end{defi}
\renewcommand{\memeqF}[3]{=_{#2,#3}}
\noindent This relation is transitive, reflexive and symmetric.  In order to lighten the notation, the parameters $\secmap{}$ and $\Upsilon$ are omitted in the rest of the paper, written simply `$\memeqF{\secmap{1},\Upsilon}{F}{l}$'.

\paragraph{\emph{Store compatibility.}} The language defined in Section~\ref{subsec-language} is a higher-order language, where programs can build and execute expressions using values stored in memory. For example, the expression
\numbexample{
\app {\deref{a}} {\nil}
}
can evolve into an insecure program when running on a memory that maps a reference $a$ to a lambda-abstraction whose body consists of an insecure expression.  In order to avoid considering all such programs insecure, it is necessary to make assumptions concerning the contents of the memory.  Here, memories are assumed to be compatible to the given security setting and typing environment, requiring typability of their contents with respect to the type system that is defined in the next subsection (see Definition~\ref{def-iflow-compatibility}).  The assumption is used in the definition of bisimulation that follows (as well as in subsequent security definitions).

\paragraph{\emph{Information flow Bisimulation.}}
Intuitively, if a program is shown to be related to itself by means of an information flow bisimulation, one can conclude that it has the same behavior regardless of changes in the high part of the state.  In other words, the high part of the state has not interfered with the low part, i.e., no security leak has occurred.  A secure program can then be defined as one that is related to itself by an information flow bisimulation.  Resetting the state arbitrarily at each step of the bisimulation game accounts for changes that might be induced by threads that are external to the pools under consideration, thus enabling compositionality of the property.

Given that we are considering a setting with subjective migration (i.e. only the thread itself can trigger its own migration), changes in the position of threads in a given pool cannot be induced externally to that pool.  We can then focus on the behavior of threads when coupled with their possible locations in the network, and look at how the behavior of thread configurations is affected by changes in the high part of the stores.  Accordingly, the following information flow bisimulation fixes the position tracker across the bisimulation steps:
\begin{defi}[$\biseqt{W,\secmap{},\Upsilon}{\typenv}{\ell}$] \label{def-bisimdefNDN2}  %
Given a security level $\ell$, 
a $(W,{\secmap{}},{\Upsilon},{\typenv},{\ell})$-bisimulation is a symmetric relation $\fR$ on thread configurations that satisfies, for all $P_1,T_1,P_2,T_2$, and $({\secmap{}},{\typenv})$-compatible stores $S_1, S_2$
\myexample{
\confd{P_1}{T_1}~\fR~\confd{P_2}{T_2} ~\textit{and}~ W \vdash \iconft{T_1}{P_1}{S_1} \xarr{\textbf{F}}{d} \iconft{T_1'}{P_1'}{S_1'}
~\textit{and}~ \confd{T_1}{S_1} \memeqF{\secmap{1},\Upsilon}{\textbf{F}}{l} \confd{T_2}{S_2},
}
with $(\dom{{S_1}'}-\dom{S_1})\cap\dom{S_2}=\emptyset$ and $(\dom{{T_1}'}-$ $\dom{T_1})\cap\dom{T_2}=\emptyset$ implies that there exist ${P_2'}, {T_2'}, {S_2'}$~such that:
\myexample{
W \vdash \iconft{T_2}{P_2}{S_2} \rarr \iconft{T_2'}{P_2'}{S_2'} ~\textit{and}~ \confd{T_1'}{S_1'} \memeqF{\secmap{1},\Upsilon}{\kltop}{l} \confd{T_2'}{S_2'}
~\textit{and}~ \confd{P_1'}{T_1'}~\fR~\confd{P_2'}{T_2'}
}
Furthermore, $S_1',S_2'$ are still $(\secmap{},\typenv)$-compatible.
The largest $(W,{\secmap{}},\Upsilon,{\typenv},\ell)$-bisimulation, the union of all such bisimulations, is denoted~$\biseqt{W,\secmap{},\Upsilon}{\typenv}{\ell}$.
 \end{defi}
\renewcommand{\biseqt}[3]{{\approx}^{W,#2}_{#3}}
In order to lighten the notation, the parameters $\secmap{}$ and $\Upsilon$ are omitted in the rest of the paper, written simply `$\biseqt{W,\secmap{},\Upsilon}{\typenv}{\ell}$'.
For any %
${\typenv}$ and $\ell$, the set of pairs of thread configurations where threads are values is an %
$(W,{\typenv},{\ell})$-bisimulation.  Furthermore, the union of a family of %
$(W,{\typenv},\ell)$-bisimulations is a %
$(W,{\typenv},\ell)$-bisimulation. Consequently, $\biseqt{W,\secmap{},\Upsilon}{\typenv}{\ell}$ exists. %

We briefly recall intuitions that explain the above definition here, and refer the reader to~\cite{AC11} for more explanations. 
The reason why the above bisimulation potentially relates more programs than one for Non-interference is the stronger premise $\confd{T_1}{S_1}\memeqF{\secmap{1},\Upsilon}{F}{l} \confd{T_2}{S_2}$, which assumes pairs of states that coincide to a greater extent, thus facilitating the reproduction of the behavior by the opposite thread configuration.  The absence of a condition on the flow policy of the matching move for $P_2$ enables all expressions without side-effects (such as) to be bisimilar, independently of the flow policy that is declared by their evaluation contexts.
Clearly, the relation $\biseqt{W,\secmap{},\Upsilon}{\typenv}{\ell}$ is not reflexive since, as only ``secure'' programs, as defined next, are bisimilar to themselves.  For instance, the insecure expression $\assign{b_B}{\deref {a_A}}$ is not bisimilar to itself if $A\not\Fpreceq{F} B$.

We now present a new formalization of the non-disclosure property, defined over thread configurations, that is suitable for distributed settings.
\begin{defi}[Distributed Non-disclosure] \label{def-propertyDND}
A pool of threads $P$ satisfies the Distributed Non-disclosure property with respect to an allowed-policy mapping $W$, reference labeling $\secmap{}$, thread labeling $\Upsilon$ and typing environment $\typenv$, if it satisfies $\confd{P}{T_1}~\biseqt{W,\secmap{},\Upsilon}{\typenv}{\ell}~\confd{P}{T_2}$ for all security levels $\ell$ and position trackers $T_1,T_2$ such that $\dom{P}=\dom{T_1}=\dom{T_2}$ and $T_1 \memeqF{\secmap{1},\Upsilon}{\kltop}{l} T_2$.  We then write $P\in\SecDND(W,\secmap{},\Upsilon,\typenv)$.
\end{defi}
\noindent Distributed Non-disclosure is compositional with respect to set union of pools of threads, up to disjoint naming of threads, and a subjective migration primitive. %

We are considering a simplistic memory model where all of the network's memory is accessible at all times by every process in the network.  With this assumption we avoid \emph{migration leaks} that derive from synchronization behaviors \cite{AC11},
but migration leaks can be encoded nonetheless.  The idea is that now a program can reveal information about the position of a thread in a network by performing tests on the flow policy that is allowed by that site.
\begin{exa}[Migration Leak] \label{example-updatemigr}
  In the following program, the new thread will be created at (migrate to) domains $d_1$ or $d_2$ depending on the tested high value.
  \numbexample{
\begin{array}{ll}
\lkw{if} {\deref {a_H}} &\lkw{then} {\threadnat l {\allowed F {\assign {b_L} {1}}{\assign {b_L} {2}}} {d_1}}\\
&\hspace{-7pt}\lkw{else} {\threadnat l {\allowed F {\assign {b_L} {1}}{\assign {b_L} {2}}} {d_2}}
\end{array} \label{exallowedmigr}
}
If the allowed flow policies of these domains differ on whether they allow $F$, different low-assignments are performed, thus revealing high level information.  Therefore, the program is insecure with respect to Distributed Non-disclosure.
\end{exa}

\NEW{}
It should be clear that this property concerns only the matching between flow declarations and the leaks that are encoded in the program.  It does not restrict the usage of flow declarations. For example, %
\numbexample{ \label{ex-notDNI}
\threadnat l {\flow F {\assign {b} {\deref {a}}}} d
} always satisfies Distributed Non-disclosure when $F = \klbot$, even though it violates the allowed flow policy of domain $d$ if $\secmap{1}(a) \not\Fpreceq{W(d)} \secmap{1}(b)$.

\paragraph{\emph{Comparison with Non-Disclosure for Networks.}}

\hide{
In order to compare the precision of these two properties, we recall the definition of Non-Disclosure for Networks here, using the notations of the current paper.
\begin{defi}[$\biseqtold{W,\secmap{},\Upsilon}{\typenv}{\ell}$] \label{def-bisimdefNDN1}    %
Consider an allowed-policy mapping $W$, a reference labeling $\secmap{}$, a thread labeling $\Upsilon$, and a typing environment $\typenv$.
A $(W,{\secmap{}},{\Upsilon},{\typenv},{\ell})$-bisimulation is a symmetric relation $\fR$ \textbf{on pools of threads} that satisfies, for all $P_1,P_2$, and for all $({\secmap{}},{\typenv})$-compatible stores $S_1, S_2$:
\[
{P_1} ~\fR~ {P_2} ~\textit{and}~ W \vdash \iconft{T_1}{P_1} {S_1} \xarr{\boldsymbol{F}}{d} \iconft{T_1'}{P_1'}{S_1'} ~\textit{and}~
\confd{T_1}{S_1} \memeqF{\secmap{1},\Upsilon}{\boldsymbol{F}}{l} \confd{T_2}{S_2}
\]
with $\dom{{S_1}'}-\dom{S_1}~\cap~\dom{S_2}=\emptyset$ and $\dom{{T_1}'}-\dom{T_1}~\cap~\dom{T_2}=\emptyset$ implies that there exist ${P_2'}, {T_2'}, {S_2'}$ such that:
\[
W \vdash \iconft{T_2}{P_2}{S_2} \rarr \iconft{T_2'}{P_2'}{S_2'} ~\textit{and}~
\confd{T_1'}{S_1'} \memeqF{\secmap{1},\Upsilon}{\kltop}{l} \confd{T_2'}{S_2'} ~\textit{and}
{P_1'} ~\fR~ {P_2'}
\]
Furthermore, $S_1',S_2'$ are still $(\secmap{},\typenv)$-compatible.
The largest $(W,{\secmap{}},\Upsilon,{\typenv},\ell)$-bisimulation, the union of all such bisimulations, is denoted~$\biseqtold{W,\secmap{},\Upsilon}{\typenv}{\ell}$.
 \end{defi}
\renewcommand{\biseqtold}[3]{{\dot{\approx}}^{W,#2}_{#3}}
In order to lighten the notation, the parameters $\secmap{}$ and $\Upsilon$ are omitted in the rest of the paper, written simply `$\biseqtold{W,\secmap{},\Upsilon}{\typenv}{\ell}$'. 
For any %
${\typenv}$ and $\ell$, the set of pairs of thread configurations where threads are values is an %
$(W,{\typenv},{\ell})$-bisimulation.  Furthermore, the union of a family of %
$(W,{\typenv},\ell)$-bisimulations is a %
$(W,{\typenv},\ell)$-bisimulation. Consequently, $\biseqtold{W,\secmap{},\Upsilon}{\typenv}{\ell}$ exists. %
\begin{defi}[Non-disclosure for Networks] \label{def-propertyNDN1} 
A pool of threads $P$ satisfies the Non-disclosure for Networks property with respect to an allowed-policy mapping $W$, a reference labeling $\secmap{}$, a thread labeling $\Upsilon$ and a typing environment $\typenv$, if it satisfies $P~\biseqtold{W,\secmap{},\Upsilon}{\typenv}{\ell}~P$ for all security levels $\ell$.  We then write $P\in\SecNDN(W,\secmap{},\Upsilon,\typenv)$.
\end{defi}
}

The definition of Distributed Non-disclosure consists of a technical refinement of the Non-Disclosure for Networks property~\cite{AC11}.  The main difference is that, while Non-Disclosure for Networks was defined over pools of threads, we now define Distributed Non-disclosure over thread configurations.  This means that, when imposing restrictions on the behaviors of bisimilar pools of threads,
the former definition resets the state arbitrarily at each step of the bisimulation game.  However, as we have pointed out, in a context where migration is subjective, resetting the position tracker arbitrarily is unnecessary.  In fact, it leads to a property that is overly restrictive.  In the following example, program ${M_\textit{insec}}$ can be a direct leak that is not placed within a flow declaration:
\numbexample{
\threadnat l {\allowed F {\nil} {M_\textit{insec}}} d   \label{ex-restrictive-NDN1}
}
The above program is intuitively secure if $W(d) \klpreceq F$ and insecure otherwise, as the body of the thread is known to be executed at domain $d$.   This is reflected by Distributed Non-disclosure as proposed in Definition~\ref{def-propertyDND}.  However, it is considered insecure by the former definition, %
as if the allowed-condition is executed over ``fresh'' thread configurations such that the thread is located at a domain where $F$ is \emph{not} allowed, then the branch with the illegal expression ${M_\textit{insec}}$ would be executed.  Another example of a secure program that is considered insecure according to the former definition,
but not by Definition~\ref{def-propertyDND}, is a variant of the one in Equation~(\ref{exallowedmigr}), where $d_1$=$d_2$, i.e. where the two branches are syntactically equal.

Distributed Non-disclosure of Definition~\ref{def-propertyDND} is strictly weaker than the thread pool-based Definition of Non-disclosure for Networks, %
in the sense that it considers more programs as secure.  %
In order to formalize this result, we denote by $\SecNDN(W,\secmap{},\Upsilon,\typenv)$ the set of pools of threads that satisfy the former definition: %
\begin{prop}\text{} \label{prop-DNDcomparison}
$\SecNDN(W,\secmap{},\Upsilon,\typenv) \subset \SecDND(W,\secmap{},\Upsilon,\typenv)$.
\end{prop}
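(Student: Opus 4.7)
My plan is to show inclusion and strictness separately.

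For the inclusion, given $P \in \SecNDN(W,\secmap{},\Upsilon,\typenv)$ and a security level $\ell$, I would fix a witnessing Non-disclosure for Networks bisimulation $\fR^0$ on pools of threads (in the sense of~\cite{AC11}) for $P \biseqtold{W,\secmap{},\Upsilon}{\typenv}{\ell} P$, and set
\[
\fR = \{ (\confd{P_1}{T_1}, \confd{P_2}{T_2}) \mid P_1 \mathrel{\fR^0} P_2,\ \dom{T_i} = \dom{P_i},\ T_1 \memeqF{\secmap{1},\Upsilon}{\kltop}{\ell} T_2 \}.
\]
Then I would verify that $\fR$ is a $(W,\secmap{},\Upsilon,\typenv,\ell)$-bisimulation in the sense of Definition~\ref{def-bisimdefNDN2}. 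The key observation is that the Non-disclosure for Networks bisimulation clause already quantifies universally over low-equal pairs of states, so applying it to $P_1 \mathrel{\fR^0} P_2$ with the very $T_1, T_2, S_1, S_2$ arising in the Distributed Non-disclosure game for $\fR$ yields matching transitions together with a residual pair $P_1' \mathrel{\fR^0} P_2'$ and the post-transition low-equality $\confd{T_1'}{S_1'} \memeqF{\secmap{1},\Upsilon}{\kltop}{\ell} \confd{T_2'}{S_2'}$. Projection onto position trackers gives $T_1' \memeqF{\secmap{1},\Upsilon}{\kltop}{\ell} T_2'$, hence $\confd{P_1'}{T_1'} \mathrel{\fR} \confd{P_2'}{T_2'}$. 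Instantiating $P_1 = P_2 = P$ with arbitrary low-equal $T_1, T_2$ on $\dom P$ then concludes $P \in \SecDND(W,\secmap{},\Upsilon,\typenv)$.

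For strictness, I would use the program signalled in the discussion preceding Example~\ref{example-updatemigr}, namely
\[
P = \{(\threadnat l {\allowed F {\nil} {M_\textit{insec}}} d)^m\},
\]
instantiated with $W(d) \klpreceq F$, $l \not\preceq \ell$, and $M_\textit{insec}$ a direct leak visible at $\ell$ (for instance $\assign b {\deref a}$ with $\secmap{1}(a)$ high and $\secmap{1}(b)$ low). Membership $P \in \SecDND$ holds because the unique first transition places the spawned thread at $d$ in both position trackers carried by the Distributed Non-disclosure game, after which the allowed-condition at $d$ deterministically chooses the $\nil$ branch, leaving residual configurations that are trivially related. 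Failure $P \notin \SecNDN$ is witnessed at the round of the Non-disclosure for Networks game following the spawn step, where one may freely choose position trackers placing the newly created high-labelled thread at two distinct domains, only one of which allows $F$; these trackers are low-equal because $l \not\preceq \ell$, yet they trigger distinct branches of the allowed-condition, only one of which executes $M_\textit{insec}$, producing incompatible final stores that no matching move can reconcile.

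The main obstacle I expect is checking, for the inclusion, that the freshness side conditions on new names in the transition relative to $S_2$ and $T_2$ align between the two bisimulation clauses. They do, because the Non-disclosure for Networks clause already admits arbitrary such $S_2, T_2$ subject to the same freshness requirements. The strictness argument is then a direct unfolding of two reduction steps, using that the semantics of the allowed-condition depends on $W(T(n))$ for the spawned thread $n$.
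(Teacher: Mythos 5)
Your proposal is correct and follows essentially the same route as the paper: the inclusion is proved by lifting the pool-level Non-disclosure-for-Networks bisimulation to a relation on thread configurations constrained by $\kltop$-low-equal position trackers and checking the Distributed Non-disclosure clauses (the paper uses the largest bisimulation $\biseqtold{W,\secmap{},\Upsilon}{\typenv}{\ell}$ where you use a witnessing $\fR^0$, an immaterial difference), and strictness is witnessed by the same program $\threadnat l {\allowed F {\nil} {M_\textit{insec}}} d$ from Equation~(\ref{ex-restrictive-NDN1}). Your strictness argument is in fact spelled out in more detail than the paper's, which simply cites the example.
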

\hide{
    We consider $P$ in $\SecNDN(W,\secmap{},\Upsilon,\typenv)$, i.e. $P$~is such that for all security levels $\ell$ we have $P~\biseqtold{W,\secmap{},\Upsilon}{\typenv}{\ell}~P$ according to Definition~\ref{def-bisimdefNDN1}.  Given any pair of position trackers $T_1,T_2$ such that $\dom{P}=\dom{T_1}=\dom{T_2}$ and $T_1 \memeqF{\secmap{1},\Upsilon}{\kltop}{l} T_2$, we prove that for all security levels $\ell$ we have $\confd{P}{T_1}~\biseqt{W,\secmap{},\Upsilon}{\typenv}{\ell}~\confd{P}{T_2}$ according to Definition~\ref{def-propertyDND}.
To this end, we consider the set
\[
N = \{\confd{\confd{P_1}{T_1}}{\confd{P_2}{T_2}} ~|~ \dom{P}=\dom{T_1}=\dom{T_2} \textit{ and } T_1 \memeqF{\secmap{1},\Upsilon}{\kltop}{l} T_2 \textit{ and }
P_1~\biseqtold{W,\secmap{},\Upsilon}{\typenv}{\ell}~P_2 \} 
\]
and prove that $N \subseteq \biseqt{W,\secmap{},\Upsilon}{\typenv}{\ell}$ (according to Definition~\ref{def-propertyDND}).

Assume that $\confd{\confd{P_1}{T_1}}{\confd{P_2}{T_2}} \in N$, and suppose that for any given $({\secmap{}},{\typenv})$-compatible memories $S_1, S_2$ we have $W \vdash \iconft{T_1}{P_1}{S_1} \xarr{F}{d} \iconft{T_1'}{P_1'}{S_1'}$ and $\confd{T_1}{S_1} \memeqF{\secmap{1},\Upsilon}{F}{l} \confd{T_2}{S_2}$, with $\dom{{S_1}'}-\dom{S_1}~\cap~\dom{S_2}=\emptyset$ and $\dom{{T_1}'}-\dom{T_1}~\cap~\dom{T_2}=\emptyset$.  Then, by Definition~\ref{def-bisimdefNDN1} there exist ${P_2'}, {T_2'}, {S_2'}$ such that 
\[
W \vdash \iconft{T_2}{P_2}{S_2} \rarr \iconft{T_2'}{P_2'}{S_2'} ~\textit{and}~
\confd{T_1'}{S_1'} \memeqF{\secmap{1},\Upsilon}{\kltop}{l} \confd{T_2'}{S_2'} ~\textit{and}~
{P_1'} ~\fR~ {P_2'}
\]
Furthermore, $S_1',S_2'$ are still $(\secmap{},\typenv)$-compatible.  It is now easy to see that $\confd{\confd{P_1'}{T_1'}}{\confd{P_2'}{T_2'}} \in N$.

The example in Equation~(\ref{ex-restrictive-NDN1}) shows that $\SecNDN(W,\secmap{},\Upsilon,\typenv) \not= \SecDND(W,\secmap{},\Upsilon,\typenv)$. 
\end{proof}

}

\begin{figure}[t!]
\begin{equation*}
\begin{array}{c}

\text{[\niltyp]} ~
{\typenv \byint{j}{\secmap{}}{F} \nil : s, {{\unit}}}  \qquad  

\text{[\boolttyp]} ~
{\typenv \byint{j}{\secmap{}}{F} \vrai : s, {{\bool}}} \qquad

\text{[\boolftyp]} ~
{\typenv \byint{j}{\secmap{}}{F} \faux : s, {{\bool}}} \\[3mm]

\text{[\loctyp]} ~
{\typenv \byint{j}{\secmap{}}{F} {a} : s, {\rfrt {\secmap{2}(a)} {\secmap{1}(a)}}} \qquad

\text{[\vartyp]} ~
{\typenv,x:\tau \byint{j}{\secmap{}}{F} x : s, {\tau}} \\[3mm] %

\text{[\abstyp]} ~
\fra{\typenv, x:{\tau} \byint{j}{\secmap{}}{F} M : s,{\sigma}}
{\typenv \byint{j'}{\secmap{}}{F'} {\lam {x} {M}} : s', {\tau \xarr {j,F}{s} \sigma}} \qquad

\text{[\rectyp]} ~
\fra{\typenv,x:\tau \byint{j}{\secmap{}}{F} X : s,\tau}
{\typenv \byint{j}{\secmap{}}{F} \fix{x}X : s,\tau} \\[5mm]

\text{[\apptyp]} ~

\fra{
  \begin{array}{c}
    \typenv \byint{j}{\secmap{}}{F} {M} : {\tef {s} {\tau \xarr{j,F}{s'} \sigma}}\\
    \typenv \byint{j}{\secmap{}}{F} N : {{\tef{s''}{\tau}}}
      \end{array}
    ~~~
\begin{array}{rcl}  
s.t &\Fpreceq{F}& s''.w\\
 s.r, s''.r &\Fpreceq{F}&s'.w
\end{array}}
    {\typenv \byint{j}{\secmap{}}{F} {\app {M} {N}} : {\tef{s \join s' \join s'' \join \eft{\bot}{\top}{s.r \join s''.r}} \sigma} } \\[5mm]

    \text{[\seqtyp]} ~
\fra{
  \begin{array}{c}
    \typenv \byint{j}{\secmap{}}{F} {M} : {\tef s {\tau}}\\
    \typenv \byint{j}{\secmap{}}{F} N : {{\tef{s'}{\sigma}}}
  \end{array}
  ~~~  s.t \Fpreceq{F} s'.w}
{\typenv \byint{j}{\secmap{}}{F} {\seq {M} {N}} : {\tef{s \join s'} \sigma} } \qquad

\text{[\reftyp]} ~
\fra{\typenv \byint{j}{\secmap{}}{F} {M} : {\tef s {\theta}}  ~~~ s.r,s.t \Fpreceq{F} l} 
{\typenv \byint{j}{\secmap{}}{F} \rfr{l,\theta}{M} : s \join \eft {\bot}{l}{\bot},\rfrt \theta {l}} \\[5mm]

\text{[\assigntyp]} ~ 
\fra{
  \begin{array}{c}
    \typenv \byint{j}{\secmap{}}{F} {M} : {\tef s {\rfrt \theta {l}}} \\
    \typenv \byint{j}{\secmap{}}{F} N : {{\tef{s'}{\theta}}}
  \end{array}
  ~~~  
\begin{array}{rcl}
s.t &\Fpreceq{F}&s'.w \\
s.r, s'.r &\Fpreceq{F}&{l}
\end{array}}
{\typenv \byint{j}{\secmap{}}{F} {\assign {M} {N}} : {\tef{s \join s' \join {\eft \bot {l} {\bot}}} \unit} } \quad %

\text{[\dereftyp]} ~
\fra{\typenv \byint{j}{\secmap{}}{F} {M} : {\tef s {\rfrt \theta {l}}} }
{\typenv \byint{j}{\secmap{}}{F} \deref{M} : {\tef {s \join {\eft {l} {\top} {\bot}}} {\theta}}} \\[5mm]

\text{[\condtyp]} ~
  \fra{
    \begin{array}{c} 
    \typenv \byint{j}{\secmap{}}{F} M: {s,{\bool}} \\
    \typenv \byint{j}{\secmap{}}{F} N_t : {{s_t,{\tau}}} \\ 
    \typenv \byint{j}{\secmap{}}{F} N_f : {{s_f,{\tau}}}
\end{array}  ~~~  {s.r \Fpreceq{F} s_t.w, s_f.w}}
{\typenv \byint{j}{\secmap{}}{F} {\cond{M}{N_t}{N_f}} : {s \join s_t \join s_f \join \eft{\bot}{\top}{s.r},{\tau}}} \\[5mm]

\boldsymbol{\textbf{[\allowtyp]}} ~
\fra{
\begin{array}{c} 
\typenv \byint{j}{\secmap{}}{F} N_t : {{s_t,{\tau}}} \\ 
\typenv \byint{j}{\secmap{}}{F} N_f : {{s_f,{\tau}}}
\end{array} 
 ~~~  {j \Fpreceq{F} s_t.w, s_f.w}}   %
{\typenv \byint{j}{\secmap{}}{F} \allowed {F'} {N_t} {N_f} : {s_t \join s_f \join \eft{j}{\top}{j} },{\tau}} \\[5mm]

\boldsymbol{\textbf{[\flowtyp]}} ~
\fra{\typenv \byint{j}{\secmap{}}{{F} \klmeet F'} N : s,\tau}   %
{\typenv \byint{j}{\secmap{}}{F} \flow {F'} N : s, \tau}  \quad %

\boldsymbol{\textbf{[\migtyp]}} ~
\fra{
\typenv \byint{l}{\secmap{}}{\kltop} {{M}} : {\tef s {\unit}}}
{\typenv \byint{j}{\secmap{}}{F} \threadnat l M {d'} : \tef {\eft{\bot}{l \meet s.w}{\bot}} {\unit}} \qquad
	\end{array}
\end{equation*}
\caption{Type and effect system for checking Distributed Non-disclosure}	\label{fig-iflow-typesystem}
\figline
\end{figure}

\subsection{Type and Effect System}  \label{subsec-iflow-typesystem}

We now present a type and effect system~\cite{LG88} that accepts programs that satisfy Distributed Non-disclosure, as defined in Sub-subsection~\ref{subsec-iflow-property}.
The typing judgments used in Figure~\ref{fig-iflow-typesystem} have the form
\myexample{
\typenv\byint{j}{\secmap{}}{F} M:s,\tau
}
meaning that the expression $M$ is typable with type $\tau$ and security effect $s$ in the typing context $\typenv :\Var \rightarrow \Typ$, which assigns types to variables.  The turnstile has three parameters: 
(1) the reference labeling $\secmap{}$; 
(2) the flow policy \emph{declared by the context} $F$, represents the one that is valid in the evaluation context in which the expression $M$ is typed, and contributes to the meaning of operations and relations on security levels.
(3) the security level $j$ represents the confidentiality level associated to the thread that the expression $M$ is part of, i.e. the confidentiality level of the location of that thread in the network.

The security effect $s$ is composed of three security levels that are referred to by $s.r$, $s.w$ and $s.t$ (in this order), and can be understood as follows: $s.r$ is the \emph{reading effect}, an upper-bound on the security levels of the references that are read by $M$; $s.w$ is the \emph{writing effect}, a lower bound on the references that are written by $M$; $s.t$ is the \emph{termination effect}, an upper bound on the level of the references on which the termination of expression $M$ might depend.
According to these intuitions, in the type system the reading and termination levels are composed in a covariant way, whereas the writing level is contravariant.

Types have the following syntax ($t$ is a type variable):
\myexample{
  \begin{array}{rcl}
s \in \Lev \times \Lev \times \Lev ~ &::=& \eft{\reff{s}} {~\weff{s}} {~\teff{s}}\\
\tau, \sigma, \theta \in \Typ ~ &::=& ~ t ~|~ \unit ~|~  \bool ~|~ \rfrt \theta l ~|~ \tau \xarr{j,F}{s} \sigma
\end{array}
}
Typable expressions that reduce to $\nil$ have type $\unit$, and those that reduce to booleans have type $\bool$.
Typable expressions that reduce to a reference which points to values of type $\theta$ and has security level $l$ have the reference type $\rfrt {\theta}{\ell}$.  The security level $l$ is used to determine the effects of expressions that handle references.  
Expressions that reduce to a function that takes a parameter of type $\tau$, that returns an expression of type $\sigma$, and with a \textit{latent effect} $s$ \cite{LG88} have the function type $\tau \xarr{j,F}{s} \sigma$.  The latent effect is the security effect of the body of the function, while the latent flow policies are those that are assumed to hold when the function is applied to an argument, and the latent security level $j$ of the thread containing the expression that appears in the type of expressions that reduce to functions.

We use a lattice on security effects, that is obtained from the point-wise composition of three lattices of the security levels.  More precisely:
\myexample{
\begin{array}{c}
s\Fpreceq{F} s' ~~\equas{def}~~ s.r \Fpreceq{F} s'.r ~~\&~~ s'.w \Fpreceq{F} s.w ~~\&~~ s.t \Fpreceq{F} s'.t \\
s\Fjoin{} s' ~~\equas{def}~~ \eft{\reff{s}\Fjoin{}\reff{s'}} {~\weff{s}\Fmeet{}\weff{s'}} {~\teff{s}\Fjoin{}\teff{s'}}\qquad
\top = \eft{\top}{\bot}{\top} \qquad \bot =  \eft{\bot}{\top}{\bot}
\end{array}
}

Our type and effect system applies restrictions to programs in order to enforce compliance of all information flows to the flow policy that is declared by each context.  This is achieved by parameterizing the flow relation $\Fpreceq{F}$ with the flow policy $F$ that parameterizes the typing judgments.  The flow policy is updated by the rule $\flowtyp$, which allows to type the body of the flow declaration in a more permissive way.  Apart from the parameterization of the flow relation with the current flow policy, the conditions imposed in the premises, and the update of the security effects in the conclusion, are fairly standard in information flow type systems and enforce syntactic rules of the kind ``no low writes should depend on high reads'', both with respect to the values that are read, and to termination behaviors that might be derived.  We refer the reader to~\cite{AC11} for explanations on all of these conditions and updates, and focus next on the new aspects that are introduced in order to prevent new forms of information leaks that appear in our distributed security setting (such as Example~\ref{example-updatemigr}), and that deserve further attention.

 In the allowed-condition rule~$\allowtyp$, the security level~$j$ that is associated to each thread, and represents the confidentiality level of the position of the thread in the network, is used to update
the reading and termination effect.  This is because the choice of the branch (which depends on the threads location) can determine the final value and the termination behavior of the condition.  The security level of the thread is also constrained not to be higher in confidentiality than the level at which the state can be changed (as it could potentially leak information about the thread's position) in rule $\allowtyp$.  Finally, in rule $\migtyp$, the confidentiality level of the new thread is used to update the writing effect of thread creation, which causes a change in the position-tracker.

We can now define the compatibility predicate that applies to this particular information flow analysis:
\begin{defi}[$({\secmap{}},{\typenv})$-Compatibility] \label{def-iflow-compatibility}
A memory $S$ is said to be $({\secmap{}},{\typenv})$-\emph{com\-pa\-tib\-le} if, for every reference $a \in \dom{S}$, its value $S(a)$ satisfies ${\typenv \byint{j}{\secmap{}}{F} S(a) : s,\secmap{2}(a)}$, for every security level $j$, flow policy $F$, and security effect $s$.
\end{defi}

\subsubsection{Soundness.} %

The main result of this section, soundness, states that the type system only accepts expressions that are secure in the sense of Definition~\ref{def-propertyDND}. In the remainder of this section we sketch the main definitions and results that can be used to reconstruct a direct proof of this result. A similar proof is given in detail for a similar language (without the allowed-condition or remote thread creation) in~\cite{AC11}.  The main difference, besides the treatment of new language constructs, is the assumption of memory compatibility that is introduced in the present work.

\paragraph{\emph{Subject Reduction.}}

In order to establish the soundness of the type system of Figure~\ref{fig-iflow-typesystem} we need a Subject Reduction result, stating that the type of a thread is preserved by reduction. When a thread performs a computation step, some of its effects may be performed by reading, updating or creating a reference, and some may be discarded when a branch in a conditional expression is taken. Then the effects of an expression ``weaken'' along the computations.  This result is established by Proposition~\ref{prop-iflow-subjectreduction} below.  %
\begin{prop}[Subject Reduction]\label{prop-iflow-subjectreduction}
Consider a thread $M^{m}$ such that ${\typenv \byint{\Upsilon(m)}{\secmap{}}{F} M : s, \tau}$, and suppose that $W \vdash \iconft{T}{\{M^{m}\}} S$ $\xarr{\var {F'}}{d}$ $\iconft{T'} {\{M'^{m}\} \cup P} {S'}$, for a memory $S$ that is $({\secmap{}},{\typenv})$-compatible. Then, there is an effect $s'$ such that $s' \preceq s$ and $\typenv\byint{\Upsilon(m)}{\secmap{}}{F} M':s',\tau$, and $S'$ is also $({\secmap{}},{\typenv})$-compatible.
Furthermore, if~$P = \{N^n\}$, for some expression $N$ and thread name $n$, then there exists %
$s''$ such that $s'' \preceq s$ %
and $\typenv\byint{\Upsilon(n)}{\secmap{}}{\kltop} N:s'',\unit$.
\end{prop}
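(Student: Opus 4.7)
The plan is to prove Subject Reduction by case analysis on the reduction rules, supported by three auxiliary lemmas. Since every reduction takes place inside an evaluation context $\cE{E}$ decorating the arrow with $\extrf{\cE{E}}$, I would first establish a \emph{Context Decomposition Lemma}: if $\typenv \byint{j}{\secmap{}}{F} \cC{E}{M} : s, \tau$, then there exist $s_1, \sigma$ with $\typenv \byint{j}{\secmap{}}{F \klmeet \extrf{\cE{E}}} M : s_1, \sigma$. The converse \emph{Context Recomposition} statement says that if $M$ is replaced by some $M'$ with effect $s_1' \preceq s_1$ at the same type, then $\cC{E}{M'}$ is typable in the outer judgment with some $s' \preceq s$. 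Both halves are proved by structural induction on $\cE{E}$, with the $\flow{F'}{\cE{E}}$ case matching the definition $\extrf{\flow{F'}{\cE{E}}} = F' \klmeet \extrf{\cE{E}}$ together with rule $\flowtyp$.

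With this machinery, I would add a \emph{Substitution Lemma} for pseudo-values, proved by induction on the typing derivation of the body; this handles $\beta$-reduction and unfolding through rule $\rectyp$. The case analysis on the remaining semantic rules is then mostly routine: for the rules eliminating a $\nil$-valued redex or selecting a branch (sequencing, conditionals, allowed-condition, flow-declaration value elimination), inversion on the relevant typing rule delivers a typing for the residual with an effect pointwise dominated by the original. The dereference, assignment, and reference-creation cases use $({\secmap{}},{\typenv})$-compatibility of $S$ to type $S(a)$ and to establish preservation of compatibility after $\update{a}{V}{S}$, exploiting $\secmap{2}(a) = \theta$ in the latter two. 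The remote thread creator, typed by $\migtyp$, leaves $\nil$ as its residual with trivially bounded effect, while the spawned thread $N^n$ is typed exactly by the premise $\typenv \byint{l}{\secmap{}}{\kltop} N : s'', \unit$ of $\migtyp$, with $l = \Upsilon(n)$ enforced by the operational side-condition on freshly created thread names.

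The main obstacle will be the bookkeeping of effect-weakening through Context Recomposition: one must verify that each composite typing rule ($\apptyp$, $\seqtyp$, $\reftyp$, $\assigntyp$, $\dereftyp$, $\condtyp$) is monotone in the effect of the subterm occupying the evaluation-context hole, so that replacing the redex by a residual with smaller effect only decreases the whole effect while preserving all side-conditions of the form $s.t \Fpreceq{F} s'.w$. The flow-policy parameter interacts non-trivially, since the hole is typed under the more permissive policy $F \klmeet \extrf{\cE{E}}$ while the outer rules check their side-conditions under the coarser $F$; the argument works because the baseline ordering $\preceq$ implies $\Fpreceq{F}$ for every $F$, so any side-condition satisfied by the original redex effect remains satisfied by the smaller residual effect. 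Once this monotonicity is in hand, all reduction-rule cases are discharged directly by the decomposition and substitution lemmas together with inversion on the typing rule of the redex.
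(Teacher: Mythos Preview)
Your proposal is correct and essentially matches the paper's approach. What you call Context Decomposition and Context Recomposition is exactly the paper's single Replacement Lemma (proved by induction on $\cE{E}$, the $\flow{F'}{\cE{E}}$ case being the only non-standard one), your Substitution Lemma is the paper's Substitution Lemma, and the case analysis on redexes proceeds the same way; the paper also relies on a small remark that pseudo-values can be retyped under arbitrary $j',F',s'$, which handles the retyping needed in the $\beta$, $\rectyp$, dereference, and flow-value cases.
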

\begin{proof}
We follow the usual steps \cite{WF94}, where the main proof is a case analysis on the transition $W \vdash \iconft{T}{\{M^{m}\}} S$ $\xarr{\var {F}}{d}$ $\iconft{T'} {\{M'^{m}\} \cup P} {S'}$.
\end{proof}

\paragraph{\emph{Properties of the Semantics.}}

\hide{
One can show that if a thread is able to perform a step in one memory, when executing on a low-equal memory it can also perform a step and produce a low-equal result.  %
\begin{lem}[Guaranteed Transitions] \label{prop-iflow-guarantee}
Consider an allowed-policy mapping $W$, a thread $M^m$ and two states $\confd {T_1}{S_1}, \confd {T_2}{S_2}$ such that $W \vdash \iconft {T_1} {\{{M}^m\}} {S_1}$ $\xarr{F}{d}$ $\iconft {T_1'}{P_1'}{S_1'}$, and for some $F'$ we have $\confd {T_1}{S_1} \memeqF{\secmap{1},\Upsilon}{F \klmeet F'}{\low} \confd {T_2}{S_2}$.  Then $P_1' = \{M_1'^m\} \cup P'$ for some expression $M_1'$, and if $(\dom{{S_1}'}-\dom{S_1})\cap\dom{S_2}=\emptyset$ and $(\dom{{T_1}'}-$ $\dom{T_1})\cap\dom{T_2}=\emptyset$, we have that there exist $M_2'$, $T_2'$ and $S_2'$ such that $W \vdash \iconft {T_2}{\{{M}^m\}}{S_2} \xarr{F}{d} \iconft{T_2'}{\{M_2'^m\} \cup P}{S_2'}$ and $\confd {T_1'}{S_1'} \memeqF{\secmap{1},\Upsilon}{F \klmeet F'}{\low} \confd {T_2'}{S_2'}$.  Furthermore, if $P' = \{N^n\}$~for some expression $N$, then $M_1' = M_2'$.
\end{lem}
\begin{proof}
By case analysis on the proof of $W \vdash \iconft {T_1} {\{{M}^m\}} {S_1}$ $\xarr{F}{d}$ $\iconft {T_1'}{P_1'}{S_1'}$.
\end{proof}
}
If the evaluation of a thread $M^m$ differs in the context of two distinct states while not creating two distinct reference or thread names, this is because either $M^m$ is performing a dereferencing operation, which yields different results depending on the memory, or because $M^m$ is testing the allowed policy. %
\begin{lem}[Splitting Computations]\label{prop-iflow-split}\label{app-prop-iflow-split}\text{}\\ 
  Consider a thread $M^{m}$ such that $W \vdash \iconft{T_1}{\{M^m\}}{S_1}$ $\xarr{F}{d}$ $\iconft{T_1'}{P_1'}{S_1'}$ and $W \vdash \iconft{T_2}{\{M^m\}}{S_2}$ $\xarr{F'}{d}$ $\iconft{T_2'}{P_2'}{S_2'}$ with ${P_1}' \neq {P_2}'$. Then $P_1' = {\{{M_1}'^m\}}$, $P_2' = {\{{M_2}'^m\}}$ for some $M_1'$ and $M_2'$, and there exists $\cE{E}$ such that $F=\secpolcon{\cE{E}}=F'$, where either:
\begin{itemize}
\item $M=\cC{E}{\deref {a}}$ for some reference $a$, and $M_1'=\cC{E}{S_1(a)}$, $M_2'=\cC{E}{S_2(a)}$ with $\confd{T_1'}{S_1'}=\confd{T_1}{S_1}$ and $\confd{T_2'}{S_2'}=\confd{T_2}{S_2}$, or
\item $M=\cC{E}{\allowed {\bar F}{N_t}{N_f}}$ for some ${\bar F}, {N_t}, {N_f}$, and $T_1(m) \neq T_2(m)$ with $\confd{T_1'}{S_1'}=\confd{T_1}{S_1}$ and $\confd{T_2'}{S_2'}=\confd{T_2}{S_2}$.
\end{itemize}
\end{lem}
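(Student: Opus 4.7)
The plan is to proceed by case analysis on the reduction rule deriving the first transition $W\vdash\iconft{T_1}{\{M^m\}}{S_1}\xarr{F}{d}\iconft{T_1'}{P_1'}{S_1'}$. Since $M$ is the same in both premises and evaluation contexts decompose expressions uniquely into $\cC{E}{M_0}$, both derivations single out the same context $\cE{E}$ and attack the same redex $M_0$; this immediately delivers an $\cE{E}$ with $F=\extrf{\cE{E}}=F'$, as asserted in the conclusion.

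I would then classify the possible redex rules into three groups. First, the purely deterministic rules ($\beta$-reduction, conditional on a boolean value, sequencing past a value, unfolding of $\fix{x}X$, closure of a flow declaration around a value, and assignment) produce a residual expression that depends only on $M_0$ and not on the ambient state, so $P_1'=P_2'$ in each, contradicting the hypothesis and ruling those cases out. Second, reference creation and remote thread creation introduce a fresh name; under the standard convention that freshness is chosen consistently between parallel derivations (equivalently, up to $\alpha$-renaming of configurations) they also satisfy $P_1'=P_2'$, and thread creation is additionally excluded because it produces a two-thread pool $\{\cC{E}{\nil}^m, N^n\}$ rather than the singleton shape $\{M_i'^m\}$ that the conclusion demands.

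Only two rules remain, and these correspond precisely to the two alternatives in the conclusion. In the dereferencing case, $M_0=\deref{a}$ reduces to $P_i'=\{\cC{E}{S_i(a)}^m\}$ without altering state or position tracker, so $\confd{T_i'}{S_i'}=\confd{T_i}{S_i}$, and $P_1'\neq P_2'$ forces $S_1(a)\neq S_2(a)$, yielding the first alternative. In the allowed-condition case, $M_0=\allowed{\bar F}{N_t}{N_f}$ also leaves $S$ and $T$ untouched, and the selected branch is determined by whether $W(T_i(m))\klpreceq \bar F$ holds; since $W$ is fixed, a divergence between the branches can only come from $T_1(m)\neq T_2(m)$, giving the second alternative.

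The main technical obstacle, in my view, is the careful treatment of fresh names in the creation rules: one must make precise that syntactic inequality of $P_1'$ and $P_2'$ cannot arise from an accidental choice of distinct fresh identifiers. This is best dispatched by appealing to $\alpha$-equivalence of configurations, or to the naming conventions set up earlier in the paper, so that the only genuine sources of non-determinism across two runs of the same expression are the two state-dependent rules singled out above.
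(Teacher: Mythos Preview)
Your approach is essentially the same as the paper's: case analysis on the first transition, observing that only the dereference and allowed-condition rules depend on the state $(T,S)$. The paper's own proof is a two-line sketch saying exactly this, so your elaboration is in the same spirit.

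One small point of logic: you write that thread creation is ``additionally excluded because it produces a two-thread pool \ldots\ rather than the singleton shape \ldots\ that the conclusion demands.'' That is circular --- the singleton shape is part of what you must establish, not something you may assume. Your primary argument for the creation rules (consistent choice of fresh names, i.e.\ working up to $\alpha$-equivalence, forces $P_1'=P_2'$, contradicting the hypothesis) is the correct one and already suffices; the ``additionally'' remark should be dropped. The paper handles this implicitly via the earlier remark that a single-thread configuration has at most one transition up to the choice of new names, which is exactly the convention you invoke.
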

\begin{proof}
By case analysis on the transition $W \vdash \iconft{T_1}{\{M^m\}}{S_1}$ $\xarr{F}{d}$ $\iconft{T_1'}{P_1'}{S_1'}$.
Note that the only rules that depend on the state are those for the reduction of $\cC{E}{\deref {a}}$ and of $\cC{E}{\allowed {F'}{N_t}{N_f}}$.  
\end{proof}

We can identify a class of threads that have the property of never performing any change in the ``low'' part of the memory.  These are classified as being ``high'' according to their behavior:
\begin{defi}[Operationally High Threads] \label{def-operationallyhigh}\label{app-def-operationallyhigh}
Given a flow policy $F$ and a security level $l$, a set of threads $\fH$ is a set of \emph{operationally $(W,\secmap{},\Upsilon,\typenv,F,\ell)$-high threads} if the following holds for all $M^m \in \fH$, and for all states $\confd T S$ where $S$ is $({\secmap{}},{\typenv})$-compatible:
\myexample{W \vdash \iconft{T}{\{M^m\}}{S} \xarr{F'}{d} \iconft{T'}{P'}{S'} \text{ implies } \confd T S \memeqF{\secmap{1},\Upsilon}{F}{l} \confd {T'} {S'} \text{ and } P' \subseteq \fH
}
Furthermore, $S'$ is still $(\secmap{},\typenv)$-compatible.
The largest set of operationally $(W,\secmap{},\Upsilon,\typenv,F,\ell)$-high threads is denoted by $\semhigh^{W,\secmap{},\Upsilon,\typenv}_{F,\ell}$.  We then say that a thread $M^m$ is operationally $(W,\secmap{},\Upsilon,\typenv,F,\ell)$-high, if $M^m \in \semhigh^{W,\secmap{},\Upsilon,\typenv}_{F,\ell}$.
\end{defi}
In order to lighten the notation, the parameters $\secmap{}$ and $\Upsilon$ are omitted in the rest of the paper, written simply `operationally $(W,\typenv,F,\ell)$-high' and `$\semhigh^{W,\typenv}_{F,\ell}$'.
Remark that for any $W$, %
$\typenv$, $F$ and $\ell$, the set of threads with values as expressions is a set of operationally %
$(W,\typenv,F,\ell)$-high threads.  Furthermore, the union of a family of sets of operationally %
$(W,\typenv,F,\ell)$-high threads is a set of operationally %
$(W,\typenv,F,\ell)$-high threads. Consequently, %
$\semhigh^{W,\typenv}_{F,\ell}$ exists. %
Notice also that if $F'\subseteq F$, then $\semhigh^{W,\typenv}_{F,\ell} \subseteq \semhigh^{W,\typenv}_{F',\ell}$.

\paragraph{\emph{Soundness.}}

\hide{
Some expressions can be easily classified as ``high'' by the type system, which only considers their syntax.  These cannot perform changes to the ``low'' memory simply because their code does not contain any instruction that could perform them.  Since the \mentionind{security effect!writing effect}{writing effect} is intended to be a lower bound to the level of the references that the expression can create or assign to, expressions with a high writing effect can be said to be \emph{syntactically high}:
\begin{defi}[Syntactically High Expressions] \label{def-synhigh}\label{app-def-synhigh}
An expression $M$ is syntactically $(\secmap{},\typenv,j,F,l)$-high if there exists $s,\tau$ such that ${\typenv \byint{j}{\secmap{}}{F} M : s, \tau}$ with $s.w \not\Fpreceq{F} l$. 
The expression $M$ is a syntactically $(\secmap{},\typenv,j,F,l)$-high function if there exists $j',F',s,\tau,\sigma$ such that ${\typenv \byint{j'}{\secmap{}}{F'} M: \bot, \tau \xarr{j,F}{s} \sigma}$ with $s.w\not\Fpreceq{F}~l$.
\end{defi}
In order to lighten the notation, the parameters $\secmap{}$ and $\Upsilon$ are omitted in the rest of the paper, written simply `syntactically $(\typenv,j,F,l)$-high'.

Syntactically high expressions have an operationally high behavior.
\begin{lem}[High Expressions] \label{prop-iflow-highexpr}\label{app-prop-iflow-highexpr} %
If $M$ is a syntactically $(\typenv,j,F,l)$-high expression, and $\Upsilon$ is such that $\Upsilon(m)=j$, then, for all allowed-policy mappings $W$, the thread $M^m$ is an operationally $(W,\typenv,F,l)$-high thread.
\end{lem}
\begin{proof} %
The proof proceeds by showing that, for any given allowed-policy mapping $W$, the following is a set of operationally $(W,\typenv,F,\ell)$-high threads:
\myexample{\{ M^m ~|~ \exists j ~.~ M \textit{ is syntactically $(\secmap{},\typenv,j,F,l)$-high}\}}
Subject Reduction (Theorem \ref{prop-iflow-subjectreduction}) is used to guarantee typability after each reduction step.
\end{proof}
}

The following result shows that the behavior of typable high threads (i.e. those with a high security level) that is location sensitive (i.e. depend on their location) is operationally high.
\begin{lem}[Location Sensitive Typable High Threads] \label{prop-iflow-potentially}
For a given flow policy~$F$ and security levels $j$ and $\low$, consider a thread $M^m$ such that ${\typenv \byint{j}{\secmap{}}{F} M : s, \tau}$ and $M = \EC{\allowed {F'} {N_t}{N_f}}$ with $j \not\Fpreceq{F} \low$.  Then, for all allowed-policy mappings $W$ and thread labelings $\Upsilon$ such that $\Upsilon(m)=j$, we have that $M^m \in \semhigh^{W,\typenv}_{F,\low}$.
\end{lem}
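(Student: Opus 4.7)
The plan is to exhibit a candidate set $\fH$ of threads that contains $M^m$ and satisfies the closure conditions of Definition~\ref{def-operationallyhigh}; by maximality of $\semhigh^{W,\typenv}_{F,\low}$, this gives the desired membership. The natural choice is the set of all threads $N^n$ such that for some $j'$, $s$, $\tau$ one has $\typenv \byint{j'}{\secmap{}}{F} N : s,\tau$ with $\Upsilon(n)=j'$ and $s.w \not\Fpreceq{F} \low$; intuitively, the threads whose writing effect is high.

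First I would show $M^m \in \fH$. Inverting the derivation of ${\typenv \byint{j}{\secmap{}}{F} M : s,\tau}$ along the evaluation context $\cE{E}$ yields a typing of the focused redex by rule $\allowtyp$, giving branches with effects $s_t,s_f$ such that $j \Fpreceq{F} s_t.w$ and $j \Fpreceq{F} s_f.w$. Since the meet is the greatest lower bound, $j \Fpreceq{F} s_t.w \Fmeet{} s_f.w$; together with $j \not\Fpreceq{F} \low$ this forces the writing effect of the redex itself to be high. The typing rules for the surrounding evaluation context ($\apptyp$, $\seqtyp$, $\assigntyp$, $\dereftyp$, $\reftyp$, $\condtyp$, $\flowtyp$) then propagate this: each such rule both joins effects (whose writing component is a meet, bounded below by either component's writing level through the premises $s.t \Fpreceq{F} s'.w$) and leaves $s.w$ at least as high as the focused sub-expression's writing level. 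Hence $s.w \not\Fpreceq{F} \low$ and $M^m \in \fH$.

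Next I would verify that $\fH$ is a set of operationally $(W,\typenv,F,\low)$-high threads. Fix $N^n \in \fH$ and any step $W \vdash \iconft{T}{\{N^n\}}{S} \xarr{F''}{d} \iconft{T'}{\{N'^n\} \cup P}{S'}$ with $({\secmap{}},{\typenv})$-compatible $S$. By Subject Reduction (Proposition~\ref{prop-iflow-subjectreduction}), $N'$ is still typable with some effect $s' \preceq s$, so $s.w \Fpreceq{F} s'.w$ and $s'.w \not\Fpreceq{F} \low$, giving $N'^n \in \fH$ and $S'$ still compatible. If $P = \{M'^{n'}\}$ then the step must come from rule $\migtyp$ applied at some sub-expression $(\kw{thread}_{l}~M'\mkw{at}d')$, whose writing contribution is $l \Fmeet{} s''.w$; this meet appears within the high effect $s.w$, so both $l = \Upsilon(n')$ and $s''.w$ are bounded below by that high level and $M'^{n'}\in\fH$. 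Finally, since any assignment, reference creation, or spawn performed by $N$ is constrained by its typing premises to affect only references (or thread positions) at security levels above $s.w$, and $s.w$ is high, no low-observable change to $S$ or $T$ can occur, yielding $\confd{T}{S} \memeqF{\secmap{1},\Upsilon}{F}{\low} \confd{T'}{S'}$. The allowed-condition and pure functional steps leave the state unchanged, so they are trivial.

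The main obstacle is the inversion and propagation argument in the first step: one must show that having $\allowed{F'}{N_t}{N_f}$ at the focus of an arbitrary evaluation context $\cE{E}$, with high $j$, forces the entire expression $\EC{\allowed{F'}{N_t}{N_f}}$ to have a high writing effect. This is handled by induction on $\cE{E}$, using at each case the interaction between the termination premise (e.g.\ $s.t \Fpreceq{F} s'.w$ in $\seqtyp$ and $\assigntyp$, and the analogous constraints in $\apptyp$) and the fact that the redex contributes $j$ to the termination level via $\allowtyp$. Once this propagation is established, the rest of the proof is a bookkeeping exercise driven by Subject Reduction, and a full inductive treatment of these cases can be reconstructed following the pattern used in \cite{AC11}.
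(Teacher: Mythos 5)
Your overall plan is close in spirit, but it takes a different route from the paper's, and the first step as written contains a real error. The paper never claims that $\EC{\allowed{F'}{N_t}{N_f}}$ has a high writing effect: it proves operational highness directly by induction on $\cE{E}$, assembling the thread from operationally high pieces via a Composition of High Expressions lemma, together with an Update of Effects lemma stating that $j \preceq s.r$ and $j \preceq s.t$ whenever an allowed-condition sits in evaluation position. The reason for that detour is precisely where your propagation argument breaks. The writing component of the effect join is a \emph{meet}, so the composite writing effect is bounded \emph{above}, not below, by each component's writing level; your sentence ``leaves $s.w$ at least as high as the focused sub-expression's writing level'' has the inequality backwards. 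A single low-writing component drags the whole writing effect down, and such components exist: rules [\loctyp] and [\abstyp] assign values an arbitrary effect, so in the given derivation of, say, $\assign{V}{\EC{\allowed{F'}{N_t}{N_f}}}$ or $\app{V}{\EC{\allowed{F'}{N_t}{N_f}}}$ the value $V$ may carry a writing effect $\Fpreceq{F}\low$ (no premise of [\assigntyp] or [\apptyp] constrains it), making $s.w \Fpreceq{F} \low$ for the very effect $s$ fixed in the lemma statement. So the conclusion ``hence $s.w \not\Fpreceq{F} \low$ and $M^m\in\fH$'' fails for the derivation you are handed.

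The gap is repairable in two ways. (a) Strengthen the induction invariant to ``$j \Fpreceq{F}$ the writing effect of \emph{every} component of the join'': this holds for the constrained components via the premises $s.t \Fpreceq{F} s'.w$, $s.r \Fpreceq{F} s'.w$, $s.r \Fpreceq{F} l$, combined with $j \preceq s.r, s.t$ for the hole-containing part; for the unconstrained value components it requires re-deriving their typings with effect $\bot$, which is legitimate because pseudo-values admit arbitrary effects. Since $j$ is then a lower bound of all components it is a lower bound of their meet, and $j\not\Fpreceq{F}\low$ gives syntactic highness of the whole, after which your second step (which is essentially the paper's High Expressions lemma, correctly sketched) finishes the job. (b) Follow the paper: show each sub-expression surrounding the hole is syntactically high and hence operationally high, obtain operational highness of the hole from the induction hypothesis, and conclude by composition of operationally high expressions, asserting nothing about the composite's static effect. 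Your identification of the $\allowtyp$ premise $j \Fpreceq{F} s_t.w, s_f.w$ as the engine of the proof, and your closure argument for $\fH$ under reduction and thread creation (where the relevant meet-decomposition direction does hold), are both sound.
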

\begin{proof} By induction on the structure of evaluation context $\cE{E}$, making use of the flow restrictions that are introduced by the corresponding typing rule in each case.  Notice that expressions that dereference or test the allowed flow policy at a high security level have high reading effect (and termination effect for the latter), while those that assign and create a reference or thread at a low security level have a low writing effect.
  We show how operationally high expressions can be syntactically composed from other operationally high expressions. %
\end{proof}

\begin{thm}[Soundness of Typing Distributed Non-disclosure] \label{prop-iflow-soundness}\label{app-prop-iflow-soundness}
  Consider a pool of threads $P$, an allowed-policy mapping $W$, a reference labeling $\secmap{}$, a thread labeling $\Upsilon$ and a typing environment $\typenv$.  If for all $M^{m} \in P$ there exist $s$, and $\tau$ such that ${\typenv \byint{\Upsilon(m)}{\secmap{}}{\kltop} M : s, \tau}$, then $P$ satisfies the Distributed Non-disclosure property, i.e. $P\in\SecDND(W,\secmap{},\Upsilon,\typenv)$.
\end{thm}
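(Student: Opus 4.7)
The plan is to exhibit an explicit symmetric relation $\fR$ on thread configurations, prove that it is a $(W,\secmap{},\Upsilon,\typenv,\ell)$-bisimulation in the sense of Definition~\ref{def-bisimdefNDN2}, and then observe that $(\confd{P}{T_1},\confd{P}{T_2}) \in \fR$ for every pair of position trackers with $T_1 \memeqF{\secmap{1},\Upsilon}{\kltop}{\ell} T_2$ and $\dom{T_1}=\dom{T_2}=\dom{P}$. The candidate $\fR$ should relate pairs $(\confd{P_1}{T_1},\confd{P_2}{T_2})$ with matching thread-name domains and with $T_1 \memeqF{\secmap{1},\Upsilon}{\kltop}{\ell} T_2$, such that for every $m \in \dom{P_1}$ either $P_1(m)=P_2(m)$ and this thread is typable at level $\Upsilon(m)$ under some flow policy, or both $P_1(m)^m$ and $P_2(m)^m$ belong to $\semhigh^{W,\typenv}_{F,\ell}$ for the policy $F$ declared by their active evaluation context.

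To verify the bisimulation conditions, I will invoke compositionality of the reduction relation to reduce to the case of a single active thread $M^m$ taking a step $W\vdash\iconft{T_1}{\{M^m\}}{S_1}\xarr{F}{d}\iconft{T_1'}{P_1'}{S_1'}$, with both $S_1,S_2$ being $(\secmap{},\typenv)$-compatible and $\confd{T_1}{S_1}\memeqF{\secmap{1},\Upsilon}{F}{\ell}\confd{T_2}{S_2}$. The argument then splits along Lemma~\ref{app-prop-iflow-split} into a non-splitting regime, where the twin configuration can replay the identical step, and a splitting regime, where it cannot. In the non-splitting regime, syntactic equality of the residuals together with Subject Reduction (Proposition~\ref{prop-iflow-subjectreduction}) preserves the first clause of $\fR$; low-equality of the successor states follows because, by the writing-effect side conditions in rules \textsc{Assign}, \textsc{Ref} and \textsc{Mig}, any write (or migration) whose target is observable at $\ell$ under the current $F$ must commit the same value in both runs since the value itself is built from ingredients whose reading effects are $\Fpreceq{F}\ell$.

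The genuinely interesting case is the splitting regime, which by Lemma~\ref{app-prop-iflow-split} can only arise from a dereference $\cC{E}{\deref{a}}$ with $S_1(a)\neq S_2(a)$ or from an allowed-condition $\cC{E}{\allowed{F'}{N_t}{N_f}}$ with $T_1(m)\neq T_2(m)$. In the former, $\confd{T_1}{S_1}\memeqF{\secmap{1},\Upsilon}{F}{\ell}\confd{T_2}{S_2}$ forces $\secmap{1}(a)\not\Fpreceq{F}\ell$, so the residual has a high reading effect and a variant of the argument behind Lemma~\ref{prop-iflow-potentially} places both residual threads into $\semhigh^{W,\typenv}_{F,\ell}$. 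In the latter, low-equality of position trackers forces $\Upsilon(m)\not\Fpreceq{F}\ell$, which is exactly the hypothesis of Lemma~\ref{prop-iflow-potentially}, so both resulting threads are operationally $(W,\typenv,F,\ell)$-high. In either case the matching move of the twin configuration is taken to be the empty step (reflexivity of $\rarr$); the successor states remain low-equal because operationally high threads preserve low observables by Definition~\ref{app-def-operationallyhigh}, and the successor pair satisfies the second clause of $\fR$. Fresh threads spawned in the step are absorbed into $\fR$ using rule $\migtyp$, which types the child at $\kltop$ and bounds its writing effect by the parent's level, so the bisimulation game can be played on the enlarged pool.

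The main obstacle is the allowed-condition split: unlike the classical dereference split, whose high input resides in the store where the bisimulation game may reset the high part between steps, the culprit here sits in the position tracker, which Definition~\ref{def-propertyDND} deliberately fixes across steps precisely because migration is subjective. This is why rule $\allowtyp$ imposes $j\Fpreceq{F} s_t.w, s_f.w$ on both branches, and why Lemma~\ref{prop-iflow-potentially} is essential: after a high-located thread performs an allowed test, the two runs may follow syntactically distinct branches, and only the operational-highness of both branches rules out any subsequent disturbance of the low observable state. The remaining work---preservation of $(\secmap{},\typenv)$-compatibility along reductions, closure of $\fR$ under arbitrary interleaving within a pool, and the initial check that the hypothesis of the theorem implies $(\confd{P}{T_1},\confd{P}{T_2})\in\fR$---is routine induction on top of Subject Reduction.
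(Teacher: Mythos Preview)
Your candidate bisimulation is too coarse to survive the dereference split. You assert that after reducing $\cC{E}{\deref{a}}$ with $\secmap{1}(a)\not\Fpreceq{F}\ell$, ``a variant of the argument behind Lemma~\ref{prop-iflow-potentially}'' places both residuals $\cC{E}{V_1}$ and $\cC{E}{V_2}$ in $\semhigh^{W,\typenv}_{F,\ell}$. This is false. Lemma~\ref{prop-iflow-potentially} works because the allowed-condition contributes to the \emph{termination} effect (rule \textsc{Allow} adds $j$ to $s.t$), and it is the termination effect that the \textsc{Seq} side condition $s.t\Fpreceq{F} s'.w$ propagates to force subsequent writes high. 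A dereference contributes only to the \emph{reading} effect (rule \textsc{Der} adds $l$ to $s.r$, not to $s.t$). Concretely, $\seq{\deref{a_H}}{\assign{b_L}{\nil}}$ is typable under $\kltop$ (the \textsc{Seq} constraint is $\bot\Fpreceq{\kltop} L$, trivially satisfied), yet after the split the residuals $\seq{V_1}{\assign{b_L}{\nil}}$ and $\seq{V_2}{\assign{b_L}{\nil}}$ both write to $b_L$ and hence are \emph{not} operationally high; they are also syntactically distinct when $V_1\neq V_2$. Your two-clause relation (syntactic identity, or both operationally high) cannot relate them, and taking the empty step on the twin side does not help either, since $\cC{E}{\deref{a}}$ itself is not operationally high.

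The paper closes this gap with substantially more structure than your proposal supplies. It first defines an auxiliary relation $\fT^{\secmap{},\typenv}_{j,F,\low}$ on expressions with \emph{low termination effect}, which is a genuine strong bisimulation and tracks, clause by clause over evaluation contexts, that two residuals of a high read share the same continuation structure and therefore perform identical low writes (Proposition~\ref{app-prop-iflow-strong}). It then layers a second relation $\fR^{W,\secmap{},\Upsilon,\typenv}_{j,F,\low}$ on top, whose clauses interleave $\fT$-related prefixes with operationally-high or syntactically-high continuations; the key Lemma~\ref{app-prop-iflow-fork} establishes $\cC{E}{V_0}\,\fR\,\cC{E}{V_1}$ after a high read by induction on $\cE{E}$. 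Only then is the bisimulation on thread configurations assembled (rules $a)$--$d)$ of $\fA$), allowing operationally-high threads to be matched against the empty pool. The two-layer $\fT/\fR$ machinery is precisely what your ``syntactic equality or both high'' dichotomy collapses, and without it the proof does not go through.
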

\begin{proof}  The proof follows the structure presented in~\cite{AC11}.  The differences in the proof stem mainly from the treatment of the new language constructs (remote thread creation and allowed-condition, instead of suspensive local memory access), and the assumption of memory compatibility, that is introduced in the present work.  Informally, the steps are the following:
  \begin{enumerate}
  \item We build a syntactic symmetric binary relation between expressions that are typable with a low termination effect, and whose terminating behaviors do not depend on high references.  This includes expressions that are typable with a low termination effect, and that have just performed a high dereference.\label{step-one}
    The binary relation should be a ``kind of'' strong bisimulation with respect to the transition relation $\xarr{F}{d}$ and the parameterized low-equality $\memeqF{\secmap{1},\Upsilon}{F}{l}$.  In particular, it should be such that if the evaluation of two related expressions, in the context of two low-equal stores should split (see Lemma~\ref{prop-iflow-split}), then the resulting expressions are still in the relation.
  \item We define a larger symmetric binary relation on all typable expressions that relates operationally high threads, and, similarly to the previous one, relates the results of the computations of two related expressions that are not operationally high in the context of two low-equal memories.  This includes all typable expressions that have just performed a high dereference.
    For non operationally high threads, the binary relation should again be a ``kind of'' strong bisimulation with respect to the transition relation $\xarr{F}{d}$ and the parameterized low-equality $\memeqF{\secmap{1},\Upsilon}{F}{l}$.  The result Location Sensitive Typable High Threads (Lemma~\ref{prop-iflow-potentially}) helps to show that if the evaluation of two related expressions, in the context of two low-equal stores should split (see Splitting Computations, Lemma~\ref{prop-iflow-split}), then the resulting expressions are still in the relation.
    \label{step-two}
  \item We exhibit a $(W,{\typenv},{\ell})$-bisimulation on thread configurations that extends the previous one to relate operationally high threads with terminated computations. \qedhere \label{step-three}
    \end{enumerate}
\end{proof}
\noindent Soundness is compositional, in the sense that it is enough to verify the typability of each thread separately in order to ensure Distributed Non-disclosure for the whole network.

\section{Controlling Declassification}  \label{sec-confinement}

In this section we study the formalization and enforcement of a security property according to which declassification policies performed by migrating threads must comply to the allowed flow policy of the domain in which they are performed.  We formalize the property, named Flow Policy Confinement, with respect to a distributed security setting and justify the chosen formalization. 
We study its enforcement by means of three migration control mechanisms, whose aim is to prevent migration of threads when the declassifications that they would potentially perform are not allowed by that domain.

The first enforcement mechanism consists of a purely static type and effect system (Subsection~\ref{subsec-confinement-static}) for checking confinement. This type system is inherently restrictive, as the domains where each part of the code will actually compute cannot in general be known statically.  Furthermore, it requires information about the allowed flow policies of all the domains in the networks to which analyzed programs might migrate.
We therefore present a more relaxed type and effect system to be used at runtime by the semantics of the language for checking migrating threads for confinement to the allowed flow policy of the destination domain (Subsection~\ref{subsec-confinement-runtime}).  Information about the allowed flow policies of the domains is not required at static time.  We show that this simple hybrid mechanism is more precise than the purely static one.
Finally, we propose a yet more precise type and effect system that statically computes information about the declassification behaviors of programs (Subsection~\ref{subsec-confinement-decleffect}).  This information will be used at runtime, by the semantics of the language, to support more efficient runtime checks that control migration of programs. We conclude by proving that the third annotation-based mechanism is strictly more permissive than the previous ones, while preserving the meaning of non-annotated programs.

\subsection{Flow Policy Confinement} \label{subsec-confinement-property}

The property of Flow Policy Confinement states that the declassifications that are declared by a program at each computation step comply to the allowed policy of the domain where the step is performed.
In a distributed setting with concurrent mobile code, programs might need to comply simultaneously to different allowed flow policies that change dynamically.  We deal with this difficulty by placing individual restrictions on each step that might be performed by a part of the program, taking into account the possible location where it might take place.

\paragraph{\emph{Memory compatibility.}}
Similarly to Subsection~\ref{subsec-iflow-property}, 
memories are assumed to be compatible to the given security setting and typing environment, requiring typability of their contents according to the relevant enforcement mechanism.
The memory compatibility predicate will be defined for each security analysis that is performed over the next three subsections (see Definitions~\ref{def-confinementI-compatibility}, \ref{def-confinementII-compatibility} and~\ref{def-confinementIII-compatibility}).\footnote{For the sake of simplifying the exposition, the compatibility predicate uses three parameters $W$, ${\secmap{}}$ and ${\typenv}$, as they are parameters of the type system used by Definition~\ref{def-confinementI-compatibility}, although in Definitions~\ref{def-confinementII-compatibility} and~\ref{def-confinementIII-compatibility}, the parameter $W$ is not needed.}

\paragraph{\emph{Confined thread configurations.}}
We define the property by means of a co-inductive relation on thread configurations~\cite{AC14}.  The location of each thread determines which allowed flow policy it should obey at that point, and is used to place a restriction on the flow policies that decorate the transitions. %
By using thread configurations, the formalization of the property is simplified with respect to~\cite{AC13}, which used the notion of \emph{located threads}, and a bisimulation-based definition.
\begin{defi}[$(W,\secmap{},\typenv)$-Confined Thread Configurations] \label{def-operationallyconf}
Given an allowed-policy mapping~$W$, a set $\semconf$ of thread configurations is a set of $(W,\secmap{},\typenv$)-\emph{confined thread configurations} if it satisfies, for all $P,T$, and $(W,\secmap{},\typenv)$-compatible stores~$S$:
\myexample{
\begin{array}{l}
\confd{P}{T} \in \semconf ~\textit{and}~ W \vdash \iconft{T}{P}{S} \xarr{F}{d} \iconft{T'}{P'}{S'} \quad \textit{implies} \quad
W(d) \klpreceq F ~\textit{and}~ \confd{P'}{T'} \in \semconf
\end{array}
}
Furthermore, $S'$ is still $(W,\secmap{},\typenv)$-compatible.  
The largest set of $(W,\secmap{},\typenv)$-confined thread configurations %
is denoted~$\semconf^{W,\secmap{},\typenv}$.
\end{defi}
\noindent For any $W$, ${\secmap{}}$ and $\typenv$, the set of thread configurations where threads are values is a set of $(W,\secmap{},\typenv)$-confined thread configurations.  Furthermore, the union of a family of $(W,\secmap{},\typenv)$-confined thread configurations is a $(W,\secmap{},\typenv)$-confined thread configurations. Consequently, $\semconf^{W,\secmap{},\typenv}$ exists. %

The property is now formulated for pools of threads.
\begin{defi}[Flow Policy Confinement]  \label{def-propertyFPC}
A pool of threads $P$ satisfies Flow Policy Confinement with respect to an allowed-policy mapping $W$, reference labeling $\secmap{}$ and typing environment $\typenv$, if all thread configurations satisfy $\confd P T \in \semconf^{W,\secmap{},\typenv}$.  We then write $P \in \SecFPC(W,\secmap{},\typenv)$.
\end{defi}
\noindent Notice that Flow Policy Confinement is parameterized by a particular mapping $W$ from domains to allowed flow policies.  This means that security is defined relative to $W$.  An absolute notion of security holds when $W$ is universally quantified.

\paragraph{\emph{Properties.}}

It should be clear that Flow Policy Confinement speaks strictly about what \emph{flow declarations} a thread can do \emph{while} it is at a specific domain.  In particular, it does not restrict threads from migrating to more permissive domains. %
It does not deal with information flows, and offers no assurance that information leaks that are encoded at each point of the program do obey the declared flow policies for that point.  For example, the program
in Equation~(\ref{ex-notDNI})
always satisfies flow policy confinement when $F = \kltop$, regardless of the levels of references $a$ and $b$. %
But, it violates Distributed Non-disclosure if $\secmap{1}(a) \not\Fpreceq{F} \secmap{1}(b)$, as well as the allowed flow policy of the domain $d$ if $\secmap{1}(a) \not\Fpreceq{W(d)} \secmap{1}(b)$.

Flow Policy Confinement, formulated for pools of threads according to Definition~\ref{def-propertyFPC}, is similar to the one in \cite{Alm09,AC13}, which was formulated for thread configurations.  The present one does not fix the initial position of threads that satisfy the property.  Furthermore, it is more clearly defined in terms of a co-inductive relation on thread configurations~\cite{AC14}, while the former version used a bisimulation on \emph{located threads}. 
In order to compare the two versions of the property, we denote by %
$\SecFPCLTplus(W,\secmap{},\typenv)$ %
the set of pools of thread that are secure, according to the former definition, when coupled with all possible initial position trackers. %
Then, the two versions of the property are equivalent, up to quantification over all possible initial position trackers.
\begin{prop} \label{prop-FPCcomparison}
  $\SecFPC(W,\secmap{},\typenv) = \SecFPCLTplus(W,\secmap{},\typenv).$
\end{prop}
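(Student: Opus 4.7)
The plan is to prove the two inclusions $\SecFPC(W,\secmap{},\typenv) \subseteq \SecFPCLTplus(W,\secmap{},\typenv)$ and $\SecFPCLTplus(W,\secmap{},\typenv) \subseteq \SecFPC(W,\secmap{},\typenv)$ by transferring witnesses between the two formulation styles. Since the reduction rule for a pool (the last rule of Figure~\ref{fig-semantics}) is thread-compositional, a step of $\iconft{T}{P}{S}$ with decoration $F$ at $d$ corresponds to an isolated step by exactly one thread $M^m \in P$ with $T(m)=d$, and the decorations $F, d$ depend only on that thread and its position. This is the key structural fact that connects the pool-level co-inductive relation used in Definition~\ref{def-operationallyconf} with the thread-level located-thread bisimulation used in the former formulation.

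For the inclusion $\SecFPC \subseteq \SecFPCLTplus$, I would take $P \in \SecFPC(W,\secmap{},\typenv)$ and fix an arbitrary initial position tracker $T$ with $\dom{T}=\dom{P}$. The hypothesis gives $\confd{P}{T} \in \semconf^{W,\secmap{},\typenv}$, and I would extract from this co-inductive witness the set of all located threads $\{(M^m, T''(m)) \mid \confd{P''}{T''} \text{ reachable from } \confd{P}{T}\ \text{and}\ M^m \in P''\}$. Using the thread-compositionality of the pool reduction rule, one shows that this set is closed under reductions in the sense required by the former bisimulation-style definition on located threads: any step by a located thread still satisfies $W(d) \klpreceq F$ because the corresponding pool step did, and the residuals remain in the set.

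For the reverse inclusion, I would take $P \in \SecFPCLTplus(W,\secmap{},\typenv)$ and, for an arbitrary $T$ with $\dom{T}=\dom{P}$, exhibit the set
\[
\semconf = \{ \confd{P'}{T'} \mid \exists S\ \text{compatible}\ .\ W \vdash \iconft{T}{P}{S} \rarr^{*} \iconft{T'}{P'}{S'} \}
\]
and show it is a set of $(W,\secmap{},\typenv)$-confined thread configurations by verifying the single unfolding condition in Definition~\ref{def-operationallyconf}. A transition $\iconft{T'}{P'}{S'} \xarr{F}{d} \iconft{T''}{P''}{S''}$ decomposes into a step of a unique located thread $M^m$ with $T'(m)=d$; the assumption on $P$ (valid for every initial position tracker, hence in particular for $T$) guarantees that this located-thread step respects $W(d) \klpreceq F$, and compatibility of $S''$ follows from whichever compatibility predicate is imposed (Definition~\ref{def-confinementI-compatibility}, \ref{def-confinementII-compatibility}, or \ref{def-confinementIII-compatibility}).

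The main obstacle is entirely bookkeeping rather than conceptual: making precise the correspondence between a pool-step and a located-thread step when remote thread creation is involved (since $P'' = P' \setminus \{M^m\} \cup \{M'^m, N^n\}$ and $T''$ extends $T'$ with $n \mapsto d'$), and checking that the freshness side conditions on the two formulations line up. Once that correspondence is stated as a small lemma, both closure arguments are straightforward, and the equality follows by double inclusion.
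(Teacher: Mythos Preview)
Your decomposition via thread-compositionality of the pool rule is exactly the paper's approach, but the candidate co-inductive sets you propose are too small to close. Both Definition~\ref{def-operationallyconf} and the located-thread formulation quantify \emph{universally} over compatible stores at each unfolding, whereas your sets are carved out by reachability along a single evolving store (the ``$\exists S.\ \iconft{T}{P}{S}\rarr^{*}\iconft{T'}{P'}{S'}$'' form). These do not coincide. Take $P=\{M^m\}$ with $M=\cond{\deref a}{\cond{\deref a}{\nil}{N}}{\nil}$ and $N$ any flow-declaration-free expression (so $P\in\SecFPCLTplus$ trivially). With any fixed store the two reads of $a$ agree, so neither $\cond{\ff}{\nil}{N}$ nor $N$ is ever reached; yet $\confd{\{\cond{\deref a}{\nil}{N}^m\}}{T}$ \emph{is} in your set (via $S(a)=\tt$), and the unfolding condition applied there with a store mapping $a\mapsto\ff$ yields a residual outside your set. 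Hence your $\semconf$ is not a set of confined thread configurations and the argument does not close; the same phenomenon affects your located-thread set in the other direction.

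The paper's fix is to define the candidate sets directly through membership in the other formulation's largest confined set rather than through trajectory reachability: for $\SecFPCLTplus\subseteq\SecFPC$ it takes all $\confd{P}{T}$ such that every $\confd{T(m)}{M^m}$ is already a confined located thread; for the converse, all located threads $\confd{T(m)}{M^m}$ with $M^m$ occurring in some $\confd{P}{T}\in\semconf^{W,\secmap{},\typenv}$. These sets are closed by construction, because the defining predicate is itself preserved by single steps under arbitrary stores, and the remaining $W(d)\klpreceq F$ clause is exactly your compositionality observation. Replacing your reachability sets with these makes the rest of your argument go through unchanged.
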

\begin{figure}[t!]
\begin{equation*}
\hspace{-3mm}	\begin{array}{c}

\text{[\niltyp]} ~
{W;\typenv \byund{A}{\secmap{}}{} \nil : {{\unit}}}  \qquad  

\text{[\boolttyp]} ~
{W;\typenv \byund{A}{\secmap{}}{} \vrai : {{\bool}}} \qquad

\text{[\boolftyp]} ~
{W;\typenv \byund{A}{\secmap{}}{} \faux : {{\bool}}} \\[3mm]

\text{[\loctyp]} ~
{W;\typenv \byund{A}{\secmap{}}{} {a} : {\rfrt {\secmap{2}(a)} {}}} \qquad

\text{[\vartyp]} ~
{W;\typenv,x:\tau \byund{A}{\secmap{}}{} x : {\tau}} \\[3mm]

\text{[\abstyp]} ~
\fra{W;\typenv, x:{\tau} \byund{A}{\secmap{}}{} M : {\sigma}}
{W;\typenv \byund{A'}{\secmap{}}{} {\lam {x} {M}} : {\tau \xarr {A}{} \sigma}} \qquad

\text{[\rectyp]} ~
\fra{W;\typenv,x:\tau \byund{A}{\secmap{}}{} X : \tau}
{W;\typenv \byund{A}{\secmap{}}{} \fix{x}X : \tau} \\[5mm]

\text{[\apptyp]} ~
\fra{W;\typenv \byund{A}{\secmap{}}{} {M} : { {\tau \xarr{A}{} \sigma}} ~~~ W;\typenv \byund{A}{\secmap{}}{} N : {{{\tau}}}}
    {W;\typenv \byund{A}{\secmap{}}{} {\app {M} {N}} : { \sigma} } \\[5mm]
    
\text{[\seqtyp]} ~
\fra{W;\typenv \byund{A}{\secmap{}}{} {M} : {{\tau}} ~~~  W;\typenv \byund{A}{\secmap{}}{} N : {{{\sigma}}}}
{W;\typenv \byund{A}{\secmap{}}{} {\seq {M} {N}} : { \sigma} } \qquad %

\text{[\reftyp]} ~
\fra{W;\typenv \byund{A}{\secmap{}}{} {M} : { {\theta}}  } 
{W;\typenv \byund{A}{\secmap{}}{} \rfr{l,\theta}{M} : \rfrt \theta {}} \\[5mm]

\text{[\assigntyp]} ~ 
\fra{W;\typenv \byund{A}{\secmap{}}{} {M} : { {\rfrt \theta {}}} ~~~  W;\typenv \byund{A}{\secmap{}}{} N : {{{\theta}}}} 
{W;\typenv \byund{A}{\secmap{}}{} {\assign {M} {N}} : { \unit} } \qquad %

\text{[\dereftyp]} ~
\fra{W;\typenv \byund{A}{\secmap{}}{} {M} : {\rfrt \theta {}} }
{W;\typenv \byund{A}{\secmap{}}{} \deref{M} : {  {\theta}}} \\[5mm]

\text{[\condtyp]} ~
\fra{W;\typenv \byund{A}{\secmap{}}{} M: {{\bool}}  ~~~  
\begin{array}{c} 
W;\typenv \byund{A}{\secmap{}}{} N_t : {{{\tau}}} \\ 
W;\typenv \byund{A}{\secmap{}}{} N_f : {{{\tau}}}
\end{array} }
{W;\typenv \byund{A}{\secmap{}}{} {\cond{M}{N_t}{N_f}} : {{\tau}}}  \\[5mm] %

\boldsymbol{\textbf{[\allowtyp]}} ~
\fra{
\begin{array}{c} 
W;\typenv \byund{A \klmeet F}{\secmap{}}{} N_t : {{{\tau}}} \\ 
W;\typenv \byund{A}{\secmap{}}{} N_f : {{{\tau}}}
\end{array}}
{W;\typenv \byund{A}{\secmap{}}{} \allowed {F} {N_t} {N_f} : {\tau}}\\[5mm]

\boldsymbol{\textbf{[\flowtyp]}} ~
\fra{W;\typenv \byund{A}{\secmap{}}{} N : \tau  ~~~ A \klpreceq F}
{W;\typenv \byund{A}{\secmap{}}{} \flow {F} N : \tau} \qquad

\boldsymbol{\textbf{[\migtyp]}} ~
\fra{
W;\typenv \byund{W(d)}{\secmap{}}{} {{M}} : {\unit}}  %
{W;\typenv \byund{A}{\secmap{}}{} \threadnat l M {d}:  {\unit}}

	\end{array}
\end{equation*}
\caption{Type and effect system for checking Confinement} \label{fig-confinementI-typesystem}
\figline
\end{figure}

\subsection{Static Type and Effect System} \label{subsec-confinement-static}

We have seen that in a setting where code can migrate between domains with different allowed security policies, the computation domain might change \emph{during} computation, along with the allowed flow policy that the program must comply to.  This can happen in particular within the branch of an allowed-condition:
\numbexample{  
{\allowed F {\threadnat l {\flow F {M_1}} {d}} {M_2}} \label{ex-tricky}
}
In this program, the flow declaration of the policy $F$ is executed only if $F$ has been tested as being allowed by the domain where the program was started.  It might then seem that the flow declaration is guarded by an appropriate allowed construct.  However, by the time the flow declaration is performed, the thread is already located at another domain, where that flow policy might not be allowed.  
It is clear that a static enforcement of a confinement property requires tracking the possible locations where threads might be executing at each point.

Figure~\ref{fig-confinementI-typesystem} presents a new type and effect system~\cite{LG88} for statically enforcing confinement over a migrating program.  The type system guarantees that when operations are executed by a thread within the scope of a flow declaration, the declared flow complies to the allowed flow policy of the current domain.  
The typing judgments have the form
\myexample{
W; \typenv \byund{A}{\secmap{}}{} M:\tau
}
meaning that the expression $M$ is typable with type $\tau$ in the typing context $\typenv :\Var \rightarrow \Typ$, which assigns types to variables, in a context where $W$ is the mapping of domain names to allowed flow policies.  
The turnstile has two parameters: 
(1) the reference labeling $\secmap{}$, of which only the type labeling $\secmap{2}$, carrying the type of the references, is used; 
(2) the flow policy \emph{allowed by the context} $A$, which includes all flow policies that have been positively tested by the program as being allowed at the computation domain where the expression $M$ is running.

Types have the following syntax ($t$ is a type variable):
\myexample{
\tau, \sigma, \theta ~ \in ~  \Typ ~ ::= ~ t ~|~ \unit ~|~  \bool ~|~ \rfrt \theta {} ~|~ \tau \xarr{A}{} \sigma 
}
The syntax is similar to the one used in Sub-subsection~\ref{subsec-iflow-typesystem}, but is simpler:  The security level of references does not appear in the reference types $\rfrt{\theta}{}$, while the type function types $\tau \xarr{A}{} \sigma$ includes only the latent allowed flow policy, the one that is assumed to hold when the function is applied to an argument.

Our type and effect system applies restrictions to programs in order to enforce confinement of all flow declarations of a policy $F$ to be performed only once $F$ has been tested to be positively allowed by the domain's allowed flow policy.  This is achieved by means of the tested allowed flow policy $A$ that parameterizes the typing judgments, and by the condition $A \klpreceq F$ in the $\flowtyp$ rule.  
Flow declarations can only be performed if the declared flow policy is allowed by the flow policy that is tested by the context (\flowtyp).  Conversely, allowed-conditions relax the typing of the allowed-branch by extending the flow policy that is tested by the context with the policy that guards the condition (\allowtyp).  

Note that if an expression is typable with respect to an allowed flow policy $A$, then it is also so for %
any more permissive allowed policy $A'$.

We refer to the enforcement mechanism that consists of statically type checking all threads in a network according to the type and effect system of Figure~\ref{fig-confinementI-typesystem}, with respect to the allowed flow policies of each thread's initial domain, using the semantics represented in Figure~\ref{fig-semantics}, as \emph{Enforcement Mechanism~I}. 

To illustrate the restrictions that are imposed by the enforcement mechanism, we may consider program
\numbexample{
\allowed {F} {\flow {F_{H \prec L}} {\textit{plan\_A}}} {\textit{plan\_B}} \label{exallowed2}
}
where ${\textit{plan\_A}}$ and ${\textit{plan\_B}}$ have no declassifications, and that is running in domain $d$. The program is typable if  $W(d) \klmeet F \klpreceq {F_{H \prec L}}$.  If $F \klpreceq {F_{H \prec L}}$, then the program is always secure. Otherwise, the program is $W$-secure if $W(d) \klpreceq {F_{H \prec L}}$, or if $W(d) \not\klpreceq F$ (thanks to the semantics of the allowed-condition).

We are now in position to define the compatibility predicate that applies to Enforcement Mechanism~I.
\begin{defi}[$(W,{\secmap{}},{\typenv})$-Compatibility] \label{def-confinementI-compatibility}
A memory $S$ is said to be $(W,{\secmap{}},{\typenv})$-\emph{compatible} if, for every reference $a \in \dom{S}$, its value $S(a)$ satisfies the condition $W;{\typenv \byund{\kltop}{\secmap{}}{} S(a) : \secmap{2}(a)}$.
\end{defi}

\subsubsection{Soundness}

In order to establish the soundness of the type system of Figure~\ref{fig-confinementI-typesystem} we need a Subject Reduction result, stating that types that are given to expressions, along with compatibility of memories, are preserved by computation.

\begin{prop}[Subject Reduction]\label{prop-subjectreduction-confinementI}
Consider an allowed-policy mapping $W$ and a thread $M^{m}$ such that ${W;\typenv \byund{A}{\secmap{}}{} M : \tau}$, and suppose that $W \vdash \iconft{T}{\{M^{m}\}} S$ $\xarr{\var {F}}{d}$ $\iconft{T'} {\{M'^{m}\} \cup P} {S'}$, for a memory $S$ that is $(W,{\secmap{}},{\typenv})$-compatible. Then, $W;\typenv\byund{A \klmeet W(T(m))}{\secmap{}}{} M':\tau$, and $S'$ is also $(W,{\secmap{}},{\typenv})$-compatible.
Furthermore, if~$P = \{N^n\}$, for some expression $N$ and thread name $n$, then $W;\typenv\byund{W(T'(n))}{\secmap{}}{} N:\unit$.
\end{prop}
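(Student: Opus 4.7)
The plan is to proceed by structural case analysis on the reduction rule applied. By the operational semantics, $M$ has the form $\cC{E}{R}$ for some evaluation context $\cE{E}$ and redex $R$, and the reduction rewrites the redex to some $R'$ (or, in the thread creation case, replaces it by $\nil$ while spawning a new thread). I will rely on two preliminary lemmas. First, a routine substitution lemma: if $W;\typenv, x:\tau \byund{A}{\secmap{}}{} M : \sigma$ and $W;\typenv \byund{A'}{\secmap{}}{} V : \tau$ for some $A'$, then $W;\typenv \byund{A}{\secmap{}}{} \substi{x}{V}{M} : \sigma$. This follows by induction on the first derivation; the only point to check is that the parameter $A'$ carried by the value's typing can be adjusted, which holds because the rules for values ([\niltyp], [\boolttyp], [\boolftyp], [\loctyp], [\vartyp], [\abstyp], [\rectyp]) place no constraint on the turnstile's $A$. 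Second, the monotonicity remark noted after the type system: if $W;\typenv \byund{A}{\secmap{}}{} M : \tau$ and $A' \klpreceq A$, then $W;\typenv \byund{A'}{\secmap{}}{} M : \tau$, proved by induction on the derivation, with the only non-trivial case being [\flowtyp], where the premise $A \klpreceq F$ transports to $A' \klpreceq F$ by transitivity.

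For the standard call-by-value redexes ($\beta$-reduction, conditional branching, sequencing, recursion, dereferencing, and the flow declaration $\flow{F}{V} \to V$), I invert the typing rule for the redex, apply the substitution lemma where needed, and reconstruct a typing derivation for $R'$ at $A$. Since $A \klmeet W(T(m)) \klpreceq A$ by the meet being a lower bound, monotonicity then gives $R'$ typable at $A \klmeet W(T(m))$. The derivation for $\cC{E}{R}$ is replayed above the redex to yield $\cC{E}{R'}$ at the same allowed policy. For assignment $\assign{a}{V}$ and reference creation $\rfrl{l}{\theta}{V}$, which update the store, I additionally verify that $S'$ remains $(W,\secmap{},\typenv)$-compatible: inversion on [\assigntyp]/[\reftyp] yields that $V$ has type $\secmap{2}(a)$ at the current $A$, and because $V$ is a value it can be retyped at $\kltop$ by monotonicity, satisfying Definition~\ref{def-confinementI-compatibility}.

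The two cases that merit more attention are the allowed-condition and thread creation. In the allowed-condition, when the true branch fires the semantics ensures $W(T(m)) \klpreceq F$; inverting [\allowtyp] gives $N_t$ typable at $A \klmeet F$, and monotonicity of $\klmeet$ in its second argument yields $A \klmeet W(T(m)) \klpreceq A \klmeet F$, so the monotonicity lemma delivers the required typing of $N_t$ at $A \klmeet W(T(m))$. The false branch is direct, since $N_f$ is typable at $A$ and hence, by monotonicity, at $A \klmeet W(T(m))$. For thread creation, $\threadnat{l}{N}{d'}$ reduces to $\nil$ (typable as $\unit$ at any $A$) while spawning $N^n$ with $T'(n) = d'$; inversion on [\migtyp] gives $W;\typenv \byund{W(d')}{\secmap{}}{} N : \unit$, which is exactly the additional clause of the proposition since $W(T'(n)) = W(d')$.

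The main obstacle, although manageable here, is the decomposition and recomposition of the typing derivation through an evaluation context $\cE{E}$ containing flow declarations. In the type and effect system of Section~\ref{sec-iflow} the declared policy carried by the context affects typing of everything inside it, which would complicate this decomposition; here, however, rule [\flowtyp] does not alter the $A$ parameter between premise and conclusion (it only imposes the side condition $A \klpreceq F$), so the walk above the redex proceeds uniformly, with no need to track $\extrf{\cE{E}}$ in the argument. Once this sanity check is made, the proof goes through without further surprises.
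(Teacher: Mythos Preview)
Your proof is correct and follows essentially the same route as the paper: a substitution lemma, the monotonicity remark, a replacement-through-contexts argument, and case analysis on the redex. Your observation that in this type system \textsc{Flow} keeps $A$ unchanged (so you need not track $\extrf{\cE{E}}$) is a legitimate simplification; the paper's Replacement Lemma carries $A \klmeet \extrf{\cE{E}}$, but since every enclosing flow declaration satisfies $A \klpreceq F$, this collapses to $A$ anyway. One small slip: when you say a value ``can be retyped at $\kltop$ by monotonicity'' for store compatibility, monotonicity goes the wrong way (it only lets you move to \emph{more} permissive $A'$); what you need there is precisely the value-independence fact you stated in your first paragraph, that value rules place no constraint on $A$.
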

\begin{proof}
We follow the usual steps \cite{WF94}, where the main proof is a case analysis on the transition $W \vdash \iconft{T}{\{M^{m}\}} S$ $\xarr{\var {F}}{d}$ $\iconft{T'} {\{M'^{m}\} \cup P} {S'}$.
\end{proof}

Enforcement Mechanism~I guarantees security of networks with respect to confinement, as is formalized by the following result.  
\begin{thm}[Soundness of Enforcement Mechanism~I]  \label{prop-confinementI-soundness}        
Consider an allowed-policy mapping $W$, reference labeling $\secmap{}$, typing environment $\typenv$, and a thread configuration $\confd P T$ such that for all $M^{m} \in P$ there exists $\tau$ such that ${W;\typenv \byund{W(T(m))}{\secmap{}}{} M : \tau}$.  Then $\confd P T$ is a $(W,\secmap{},\typenv)$-confined thread configuration.
\end{thm}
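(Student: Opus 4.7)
The plan is to exhibit an explicit set $\mathcal{K}$ of thread configurations, show it is closed under the reduction conditions of Definition~\ref{def-operationallyconf}, and observe that it contains $\confd{P}{T}$. Define
\[
\mathcal{K} \;=\; \{\, \confd{Q}{U} \mid \forall N^{n} \in Q,\; \exists \sigma.\ W;\typenv \byund{W(U(n))}{\secmap{}}{} N : \sigma \,\}.
\]
The hypothesis of the theorem is precisely that $\confd{P}{T} \in \mathcal{K}$, so it suffices to show that $\mathcal{K}$ is a set of $(W,\secmap{},\typenv)$-confined thread configurations; this places $\mathcal{K}$ inside $\semconf^{W,\secmap{},\typenv}$ and completes the proof.

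Consider any transition $W \vdash \iconft{U}{Q}{S} \xarr{F}{d} \iconft{U'}{Q'}{S'}$ with $\confd{Q}{U} \in \mathcal{K}$ and $S$ being $(W,\secmap{},\typenv)$-compatible. By the compositional rule for pools, this reduction is driven by a single active thread $M^{m} \in Q$ whose redex sits inside some evaluation context $\cE{E}$, with $d = U(m)$ and $F = \extrf{\cE{E}}$. Three things must be verified: (i)~$W(d) \klpreceq F$, (ii)~$\confd{Q'}{U'} \in \mathcal{K}$, and (iii)~$S'$ is $(W,\secmap{},\typenv)$-compatible.

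For~(i), the key step is an auxiliary Context Lemma: if $W;\typenv \byund{A}{\secmap{}}{} \cC{E}{N} : \tau$ then $A \klpreceq \extrf{\cE{E}}$. This is proved by a short induction on the structure of $\cE{E}$. The base case $\cE{E} = \trou$ is immediate from $\extrf{\trou} = \kltop$. The only interesting inductive case is $\cE{E} = \flow{F'}{\cE{E}_{0}}$, where rule [\flowtyp] forces $A \klpreceq F'$ and the induction hypothesis gives $A \klpreceq \extrf{\cE{E}_{0}}$, so $A$ is a lower bound for both and hence $A \klpreceq F' \klmeet \extrf{\cE{E}_{0}} = \extrf{\cE{E}}$; every other contextual constructor leaves $\extrf{\cdot}$ unchanged and propagates $A$ unaltered into the recursive premise. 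Applying this lemma to $M^{m}$ with $A = W(U(m)) = W(d)$, which is available from $\confd{Q}{U} \in \mathcal{K}$, yields $W(d) \klpreceq F$.

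Claims~(ii) and~(iii) follow from Subject Reduction (Proposition~\ref{prop-subjectreduction-confinementI}). Compatibility of $S'$ is part of its statement, giving~(iii). For~(ii), Subject Reduction provides typability of the residual thread $M'^{m}$ at $A \klmeet W(U(m))$; since $A = W(U(m))$ this collapses to $W(U(m)) = W(U'(m))$, because the acting thread never relocates under any reduction rule. If the step is a [\migtyp]-reduction, the spawned thread $N^{n}$ is typable at $W(U'(n))$ by the second clause of Subject Reduction, which is exactly the annotation $\mathcal{K}$ requires; any passive threads in $Q \setminus \{M^{m}\}$ carry over unchanged along with their typing witnesses. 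Hence $\confd{Q'}{U'} \in \mathcal{K}$. The main obstacle in the whole argument is the Context Lemma underlying~(i); the rest is a direct reading of Subject Reduction, and the Context Lemma itself is short because [\flowtyp] is by design the rule that enforces this invariant on flow declarations.
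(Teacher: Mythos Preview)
Your proof is correct and follows the same overall strategy as the paper: you exhibit the same candidate set (the paper calls it $C$, you call it $\mathcal{K}$), and you discharge closure under reduction and compatibility of the resulting store via Subject Reduction (Proposition~\ref{prop-subjectreduction-confinementI}), exactly as the paper does.

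The one organizational difference worth noting is how the condition $W(d) \klpreceq F$ is obtained. The paper argues by induction on the typing derivation of the active thread, performing a case analysis on the last rule used; the crucial case is [\flowtyp], where the side condition $A \klpreceq F$ combines with the inductive hypothesis on the body, while every other construct either passes $A$ through unchanged or corresponds to a redex step with $F = \kltop$. You instead factor this out as a standalone Context Lemma---if $W;\typenv \byund{A}{\secmap{}}{} \cC{E}{N} : \tau$ then $A \klpreceq \extrf{\cE{E}}$---proved by induction on the structure of $\cE{E}$. The two inductions traverse the same path (the spine of evaluation contexts down to the redex), so the arguments are isomorphic; your packaging is arguably cleaner because it isolates the invariant that the type system maintains across evaluation contexts, and it makes explicit that the redex itself plays no role in the bound on $F$.
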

\begin{proof} 
We show that the following is a set of $(W,\secmap{},\typenv)$-confined thread configurations:
\myexample{
C = \{ \confd{P}{T} ~|~ \forall M^m \in P, \exists \tau ~.~ {W;\typenv \byund{W(T(m))}{\secmap{}}{} M : \tau}\} %
}
By induction on the inference of ${W;\typenv \byund{W(T(m))}{\secmap{}}{} M : \tau}$. %
We use Subject Reduction (Proposition~\ref{prop-subjectreduction-confinementI}) and a case analysis on the last rule of the corresponding typing proof. %
\end{proof}

\paragraph{\emph{Safety and precision.}} %

The following result guarantees that typable threads ca execute until termination.
\begin{prop}[Safety] \label{prop-staticmigrcontrol-safety}
Given an allowed-policy mapping $W$, consider a closed thread $M^{m}$ such that ${W;\emptyset \byund{A}{\secmap{}}{} M : \tau}$.  Then, for any memory $S$ that is $(W,{\secmap{}},{\emptyset})$-compatible and position-tracker $T$, either the program $M$ is a value, or
$W\!\vdash\!\iconft{T}{\{M^{m}\}} S$ $\xarr{\var {F'}}{T(m)}$ $\iconft{T'} {\{M'^{m}\}\!\cup\!P} {S'}$, for some $F'$, $M'$, $P$, $T'$ and~$S'$.
\end{prop}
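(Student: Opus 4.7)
The plan is to prove this Progress-style result by induction on the structure of the expression $M$ (equivalently, on the typing derivation $W;\emptyset \byund{A}{\secmap{}}{} M : \tau$), relying on standard techniques for safety of typed calculi. Throughout, I would use the convention stated in Sub-subsection~\ref{subsubsec-setting-labelings} that configurations are well-formed, so $\rn{M}\subseteq\dom{S}$, $\dn{M}\subseteq\dom{W}$, and $m\in\dom{T}$.

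First I would establish a canonical forms lemma for closed typable values: if $V$ is closed and ${W;\emptyset \byund{A}{\secmap{}}{} V : \tau}$, then by inversion of the typing rules \niltyp, \boolttyp, \boolftyp, \loctyp, \abstyp, we have that (i) $\tau=\unit$ implies $V=\nil$; (ii) $\tau=\bool$ implies $V\in\{\vrai,\faux\}$; (iii) $\tau=\rfrt{\theta}{}$ implies $V=a$ for some reference name $a$ with $\secmap{2}(a)=\theta$; and (iv) $\tau=\tau'\xarr{A'}{}\sigma$ implies $V=\lam{x}{N}$ for some $x,N$. The \vartyp\ case is vacuous because $M$ is closed and $\typenv=\emptyset$.

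The main induction then proceeds by cases on $M$. If $M$ is a value, the first disjunct holds. Otherwise, by inversion on the last typing rule, I decompose $M$ as $\cC{E}{R}$ for some evaluation context $\cE{E}$ and reducible sub-expression $R$; in each case the induction hypothesis applied to the immediate subexpressions, combined with the canonical forms lemma, ensures that $R$ is shaped so as to match exactly one reduction rule in Figure~\ref{fig-semantics}. For $R = \app{V_1}{V_2}$ the function $V_1$ must be a $\lambda$-abstraction (so \apptyp\ applies); for $R=\deref{V}$ or $R=\assign{V}{V'}$, canonical forms yields $V=a$, and well-formedness gives $a\in\dom{S}$, so the dereference and assignment rules apply; for $R=\rfrl{l}{\theta}V$, a fresh reference name always exists; for $R=\cond{V}{N_t}{N_f}$, canonical forms gives $V\in\{\vrai,\faux\}$; for $R=\seq V N$ and $R=\flow F V$ the relevant rules apply directly; for $R=\fix{x}{X}$ the fixpoint unfolding rule applies; for $R=\allowed{F}{N_t}{N_f}$, the well-formedness condition $T(m)\in\dom{W}$ ensures $W(T(m))$ is defined, so exactly one of the two allowed-condition rules applies according to whether $W(T(m))\klpreceq F$; for $R=\threadnat{l}{N}{d'}$, well-formedness ensures $d'\in\dom{W}$ and a fresh thread name always exists, so the thread-spawning rule applies. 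Finally, the last (parallel) semantic rule lifts the single-thread reduction of $R$ inside the context $\cE{E}$ to the whole configuration.

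The only subtlety, which is not really a technical obstacle but deserves attention, is that the absence of free variables (since $\typenv=\emptyset$ and $M$ is closed) must be maintained throughout the decomposition into $\cC{E}{R}$; this follows from the standard observation that evaluation contexts do not bind variables on the hole's path. The memory-compatibility hypothesis is not needed for safety itself, but confirms that the residual memory $S'$ produced by the reduction (relevant, e.g., for iterating safety along a reduction sequence) remains compatible, consistently with Subject Reduction (Proposition~\ref{prop-subjectreduction-confinementI}).
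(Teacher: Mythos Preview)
Your proposal is correct and follows essentially the same approach as the paper: induction on the typing derivation, with case analysis and a canonical-forms lemma to match each non-value typable term to a reduction rule in Figure~\ref{fig-semantics}. One minor clarification: the evaluation-context lifting is already built into each semantic rule (they are stated directly for $\EC{\cdot}$), so you do not need the pool-composition rule to handle $\cC{E}{R}$; but this does not affect the substance of your argument.
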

\begin{proof} By induction on the derivation of ${W;\typenv \byund{A}{\secmap{}}{} M : \tau}$.  \end{proof}

In face of a purely static migration control analysis, some secure programs are bound to be rejected.  
There are different ways to increase the precision of a type system, though they are all intrinsically limited to what can conservatively be predicted before runtime.  For example, for the program
\numbexample{ \label{ex-staticimprecision}
{\cond {\deref {a}} {\threadnat l {\flow {F} {M}} {d_1}} {\threadnat l {\flow {F} {M}} {d_2}}}
} %
it is in general not possible to predict which branch will be executed (or, in practice, to which domain the thread will migrate), for it depends on the contents of the memory.  %
It will then be rejected if $W(d_2) \not\klpreceq F$ or $W(d_1) \not\klpreceq F$.

\subsection{Runtime Type Checking}  \label{subsec-confinement-runtime}

{
\begin{figure}
\begin{equation*}
  \hspace{-3mm}
\begin{array}{c}
\vdots\\[2mm]
  \text{[\migtyp]} ~
  \fra{
\typenv \byund{\boldsymbol{\klbot}}{\secmap{}}{} {{M}} : {\unit} }  %
    {\typenv\byund{A}{\secmap{}}{} \threadnat l M {d}:{\unit}}\\[7mm]
    \vdots\\[2mm]
\fra{\boldsymbol{\typenv \byund{W(d)}{\secmap{}}{} {N} : {\unit}}}
    {W\semvdash{\iconft{T}{\{{\EC{\threadnat{l}{N}{d}}}^{m}\}}{S}} \xarr{\extrf{\cE{E}}}{T(m)}{{\iconft{\update {n}{d} {T}}{\{{\EC{\nil}}^{m},N^{n}\}}{S}}}}
\end{array}
\end{equation*}
\caption{Top: Relaxed type and effect system for checking Confinement (omitted rules are as in Figure~\ref{fig-confinementI-typesystem}).  Bottom:  Operational semantics with runtime type checking for migration control (omitted rules are as in Figure~\ref{fig-semantics}).} \label{fig-confinementII-typesystem}\label{fig-confinementII-semantics}
\figline
\end{figure}
}

In this subsection we study a hybrid mechanism for enforcing confinement, that makes use of a relaxation of the type system of Figure~\ref{fig-confinementI-typesystem}, at runtime.  
Migration is now conditioned by means of a runtime check for typability of migrating threads with respect to the allowed flow policy of the destination domain. The condition represents the standard theoretical requirement of checking incoming code before allowing it to execute in a given machine.

The relaxation is achieved by replacing rule~\migtyp~by the one in Figure~\ref{fig-confinementII-typesystem}.
The new type system no longer imposes \emph{future} migrating threads to conform to the policy of their destination domain, but only to the most permissive allowed flow policy $\klbot$.
The rationale is that it is only concerned about confinement of the non-migrating parts of the program.
This is sufficient, as all threads that are to be spawned by the program will be re-checked at migration time.

The proposed modification to the semantics defined in Figure~\ref{fig-semantics} appears in Figure~\ref{fig-confinementII-semantics}.  
The migration rule introduces the runtime check that controls migration ($n$ fresh in $T$).  The idea is that a thread can only migrate to a domain if it respects its allowed flow policy.
The new remote thread creation rule (our migration primitive), now depends on typability of the migrating thread.  The typing environment~$\typenv$ (which is constant) %
is now an implicit parameter of the operational semantics.  If only closed threads are considered, then also migrating threads are closed.
The allowed flow policy of the destination site now determines whether or not a migration instruction may be consummated, or otherwise block execution.  {{\em E.g.}}, the configuration
\numbexample{ \label{ex-migrate}
{\iconft{T}{\{{\EC{\threadnat l {\flow F M} d}}^{m}\}}{S}}
}
can only proceed if $W(d)$ allows for $F$; otherwise it gets stuck.

Notice that, thanks to postponing the migration control to runtime, the type system no longer needs to be parameterized with information about the allowed flow policies of all domains in the network, which is unrealistic.  The only relevant ones are those of the destination domain of migrating threads.  %

We refer to the enforcement mechanism that consists of statically type checking all threads in a network according to the type and effect system of %
Figure~\ref{fig-confinementII-typesystem} (left), with respect to the allowed flow policies of each thread's initial domain, using the semantics of %
Figure~\ref{fig-confinementII-semantics} (right), as \emph{Enforcement Mechanism~II}. 

Notice that Enforcement Mechanism~II restricts, on one hand, which programs are accepted to run, but also trims their possible executions, with respect to a given allowed-policy mapping $W$.  The program in Equation~(\ref{ex-staticimprecision}) illustrates this mechanism, as it is typable according to the relaxed type system (with respect to any $W$), but will block at the choice of the second branch if $W(d_2) \not\klpreceq F$.

We can now define the compatibility predicate that applies to Enforcement Mechanism~II.\footnote{As mentioned in Subsection~\ref{subsec-confinement-property}, for the sake of simplifying the exposition, the compatibility predicate used in Definition~\ref{def-operationallyconf} includes $W$ as a parameter, although it is not used in Definitions~\ref{def-confinementII-compatibility} and~\ref{def-confinementIII-compatibility} that follow.}
\begin{defi}[$(W,{\secmap{}},{\typenv})$-Compatibility] \label{def-confinementII-compatibility}
A memory $S$ is said to be $(W,{\secmap{}},{\typenv})$-\emph{compatible} if, for every reference $a \in \dom{S}$, its value $S(a)$ satisfies the typing condition ${\typenv \byund{\kltop}{\secmap{}}{} S(a) : \secmap{2}(a)}$.
\end{defi}

\subsubsection{Soundness}

Similarly to what we did in Subsection~\ref{subsec-confinement-static}, in order to establish the soundness of the Enforcement Mechanism~II we need a Subject Reduction result, stating that types that are given to expressions, along with compatibility of memories, are preserved by computation.

\begin{prop}[Subject Reduction]\label{prop-subjectreduction-confinementII}
Consider a thread $M^{m}$ such that ${\typenv \byund{A}{\secmap{}}{} M : \tau}$ and suppose that $W \vdash \iconft{T}{\{M^{m}\}} S$ $\xarr{\var {F}}{d}$ $\iconft{T'} {\{M'^{m}\} \cup P} {S'}$, for a memory $S$ that is $(W,{\secmap{}},{\typenv})$-compatible. Then, $\typenv\byund{A \klmeet W(T(m))}{\secmap{}}{} M':\tau$, and $S'$ is also $(W,{\secmap{}},{\typenv})$-compatible.
Furthermore, if~$P = \{N^n\}$, for some expression $N$ and thread name $n$, then $\typenv\byund{W(T'(n))}{\secmap{}}{} N:\unit$.
\end{prop}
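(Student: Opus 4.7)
The plan is to adopt the standard subject-reduction strategy: perform a case analysis on the derivation of the transition $W \vdash \iconft{T}{\{M^m\}}{S} \xarr{F}{d} \iconft{T'}{\{{M'}^m\} \cup P}{S'}$, supported by a short list of auxiliary lemmas. First I would establish (i) weakening in the allowed flow policy---if $\typenv \byund{A_2}{\secmap{}}{} M : \tau$ and $A_1 \klpreceq A_2$, then $\typenv \byund{A_1}{\secmap{}}{} M : \tau$---by a straightforward induction on typing, using that every use of the ambient $A$ in the rules is either arbitrary in the conclusion or appears only on the larger side of the side-condition $A \klpreceq F$ of rule~$\flowtyp$; (ii) a substitution lemma for $\beta$-reduction and $\fix$-unrolling; (iii) a value-typability lemma ensuring that any value $V$ of type $\tau$ is typable at $\tau$ under \emph{every} allowed flow policy, in particular under $\kltop$, which is what is needed to re-establish memory compatibility after assignment or reference creation; and (iv) a replacement/evaluation-context lemma stating that $\typenv \byund{A}{\secmap{}}{} \EC{M} : \tau$ decomposes into typings of $M$ and of the context under the same ambient $A$, which holds because the $\flowtyp$ rule does not alter $A$ in its subterm.

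With these in hand, the routine redexes---$\beta$, $\fix$, sequencing, boolean conditionals, $\flow{F}{V} \to V$, dereferencing, assignment, and reference creation---each follow by combining the replacement lemma with substitution or a direct re-application of the appropriate typing rule; for assignment and reference creation the value-typability lemma gives the stored value a typing under $\kltop$, preserving $(W,\secmap{},\typenv)$-compatibility. The only substantive local case is the allowed-condition. For $\allowed{F}{N_t}{N_f}$ taking the positive branch, the semantic side condition $W(T(m)) \klpreceq F$ yields $A \klmeet W(T(m)) \klpreceq A \klmeet F$, and weakening turns the $\allowtyp$-premise ``$N_t$ typable under $A \klmeet F$'' into ``$N_t$ typable under $A \klmeet W(T(m))$''; the negative branch needs only weakening from $A$ to $A \klmeet W(T(m))$.

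The migration case is where Mechanism~II's runtime check does the real work. From $\typenv \byund{A}{\secmap{}}{} \threadnat{l}{N}{d} : \unit$ the relaxed $\migtyp$ rule of Figure~\ref{fig-confinementII-typesystem} only delivers $\typenv \byund{\klbot}{\secmap{}}{} N : \unit$, which is by itself too weak to yield $\typenv \byund{W(T'(n))}{\secmap{}}{} N : \unit$. However, the premise of the revised semantic rule explicitly carries the stronger typing $\typenv \byund{W(d)}{\secmap{}}{} N : \unit$ as a runtime side condition, and since $T'(n) = d$ this is exactly what is required for the spawned thread. For the continuation, $\EC{\nil}^m$ is typable under $A$ and therefore under $A \klmeet W(T(m))$ by weakening. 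The only technical obstacle I anticipate is checking that weakening the ambient policy from $A$ to $A \klmeet W(T(m))$ does not invalidate any enclosing $\flowtyp$ side-condition $A \klpreceq F$; this reduces to the transitivity $A \klmeet W(T(m)) \klpreceq A \klpreceq F$, which holds by definition of meet, so the remainder of the case analysis is mechanical.
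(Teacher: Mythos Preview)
Your proposal is correct and follows essentially the same route as the paper. The paper's own proof explicitly reuses the argument for Mechanism~I (standard Wright--Felleisen structure with weakening, substitution, value-typability, and a replacement lemma, then case analysis on the redex) and singles out only the thread-creation case as different, handling it exactly as you do: the runtime side condition $\typenv \byund{W(d)}{\secmap{}}{} N : \unit$ in the modified migration rule supplies the typing for the spawned thread that the relaxed \textsc{Mig} rule alone cannot.
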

\begin{proof}
We follow the usual steps \cite{WF94}, where the main proof is a case analysis on the transition $W \vdash \iconft{T}{\{M^{m}\}} S$ $\xarr{\var {F'}}{d}$ $\iconft{T'} {\{M'^{m}\} \cup P} {S'}$.
\end{proof}

Enforcement Mechanism~II guarantees security of networks with respect to confinement, as is formalized by the following result.
\begin{thm}[Soundness of Enforcement Mechanism~II]  \label{prop-soundness-confinementII}        
Consider an allowed-policy mapping $W$, reference labeling $\secmap{}$, typing environment $\typenv$, and a thread configuration $\confd P T$ such that for all $M^{m} \in P$ there exists $\tau$ such that ${\typenv \byund{W(T(m))}{\secmap{}}{} M : \tau}$.  Then ${\confd P T}$ is $(W,\secmap{},\typenv)$-confined.    
\end{thm}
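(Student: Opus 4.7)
The plan is to mirror the proof of Theorem~\ref{prop-confinementI-soundness}, adapted to the relaxed type system of Figure~\ref{fig-confinementII-typesystem}. I would define the candidate family
\[
C \;=\; \{ \confd{P}{T} \mid \forall M^{m}\in P,\ \exists \tau.\ \typenv \byund{W(T(m))}{\secmap{}}{} M : \tau \}
\]
and show that $C$ is a set of $(W,\secmap{},\typenv)$-confined thread configurations according to Definition~\ref{def-operationallyconf}. This immediately implies the theorem, since the hypothesis places the given $\confd{P}{T}$ in $C$.

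So assume $\confd{P}{T}\in C$, that $S$ is $(W,\secmap{},\typenv)$-compatible, and that $W\vdash \iconft{T}{P}{S}\xarr{F}{d}\iconft{T'}{P'}{S'}$. The active thread $M^{m}$ has $T(m)=d$ and reduces inside an evaluation context, so $M=\cC{E}{M_{0}}$ and $F=\extrf{\cE{E}}$. Two things must be checked: $W(d)\klpreceq F$ (the confinement inequality) and $\confd{P'}{T'}\in C$ (invariance). For the second, Subject Reduction (Proposition~\ref{prop-subjectreduction-confinementII}) gives $\typenv\byund{A\klmeet W(T(m))}{\secmap{}}{} M':\tau$; since $A=W(T(m))$ by the hypothesis on $P$ and $T(m)=T'(m)$ (the reducing thread does not migrate), the new allowed effect equals $W(T'(m))$, as required. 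For a newly spawned thread $N^{n}$ with destination $T'(n)=d'$, the runtime premise of the new \migtyp~rule in Figure~\ref{fig-confinementII-semantics} is precisely $\typenv\byund{W(d')}{\secmap{}}{} N:\unit$, giving exactly the typability needed for $N^{n}$ to keep $\confd{P'}{T'}$ in $C$. Compatibility of $S'$ is also preserved by Subject Reduction.

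For the first obligation, I would prove by induction on $\cE{E}$ a context lemma stating that if $\typenv\byund{A}{\secmap{}}{} \cC{E}{M_{0}}:\tau$ then $A \klpreceq \extrf{\cE{E}}$. The only interesting case is $\cE{E}=\flow{F'}{\cE{E}'}$, where the \flowtyp~rule forces $A\klpreceq F'$ while the body $\cC{E'}{M_{0}}$ is typed under the same $A$; the induction hypothesis then yields $A\klpreceq \extrf{\cE{E}'}$, and the two combine to $A\klpreceq F'\klmeet\extrf{\cE{E}'}=\extrf{\cE{E}}$. All other context formers leave the allowed effect unchanged. Instantiating $A=W(T(m))=W(d)$ gives $W(d)\klpreceq\extrf{\cE{E}}=F$.

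The subtlest point, which I expect to be the main obstacle, is showing that the invariant $A=W(T(m))$ is genuinely preserved through an allowed-condition step. When $\allowed{F}{N_{t}}{N_{f}}$ fires its true branch, rule \allowtyp~only tells us that $N_{t}$ is typable under $A\klmeet F$, which is in principle more permissive than $A$. However, the semantics fires that branch exactly when $W(T(m))\klpreceq F$, and since $A=W(T(m))$ by the inductive assumption, this gives $A\klmeet F=A$, so $N_{t}$ is in fact typable under $A=W(T'(m))$. This meshes perfectly with what Subject Reduction delivers, closing the invariance step for $\confd{P'}{T'}\in C$ and completing the proof.
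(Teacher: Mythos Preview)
Your proposal is correct and follows essentially the same approach as the paper: define the same candidate set $C$, use Subject Reduction (Proposition~\ref{prop-subjectreduction-confinementII}) for invariance and store compatibility, and establish $W(d)\klpreceq F$ from the typing constraints. The only cosmetic difference is that you package the confinement inequality as a standalone context lemma proved by induction on $\cE{E}$, whereas the paper phrases it as an induction on the typing derivation with a case analysis on the last rule; since the type system is syntax-directed these amount to the same argument.
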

\begin{proof}
We show that the following is a set of $(W,\secmap{},\typenv)$-confined thread configurations:
\myexample{
C = \{ \confd{P}{T} ~|~ \forall M^m \in P, \exists \tau ~.~ {\typenv \byund{W(T(m))}{\secmap{}}{} M : \tau}\} %
}
is a set of $(W,\secmap{},\typenv)$-confined thread configurations. 
By induction on the inference of ${\typenv \byund{W(T(m))}{\secmap{}}{} M : \tau}$. %
We use Subject Reduction (Proposition~\ref{prop-subjectreduction-confinementII}) and a case analysis on the last rule of the corresponding typing proof. %
\end{proof}

\subsubsection{Safety, precision and efficiency}
The proposed mechanism does not offer a safety result, guaranteeing that program execution never gets stuck. %
Indeed, the side condition of the thread creation rule introduces the possibility for the execution of a thread to block, since no alternative is given.  
This can happen in the example in Equation~(\ref{ex-tricky}), %
if the flow policy $F$ is not permitted by the allowed policy of the domain of the branch that is actually executed, then the migration will not occur, and execution will not proceed. %
We can, however, prove that the only way for a program to get stuck is if it triggers an unsafe migration:
\begin{prop}[Safety (weakened)] \label{prop-migrcontrol-safety}
Consider a closed thread $M^{m}$ such that ${\emptyset \byund{A}{\secmap{}}{} M : \tau}$.  Then, for any allowed-policy mapping $W$, memory $S$ that is $(W,{\secmap{}},{\emptyset})$-compatible and position-tracker $T$, either the program $M$ is a value, or:
\begin{itemize}
\item $W \vdash \iconft{T}{\{M^{m}\}} S$ $\xarr{\var {F'}}{T(m)}$ $\iconft{T'} {\{M'^{m}\} \cup P} {S'}$, for some $F'$, $M'$, $P$, $S'$ and $T'$, or
\item $M=\cC{{E}}{\threadnat {l} {N} {d}}$, for some ${E}$, $l$ and $d$ such that ${\emptyset \byund{\klbot}{\secmap{}}{} {N} : {\unit}}$ but ${\emptyset \not\byund{W({d})}{\secmap{}}{} {N} : {\unit}}$.
\end{itemize}
\end{prop}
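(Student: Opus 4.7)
The plan is to prove progress by induction on the derivation of ${\emptyset \byund{A}{\secmap{}}{} M : \tau}$, following the standard structure for call-by-value progress theorems. The value rules (\niltyp, \boolttyp, \boolftyp, \loctyp, \abstyp) discharge the first alternative immediately; the rule \vartyp does not arise since the typing environment is empty; and \rectyp yields an expression $\fix x X$ that reduces under the empty evaluation context.

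For each compound construct, I would apply the induction hypothesis to the sub-expression in evaluation position. If that sub-expression is not a value, the hypothesis gives a reduction (or a stuck migration), which lifts to a reduction of $M$ (respectively a stuck migration of $M$) by re-embedding the evaluation context. If the sub-expression is already a value, $M$ becomes an elementary redex and canonical-forms reasoning from the typing shape closes the case: a closed value typed $\tau \xarr{A}{} \sigma$ is a $\lambda$-abstraction (\apptyp); a closed value typed $\rfrt{\theta}{}$ is a reference $a$, and $(W,\secmap{},\emptyset)$-compatibility of $S$ ensures $a \in \dom{S}$ with a well-typed content, unlocking \dereftyp and \assigntyp; a closed value of type $\bool$ is $\vrai$ or $\faux$ (\condtyp). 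The rules \seqtyp, \reftyp, and \flowtyp reduce directly, and \allowtyp always steps because the current domain's policy $W(T(m))$ either does or does not refine the tested $F$.

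The critical case is \migtyp: here $M$ decomposes as $\cC{E}{\threadnat{l}{N}{d}}$ with the relaxed typing ${\emptyset \byund{\klbot}{\secmap{}}{} N : \unit}$, whereas the operational rule additionally requires ${\emptyset \byund{W(d)}{\secmap{}}{} N : \unit}$. If this stronger judgement holds, the migration step fires, discharging the second alternative; otherwise $M$ fits the third alternative exactly. The main obstacle I expect is the bookkeeping for this case split: I need to verify that when the destination check fails no other redex inside $M$ can step, so that the only source of stuckness is this particular migration instruction. This amounts to arguing that the decomposition into evaluation context and redex is unique (a standard property of the evaluation-context grammar) and that the inductive hypothesis applied to the fragments under $\cE{E}$ has already been exhausted, i.e.\ they are all values by the time control reaches the $\kw{thread}$ redex.
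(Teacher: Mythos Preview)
Your approach is correct and matches the paper's: induction on the typing derivation with case analysis on the last rule, lifting the stuck-migration alternative through larger evaluation contexts by context composition. The obstacle you anticipate is not real: the three alternatives form a non-exclusive disjunction, so once the inductive hypothesis yields the third alternative for a sub-expression you simply wrap its context in the outer frame (e.g.\ $E = \flow{F}{\cC{\bar E}{}}$ or $E = \seq{\cC{\bar E}{}}{M_2}$)---no uniqueness-of-decomposition or ``all fragments are values'' argument is needed, and in the base \migtyp{} case the context is just $[\,]$ since $\threadnat{l}{\cdot}{d}$ is not an evaluation context.
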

\begin{proof} By induction on the derivation of ${\typenv \byund{A}{\secmap{}}{} M : \tau}$.  \end{proof} 

\NEW{}

In order to have safety, we could design a thread creation instruction that predicts an alternative branch for execution in case the side condition fails. We choose not to pursue this option, thus minimizing changes to the basic language under consideration.  Nevertheless, the condition that was introduced in the semantics of the thread creation sufficed to originate new forms of information flow leaks, as the event of blockage of a computation can reveal information about the control flow that led to it.  These information leaks can be treated in a similar manner as termination leaks, as in~\cite{Alm09}, for blockage can be seen as a form of non-termination.  Such leaks are in fact rejected by the type and effect system of Figure~\ref{fig-iflow-typesystem}, as is discussed further ahead at the end of Section~\ref{sec-confinement}.

It is worth considering that the example in Equation~(\ref{ex-tricky}) could be reformulated as
\numbexample{
{\threadnat l {\allowed F {\flow F {M_1}} {M_2}} d}
}
in effect using the allowed-condition for encoding such alternative behaviors.
Indeed, the programmer can increase the chances that the program will terminate successfully by guarding flow declarations with the new allowed-condition construct.  Our type system does take that effort into account, by subtracting the tested flow policy when typing the allowed-branch.  As a result, programs containing flow declarations that are too permissive according to a certain domain might still be authorized to execute in it, as long as they occur in the not-allowed-branch of our new construct which will not be chosen.

Returning to the example in Equation~(\ref{ex-staticimprecision}), %
thanks to the relaxed~\migtyp~rule, this program is now \emph{always} accepted statically by the type system. Depending on the result of the test, the migration might also be allowed to occur if a safe branch is chosen.  This means that Enforcement Mechanism~II accepts more secure programs, as well as more secure executions of (statically) insecure programs.

A drawback with this enforcement mechanism lies in the computation weight of the runtime type checks.  This is particularly acute for an expressive language such as the one we are considering.  Indeed, recognizing typability of ML expressions has exponential (worst case) complexity~\cite{Mai90}.  

\begin{figure}[t!]
\begin{equation*}
\begin{array}{c}

\text{[\niltypI]} ~
{\typenv \byinov{\secmap{}}{} \nil \hookrightarrow \nil : \kltop, {{\unit}}}  \quad  

\text{[\boolttypI]} ~
{\typenv \byinov{\secmap{}}{} \vrai \hookrightarrow \vrai : \kltop, {{\bool}}} \quad

\text{[\boolftypI]} ~
{\typenv \byinov{\secmap{}}{} \faux \hookrightarrow \faux : \kltop, {{\bool}}} \\[2.5mm]

\text{[\loctypI]} ~
{\typenv \byinov{\secmap{}}{} {a} \hookrightarrow a : \kltop, {\rfrt {\secmap{2}(a)} {}}} \qquad

\text{[\vartypI]} ~
{\typenv,x:\tau \byinov{\secmap{}}{} x \hookrightarrow x : \kltop, {\tau}} \\[2.5mm]

\text{[\abstypI]} ~
\fra{\typenv, x:{\tau} \byinov{\secmap{}}{} M \hookrightarrow {\hat M} : s,{\sigma}}
{\typenv \byinov{\secmap{}}{} {\lam {x} {M}} \hookrightarrow {\lam {x} {\hat M}} : \kltop, {\tau \xarr {}{s} \sigma}} \qquad

\text{[\rectypI]} ~
\fra{\typenv,x:\tau \byinov{\secmap{}}{} X \hookrightarrow {\hat X}: s,\tau}
{\typenv \byinov{\secmap{}}{} {\fix{x}X} \hookrightarrow {\fix{x}{\hat X}}: s,\tau} \\[5mm]

\text{[\apptypI]} ~
\fra{\typenv \byinov{\secmap{}}{} {M} \hookrightarrow {\hat{M}}: {\tef {s} {\tau \xarr{}{s'} \sigma}} ~~~ 
\typenv \byinov{\secmap{}}{} N \hookrightarrow {\hat{N}}: {{\tef{s''}{\tau''}}} ~~~
\tau \klpreceqtyp \tau''}
{\typenv \byinov{\secmap{}}{} {\app {M} {N}} \hookrightarrow {\app {\hat M} {\hat N}}: {\tef{s \klmeet s' \klmeet s''} \sigma} } \\[5mm]

\text{[\seqtypI]} ~
\fra{\typenv \byinov{\secmap{}}{} {M} \hookrightarrow {\hat M}: {\tef s {\tau}} ~~~  \typenv \byinov{\secmap{}}{} N \hookrightarrow {\hat N}: {{\tef{s'}{\sigma}}}}
{\typenv \byinov{\secmap{}}{} {\seq {M} {N}} \hookrightarrow {\seq {\hat M} {\hat N}}: {\tef{s \klmeet s'} \sigma} } \\[5mm]

\text{[\reftypI]} ~
\fra{\typenv \byinov{\secmap{}}{} {M} \hookrightarrow {\hat M}: {\tef s {\theta'}} ~~~
\theta \klpreceqtyp \theta'}
{\typenv \byinov{\secmap{}}{} {\rfr{\theta}{M}} \hookrightarrow {\rfr{\theta}{\hat M}}: s, {\rfrt \theta {}} } \qquad %

\text{[\dereftypI]} ~
\fra{\typenv \byinov{\secmap{}}{} {M} \hookrightarrow {\hat M}: {\tef s {\rfrt \theta {}}} }
{\typenv \byinov{\secmap{}}{} {\deref{M}} \hookrightarrow {\deref{\hat M}}: {\tef {s} {\theta}}} \\[5mm] %

\text{[\assigntypI]} ~ 
\fra{\typenv \byinov{\secmap{}}{} {M} \hookrightarrow {\hat{M}}: {\tef s {\rfrt {\theta} {}}} ~~~
\typenv \byinov{\secmap{}}{} N \hookrightarrow {\hat{N}}: {{\tef{s'}{\theta'}}} ~~~
\theta \klpreceqtyp \theta' }
{\typenv \byinov{\secmap{}}{} {\assign {M} {N}} \hookrightarrow {\assign {\hat M} {\hat N}}: {\tef{s \klmeet s'} \unit} }\\[5mm]

\text{[\condtypI]} ~
\fra{\typenv \byinov{\secmap{}}{} M \hookrightarrow {\hat{M}}: {s,{\bool}} ~~~ 
\begin{array}{c} 
\typenv \byinov{\secmap{}}{} N_t \hookrightarrow {{\hat N}_t}: {{s_t,{\tau_t}}} \\ 
\typenv \byinov{\secmap{}}{} N_f \hookrightarrow {{\hat N}_f}: {{s_f,{\tau_f}}}
\end{array} ~~~
\tau_t \kleqtyp \tau_f}
{\typenv \byinov{\secmap{}}{} {\cond{M}{N_t}{N_f}} \hookrightarrow {\cond{\hat M}{{\hat N}_t}{{\hat N}_f}}: {s \klmeet s_t \klmeet s_f,{\tau_t \klmeet \tau_f}}} \\[5mm]

\textbf{[\allowtypI]} ~
{\fra{ 
\begin{array}{c} 
\typenv \byinov{\secmap{}}{} {N_t} \hookrightarrow {{\hat N}_t}: {{s_t,{\tau_t}}} \\ 
\typenv \byinov{\secmap{}}{} N_f \hookrightarrow {{\hat N}_f}: {{s_f,{\tau_f}}}
\end{array} ~~~
\tau_t \kleqtyp \tau_f}
{\begin{array}{r}
\typenv \byinov{\secmap{}}{} {\allowed {F} {N_t} {N_f}} \hookrightarrow {\allowed {F} {{\hat N}_t} {{\hat N}_f}}: \\
{s_t \klpseudominus F \klmeet s_f},{\tau_t \klmeet \tau_f}\end{array}}} \\[5mm]

{\textbf{[\flowtypI]} ~
\fra{\typenv \byinov{\secmap{}}{} N \hookrightarrow {\hat N}: s,\tau}   %
{\typenv \byinov{\secmap{}}{} {\flow {F} N} \hookrightarrow {\flow {F} {\hat N}} : s \klmeet{} F, \tau}}  \\[7mm] %

{\textbf{[\migtypI]} ~
\fra{
\typenv \byinov{\secmap{}}{} {{M}} \hookrightarrow {\hat{M}}: {\tef s {\unit}}}
{\typenv \byinov{\secmap{}}{} {\threadnat l M {d}} \hookrightarrow {\threadanot l {\hat M} {d} s}: \kltop,{\unit}}}   
	\end{array}
\end{equation*}
\caption{Informative type and effect system for obtaining the Declassification Effect}	\label{fig-typesystem-declassif}
\figline
\end{figure}
\subsection{Static Informative Typing for Runtime Effect Checking}   \label{subsec-confinement-decleffect}

We have seen that bringing the type-based migration control of programs to runtime allows to increase the precision of the confinement analysis.  This is, however, at the cost of performance.  It is possible to separate the program analysis as to what are the declassification operations that are performed by migrating threads, from the safety problem of determining whether those declassification operations should be allowed at a given domain.  To achieve this, we now present an \emph{informative} type system~\cite{AS12} that statically calculates a summary of all the declassification operations that might be performed by a program, in the form of a \emph{declassification effect}.  Furthermore, this type system statically annotates the program with information about the declassifying behavior of migrating threads in order to support the runtime decision of whether they can be considered safe by the destination domain.  The aim is to bring the overhead of the runtime check to static time.

The typing judgments of the type system in Figure~\ref{fig-typesystem-declassif} have the form:
\myexample{
\typenv \byinov{\secmap{}}{} {M} \hookrightarrow \hat{M}: {\tef s {\tau}}}
Comparing with the typing judgments of Subsection~\ref{subsec-confinement-runtime}, while the flow policy allowed by the context parameter is omitted from the turnstile `$\vdash$', the security effect $s$ represents a flow policy which corresponds to the \emph{declassification effect}:  a lower bound to the flow policies that are declared in the typed expression.  The second expression $\hat M$ is the result of annotating $M$.  We thus consider an annotated version of the language of Subsection~\ref{subsec-language}, where the syntax of values and expressions (the sets are denoted $\Val'$ and $\Expr'$, respectively) differs only in the remote thread creation construct, that has an additional policy $F$ as parameter, written $\threadanot l M d F$. Also here, only the type labeling component $\secmap{2}$ of the reference labeling is used.

Types have the following syntax ($t$ is a type variable):
\myexample{
\tau, \sigma, \theta ~ \in ~  \Typ ~ ::= ~ t ~|~ \unit ~|~  \bool ~|~ \rfrt \theta {} ~|~ \tau \xarr{}{s} \sigma 
}
The syntax of types is the same as the one used in Subsections~\ref{subsec-confinement-static} and~\ref{subsec-confinement-runtime}, except that the flow policy that represents the latent declassification effect is written over the arrow of the function type $\tau \xarr{}{s} \sigma$.

It is possible to relax the type system by matching types that have the same structure, even if they differ in flow policies pertaining to them.  
We achieve this by overloading $\klpreceqtyp$ to relate types where certain latent effects in the first are at least as permissive as the corresponding ones in the second.
The more general relation $\kleqtyp$ matches types where certain latent effects differ:
Finally, we define an operation $\klmeet$ between two types $\tau$ and $\tau'$ such that $\tau \kleqtyp \tau'$:%
\myexample{
\begin{array}{c}
\tau\klpreceqtyp\tau' \text{ iff } \tau = \tau', \text{ or }\tau=\theta\xarr{}{F}\sigma \text{ and } \tau'=\theta\xarr{}{F'}\sigma' \text{ with } F\klpreceqtyp F' \text{ and } \sigma\klpreceq\sigma'\\

\tau \kleqtyp \tau' \text{ iff } \tau = \tau', \text{ or }\tau=\theta \xarr{}{F} \sigma \text{ and } \tau'=\theta \xarr{}{F'} \sigma' \text{ with } \sigma \kleqtyp \sigma'\\

\tau\klmeet\tau' = \tau, \text{ if }\tau = \tau', \text{ or } \theta\xarr{}{F\klmeet F'}\sigma \klmeet \sigma', \text{ if } \tau=\theta\xarr{}{F}\sigma \text{ and } \tau'=\theta\xarr{}{F'}\sigma'
\end{array}
}
The $\klpreceqtyp$ relation is used in rules \reftypI, \assigntypI~and~\apptypI, in practice enabling to associate to references and variables (by reference creation, assignment and application) expressions with types that contain stricter policies than required by the declared types.  The relation $\kleqtyp$  is used in rules \condtypI~and~\allowtypI~in order to accept that two branches of the same test construct can differ regarding some of their policies.  Then, the type of the test construct combines both using $\klmeet$, thus reflecting the flow policies in both branches.

The declassification effect is constructed by aggregating (using the meet operation) all relevant flow policies that are declared within the program.  The effect is updated in rule \flowtypI, each time a flow declaration is performed, and becomes more permissive as the declassification effects of sub-expressions are met in order to form that of the parent command.  However, when a part of the program is guarded by an allowed-condition, some of the information in the declassification effect can be discarded.  This happens in rule \allowtypI, where the declassification effect of the first branch is not used entirely:  the part that will be tested during execution by the allowed-condition is omitted.  In rule~\migtypI, the declassification effect of migrating threads is also not recorded in the effect of the parent program, as they will be executed (and tested) elsewhere.  That information is however used to annotate the migration instruction. 

As an example, the thread creation of the program in Equation~(\ref{exallowed2}), still assuming that ${\textit{plan\_A}}$ and ${\textit{plan\_B}}$ have no declassifications, would be annotated with the declassification effect ${F_{H \prec L}} \klpseudominus F$.  In particular, the effect would be $\kltop$ if $F \klpreceq {F_{H \prec L}}$.

One can show that the type system is deterministic, in the sense that it assigns to a non-annotated expression a single annotated version of it, a single declassification effect, and a single type.

\subsubsection{Modified operational semantics, revisited.} \label{subsubsec-deceffect-semantics}
\begin{figure}
\begin{equation*}
  \hspace{-3mm}
{\fra{ \boldsymbol{W(d) \klpreceq s} }  %
  {W \semvdash {\iconft{T}{\{{\EC{\threadanot{l}{N}{d}{s}}}^{m}\}}{S}}\xarr{\extrf{\cE{E}}}{T(m)}{\iconft{\update {n} d T}{\{{\EC{\nil}}^{m},N^{n}\}}{S}}}} %
\end{equation*}
\caption{Operational semantics with statically annotated programs for runtime migration control (omitted rules are as in Figure~\ref{fig-semantics})} \label{fig-confinementIII-semantics}
\figline
\end{figure}
By executing annotated programs, the type check that conditions the migration instruction can be replaced by a simple declassification effect inspection. The new migration rule, presented in Figure~\ref{fig-confinementIII-semantics}, is similar to the one in Subsection~\ref{subsec-confinement-runtime}, but now makes use of the declassification effect ($n$ fresh in $T$).
In the remaining rules of the operational semantics the annotations are ignored.  
Note that the values contained in memories are also assumed to use annotated syntax. %

We refer to the mechanism that consists of statically annotating all threads~in a network according to the type and effect system of Figure~\ref{fig-typesystem-declassif}, assuming that~each thread's declassification effect is allowed by its initial domain, while using the semantics of Figure~\ref{fig-confinementIII-semantics} modified, as \emph{Enforcement Mechanism~III}. 

\NEW
It is useful to define a family of functions $\annot{}{} : (\Ref \rightarrow \Val) \rightarrow (\Ref \rightarrow \Val')$ which, given a typing environment $\typenv$ and a reference labeling $\secmap{}$, extends the annotation process to stores:
\myexample{
  \begin{array}{rrl}
    \annot{S} = \hat S &\textit{ such that }& \dom{S}=\dom{\hat S}\\
    & \textit{ and } & \forall a\in\dom{S}, \typenv \byinov{\secmap{}}{} {S(a)} \hookrightarrow \hat{S}(a): {\tef \kltop {\secmap{2}(a)}}
  \end{array}
}   \label{def-deceffect-annot}
We can now define the compatibility predicate that applies to Enforcement Mechanism~III.
\begin{defi}[$(W,{\secmap{}},{\typenv})$-Compatibility] \label{def-confinementIII-compatibility}
A memory $\hat S$ is said to be $(W,{\secmap{}},{\typenv})$-\emph{compatible} if there exists a non-annotated memory $S$ such that $\annot{S}=\hat S$.
\end{defi}

\subsubsection{Soundness}  \label{subsub-decleffect-subjreduction}

The following proposition ensures that the annotation processing is preserved by the annotated semantics.  This is formulated by stating that after reduction, programs are still well annotated.  Since the type system integrates both the annotating process and the calculation of the declassification effect, the formulation of this result is slightly non-standard.
More precisely, the following result states that if a program is the result of an annotation process, with a certain declassification effect and type, then after one computation step it is still the result of annotating a program, and is given an at least as strict declassification effect and type.
\begin{prop}[Subject Reduction, or Preservation of Annotations] \label{prop-preservanot}
Consider a thread $M^{m}$ such that ${\typenv \byinov{\secmap{}}{} {M} \hookrightarrow {N}: s, {\tau}}$ and suppose that
$W \semvdash \iconft{T}{\{{N}^{m}\}} S$ $\xarr{\var {F}}{d}$ $\iconft{T'} {\{{N'}^{m}\} \cup P} {S'}$, for a memory $S$ that is $(W,{\secmap{}},{\typenv})$-compatible. Then there exist $M'$, $s'$, $\tau'$ such that $s \klmeet {W(T(m))} \klpreceq s'$, and $\tau \klpreceqtyp \tau'$, and $\typenv\byinov{\secmap{}}{} M' \hookrightarrow {N'}:s',\tau'$, and $S'$ is also $(W,{\secmap{}},{\typenv})$-compatible.
Furthermore, if $P = \{N''^n\}$ for some expression $N''$ and thread name $n$, then there exist $M''$, $s''$ such that $W(T'(n)) \klpreceq s''$ and $\typenv\byinov{\secmap{}}{} M'' \hookrightarrow N'':s'',\unit$.
\end{prop}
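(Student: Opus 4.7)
The plan is to proceed in the standard Wright--Felleisen style, by structural induction on the evaluation context followed by a case analysis on the atomic reduction, leveraging the fact that the annotated semantics inherits its context structure directly from the annotated expression $N = \hat M$. Because the annotating type system is deterministic (every non-annotated $M$ has at most one annotated image $N$), the witness $M'$ can be taken to be the expression obtained by stripping annotations from $N'$, and the real work is in exhibiting the typing derivation $\typenv \byinov{\secmap{}}{} M' \hookrightarrow N' : s', \tau'$ with $s \klmeet W(T(m)) \klpreceq s'$ and $\tau \klpreceqtyp \tau'$.

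The main auxiliary is a substitution lemma for $\byinov{\secmap{}}{}$, establishing that if $\typenv,x{:}\tau \byinov{\secmap{}}{} M \hookrightarrow \hat M : s,\sigma$ and $V$ is a value with $\typenv \byinov{\secmap{}}{} V \hookrightarrow \hat V : \kltop,\tau'$ for some $\tau' \klpreceqtyp \tau$, then $\typenv \byinov{\secmap{}}{} \substi{x}{V}M \hookrightarrow \substi{x}{\hat V}\hat M : s',\sigma'$ for some $s' \klpreceq s$ and $\sigma' \klpreceqtyp \sigma$; this feeds the $\beta$-reduction, \kw{fix}, and \kw{let}-like cases. An evaluation-context decomposition lemma is also needed, showing that $\typenv \byinov{\secmap{}}{} \cC{E}{R} \hookrightarrow \cC{E}{R'} : s \klmeet s_R,\tau$ whenever the hole is replaced by a typable reduct of smaller or equal effect, using the monotonicity of $\klmeet$ with respect to $\klpreceq$.

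The routine cases (conditional branches, sequential composition, dereferencing, assignment, reference creation) reduce values to values with effect $\kltop$, so $s \klmeet W(T(m)) \klpreceq \kltop$ holds trivially in the top element of the policy lattice. For $\flow F V \to V$ we had $s = \kltop \klmeet F = F$ and the reduct has effect $\kltop$, so again the inequality is immediate. The delicate case is the allowed-condition: if $\allowed{F}{N_t}{N_f}$ was typed with effect $(s_t \klpseudominus F) \klmeet s_f$ and the true branch fires, then by the semantics $W(T(m)) \klpreceq F$, and the defining property of pseudo-subtraction gives $(s_t \klpseudominus F) \klmeet F \klpreceq s_t$; combining with monotonicity $(s_t \klpseudominus F) \klmeet W(T(m)) \klpreceq (s_t \klpseudominus F) \klmeet F$ yields $((s_t \klpseudominus F) \klmeet s_f) \klmeet W(T(m)) \klpreceq s_t$ as required. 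The false branch case is immediate because $(s_t \klpseudominus F) \klmeet s_f \klpreceq s_f$. For the remote thread creation $\threadanot{l}{N}{d}{s}$, the parent reduces to $\nil$ of effect $\kltop$ so the primary inequality is trivial; the \emph{furthermore} clause is settled by taking $s'' = s$ and $M'' = M$ (the stripped spawned thread) together with the side condition $W(d) \klpreceq s$ enforced in the migration rule of Figure~\ref{fig-confinementIII-semantics}, which immediately gives $W(T'(n)) = W(d) \klpreceq s''$. Compatibility of $S'$ is preserved because assignments and reference creations store annotated values produced by a sub-derivation of $\typenv \byinov{\secmap{}}{} M \hookrightarrow N : s, \tau$.

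The chief obstacle is the allowed-condition case: the subtlety is that the declassification effect $(s_t \klpseudominus F) \klmeet s_f$ was crafted precisely to anticipate the runtime erasure of $F$, and the proof must exploit both the defining universal property of $\klpseudominus$ and the monotonicity of $\klmeet$ in the permissiveness order $\klpreceq$. A secondary mild difficulty is keeping the $\klpreceqtyp$ and $\kleqtyp$ bookkeeping consistent through the conditional and allowed rules, since the annotated types of the two branches are merged by $\klmeet$ on types; this forces the statement to conclude $\tau \klpreceqtyp \tau'$ rather than strict equality, and the substitution lemma must be phrased accordingly.
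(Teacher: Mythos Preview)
Your approach is essentially the paper's: a Wright--Felleisen decomposition via a Replacement lemma for evaluation contexts, a Substitution lemma, and case analysis on the redex, with the pseudo-subtraction universal property driving the allowed-condition case exactly as you describe and the runtime side condition $W(d)\klpreceq s$ of the migration rule discharging the spawned-thread clause.

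Two small slips to correct. First, the orientation of $\klpreceqtyp$ in your Substitution lemma is reversed: in rule \textsc{App}$_{\textsc I}$ the argument type $\tau''$ satisfies $\tau\klpreceqtyp\tau''$, so the lemma must assume the substituted value has type $\sigma'$ with $\sigma\klpreceqtyp\sigma'$ (declared parameter type more permissive), and the conclusion should be $\tau\klpreceqtyp\tau'$ and $s\klpreceq s'$ (effect may only become \emph{less} permissive), not the other way round---otherwise the inequality $s\klmeet W(T(m))\klpreceq s'$ needed here would not follow. Second, the conditional and sequential redexes do not reduce to values of effect $\kltop$; they reduce to the chosen branch or continuation with its own effect $s_i$, and the inequality holds because the compound effect was already the $\klmeet$ of all sub-effects, hence below $s_i$.
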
 
\begin{proof}
We follow the usual steps \cite{WF94}, where the main proof is a case analysis on the transition $W \vdash \iconft{T}{\{N^{m}\}} S$ $\xarr{\var {F'}}{d}$ $\iconft{T'} {\{N'^{m}\} \cup P} {S'}$.
\end{proof}

We will now see that the declassification effect can be used for enforcing confinement.
\begin{thm}[Soundness of Enforcement Mechanism~III] \label{prop-soundness-declassif}
Consider an allowed-policy mapping $W$, reference labeling $\secmap{}$, typing environment $\typenv$, and a thread configuration $\confd P T$ such that for all $M^{m} \in P$ there exist $\hat M$, $s$ and $\tau$ such that ${\typenv \byinov{\secmap{}}{} M \hookrightarrow \hat{M}: s,\tau}$ and ${W(T(m))} \klpreceq s$.  Then ${\confd {\hat P} T}$, formed by annotating the threads in $\confd P T$, is $(W,\secmap{},\typenv)$-confined.
\end{thm}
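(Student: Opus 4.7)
The plan is to mimic the structure of the proofs of Theorems~\ref{prop-confinementI-soundness} and~\ref{prop-soundness-confinementII} by exhibiting a set of $(W,\secmap{},\typenv)$-confined thread configurations that contains $\confd{\hat P}{T}$. The natural candidate is
\[
C = \{ \confd{\hat Q}{U} ~|~ \forall \hat N^n \in \hat Q,~ \exists N, s, \tau ~.~ \typenv \byinov{\secmap{}}{} N \hookrightarrow \hat N : s,\tau ~\textit{and}~ W(U(n)) \klpreceq s \},
\]
which contains $\confd{\hat P}{T}$ by the hypotheses of the theorem. It then suffices to check that $C$ meets the coinductive condition of Definition~\ref{def-operationallyconf}: given $\confd{\hat Q}{U} \in C$ and a transition $W \vdash \iconft{U}{\hat Q}{S} \xarr{F}{d} \iconft{U'}{\hat Q'}{S'}$ over a $(W,\secmap{},\typenv)$-compatible store~$S$, establish (a)~$W(d) \klpreceq F$, (b)~$\confd{\hat Q'}{U'} \in C$, and (c)~$S'$ is still compatible.

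The crucial new ingredient is an auxiliary lemma: \emph{if $\typenv \byinov{\secmap{}}{} N \hookrightarrow \hat N : s, \tau$ and $\hat N = \cC{E}{\hat R}$ for some evaluation context $E$ and redex $\hat R$, then $s \klpreceq \extrf{\cE{E}}$.} I would prove it by induction on $\cE{E}$. The base case $\cE{E} = \trou$ is immediate since $\extrf{\trou} = \kltop$. Among the inductive cases, only $\cE{E} = \flow{F'}{\cE{E'}}$ changes $\extrf{\cdot}$, and rule~[\flowtypI] forces an effect of the form $s_{E'} \klmeet F'$ with $s_{E'} \klpreceq \extrf{\cE{E'}}$ by the induction hypothesis, whence $s_{E'} \klmeet F' \klpreceq \extrf{\cE{E'}} \klmeet F' = \extrf{\cE{E}}$. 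The remaining context constructors leave $\extrf{\cdot}$ unchanged while meeting the inner effect with further effects, which only makes the result more permissive and thereby preserves the desired inequality.

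Equipped with this lemma, clause (a) follows: the step is performed by some $\hat N^n \in \hat Q$ with $d = U(n)$, and $\hat N = \cC{E}{\hat R}$ with $F = \extrf{\cE{E}}$; writing $s$ for the declassification effect of $\hat N$, the lemma gives $s \klpreceq F$, and $W(d) \klpreceq s$ by $\confd{\hat Q}{U} \in C$, so $W(d) \klpreceq F$ by transitivity. For (b), Subject Reduction (Proposition~\ref{prop-preservanot}) supplies a new effect $s'$ for the residual thread with $s \klmeet W(U(n)) \klpreceq s'$; since $W(U(n)) \klpreceq s$ forces $s \klmeet W(U(n)) = W(U(n))$, the residual thread at its unchanged location stays in $C$. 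If the transition spawns a new thread $\hat N''^{n'}$ at $d' = U'(n')$, the revised migration rule of Figure~\ref{fig-confinementIII-semantics} only fires when $W(d') \klpreceq s''$ for the annotation $s''$ carried by the migration construct, and Subject Reduction supplies a typing derivation for $\hat N''$ with that same effect $s''$, placing the new thread in $C$ too. Clause~(c) is supplied directly by the compatibility statement of Subject Reduction.

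The main obstacle is the auxiliary lemma: although uneventful, its proof requires patient inspection of every evaluation-context constructor in Figure~\ref{fig-typesystem-declassif} to verify that the informative type system always decomposes effects through $\klmeet$-aggregation. Once this is granted, soundness of Enforcement Mechanism~III follows compositionally from Subject Reduction.
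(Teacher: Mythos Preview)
Your proposal is correct and lands on the same coinductive witness $C$ as the paper, but you organize the key step differently. The paper proves $W(d)\klpreceq F$ by induction on the typing derivation $\typenv \byinov{\secmap{}}{} M \hookrightarrow \hat M : s,\tau$, carrying the hypothesis $W(d)\klpreceq s$ through each case of the last rule used (essentially re-deriving, case by case, that the outer effect absorbs every flow declaration appearing in the evaluation context). You instead factor this out as a clean structural lemma, $s \klpreceq \extrf{\cE{E}}$, proven once by induction on~$\cE{E}$ and independent of $W$; the soundness argument then collapses to a single transitivity step. Your decomposition is more modular and makes explicit the invariant that the informative type system's effect is always a lower bound on the context's declared policy; the paper's version keeps this implicit within the main induction. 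Both routes rely identically on Proposition~\ref{prop-preservanot} for clauses~(b) and~(c), and your handling of spawned threads is fine since that proposition already delivers $W(T'(n))\klpreceq s''$ directly.
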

\begin{proof} 
We show that the following is a set of $(W,\secmap{},\typenv)$-confined thread configurations:
\myexample{
C = \{ \confd{P}{T} ~|~ \forall {\hat M}^m \in P,~ \exists M,s,\tau ~.~ {\typenv \byinov{\secmap{}}{} M \hookrightarrow \hat M: s,\tau} \text{ and } W(T(m)) \klpreceq s\}   %
}
By induction on the inference of ${\typenv \byinov{\secmap{}}{} M \hookrightarrow \hat M: s,\tau}$. %
We use Subject Reduction (Proposition~\ref{prop-preservanot}) and a case analysis on the last rule of the corresponding typing proof. %
\end{proof}

\subsubsection{Safety, precision and efficiency}
The weak safety result is formulated similarly to Proposition~\ref{prop-migrcontrol-safety}:
\begin{prop}[Safety (weakened)] \label{prop-decleffect-safety}
Consider a closed thread $M^{m}$ for which we have that ${\emptyset \byinov{\secmap{}}{} M \hookrightarrow {\hat M}: s,\tau}$.  Then, for any allowed-policy mapping $W$, memory $S$ that is $(W,{\secmap{}},{\emptyset})$-compatible and position-tracker $T$, either the program $\hat M$ is a value, or:
\begin{enumerate}
\item $W \vdash \iconft{T}{\{{\hat M}^{m}\}} S$ $\xarr{\var {F'}}{T(m)}$ $\iconft{T'} {\{{\hat M}'^{m}\} \cup P} {S'}$, for some $F'$, ${\hat M}'$, $P$, $S'$ and $T'$, or
\item $M=\cC{{E}}{\threadnat {l} {N} {d}}$, for some ${E}$, $l$, $d$ and $\hat s$ such that $\hat M=\cC{{E}}{\threadanot {l} {\hat N} {d} {\hat s}}$ and ${\emptyset \byinov{\secmap{}}{} N \hookrightarrow {\hat N}: {\hat s},\tau}$ but ${ W(d) \not\klpreceq \hat{s}}$.
\end{enumerate}
\end{prop}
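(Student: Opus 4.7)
The plan is to proceed by structural induction on the typing derivation $\emptyset \byinov{\secmap{}}{} M \hookrightarrow \hat M : s, \tau$, following the proof pattern of Proposition~\ref{prop-migrcontrol-safety}. Closure eliminates the variable case \vartypI. For the value rules \niltypI, \boolttypI, \boolftypI, \loctypI, and \abstypI~the annotated term $\hat M$ is itself a value, so we land in the first disjunct of the statement. For rule \rectypI, $\hat M = \fix{x}{\hat X}$ is not a value but the semantics of Figure~\ref{fig-semantics} unconditionally unfolds it to $\app{\substi{x}{\fix{x}{\hat X}}}{\hat X}$, yielding case~(1).

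For every compound rule, $\hat M$ decomposes structurally as $\cC{\hat E}{\hat R}$ where $\hat E$ is an evaluation context and $\hat R$ is the redex produced at the root of the derivation. Since evaluation contexts do not embed the allowed-condition, the migration construct, or the $\fix$ form, the inductive hypothesis applied to the sub-derivation typing the expression in the hole of $\hat E$ either returns a value in that hole (in which case $\hat R$ is the active redex), or already provides cases (1) or (2), which lift through $\cC{\hat E}{\cdot}$. Canonical forms — closed values of type $\rfrt{\theta}{}$ are reference names, of type $\tau \xarr{}{s'} \sigma$ are abstractions, of type $\bool$ are $\vrai$ or $\faux$ — together with well-formedness of the configuration (ensuring that every dereferenced location lies in $\dom{S}$, using $(W,\secmap{},\emptyset)$-compatibility to guarantee the value stored is typable) ensure that each non-migration redex admits a transition by the rules of Figure~\ref{fig-semantics}; in particular the allowed-condition always fires one of its two branches according as $W(T(m)) \klpreceq F$ or not.

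The only potential source of blockage is the migration rule of Figure~\ref{fig-confinementIII-semantics}, whose side condition $W(d) \klpreceq \hat s$ must hold for $\threadanot{l}{\hat N}{d}{\hat s}$ to reduce. If the condition holds we are in case~(1). Otherwise, because the annotation by rule \migtypI~is deterministic and evaluation contexts embed identically in the source and annotated languages, we invert the annotation to recover $M = \cC{E}{\threadnat{l}{N}{d}}$ where $E$ and $\hat E$ share the same shape, and a unique sub-derivation $\emptyset \byinov{\secmap{}}{} N \hookrightarrow \hat N : \hat s, \unit$ whose declassification effect is exactly the $\hat s$ recorded in the annotation. This places us in case~(2).

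The main piece of bookkeeping, as in Proposition~\ref{prop-migrcontrol-safety}, is ensuring that in case~(2) the source term and its sub-derivation match the annotation of the observed redex position exactly; this relies on the determinism of the informative type system noted after Figure~\ref{fig-typesystem-declassif} and on the fact that annotations are only inserted at migration constructs, so the map $M \mapsto \hat M$ is structural on evaluation contexts. Besides this, the argument is standard progress-style case analysis and requires no new techniques.
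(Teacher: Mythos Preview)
Your proposal is correct and follows essentially the same approach as the paper: induction on the typing derivation, with the observation that every redex fires unconditionally except the annotated migration construct, whose side condition $W(d)\klpreceq\hat s$ either holds (yielding case~(1)) or fails (yielding case~(2)). The paper's appendix proof treats only the \flowtypI, \allowtyp, and \migtyp\ cases explicitly and leaves the standard progress cases implicit, whereas you spell out the canonical-forms reasoning and the structural correspondence between source and annotated evaluation contexts more carefully; these are exactly the ingredients needed, so your version is a faithful (slightly more detailed) rendering of the same argument.
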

\begin{proof} By induction on the derivation of ${\typenv \byinov{\secmap{}}{} M : s,\tau}$.  \end{proof}
The relaxed type system of Subsection~\ref{subsec-confinement-runtime} for checking confinement, and its informative counterpart of Figure~\ref{fig-typesystem-declassif}, are strongly related.  
The following result states that typability according to latter type system is at least as precise as the former.  
\begin{prop} \label{prop-checkinform-comparison} \label{prop-equivalencetypsys}
If for an allowed policy $A$ and type $\tau$ we have that ${\typenv \byund{A}{\secmap{}}{} M : \tau}$, then there exist $N$, $s$ and $\tau'$ such that ${\typenv \byinov{\secmap{}}{} M \hookrightarrow N: s, \tau'}$ with $A \klpreceq s$ and $\tau \klpreceq \tau'$.
\end{prop}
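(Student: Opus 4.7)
The proof proceeds by structural induction on the derivation of $\typenv \byund{A}{\secmap{}}{} M : \tau$, matching each rule of Figure~\ref{fig-confinementII-typesystem} with a derivation in the informative type system of Figure~\ref{fig-typesystem-declassif}. The base cases [\niltyp], [\boolttyp], [\boolftyp], [\loctyp], and [\vartyp] are immediate: the informative system assigns the same type with declassification effect $\kltop$, and $A \klpreceq \kltop$ holds vacuously for any $A$.

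For each inductive case I apply the induction hypothesis to the subexpressions and compose the resulting judgments using the corresponding informative rule. The case [\flowtyp] is illustrative: the induction hypothesis gives $\hat N$, $s$, $\tau'$ with $A \klpreceq s$ and $\tau \klpreceqtyp \tau'$, and applying [\flowtypI] produces the effect $s \klmeet F$; since $A \klpreceq F$ is a premise, the fact that $\klmeet$ is the greatest lower bound in the permissiveness lattice yields $A \klpreceq s \klmeet F$. The case [\migtyp] is easy because the informative counterpart records the declassification effect on the migration annotation rather than propagating it to the parent, so the outer effect is again $\kltop$.

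The function cases [\abstyp] and [\apptyp] require tracking how $\klpreceqtyp$ propagates through arrow types. For [\abstyp] the induction hypothesis on the body delivers a latent effect $s$ with $A \klpreceq s$ and a return type $\sigma'$ with $\sigma \klpreceqtyp \sigma'$, which is precisely what the definition of $\klpreceqtyp$ on arrow types demands to conclude $\tau \xarr{A}{} \sigma \klpreceqtyp \tau \xarr{}{s} \sigma'$. For [\apptyp] the induction hypothesis yields a function of type $\tau \xarr{}{s_1} \sigma_1$ with $A \klpreceqtyp s_1$, after which [\apptypI] produces an effect $s \klmeet s_1 \klmeet s'$ dominated by $A$. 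The remaining cases ([\seqtyp], [\reftyp], [\dereftyp], [\assigntyp], [\rectyp]) are similar but simpler.

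The main obstacle lies in [\condtyp] and [\allowtyp], where the informative system requires the two branches to carry $\kleqtyp$-matching types. To close this gap I need the auxiliary observation that whenever $\tau \klpreceqtyp \tau_1$ and $\tau \klpreceqtyp \tau_2$, then $\tau_1 \kleqtyp \tau_2$ and $\tau \klpreceqtyp \tau_1 \klmeet \tau_2$; this follows by a straightforward induction on the structure of $\tau$, since $\klpreceqtyp$ only relaxes latent effects in a structure-preserving way. With this lemma [\condtyp] is routine. For [\allowtyp] the delicate step is verifying $A \klpreceq (s_t \klpseudominus F) \klmeet s_f$: using the defining adjunction of $\klpseudominus$, namely $F' \klpreceq s_t \klpseudominus F$ iff $F' \klmeet F \klpreceq s_t$, the condition $A \klmeet F \klpreceq s_t$ coming from the induction hypothesis on the then-branch gives $A \klpreceq s_t \klpseudominus F$, and combining with $A \klpreceq s_f$ from the else-branch yields the required meet inequality.
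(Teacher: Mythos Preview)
Your proof is correct and follows essentially the same approach as the paper's: induction on the typing derivation with a case analysis matching each rule of the checking system to its informative counterpart, using the lattice properties of $\klmeet$ and the defining adjunction of $\klpseudominus$ exactly as you describe. You are in fact slightly more explicit than the paper about the auxiliary observation needed in the [\condtyp] and [\allowtyp] cases (that $\tau \klpreceqtyp \tau_1$ and $\tau \klpreceqtyp \tau_2$ imply $\tau_1 \kleqtyp \tau_2$ and $\tau \klpreceqtyp \tau_1 \klmeet \tau_2$), which the paper's proof leaves implicit.
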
 
\begin{proof}
By induction on the inference of $\typenv\byund{A}{\secmap{}}{} M : {\tau}$, and by case analysis on the last rule used in this typing proof.
\end{proof}

The converse direction is not true, i.e. Enforcement Mechanism~III accepts strictly more programs than Enforcement Mechanism~II.  This can be seen by considering the secure program
\numbexample{ \cond {\deref a} {\deref {\rfr {\theta_1} {M_1}}} {\deref {\rfr {\theta_2} {M_2}}}  \label{ex-strictlymorepermissive}
}
\noindent where $\theta_1 = {\tau \xarr {}{F_1} \sigma}$ and $\theta_2 = {\tau \xarr {}{F_2} \sigma}$.
This program is not accepted by the type system of Section~\ref{subsec-confinement-runtime} because it cannot give the same type to both branches of the conditional (the type of the dereference of a reference of type $\theta$ is precisely $\theta$).  However, since the two types satisfy $\theta_1 \kleqtyp \theta_2$, the informative type system can accept it and give it the type $\theta_1 \klmeet \theta_2$.

A more fundamental difference between the two enforcement mechanisms lays in the timing of the computation overhead that is required by each mechanism.  While mechanism~II requires heavy runtime type checks to occur each time a thread migrates, in III the typability analysis is anticipated to static time, leaving only a comparison between two flow policies to be performed at migration time.  The complexity of this comparison depends on the concrete representation of flow policies.  In the worst case, that of flow policies as general downward closure operators (see Section~\ref{sec-setting}), it is linear on the number of security levels that are considered.  When flow policies are flow relations, then it consists on a subset relation check, which is polynomial on the size of the flow policies.

\subsubsection{Preservation of the semantics}  \label{subsubsec-deceffect-preservation}

We now prove that the program transformation that is encoded in the type system of Figure~\ref{fig-typesystem-declassif} preserves the semantics of the input expressions.  To this end, we define a simulation between pools of threads written in the original language of Subsection~\ref{subsec-language}, and pools of threads written in the language with annotations.  The behavior of pools of threads of the former should be able to simulate step-by-step those of the latter, when operating on memories that are the same, up to the annotations that distinguish the two languages (captured by means of the $\annot{}{}$ function. %

The following simulation on pools of threads relates programs in the annotated language whose behavior can be entirely performed by the corresponding program in the original (non-restricted) language.  
\begin{defi}[$\realbiseq{W,\secmap{},{\typenv}}$] \label{def-realbisim}  %
A $(W,\secmap{},{\typenv})$-simulation is a binary relation $\fS$ on pools of threads, drawn from the original language and the annotated language, respectively, that satisfies, for all thread configurations $T$ and for all memories $S,\hat S$
\myexample{
  {P_1} ~\fS~ {P_2} ~\textit{and}~ W \vdash \iconft{T}{P_2} {\hat S} \xarr{F}{d} \iconft{T'}{P_2'}{\hat S'} ~\textit{and}~ \annot{S} = \hat S}
implies that there exist ${P_1'},{S}'$ such that:
\myexample{
  W \vdash \iconft{T}{P_1}{S} \xarr{F}{d} \iconft{T'}{P_1'}{{S'}} ~\textit{and}~ \annot{S'}={\hat{S}'}~\textit{and}~ {P_1'} ~\fS~ {P_2'}
}
The largest $(W,\secmap{},{\typenv})$-simulation is denoted by $\realbiseq{W,\secmap{},{\typenv}}$.  
\end{defi}
For any $W,\secmap{},\typenv$, the set of pairs of thread configurations where threads are values is a $(W,\secmap{},{\typenv})$-simulation.  Furthermore, the union of a family of $(W,\secmap{},{\typenv})$-simulations is a $(W,\secmap{},{\typenv})$-simulation. Consequently, $\realbiseq{W,\secmap{},{\typenv}}$ exists. %

We can now check that the annotation process produces expressions that can be simulated in the above sense.  In other words this means that Enforcement Mechanism~III only enables behavior of programs that is already present in the original language.
\begin{prop} \label{prop-preservsem}
If %
  ${\typenv \byinov{\secmap{}}{} {M} \hookrightarrow {N}: s, \tau}$, then for all allowed-policy mappings $W$ and thread names $m \in \Nam$ we have that $\{M^m\} \realbiseq{W,\secmap{},{\typenv}} \{N^m\}$.
\end{prop}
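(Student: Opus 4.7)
The plan is to exhibit an explicit candidate simulation and show it is a $(W,\secmap{},\typenv)$-simulation in the sense of Definition~\ref{def-realbisim}. Define
\[
\fS \;=\; \{ (P_1, P_2) \mid \dom{P_1}=\dom{P_2} ~\text{and}~ \forall m\!\in\!\dom{P_1},\ \exists s_m, \tau_m.\ \typenv \byinov{\secmap{}}{} P_1(m) \hookrightarrow P_2(m) : s_m, \tau_m \}.
\]
By hypothesis, $(\{M^m\}, \{N^m\}) \in \fS$, so it suffices to prove that $\fS \subseteq \realbiseq{W,\secmap{},\typenv}$, i.e.\ $\fS$ is itself a $(W,\secmap{},\typenv)$-simulation for every $W$.

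First I would establish two structural lemmas about the annotation judgment of Figure~\ref{fig-typesystem-declassif}:
(i) \emph{Substitution}: if $\typenv, x{:}\tau \byinov{\secmap{}}{} M \hookrightarrow N : s, \sigma$ and $\typenv \byinov{\secmap{}}{} V \hookrightarrow V' : \kltop, \tau$ with $V$ a value, then $\typenv \byinov{\secmap{}}{} \substi{x}{V}{M} \hookrightarrow \substi{x}{V'}{N} : s', \sigma'$ for some $s' \klpreceq s$ and $\sigma' \klpreceqtyp \sigma$. This is proved by a straightforward induction on the derivation of $M \hookrightarrow N$, relying only on the fact that values are annotated to values (immediate by inspection of the value-typing rules $\niltypI$--$\abstypI$, $\loctypI$, $\vartypI$, $\rectypI$).
(ii) \emph{Context decomposition/recomposition}: if $\typenv \byinov{\secmap{}}{} \cC{E}{R} \hookrightarrow L : s, \tau$ then $L = \cC{\hat E}{R'}$ for a unique annotated evaluation context $\hat{E}$ and redex $R'$ with $\typenv' \byinov{\secmap{}}{} R \hookrightarrow R' : s_R, \tau_R$ for some $\typenv' \supseteq \typenv$; conversely, given any $R''$ with $\typenv' \byinov{\secmap{}}{} R''' \hookrightarrow R'' : s_R', \tau_R$ one can reassemble $\typenv \byinov{\secmap{}}{} \cC{E}{R'''} \hookrightarrow \cC{\hat E}{R''} : s'', \tau$. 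This follows by induction on the structure of $\cE{E}$ using inversion on the annotation rules.

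Then, given $(P_1, P_2) \in \fS$, a transition $W \vdash \iconft{T}{P_2}{\hat{S}} \xarr{F}{d} \iconft{T'}{P_2'}{\hat{S}'}$, and a memory $S$ with $\annot{S} = \hat{S}$, I would do case analysis on the semantic rule fired. For each rule, the active thread in $P_2$ has form $\cC{\hat E}{R'}$, whose preimage in $P_1$ is $\cC{E}{R}$ by lemma (ii); the corresponding non-annotated semantic rule fires on $\cC{E}{R}$, producing a thread $\cC{E}{R''}$ (and possibly a new thread) and an updated memory $S'$. The cases are: for $\beta$-reduction and $\fix{x}X$ reduction, use (i); for conditionals, sequential composition, and flow declaration unfolding, the reduct's annotation is immediate from (ii); for dereference, observe that by definition of $\annot{}{}$ we have $\typenv \byinov{\secmap{}}{} S(a) \hookrightarrow \hat{S}(a) : \kltop, \secmap{2}(a)$, so $\annot{S} = \hat{S}$ is trivially preserved and the reduct's annotation follows by (ii); for assignment and reference creation, the value written on both sides is already in the annotation relation (by inversion on $\assigntypI$ and $\reftypI$), hence $\annot{\update{a}{V}{S}} = \update{a}{V'}{\hat{S}}$; for the allowed-condition, both sides take the same branch since the test depends only on $W$, $d$, and the tested policy $F$ (unchanged by annotation), and (ii) gives the annotation relation on the chosen branch; for thread creation, the annotated rule requires $W(d) \klpreceq s$, which is the hypothesis of the moving side and imposes no obligation on the preimage, whose semantics has no side condition, so $\{M^m\}$ can always match the move, and the spawned thread's annotation preimage is given directly by inversion on $\migtypI$.

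The main obstacle I anticipate is lemma (ii), the decomposition/recomposition of annotated evaluation contexts: one must verify that the annotation of a compound expression factors cleanly through the hole of the context, so that the residual annotation of the redex can be replaced by the annotation of the reduct. The subtyping relations $\klpreceqtyp$ and $\kleqtyp$ used in $\apptypI$, $\reftypI$, $\assigntypI$, $\condtypI$, and $\allowtypI$ introduce some bookkeeping, as the type and effect of the reduct may differ from those of the redex; the formulation of (ii) must therefore be weakened to a ``for some sufficiently permissive $s''$ and compatible $\tau''$'' statement, as is standard for type systems with subsumption. Once this is in place, the simulation check for each rule is a direct rewriting exercise.
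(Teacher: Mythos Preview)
Your approach is essentially the same as the paper's: define a candidate simulation from the annotation judgment, establish Substitution and Replacement (your context decomposition/recomposition) lemmas, and do case analysis on the annotated-semantics transition. Your candidate $\fS$ is in fact stated more carefully than the paper's singleton-based set $B$, since it closes under multi-thread pools as needed after thread creation; the only wrinkle is that your Substitution lemma~(i) should allow the substituted value's type to be a $\klpreceqtyp$-supertype of the variable's declared type (as forced by the premise $\tau \klpreceqtyp \tau''$ in \textsc{App}$_{\textsc{I}}$), and the resulting type drifts in the direction $\sigma \klpreceqtyp \sigma'$ rather than $\sigma' \klpreceqtyp \sigma$---but this is exactly the bookkeeping you already flag, and it does not affect membership in~$\fS$.
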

\begin{proof}
We prove that the set
\myexample{
B = \{\confd{\{M^m\}}{\{N^m\}} ~|~ m \in \Nam \textit{ and } \exists s,\tau ~.~ {\typenv \byinov{\secmap{}}{} {M} \hookrightarrow {N}: s, \tau}\}
}
is a $(W,\secmap{},{\typenv})$-simulation according to Definition~\ref{def-realbisim}.  By case analysis on the proof of $W \vdash \iconft {T} {\{{M}^m\}} {S}$ $\xarr{F}{d}$ $\iconft {T'}{P_2'}{S'}$ and of $W \vdash \iconft {T} {\{{N}^m\}} {\hat S}$ $\xarr{F}{d}$ $\iconft {T'}{P_2'}{\hat S'}$.
\end{proof}
\noindent To see that the converse is not true, i.e. that the range of behaviors of migrating programs is restricted with respect to the original language, it is enough to consider the program
\numbexample{
\threadnat l {\flow {\klbot} {\assign {a} {1}}} d
}
which is transformed into $\threadanot l {\flow {\klbot} {\assign {a} {1}}} d {\klbot}$ by the informative type system.  This program blocks at the first execution step when $W(d) \neq \klbot$.

\section{Information Flow Compliance} \label{sec-distnonint}

In this section we clarify the meaning of secure information flow in a distributed security setting, by interpreting the allowed flow policy of a domain as an imposition on the information flows that a mobile program is allowed to set up while computing at a given domain.  Using this notion, we propose a natural generalization of Non-interference that establishes the absence of illegal information flows in a distributed security setting.  This new property, named Distributed Non-interference, states that information flows should respect the allowed flow policy of the domains where they originate.  Finally, Distributed Non-interference is used to validate the semantic coherence between Distributed Non-disclosure and Flow Policy Confinement, the two properties that are studied in the previous two sections.  As a consequence, programs that are accepted by enforcement mechanisms for both properties, are known to respect Distributed Non-interference.

\paragraph*{Distributed Non-interference.}  

Informally, a program is said to set up an information flow from security level $l_1$ to security level $l_2$ if, during computation, information that has confidentiality level $l_1$ can interfere with what is observable at security level $l_2$.  From the point when the information is read by the program, to the point where it's interference is reflected in the observable state, the program might change location.  Which allowed flow policy should it comply to?  We propose that it should comply to the allowed flow policy $A$ of the domain where the information was read.  More precisely, using the notation from Subsection~\ref{subsec-secsetting}, it must hold that $l_2 \Fpreceq{A} l_1$.

The location of a thread thus determines which allowed flow policy it should obey at that point, and is used to place a restriction on the information flows that are initiated at that step. %
In order to formalize this idea by means of a bisimulation, the portion of the state that is observable, or read, at a domain $d$ is adjusted according to the current allowed flow policy that is given by $W(d)$, in the first low-equality of the definition.  The bisimulation condition then determines how that information can affect future computation steps, as the resulting thread configurations must still be bisimilar.  In particular, the computation cannot produce pairs of programs that produce distinguishable states.
\begin{defi}[$\sbiseqt{\secmap{},\Upsilon}{\typenv}{\ell}$] \label{def-bisimdef-DNI}  %
Given an security level $\ell$,
a $(W,{\secmap{}},{\Upsilon},{\typenv},{\ell})$-bisimulation is a symmetric relation $\fR$ on thread configurations that satisfies, for all $P_1,T_1,P_2,T_2$, and $(W,{\secmap{}},{\typenv})$-compatible stores~$S_1, S_2$:
\myexample{
\confd{P_1}{T_1}~\fR~\confd{P_2}{T_2} ~\text{and}~ W \vdash \iconft{T_1}{P_1}{S_1} \xarr{F}{\boldsymbol{d}} \iconft{T_1'}{P_1'}{S_1'}
~\text{and}~ \confd{T_1}{S_1} \memeqF{\secmap{1},\Upsilon}{\boldsymbol{W(d)}}{l} \confd{T_2}{S_2}
}
with $(\dom{{S_1}'}\setminus\dom{S_1})\cap\dom{S_2}=\emptyset$ and $\dom{{T_1}'}\setminus\dom{T_1})\cap\dom{T_2}=\emptyset$ implies that there exist ${P_2'}, {T_2'}, {S_2'}$~such that:
\myexample{
W \vdash \iconft{T_2}{P_2}{S_2}~\rarr~\iconft{T_2'}{P_2'}{S_2'} ~\text{and}~ 
\confd{T_1'}{S_1'}~\memeqF{\secmap{1},\Upsilon}{\kltop}{l}~\confd{T_2'}{S_2'} ~\text{and}~ \confd{P_1'}{T_1'}~\fR~\confd{P_2'}{T_2'}
}
Furthermore, $S_1',S_2'$ are still $(W,\secmap{},\typenv)$-compatible. %
The largest $(W,{\secmap{}},\Upsilon,{\typenv},\ell)$-bisimulation %
is denoted~$\sbiseqt{\secmap{},\Upsilon}{\typenv}{\ell}$. 
\end{defi}
For any ${\secmap{}}$, $\Upsilon$, $\ell$, the set of pairs of thread configurations where threads are values is a $(W,{\secmap{}},\Upsilon,{\typenv},{\ell})$-bisimulation.  Furthermore, the union of a family of $(W,{\secmap{}},\Upsilon,{\typenv},\ell)$-bisimulations is a $(W,{\secmap{}},\Upsilon,{\typenv},\ell)$-bisimulation. Consequently, $\sbiseqt{\secmap{},\Upsilon}{\typenv}{\ell}$ exists. %
\renewcommand{\sbiseqt}[3]{{\sim}^{W,#2}_{#3}}
In order to lighten the notation, the parameters $\secmap{}$ and $\Upsilon$ are omitted in the rest of the paper, written simply `$\sbiseqt{\secmap{},\Upsilon}{\typenv}{\ell}$'.

Information flows that are initiated at a given step are captured by considering all computations that initiate on every possible pair of stores that coincide in the observable (reading) level.  This enables compositionality of the property, accounting for the possible changes to the store that are induced by external threads.  Position trackers of configurations are fixed across steps within the bisimulation game, reflecting the assumption that changes in the position of a thread can only be induced by the thread itself (\emph{subjective} migration).  
Note that the above relation is not reflexive. In fact, only secure programs are related to themselves:
\begin{defi}[Distributed Non-interference]\label{def-propertyDNI}
A pool of threads~$P$ sa\-tis\-fies Distributed Non-interference with respect to an allowed-policy mapping~$W$, reference and thread labelings $\secmap{},\Upsilon$ and typing environment $\typenv$, if it satisfies $\confd{P}{T_1}~\sbiseqt{\secmap{},\Upsilon}{\typenv}{\ell}~\confd{P}{T_2}$ for all security levels~$\ell$ and position trackers $T_1,T_2$ such that $\dom{P}=\dom{T_1}=\dom{T_2}$ and $T_1 \memeqF{\secmap{1},\Upsilon}{\kltop}{l} T_2$.  We then write $P\in\SecDNI(W,\secmap{},\Upsilon,\typenv)$.
\end{defi}
Distributed Non-interference is compositional by set union of pools of threads, assuming disjoint naming of threads and a subjective migration primitive.  %

\paragraph*{Non-interference.}

As expected, local Non-interference follows from Definition~\ref{def-propertyDNI} when networks are collapsed into a single domain $d^*$, i.e. when $\Dom = \{d^*\}$.  The resulting property is parameterized by $W(d^*)$, %
and coincides with the view of Non-interference as the \emph{absence of information leaks} (in the sense defined in Section~\ref{sec-setting}) when $W(d^*) = \kltop$.  However, here we adopt the view of Non-interference as the \emph{absence of illegal flows}, which is relative to the particular allowed flow policy.  This point is further discussed in Section~\ref{sec-related}.

Programs that violate Non-interference, such as the direct leak ${\assign {b} {\deref {a}}}$ when running at $d$ such that $\secmap{1}(a)~\not\Fpreceq{W(d)}~\secmap{1}(b)$
are also insecure with respect to Distributed Non-interference.  The same holds for indirect leaks via control and termination.

In the following program, information flows from reference $a$ to $b$ via observation of the end domain:
\numbexample{\label{ex-distr-allowedmigr1}
\begin{array}{rl}
(\lkw{if} {\deref {a}}&\lkw{then} {\threadnat l {\allowed F {\assign {b} {0}}{\nil}} {d_1}}\\
&\lkw{else} {\threadnat l {\allowed F {\assign {b} {1}}{\nil}} {d_2}})
\end{array}
}
In fact, since the position of threads in the network is part of the observable state, a \emph{migration leak}~\cite{AC11} from level $\secmap{1}(a)$ to level $l$ occurs as soon as the new thread is created. %
The information is read before migration, so the leak is only secure if it is allowed by the policy of the thread's initial domain.

Let us now consider the simpler program
\numbexample{\label{ex-distr-allowedmigr2}
\begin{array}{l}
(\lkw{if} {\deref {a}} \mkw{then}{ \threadnat l {\assign {b} {0}} {d_1}} \mkw{else} {\threadnat l {\assign {b} {1}} {d_1}})\end{array}
}
to be executed at $d$ such that $\secmap{1}(a)\Fpreceq{W(d)}l$ (where the migration leak is allowed). Since the two threads produce different changes at the level of $b$, then the program is again secure only if the first domain $d$ allows it, i.e. $\secmap{1}(a)\Fpreceq{W(d)}\secmap{1}(b)$.
The policy of the new domain $d_1$ only rules over what happens from the point where the thread enters it.  Since the behavior of the code that actually migrates does not depend on the location of the corresponding threads, no leak is taking place at this point of the program. 

Notice that the declassification operations are transparent to the property.  In particular, the program in Equation~(\ref{ex-notDNI})
violates Distributed Non-interference, regardless of $F$, if~$\secmap{1}(a) \not\Fpreceq{W(d)} \secmap{1}(b)$.

\subsection{Semantic Coherence} \label{subsec-semcoherence}

We now show that Distributed Non-interference is coherent with %
the properties of Non-interference, Distributed Non-disclosure and Flow Policy Confinement.

While Distributed Non-disclosure establishes a match between the leaks that are performed by a program and the declassifications that are declared in its code, Flow Policy Confinement requires that the declassifications %
comply to the allowed flow policy of the domain where they occur.  It is thus expected that a notion of Distributed Non-interference, which ensures that leaks respect the relevant allowed flow policies, should follow from the combination of the other two.  %

\begin{thm} \label{theo-local-implic}
$\begin{array}{l}\SecDND(W,\secmap{},\Upsilon,\typenv)~\cap~\SecFPC(W,\secmap{},\typenv)~\subseteq~\SecDNI(W,\secmap{},\Upsilon,\typenv).\end{array}$
\end{thm}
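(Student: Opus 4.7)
The plan is to exhibit, for each security level $\ell$, a $(W,\secmap{},\Upsilon,\typenv,\ell)$-bisimulation in the sense of Definition~\ref{def-bisimdef-DNI} that witnesses the DNI-bisimilarity required by Definition~\ref{def-propertyDNI}. The natural candidate is
\[
\fR_\ell = \{(\confd{P_1}{T_1}, \confd{P_2}{T_2}) : \confd{P_1}{T_1} \biseqt{W,\secmap{},\Upsilon}{\typenv}{\ell} \confd{P_2}{T_2} \text{ and } \confd{P_1}{T_1}, \confd{P_2}{T_2} \in \semconf^{W,\secmap{},\typenv}\},
\]
i.e., DND-bisimilarity paired with two-sided confinement. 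Symmetry of $\fR_\ell$ is immediate, and the seed is supplied by the hypotheses: if $T_1 \memeqF{\secmap{1},\Upsilon}{\kltop}{l} T_2$ and $P \in \SecDND \cap \SecFPC$, then $\confd{P}{T_1} \biseqt{W,\secmap{},\Upsilon}{\typenv}{\ell} \confd{P}{T_2}$ by DND and $\confd{P}{T_1}, \confd{P}{T_2} \in \semconf^{W,\secmap{},\typenv}$ by FPC, so $(\confd{P}{T_1}, \confd{P}{T_2}) \in \fR_\ell$. Provided $\fR_\ell$ satisfies the DNI bisimulation clause, maximality then gives $\fR_\ell \subseteq \sbiseqt{\secmap{},\Upsilon}{\typenv}{\ell}$ and hence $P \in \SecDNI(W,\secmap{},\Upsilon,\typenv)$.

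The substantive step is a monotonicity fact connecting the permissiveness order on flow policies to the strength of the parameterized low-equality: \emph{if $A \klpreceq F$ then $\confd{T_1}{S_1} \memeqF{\secmap{1},\Upsilon}{A}{l} \confd{T_2}{S_2}$ implies $\confd{T_1}{S_1} \memeqF{\secmap{1},\Upsilon}{F}{l} \confd{T_2}{S_2}$}. The intuition is that a more permissive policy $A$ induces a coarser flow relation $\Fpreceq{A}$, so the set $\{a : \secmap{1}(a) \Fpreceq{A} l\}$ of references that must coincide is a superset of the corresponding set for $F$, making the $A$-parameterized low-equality strictly finer. I would verify this directly from the definitions of $\Fpreceq{F}$ and $\memeqF{\secmap{1},\Upsilon}{F}{l}$, checking both the store and position-tracker components point-wise.

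With the lemma in hand, the bisimulation clause for $\fR_\ell$ is discharged as follows. Suppose $(\confd{P_1}{T_1}, \confd{P_2}{T_2}) \in \fR_\ell$, compatible stores $S_1, S_2$ satisfy the DNI premise $\confd{T_1}{S_1} \memeqF{\secmap{1},\Upsilon}{W(d)}{l} \confd{T_2}{S_2}$, and $W \vdash \iconft{T_1}{P_1}{S_1} \xarr{F}{d} \iconft{T_1'}{P_1'}{S_1'}$. Applying Definition~\ref{def-operationallyconf} to $\confd{P_1}{T_1} \in \semconf^{W,\secmap{},\typenv}$ yields $W(d) \klpreceq F$; the monotonicity lemma then upgrades the premise to $\confd{T_1}{S_1} \memeqF{\secmap{1},\Upsilon}{F}{l} \confd{T_2}{S_2}$, which is exactly the premise of the DND bisimulation of Definition~\ref{def-bisimdefNDN2}. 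Invoking DND-bisimilarity gives $P_2', T_2', S_2'$ with $W \vdash \iconft{T_2}{P_2}{S_2} \rarr \iconft{T_2'}{P_2'}{S_2'}$, $\confd{T_1'}{S_1'} \memeqF{\secmap{1},\Upsilon}{\kltop}{l} \confd{T_2'}{S_2'}$, and $\confd{P_1'}{T_1'} \biseqt{W,\secmap{},\Upsilon}{\typenv}{\ell} \confd{P_2'}{T_2'}$, matching the required DNI conclusion verbatim. It remains to confirm that both derivative configurations still lie in $\semconf^{W,\secmap{},\typenv}$, which follows directly from the closure built into Definition~\ref{def-operationallyconf}, applied once to the $\xarr{F}{d}$ step from $\confd{P_1}{T_1}$ and (at most) once to the $\rarr$ step from $\confd{P_2}{T_2}$.

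The main obstacle is orienting the monotonicity lemma correctly, since the ordering $\klpreceq$ on flow policies and the ordering $\Fpreceq{F}$ on security levels point in opposite directions with respect to ``permissiveness'', and a sign error here would make confinement unusable as a hypothesis for upgrading the DNI premise to the DND premise. Everything else is a transcription of the DND bisimulation game, with confinement playing the sole role of certifying, at every step, that the allowed policy $W(d)$ of the site where the step occurs is no more permissive than the declared policy $F$ decorating the transition, so that any low-equality that the allowed policy imposes is already enforced by the declared one.
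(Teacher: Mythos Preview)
Your proposal is correct and matches the paper's proof essentially line for line: the paper also exhibits the candidate relation $C = {\biseqt{W,\secmap{},\Upsilon}{\typenv}{\ell}} \cap (\semconf^{W,\secmap{},\typenv} \times \semconf^{W,\secmap{},\typenv})$, uses confinement of $\confd{P_1}{T_1}$ to extract $W(d) \klpreceq F$, upgrades the low-equality premise from $W(d)$ to $F$ via the same monotonicity fact you isolate, invokes the DND bisimulation, and closes with confinement of the successors. The only difference is expository: you state the monotonicity of $\memeqF{\secmap{1},\Upsilon}{F}{l}$ in the permissiveness parameter as an explicit lemma, whereas the paper uses it tacitly in a single sentence.
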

\begin{proof}
We use Definitions~\ref{def-propertyDND}, \ref{def-propertyFPC} and~\ref{def-propertyDNI} to show that the set:
\myexample{C = \biseqt{W,\secmap{},\Upsilon}{\typenv}{\ell} ~\cap~ \semconf^{W,\secmap{},\typenv} \times \semconf^{W,\secmap{},\typenv} \subseteq \sbiseqt{\secmap{},\Upsilon}{\typenv}{\ell}}
Consider $(\confd{P_1}{T_1},\confd{P_2}{T_2}) \in C$, and suppose that $W \vdash \iconft{T_1}{P_1} {S_1} \xarr{F}{d} \iconft{T_1'}{P_1'}{S_1'}$ and that $T_1 \memeqF{\secmap{1},\Upsilon}{W(d)}{l} T_2$.  Now consider $S_2$ such that ${S_1} \memeqF{\secmap{1},\Upsilon}{W(d)}{l} {S_2}$ and $\dom{{S_1}'} - \dom{S_1} \cap \dom{S_2} = \emptyset$.  Since $\confd{P_1}{T_1} \in \semconf^{W,\secmap{},\typenv}$, we know that $W(d) \klpreceq F$ and $\confd{P_1'}{T_1'} \in \semconf^{W,\secmap{},\typenv}$.  Because of the former, we therefore have that $\confd{P_1}{T_1} \memeqF{\secmap{1},\Upsilon}{F}{l} \confd{P_2}{T_2}$.   Since $\confd{P_1}{T_1}~\biseqt{W,\secmap{},\Upsilon}{\typenv}{\ell}~\confd{P_2}{T_2}$, this implies that $\exists {P_2'}, {T_2'}, {S_2'}$ such that $W \vdash \iconft{T_2}{P_2}{S_2} \rarr \iconft{T_2'}{P_2'}{S_2'}$ and $\confd{T_1'}{S_1'} \memeqF{\secmap{1},\Upsilon}{\kltop}{l} \confd{T_2'}{S_2'}$ and $\confd{P_1'}{T_1'}~\biseqt{W,\secmap{},\Upsilon}{\typenv}{\ell}~\confd{P_2'}{T_2'}$.  Since $\confd{P_2}{T_2} \in \semconf^{W,\secmap{},\typenv}$, whether $\confd{P_2'}{T_2'}$ results from zero or one step, we have that $\confd{P_2'}{T_2'} \in \semconf^{W,\secmap{},\typenv}$.  We can then conclude that $(\confd{P_1'}{T_1'},\confd{P_2'}{T_2'})~\in~C$.
\end{proof}
To see that the sets are not equal, consider again the program in Equation~(\ref{ex-notDNI}).  While it is secure regarding Distributed Non-interference if $H \Fpreceq{W(d)} L$,
if $F=\kltop$ it violates Distributed Non-disclosure, but respects Flow Policy Confinement; and
if $F=\klbot$ it respects Distributed Non-disclosure, but violates Flow Policy Confinement if $W(d)\not\klpreceq~F$.

\subsection{Enforcement}

While it is possible to design direct enforcement mechanisms for Distributed Non-interference (such as the proof of concept presented in~\cite{AC14}, the point that is relevant to this work is that enforcement of Distributed Non-interference can be obtained by simultaneously enforcing Distributed Non-Disclosure and Flow Policy Confinement, as follows from Theorem~\ref{theo-local-implic}.

It is worth highlighting that, in order to apply the above consideration, the semantics of programs that are in effect must be the same when enforcing the two properties.  This means in particular that if a hybrid mechanism such as Enforcement Mechanisms~II and III of the previous section is to be used, Distributed Non-disclosure must be ensured for the modified semantics, in order to exclude the emergence of side-channels.  It turns out that the type system of Section~\ref{sec-iflow} is still sound with respect to the modified semantics of Figures~\ref{fig-confinementII-semantics} and~\ref{fig-confinementIII-semantics}.  Indeed, when considering the alternative semantics,
the results Subject Reduction, Splitting Computations %
and Location Sensitive Typable High Threads ({prop-iflow-split} and (Proposition~\ref{prop-iflow-subjectreduction} and Lemmas~\ref{prop-iflow-split} and~\ref{prop-iflow-potentially}, respectively), hold unchanged.
Also the results outlined in the proof of Soundness (Theorem~\ref{prop-iflow-soundness}) still hold.  Intuitively, regardless of the state, the same remote thread creation always either succeeds or suspends, so the outcome is the same.

The programs of Equation~(\ref{ex-distr-allowedmigr1}) and~(\ref{ex-distr-allowedmigr2}) would be rejected by the type system of Figure~\ref{fig-iflow-typesystem} (Section~\ref{sec-iflow}) if $\secmap{1}(a)~\not\Fpreceq{\kltop}~\secmap{1}(b)$, because the illegal flows would not be matched by an appropriate flow declaration.  If written within a flow declaration that allows information to flow from $\secmap{1}(a)$ to $\secmap{1}(b)$, and when starting at a domain $d$, then the program would be rejected by the type systems of Figures~\ref{fig-confinementI-typesystem} and~\ref{fig-confinementII-typesystem} (Section~\ref{sec-confinement}) if $F \not\klpreceq W(d)$ because the program would be declaring intent of performing a declassification that is not allowed by the domain where it would take place.  Returning to the program of Equation~(\ref{ex-notDNI}), it would be rejected by the type system of Figure~\ref{fig-iflow-typesystem} if $\secmap{1}(a)~\not\Fpreceq{F}~\secmap{1}(b)$, and by that of Figure~\ref{fig-confinementI-typesystem} if $F \not\klpreceq W(d)$.  When running under Enforcement Mechanisms~II and~III, it would suspend at the point of remote thread creation if $F \not\klpreceq W(d)$.

\section{Related work and Discussion}  \label{sec-related}

\paragraph{\emph{Distributed security setting.}}

This paper is founded on a distributed computation model in which domains are units of security policies~\cite{BCGL02}, in the sense that behaviors that take place at the same domain are subjected to the same security policy: the allowed flow policy.  Other dimensions that have been considered for the semantics of domains include the unification of communication (the behavior of communication between entities is determined by their location)~\cite{CG98, HR02a, SWP99}, of mobility (entire domains move as a block)~\cite{FGLMR96, CG98, VC99}, of failures (failure in a domain affect the entities that are located there in a similar way)~\cite{Ama97,FGLMR96} or of memory access (completion of memory accesses depends on location)~\cite{SY97}.
Formal distributed security settings have been studied over the generic notion of programmable domain~\cite{SS03}, which offers expressivity to equip each domain with means to encode and enforce its own policies.  Besides security, these distributed policies may regard other goals such as failure resilience, resource management and correctness of mobile code.  Particular attention has been given to various forms of programmable membranes~\cite{Bou05a}, where computing power and policy specification is explicitly associated to the boundaries of computation domains, akin to firewalls (discussed further ahead under the topic \emph{Controlling code mobility}).  Policies that are associated to domains can more specifically be expressed as mathematical objects of different degree of expressiveness, such as types or finite automata~\cite{MV05,HMR03,GHS05,HRY05}.

Domains' security assurances can be differentiated as security levels, in the context of controlling information flows.  Myers and Liskov's Decentralized Label Model (DLM)~\cite{ML98,ML00} provides a way of expressing decentralized labels (referred to as policies, in a different use of the word) that extend the concrete setting of security levels as sets of principals, by distributing them according to ownership.
Zdan\-ce\-wic et. al~\cite{ZZNM02} propose in Jif/Split a technique for automatically partitioning %
programs by placing code and data onto hosts in accordance with DLM labels in the source code. Jif/Split assigns to each host arbitrary trust levels that are granted by the same principals of the system.  It 
ensures that if a host is subverted, the only data whose confidentiality or integrity is threatened during execution of a part of the program, is data owned by principals that trust that host.  %
Chong et. al~\cite{CLMQVZZ07} present Swift as a specialization of this idea for Web applications.
Zheng and Myers~\cite{ZM06} address the issue of how availability of domains (hosts) might affect information flows in a distributed computation.  Here the label on a host represents the strength of the policies that the host is trusted to enforce. %
Fournet et. al~\cite{FLR09,FP11} present a compiler that produces distributed code where communications are implemented using cryptographic mechanisms, and ensures that all confidentiality and integrity properties are preserved, despite the presence of active adversaries. %
While associating security levels to domains is a step towards considering heterogeneous security status, in the above works these levels are understood with respect to a single flow policy.

The Fabric programming system~\cite{LAGM17}, incorporates information flow methods into a platform for distributed computation over shared persistent data. %
Host nodes are represented by principals, who can declare trust in other principals, from which a global trust lattice is derived.
The goal is to ensure that %
the security of a principal does not depend on any part of the system that it considers to be an adversary.
Nothing is assumed about the behavior of nodes, which can be malicious.  This is compatible with a distributed security setting in which the policies of each node are unknown.  However, enforcement of information flow control for mobile code is done with respect to a global policy, bounded by the provider label of that code.
Most recently, FlowFence~\cite{FPRSCP16} addresses information flow security in the Internet of Things (IoT).  Based on an evaluation of representative examples of existing IoT frameworks, it advocates and supports differentiation of policies defined by both producers and consumers of apps, specifying the permissible flows that govern their data, against which the declared app's flow policies are checked.  Though not formally defined, policies are described as sets of flow rules between taint labels (data tags) and sinks (sensitive data outlets), stating which data can flow to which sinks.  The paper presents the Opacified Computation model, for enforcing source to sink information flow control as declared by consumers' policies.
To our knowledge, the work presented in this paper (which includes~\cite{Alm09,AC13,AC14}) is the only one that studies formally information flow security in the presence of distributed allowed flow policies.

\paragraph{\emph{Information flow and locality.}} %

Information flow security for location-aware programs has been studied in the absence of a distributed security setting.  Mantel and Sabelfeld \cite{MS02} %
provide a type system for preserving confidentiality for different kinds of channels established over a publicly observable medium in a distributed setting. Interaction between domains is restricted to the exchange of values (no code mobility), and the security setting is not distributed.
Castagna, Bugliesi and Craffa 
study Non-interference for a purely functional distributed and mobile calculus \cite{CBC02}. %
In these works no declassification mechanisms are contemplated. %
 Our previous work that is closest to this one~\cite{AC11} studies insecure information flows that are introduced by mobility in the context of a stateful distributed language  that includes flow declarations.  In the computation model that is considered, domains are memory access units. Threads own references that move along with them during migration, which gives rise to \emph{migration leaks} that result from memory synchronization issues:  non-local read and write access fail, and thus may reveal information about the location of a thread. %
Considering a shared global state in this paper allows us to make evident the distributed nature of the security setting, from which distributed behavior of programs emerges, while factoring out the technicalities of dealing with distributed and/or mobile references.  In this context, migration leaks appear as a result of inspection of the allowed flow policy, since information about current policy can carry information about the location itself.

The Fabric programming language~\cite{LAGM17} supports mobile code, allowing to dynamically combine code from multiple sources.  Programs use remote calls to explicitly transfer execution to a different node, %
by means of a mechanism of function shipping.
Stores are distributed into storage nodes, and code into worker nodes, which as said earlier are principals.  Access to a store by a worker implies an information flow channel, referred to as a \emph{read channel} that could have connections to the concept of migration leaks~\cite{AC11}.
The distributed computation model is not formally defined, so there is no formal account of the new forms of leaks in isolation, nor how they to the migration leaks discussed earlier~\cite{Alm09,AC11}.

When seen as a means for running code under different allowed flow policies, and in particular more permissive ones, analogies can be drawn between migration and the notion of declassification as a flow declaration.  Similarities extend to the fact that %
there is a lexical scope to the migrating code, as there is for the flow declaration.  However, differences between the two are more fundamentally expressed through the distinct roles of the flow policies that concern them:  while a declared flow policy %
represents an intention to perform a leak (declassification) that is rejected by the strictest baseline security lattice, an allowed flow policy that is held by a domain represents a limit that should not be crossed. %
This distinction is embodied by the formulation of two independent security properties that relate to each of the two flow policies: Distributed Non-disclosure, that places no restriction on the usage of declassification enabling constructs by the programmer, and Flow Policy Confinement, which does not speak of information flows but solely of whether declassification is allowed to be performed by a particular thread at a particular domain.
At a more technical level, migration has expression in the semantics of the language, with a potential impact on the observable state (eg. via the allowed flow policy), while the flow declaration is used solely with specification purposes.  Furthermore, in the considered language setting, domains cannot be nested into different shades of allowed flow policies, and code cannot move in and out of a flow declaration.

\paragraph{\emph{Enabling and controlling declassification.}}

Sabelfeld and Sands survey the literature regarding the subject of declassification~\cite{SS05}, from the perspective of that all declassification mechanisms provide some sort of declassification control.  Mechanisms that enable changes in the policy are included in this study.  Declassification mechanisms are then classified according to four main dimensions for control: %
\emph{what} information should be released~\cite{SM04,LZ05}, 
\emph{when} it should be allowed to happen~\cite{CM04}, %
\emph{who} should be authorized to use it~\cite{MSZ06}, %
 and \emph{where} in the program it can be stated~\cite{MS04,BS06,HTHZ05} %
 or a combination of thereof~\cite{BCR08}.
 The lexical scoping of the flow declaration could be seen to place it under the ``where'' category. However, it was proposed~\cite{AB09} as a means to \emph{enable} the expression of declassification (as opposed to attempt to control it) within an information flow control framework.  In that view, our type system for enforcing non-disclosure is best understood as a verification of accordance between code and policy specification in the form of flow declaration.
 More recently Broberg et al.~\cite{BDS15} proposed a unifying framework for reasoning about dynamic policies, according to which the term `declassification' is a particular case of dynamic policy change that is traditionally data oriented. Dynamic policy specification mechanisms are analysed through a structure of three control levels:  \emph{Level 0}, defining which set of flow relations can be used; \emph{Level 1}, describing how the active flow relation changes during execution; and \emph{Level 2}, restricting which policy changes may occur.  Using this classification, mechanisms that are designed to enable declassification, such as the flow policy declaration, operate at Level 1, while those for controlling (in the sense of restricting) its use, such as migration control,
 operate at Level~2.

It is helpful to see declassification control as dependent on the \emph{context} in which it may occur (formalized as \emph{meta information} that is argument to a \emph{meta policy}~\cite{BDS15}).  Flow Policy Confinement captures semantically the requirement that declassification between certain levels can only occur within computation domains that allow it.
Chong and Myers~\cite{CM04} associate declassification policies to variables as a way of restricting the sequence of levels through which they can be downgraded, provided some conditions are satisfied.  Control is then specified for each data item, and policies cannot be tested by the program.  Conditions are used in the definition of a generalized non-interference property to mark the steps where declassification occurs, in order to allow information release only if they are satisfied.
The Jif language~\cite{Mye99,ML00,CMVZ09}, founded on the aforementioned DLM~\cite{ML98,ML00}, offers a separate concept of \emph{authority} which regulates which declassifications are allowed.  The acts-for relation, describing delegation of trust between principals, is expressed similarly to a flow policy. %
A delegation operation enables programmers to extend the acts-for hierarchy in the scope of an expression.  Its mechanics is similar to our flow declaration, in that the program can change the valid flow policy. However, its purpose is to express control over the use of downgrading, which must be \emph{robust}~\cite{ZM01,MSZ06}, i.e. take place via declassify operations and be trusted by the principals whose policies are relaxed.  Also the use of delegation is controlled in order to ensure permission from the delegating principal~\cite{TZ07}.
Hicks et al.~\cite{HKMH06} introduce, in a featherweight version of Jif, the possibility of restricting changes to the acts-for policy to the use of declassifier functions that are specified in a global and static policy.
The Paralocks and Paragon policy specification languages~\cite{BS10,BDS13} operate over a lock state, by opening or closing locks that guard information flows.  As Paragon offers the possibility to place locks on policies, it is possible to define multiple layers of control, which can simultaneously express declassification enabling and restriction.

Declassification control criteria can be largely orthogonal to information flow legality, and can be captured semantically by means of security properties whose concerns are distinct from those of information flow properties.
Boudol and Kolund\u zija~\cite{BK07} use
standard access control primitives to control the access level of programs that perform declassifications in the setting of a local language, ensuring that a program can only declassify information that it has the right to read.  Besides presenting a standard information-flow security result (soundness of the proposed enforcement mechanism regarding Non-disclosure), they formalize a separate access control result that implies the property that a secure program does not attempt to read a reference for which it does not hold the appropriate reading clearance.  This approach was adopted to control declassification in a calculus for multiparty sessions with delegation, and enforced by means of a type system that guarantees both session safety and a form of access control~\cite{CCDR10}.
In order to achieve robustness in the presence of run-time principals, Tse and Zdancewic~\cite{TZ07} use a system of capabilities for making declassification and endorsement depend on a run-time authority.  Downgrading operations of information that is owned by a principal require that the principal has granted the appropriate capabilities.  While a type-safety result is proven, the enforced property is not formalized.
Arden et al.~\cite{ALM15,AM16} propose the Flow-Limited Authorization Model (FLAM)~\cite{ALM15}, a logic for reasoning about trust and information flow policies in which these two concepts are unified.  The model allows to control how delegations and revocations may affect authorization decisions.  It is formally shown to enforce a new~\emph{robust authorization} security condition, which rejects leaks that are present in earlier work.

\paragraph{\emph{Dynamic information flow policies.}} %

Security policies can in be used in the role of declarations that capture security relevant behavior, or of security requirements, both of which can change during program execution.  In the present work %
we use dynamic instances of both forms of flow polices: %
declared flow policies as a declassification enabling mechanism, that is set up by the program to delimit different lexical scopes within which different patterns of flows that break the basic global policy can be established;  %
distributed allowed flow policies, coupled with the notion of thread location, that govern what each thread is allowed to do, and can change dynamically (in relation to that thread to that thread) along with the migration of the thread. %
Returning to the terminology of Broberg et al.~\cite{BDS15}, these two roles would correspond to specification mechanisms of control Level~1 and~2, respectively.

Dynamic security polices can be seen as being under the responsibility of an authority.  In the present work, the units of authority are locations, and their policies take the form of allowed flow policies.
Jif's %
acts-for relation of the DLM~\cite{ML98,ML00,CMVZ09} (referred-to above), represents a policy that defines the current global run-time authority and is shaped by multiple principals.  The acts-for policy is dynamic, as it can be extended via delegation within a lexical scope~\cite{TZ07}.
Authority can be expressed over language objects in the form of dynamic security labels, while the underlying security lattice remains fixed~\cite{ZM04,TZ07}.
Hicks et al.~\cite{HTHZ05} consider a generalization of the global acts-for policy scenario that copes with asynchronous updates during run-time.  This policy is closer to the concept of allowed flow policy as external to the program's own policies, and is studied in a simpler non-distributed context, in the absence of declassification.
Swamy et al.~\cite{SHTZ06} use role-based allowed policies where membership and delegation of roles can change dynamically.
Programming languages can provide constructs for interacting with flow policies using operations for changing the policy, or %
for being aware of the current flow policy, in order to adapt its behavior accordingly. %
Boudol and Kolund\u zija's~$\kw{test~} l \kw{~then~} M \kw{~else~}N$ instruction\cite{BK07}, that tests whether the access level $l$ is granted by the context, is used to control the access level of programs that perform declassifications.
Broberg and Sands, use \emph{flow locks}~\cite{BS06} to selectively provide access to memory locations. They introduce flexible constructs for manipulating locks, namely open and close.
Extending the idea of the flow locks, the expressive policy languages Paralocks and Paragon~\cite{BS10,BDS13} introduces the ability to express policies modelling roles (in the style of role-based access control) and run-time principals and relations.
The language includes creation of new actors via $\kw{newactor}$ $\kw{forall}$, and $\kw{open}$ and $\kw{close}$ for changing the current policy. This language includes a $\kw{when}$ command is a conditional which executes one of two branches depending on the state of a particular lock.  %
Both dynamic and static tests to the authority's policy are supported in Jif, via the \verb|switch-label| and the \verb|actsFor|), and have been formally studied in related models that include dynamic labels~\cite{ZM07} and dynamic principals~\cite{TZ07}.

Enforcing dynamic (allowed) flow policies raises the challenge of that, even when they change synchronously with the program (i.e. are deterministically determined by it~\cite{DHS15}), it is not in general known, for each program point, what policy will apply at execution time.  Here we proposed solutions in the realm of migration control, by preventing a thread that can perform an action to migrate to a domain where that action is disallowed.  Inversely, in~\cite{HTHZ05}, where the allowed policy can change asynchronously, updates to the policy that would be inconsistent with the executing program (said to ``violate'' the program's policy), are considered illegal.
Swamy et. al~\cite{SHTZ06} use database-style transactions which, when a change in policy is inconsistent with some of the flows in the program, can roll it back to a consistent state.

\paragraph{\emph{Controlling code mobility.}}

A variety of formal distributed computation models have been designed with the purpose of studying mechanisms for controlling code mobility.
These range from type systems for statically controlling migration as an access control mechanism~\cite{MV05,HMR03,HRY05}, to runtime mechanisms that are based on the concept of programmable membranes~\cite{Bou05a}.  The later are a specialization of the aforementioned programmable domains for performing automatic checks to migrating code, with the aim of controlling the movements of entities across the boundaries of the domain.  This control can be performed by processes that interact with external and internal programs~\cite{LS00,SS03,Bou05a}, and can implement specialized type systems or other automatic verification mechanisms~\cite{GHS05}.

In the present work we abstract away from the particular machinery %
that implements the migration control checks, and express declaratively, via the language semantics, the condition that must be satisfied for the boundary transposition to be allowed.  
The declassification effect can be seen as a certificate that moves with code in order to aid its verification, similarly to the proof-carrying code model \cite{Nec97}.  However, as we have seen, checking the validity of the declassification effect is not simpler than checking the program against a concrete allowed policy (as presented in Subsection~\ref{subsec-confinement-runtime}). %
The concept of trust %
can be used to lift the checking requirements of code whose history of visited domains provides enough assurance~\cite{GHS05,MV05}.
These ideas could be applied to the present work, assisting the decision of trusting the declassification effect, otherwise leading to a full type check of the code.

Migration control has recently been applied in the context of building practical secure distributed systems that perform information flow and declassification control.
As mentioned earlier, in Fabric~\cite{LAGM17}, mobile code is subject to run-time verification.  Besides provider-bounded information flow checking, this includes validation of trust between the caller and callee nodes.  Since running the code depends on the success of these checks, they amount to a form of migration control.

\paragraph{\emph{Hybrid mechanisms and information flow policies.}}
Hybrid mechanisms for enforcing information flow policies have been subject of active interest for over a decade (see other detailed reviews~\cite{MC11,BBJ16}).

The idea of associating a runtime type analysis to the semantics of programs for controlling declassification was put forward as a proof-of-concept~\cite{Alm09}, and further developed in the hybrid mechanisms of the present and work.  Both are presented as abstract analysis predicates that condition a migration instruction.  Mechanisms that interact in more complex ways with the semantics of the program could be defined by means of a monitor, such as Askarov and Sabelfeld's for embedding an on-the-fly static analysis in a lock-step information flow monitor~\cite{AS09}.
Focusing on the expressivity of the analyses, the former work~\cite{Alm09} considers a similar language as the one studied here, with local thread creation and a basic $\kw{goto}$ migration instruction.  A notion of declassification was inferred at runtime, using a coarser informative type and effect system.
Future migrations were not taken into account when analyzing the declassifications that could occur at the target site, so the program 
\numbexample{
\threadnat l { \threadnat l {\flow {F} {M}} {d_2}} {d_1}}
would be rejected if $F$ was not allowed by~$W(d_1)$.
The three enforcement mechanisms of Section~\ref{sec-confinement} %
are more refined, as they only reject it if $F$ is not allowed by $W(d_2)$ by the control mechanism associated to $d_2$.
Enforcement Mechanism~II is the closest to the former work, but uses a more permissive ``checking'' type system.  
Enforcement Mechanism~III uses a type and effect system for statically calculating declassification effects that is substantially more precise than the former work, thanks to the matching relations and operations that it uses.  This enables to take advantage of the efficiency of flow policy comparisons.

Other previous work explores the usefulness of employing informative type and effect systems in building hybrid security analysis mechanisms~\cite{AS12}.  There, the term `declassification effect' represents a conservative approximation of the illegal information flows (as opposed to the declassification policies, investigated here) that are set up by the program, inferred by means of a dependency analysis, in the form of downward closure operators that represent information flow policies.  %
The advantages of determining these policies statically in order to efficiently verify runtime compliance to dynamic security policies analyses is illustrated, for an expressive core-ML language, in the context of asynchronous dynamic updates to a local allowed flow policy.
Delft et al.~\cite{DHS15} also advocate for the advantages of separating the dependency analysis and policy
enforcement in the context of a hybrid mechanism, in order to enforce compliance to dynamic policies.  Dependencies between resources that are established by a program in a simple while language with output channels are extracted by means of a more precise and generic flow-sensitive type system~\cite{HS06}, in the form of a mapping from each program variables to its (conservatively approximated) set of dependencies (variables).

In FlowFence~\cite{FPRSCP16} both application producers and consumers can declare their information flow policies.  These policies are compared using operations such as intersection and subtraction, that are reminiscent (the formal definitions are not presented) of the pseudo-subtraction operation and permissiveness relation defined in Subsection~\ref{subsec-secsetting}, which was mainly used in calculating the declassification effect in Figure~\ref{fig-typesystem-declassif}.  Our declassification effect of a program can be seen as the least permissive policy that the program complies to, and can therefore declare in its \emph{manifest}.

\section{Conclusions and Future work}\label{sec-concl}

In this work we considered a simple network model of a distributed security setting, and provided a thorough study of information flow issues that arise within it.  Adopting the standpoint of separating the enabling and controlling facets of declassification, we defined and studied two security properties that are related to declassification: Distributed non-disclosure, which regards the compliance of information leaks to declassification declarations in the code, and Flow Policy Confinement, which regards compliance of those declarations to the allowed flow policies of each site.  For each, we proposed mechanisms for enforcing them, ranging from purely static to hybrid approaches.
Finally, we tied up the both properties and respective mechanisms by showing that, in combination, they imply a generalization of Non-interference. We propose this property, which we refer to as Distributed Non-interference, as a baseline information flow property that is adequate for distributed security settings.  

We showed how the problem of migration control can be applied to that of declassification control.
By comparing three related enforcement mechanisms, we argued that the concept of declassification effect offers advantages in achieving both precision and efficiency.  We believe that similar mechanisms can be applied in other contexts.
This new security effect can be associated at runtime to a program, containing information about the declassifying environments (or of other behavior of interest) that can potentially be established by that program. %
It is flexible enough to allow programs containing operations that are forbidden at certain sites to be considered secure nevertheless, as long as these operations are guarded by an appropriate context testing construct.
The techniques that were studied are largely independent of the declassification mechanism that is used, which testifies to how a layer of control can be added to the most permissive declassification mechanisms.

We expect that our language framework can be easily adapted to studying other language constructs and settings, security properties and enforcement mechanisms.
We end with a few notes on possible lines for future work:
\begin{itemize}
\item %
In our model, domains are the sole authorities of security policy, which must be respected by migrating threads when executing at each domain.  The model can be refined to enable migrating programs to establish a security policy over how their information should be handled, possibly in a setting where references can move along with threads~\cite{AC11}.  The concern would be to prevent information that is carried by programs from becoming accessible to others with more permissive flow policies.

\item It is known that when considering a %
distributed memory model (where accessibility to remote references depends on location), memory synchronization issues can lead to migration leaks~\cite{AC11}.  In other words, distribution of memory access can be used to reveal information about location.
The present paper shows that migration leaks %
can also result from a distributed security setting combined with a program construct for inspecting the domain's flow policies.  Other forms of distribution (as for instance %
failure or communication-based distribution) and migration (eg. ambient migration~\cite{CG98}) could create other forms of dependencies.  Studying them would bring a better understanding of indirect possibilities of leaking information about the location of entities in networks.

\item %
The property of Flow Policy Confinement is perhaps the simplest form of imposing compliance of declassifications to distributed allowed flow policies.  It would be interesting to consider properties with more complex (and restrictive) concerns, possibly taking into account the history of domains that a thread has visited when defining secure code migrations.  For instance, one might want to forbid threads from moving to domains with more favorable allowed flow policies.  The enforcement of such a property would be easily achieved, using the abstractions of this paper, by introducing a condition on the allowed flow policies of origin and destination domains.  %

\item We have considered a new instance of the problem of enforcing compliance of declassifications to a dynamically changing allowed flow policy:  In our setting, changes in the allowed flow policy result from migration of programs during execution in a synchronous way.  %
  Lifting the assumption of that the domain's allowed flow policies are static, towards allowing them to change asynchronously with program execution would more realistically capture full policy dynamicity, but would be outside the realm of migration control.  Other mechanisms are therefore in demand for the scenario of a \emph{dynamic} and distributed security setting.

  \item As a more long term aim, and having in mind recent developments in the area of providing built-in security by construction into the IoT~\cite{FPRSCP16}, it would be worthwhile to explore the interplay between the technical framework that is here proposed and that is conceptual in nature, and emerging real-world global computing settings.

\end{itemize}

\bibliography{biblio}

\addtolength{\textheight}{-1cm}

{
\newpage
\appendix
\renewcommand{\semvdash}[0]{\vdash^{\secmap{},\Upsilon}}
\renewcommand{\memeqF}[3]{=^{{#1}}_{#2,#3}}
\renewcommand{\biseqt}[3]{{\approx}^{#1,#2}_{#3}}
\renewcommand{\biseqtold}[3]{{\dot{\approx}}^{#1,#2}_{#3}}

This Appendix contains the proofs of the technical results that are omitted or sketched in the paper.  %

\section{Proofs for `Controlling Information Flow'} \label{app-sec-1}

\subsection{Formalization of Distributed Non-disclosure} %

In~\cite{AC11} Non-disclosure is defined for networks, considering a distributed setting with code mobility, by means of a bisimulation on pools of threads.  In this paper we used a bisimulation on thread configurations.  We prove that the new definition is weaker than the first.

\paragraph{\emph{Definition on pools of threads.}}

In order to compare the precision of Non-Disclosure for Networks with Distributed Non-disclosure as proposed here, we recall the definition of the former, using the notations of the present paper.
\begin{defi}[$\biseqtold{W,\secmap{},\Upsilon}{\typenv}{\ell}$] \label{def-bisimdefNDN1}    %
Given a security level $\ell$, a $(W,{\secmap{}},{\Upsilon},{\typenv},{\ell})$-bisimulation is a symmetric relation $\fR$ \textbf{on pools of threads} that satisfies, for all $P_1,P_2$, and for all $({\secmap{}},{\typenv})$-compatible stores $S_1, S_2$:
\[
{P_1} ~\fR~ {P_2} ~\textit{and}~ W \vdash \iconft{T_1}{P_1} {S_1} \xarr{F}{d} \iconft{T_1'}{P_1'}{S_1'} ~\textit{and}~
\confd{T_1}{S_1} \memeqF{\secmap{1},\Upsilon}{F}{l} \confd{T_2}{S_2}
\]
with $\dom{{S_1}'}-\dom{S_1}~\cap~\dom{S_2}=\emptyset$ and $\dom{{T_1}'}-\dom{T_1}~\cap~\dom{T_2}=\emptyset$ implies that there exist ${P_2'}, {T_2'}, {S_2'}$ such that:
\[
W \vdash \iconft{T_2}{P_2}{S_2} \rarr \iconft{T_2'}{P_2'}{S_2'} ~\textit{and}~
\confd{T_1'}{S_1'} \memeqF{\secmap{1},\Upsilon}{\kltop}{l} \confd{T_2'}{S_2'} ~\textit{and}
{P_1'} ~\fR~ {P_2'}
\]
Furthermore, $S_1',S_2'$ are still $(\secmap{},\typenv)$-compatible.
The largest $(W,{\secmap{}},\Upsilon,{\typenv},\ell)$-bisimulation, the union of all such bisimulations, is denoted~$\biseqtold{W,\secmap{},\Upsilon}{\typenv}{\ell}$.
 \end{defi}
For any ${\secmap{}}$, $\Upsilon$, ${\typenv}$ and $\ell$, the set of pairs of thread configurations where threads are values is an $(W,{\secmap{}},\Upsilon,{\typenv},{\ell})$-bisimulation.  Furthermore, the union of a family of $(W,{\secmap{}},\Upsilon,{\typenv},\ell)$-bisimulations is a $(W,{\secmap{}},\Upsilon,{\typenv},\ell)$-bisimulation. Consequently, $\biseqtold{W,\secmap{},\Upsilon}{\typenv}{\ell}$ exists. %
\begin{defi}[Non-disclosure for Networks] \label{def-propertyNDN1} 
A pool of threads $P$ satisfies the Non-disclosure for Networks property with respect to an allowed-policy mapping $W$, a reference labeling $\secmap{}$, a thread labeling $\Upsilon$ and a typing environment $\typenv$, if it satisfies $P~\biseqtold{W,\secmap{},\Upsilon}{\typenv}{\ell}~P$ for all security levels $\ell$.  We then write $P\in\SecNDN(W,\secmap{},\Upsilon,\typenv)$.
\end{defi}

\hide{
When imposing restrictions on the behaviors of bisimilar pools of threads, Definition~\ref{def-propertyNDN1} %
 resets the state arbitrarily at each step of the bisimulation game.  However, as we have pointed out, in a context where migration is subjective, resetting the position tracker arbitrarily is unnecessary.  In fact, it leads to a property that is overly restrictive.  In the following example, program ${M_\textit{insec}}$ can be a direct leak that is not placed within a flow declaration:
\numbexample{
\threadnat l {\allowed F {\nil} {M_\textit{insec}}} d   \label{ex-restrictive-NDN1}
}
The above program is intuitively secure if $W(d) \klpreceq F$ and insecure otherwise, as the body of the thread is known to be executed at domain $d$.   This is reflected by the definition of Distributed Non-disclosure proposed in Definition~\ref{def-propertyDND}.  However, it is considered insecure by Definition~\ref{def-propertyNDN1}, 
as if the allowed-condition is executed over ``fresh'' thread configurations such that the thread is located at a domain where $F$ is \emph{not} allowed, then the branch with the illegal expressions ${M_\textit{insec}}$ would be executed.  Another example of a secure program that is considered insecure according to Definition~\ref{def-propertyNDN1}, but not by Definition~\ref{def-propertyDND}, is a variant of the one in Equation~(\ref{exallowedmigr}), where $d_1$=$d_2$.  That is, where the two branches are syntactically equal.
}

\paragraph{\emph{Comparison.}}

Definition~\ref{def-propertyDND} is strictly weaker than the thread pool-based Definition~\ref{def-propertyNDN1}, in the sense that it considers more programs as secure.  %
This is formalized by the following proposition. %
\begin{prop}[Proposition~\ref{prop-DNDcomparison}]\text{}
$\SecNDN(W,\secmap{},\Upsilon,\typenv) \subset \SecDND(W,\secmap{},\Upsilon,\typenv)$.
\end{prop}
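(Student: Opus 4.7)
My plan is to prove the two halves of the strict inclusion separately: first establish the containment $\SecNDN \subseteq \SecDND$ by exhibiting a witness relation, and then exhibit a separating example for strictness.

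For the containment, suppose $P \in \SecNDN(W,\secmap{},\Upsilon,\typenv)$, so $P \biseqtold{W,\secmap{},\Upsilon}{\typenv}{\ell} P$ for every $\ell$. Fix $\ell$ and arbitrary position trackers $T_1, T_2$ with $\dom{P}=\dom{T_1}=\dom{T_2}$ and $T_1 \memeqF{\secmap{1},\Upsilon}{\kltop}{l} T_2$. I will show $\confd{P}{T_1} \biseqt{W,\secmap{},\Upsilon}{\typenv}{\ell} \confd{P}{T_2}$ by defining the candidate relation
\[
N = \{(\confd{P_1}{T_1},\confd{P_2}{T_2}) \mid \dom{P_1}=\dom{T_1},\ \dom{P_2}=\dom{T_2},\ T_1 \memeqF{\secmap{1},\Upsilon}{\kltop}{l} T_2,\ P_1 \biseqtold{W,\secmap{},\Upsilon}{\typenv}{\ell} P_2\}
\]
and verifying that $N$ is a $(W,\secmap{},\Upsilon,\typenv,\ell)$-bisimulation in the sense of Definition~\ref{def-bisimdefNDN2}. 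The key observation is that whenever $\confd{T_1}{S_1} \memeqF{\secmap{1},\Upsilon}{F}{l} \confd{T_2}{S_2}$, then in particular $S_1 \memeqF{\secmap{1},\Upsilon}{F}{l} S_2$, so the hypothesis needed to fire the pool-based bisimulation $\biseqtold{}{}{}$ is available. Applying the pool-based bisimilarity to a step $W \vdash \iconft{T_1}{P_1}{S_1} \xarr{F}{d} \iconft{T_1'}{P_1'}{S_1'}$ and the store $S_2$ yields $P_2', T_2', S_2'$ with $W \vdash \iconft{T_2}{P_2}{S_2} \rarr \iconft{T_2'}{P_2'}{S_2'}$, $\confd{T_1'}{S_1'} \memeqF{\secmap{1},\Upsilon}{\kltop}{l} \confd{T_2'}{S_2'}$ and $P_1' \biseqtold{W,\secmap{},\Upsilon}{\typenv}{\ell} P_2'$. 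The low-equality on states in particular gives $T_1' \memeqF{\secmap{1},\Upsilon}{\kltop}{l} T_2'$, so $(\confd{P_1'}{T_1'}, \confd{P_2'}{T_2'}) \in N$, as required. Compatibility of $S_1', S_2'$ is inherited directly from the pool-based clause.

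For strictness, I would use the program $\threadnat l {\allowed F {\nil} {M_\textit{insec}}} d$ from Equation~(\ref{ex-restrictive-NDN1}), choosing $W(d) \klpreceq F$ and $M_\textit{insec}$ a direct low-level leak (e.g.\ $\assign{b_L}{\deref{a_H}}$) with no guarding flow declaration. In $\SecDND$, any bisimulation game only ever relates configurations whose position tracker assigns the created child thread to $d$, so the allowed-condition always selects the $\nil$-branch and no leak occurs, giving membership in $\SecDND$. However, in $\SecNDN$ the bisimulation game resets the position tracker arbitrarily, so after the parent's first step one must still be bisimilar to oneself on a tracker that places the new thread at some domain $d'$ with $W(d') \not\klpreceq F$; then the else-branch is exercised and $M_\textit{insec}$ produces an observable low-difference that cannot be matched, so the program is not in $\SecNDN$.

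The main technical delicacy is the housekeeping of the freshness side-conditions in the bisimulation game: when $P_1$ takes a step that allocates fresh references or thread names, one must check that they are fresh with respect to $S_2$ and $T_2$ and that the resulting new entries, though possibly distinct, are reflected identically by both bisimulations' closure under $\memeqF{\secmap{1},\Upsilon}{\kltop}{l}$. These are exactly the conditions imposed in both definitions, so they transfer without incident, but care is required in tracking that the $\dom{T_1'}\setminus\dom{T_1}$ portion of the updated tracker preserves the invariant $T_1' \memeqF{\secmap{1},\Upsilon}{\kltop}{l} T_2'$ rather than only the weaker low-equality on the pre-existing part — which is automatic because the two trackers extend identically on the fresh names by the naming conventions.
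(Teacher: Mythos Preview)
Your proof is correct and follows essentially the same route as the paper: both define the candidate relation $N$ consisting of pairs $(\confd{P_1}{T_1},\confd{P_2}{T_2})$ with $P_1 \biseqtold{W,\secmap{},\Upsilon}{\typenv}{\ell} P_2$ and $T_1 \memeqF{\secmap{1},\Upsilon}{\kltop}{l} T_2$, verify it is a bisimulation in the sense of Definition~\ref{def-bisimdefNDN2}, and invoke the program of Equation~(\ref{ex-restrictive-NDN1}) for strictness. One minor remark: your ``key observation'' that state low-equality entails store low-equality is unnecessary, since the pool-based bisimulation (Definition~\ref{def-bisimdefNDN1}) already takes the full state low-equality $\confd{T_1}{S_1} \memeqF{\secmap{1},\Upsilon}{F}{l} \confd{T_2}{S_2}$ as its hypothesis, which you have directly.
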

\begin{proof}
We consider $P$ in $\SecNDN(W,\secmap{},\Upsilon,\typenv)$, i.e. $P$~is such that for all security levels $\ell$ we have $P~\biseqtold{W,\secmap{},\Upsilon}{\typenv}{\ell}~P$ according to Definition~\ref{def-bisimdefNDN1}.  Given any pair of position trackers $T_1,T_2$ such that $\dom{P}=\dom{T_1}=\dom{T_2}$ and $T_1 \memeqF{\secmap{1},\Upsilon}{\kltop}{l} T_2$, we prove that for all levels $\ell$ we have $\confd{P}{T_1}~\biseqt{W,\secmap{},\Upsilon}{\typenv}{\ell}~\confd{P}{T_2}$ according to Definition~\ref{def-propertyDND}.
To this end, we consider the set
\[
N\!=\!\{\confd{\confd{P_1}{T_1}}{\confd{P_2}{T_2}} ~|~ \dom{P}\!=\!\dom{T_1}\!=\!\dom{T_2} \textit{ and } T_1 \!\memeqF{\secmap{1},\Upsilon}{\kltop}{l}\! T_2 \textit{ and }
P_1~\!\biseqtold{W,\secmap{},\Upsilon}{\typenv}{\ell}~\!\!P_2 \} 
\]
and prove that $N \subseteq \biseqt{W,\secmap{},\Upsilon}{\typenv}{\ell}$ (according to Definition~\ref{def-propertyDND}).

Assume that $\confd{\confd{P_1}{T_1}}{\confd{P_2}{T_2}} \in N$, and suppose that for any given $({\secmap{}},{\typenv})$-compatible memories $S_1, S_2$ we have $W \vdash \iconft{T_1}{P_1}{S_1} \xarr{F}{d} \iconft{T_1'}{P_1'}{S_1'}$ and $\confd{T_1}{S_1} \memeqF{\secmap{1},\Upsilon}{F}{l} \confd{T_2}{S_2}$, with $\dom{{S_1}'}-\dom{S_1}~\cap~\dom{S_2}=\emptyset$ and $\dom{{T_1}'}-\dom{T_1}~\cap~\dom{T_2}=\emptyset$.  Then, by Definition~\ref{def-bisimdefNDN1} there exist ${P_2'}, {T_2'}, {S_2'}$ such that 
\[
W \vdash \iconft{T_2}{P_2}{S_2} \rarr \iconft{T_2'}{P_2'}{S_2'} ~\textit{and}~
\confd{T_1'}{S_1'} \memeqF{\secmap{1},\Upsilon}{\kltop}{l} \confd{T_2'}{S_2'} ~\textit{and}~
{P_1'} ~\fR~ {P_2'}
\]
Furthermore, $S_1',S_2'$ are still $(\secmap{},\typenv)$-compatible.  It is now easy to see that $\confd{\confd{P_1'}{T_1'}}{\confd{P_2'}{T_2'}} \in N$.

The example in Equation~(\ref{ex-restrictive-NDN1}) shows that $\SecNDN(W,\secmap{},\Upsilon,\typenv) \not= \SecDND(W,\secmap{},\Upsilon,\typenv)$. 
\end{proof}

\subsection{Type system}

\subsubsection{Subject Reduction}

In order to establish the soundness of the type system of Figure~\ref{fig-iflow-typesystem} we need a Subject Reduction result, stating that types that are given to expressions are preserved by computation.  To prove it we follow the usual steps \cite{WF94}.  In the following, $\Pse$ is the set of pseudo-values, as defined in Figure~\ref{fig-syntaxexpressions}.

\begin{rem} \text{} \label{app-remiflow-typeffval}
\begin{enumerate}
\item If $X\in\Pse$ and $\typenv\byint{j}{\secmap{}}{F} X:s,\tau$, then for all security levels $j'$, flow policies $F'$ and security effects $s'$, we have that $\typenv\byint{j'}{\secmap{}}{F'} X:s',\tau$.
\item For any flow policies $F,F'$, such that $F' \klpreceq F$, we have that ${\typenv \byint{j}{\secmap{}}{F} M : \tau}$ implies ${\typenv \byint{j}{\secmap{}}{F'} M : \tau}$.
\end{enumerate}
\end{rem}  

\begin{lem}\label{app-prop-iflow-weakstrengthlemma}\text{} 
\begin{enumerate}
\item If $\typenv\byint{j}{\secmap{}}{F} M:s,\tau$ and $x\notin\dom{\typenv}$ then $\typenv,x:\sigma\byint{j}{\secmap{}}{F} M:s,\tau$.
\item If $\typenv,x:\sigma\byint{j}{\secmap{}}{F} M:s,\tau$ and $x\notin\fv{M}$ then $\typenv\byint{j}{\secmap{}}{F} M:s,\tau$.
\end{enumerate}
\end{lem}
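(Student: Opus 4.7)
The plan is to prove both statements by straightforward induction on the derivation of the typing judgment, treating part~(1) and part~(2) in parallel. Both are standard structural lemmas about type environments and nothing in the new constructs (\allowtyp, \flowtyp, \migtyp) interferes with the variable context: these rules manipulate only the parameters $j$, $F$, and the security effect $s$, leaving $\typenv$ untouched and passed unchanged to the premises.

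For part~(1), in each base case (\niltyp, \boolttyp, \boolftyp, \loctyp, \vartyp) the same typing rule applies after adding $x:\sigma$ to the environment; for \vartyp the lookup of the variable is unaffected because $x\notin\dom{\typenv}$ guarantees that the extended environment $\typenv,x:\sigma$ still assigns the same type to every variable originally in $\typenv$. In each inductive case (\apptyp, \seqtyp, \reftyp, \assigntyp, \dereftyp, \condtyp, \allowtyp, \flowtyp, \migtyp) we simply apply the induction hypothesis to each premise and reassemble the same rule; the parameters $j$, $F$ and the side conditions involving $\Fpreceq{F}$ and $\klpreceq$ are unchanged. For the binding rules \abstyp~and~\rectyp, by the usual Barendregt convention we may assume the bound variable is distinct from $x$ and from $\dom{\typenv}$, so the inductive hypothesis on the premise $\typenv,y:\tau\byint{j}{\secmap{}}{F} M:s,\sigma$ yields $\typenv,y:\tau,x:\sigma\byint{j}{\secmap{}}{F} M:s,\sigma$, and an exchange of the last two bindings (which is admissible since $y\neq x$) gives the required conclusion.

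For part~(2), the induction is symmetric. The only case requiring attention is \vartyp: if $\typenv,x:\sigma\byint{j}{\secmap{}}{F} y:s,\tau$ then $y=x$ would contradict $x\notin\fv{y}$, so $y\neq x$ and the same rule applies in the smaller environment $\typenv$. The binding cases \abstyp~and~\rectyp~again proceed by Barendregt convention: the bound variable $y$ is chosen distinct from $x$, so $x\notin\fv{\lam{y}{M}}$ implies $x\notin\fv{M}$, and the induction hypothesis on the premise yields the result after exchange.

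The only foreseeable obstacle is the bookkeeping around binders in \abstyp~and~\rectyp, but this is fully handled by assuming (via alpha-conversion) that bound variables are fresh with respect to $x$ and $\dom{\typenv}$, together with the obvious admissibility of exchange in the typing context. No use is made of Remark~\ref{app-remiflow-typeffval} for this proof, though the remark will later be invoked alongside this lemma in the substitution lemma that underlies Subject Reduction.
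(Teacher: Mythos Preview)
Your proposal is correct and takes the same approach as the paper: the paper's proof is the single line ``By induction on the inference of the type judgment,'' and your argument simply spells out that induction in detail, including the standard handling of binders via alpha-conversion and context exchange.
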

\begin{proof} By induction on the inference of the type judgment.   
\end{proof}

\begin{lem}[Substitution]\label{app-prop-iflow-subslemma}\text{}\\  
If $\typenv,{x:\sigma} \byint{j}{\secmap{}}{F} {M} : s, \tau$ and $\typenv \byint{j'}{\secmap{}}{F'} {X} : s',\sigma$ then $\typenv \byint{j}{\secmap{}}{F} {\substi x X M} : s, \tau$.  %
\end{lem}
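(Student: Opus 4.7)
The plan is to proceed by structural induction on the derivation of the typing judgment $\typenv, x{:}\sigma \byint{j}{\secmap{}}{F} M : s,\tau$, with a case analysis on the last typing rule applied. The key preparatory observation is that $X$ is a pseudo-value, so Remark~\ref{app-remiflow-typeffval}(1) allows us to re-derive $\typenv \byint{j''}{\secmap{}}{F''} X : s'', \sigma$ for \emph{any} choice of $j''$, $F''$ and $s''$. This will be crucial, because at the point where $x$ occurs in $M$ we must supply $X$ with exactly the $(j,F,s)$-context in which $x$ was typed, not the ones $(j',F',s')$ we started with.

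The base cases are handled directly. For rule~\vartyp, if the variable is $x$ then $\tau=\sigma$ and $\substi{x}{X}{x}=X$, and we invoke Remark~\ref{app-remiflow-typeffval}(1) to get $\typenv \byint{j}{\secmap{}}{F} X:s,\sigma$ (any effect works, in particular $s$); if the variable is distinct from $x$, then Lemma~\ref{app-prop-iflow-weakstrengthlemma}(2) strips $x{:}\sigma$ from the context. For \niltyp, \boolttyp, \boolftyp, and \loctyp, the substitution has no effect and the same rule re-derives the conclusion in the smaller context after strengthening by Lemma~\ref{app-prop-iflow-weakstrengthlemma}(2).

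The inductive step is routine but notationally heavy: for each non-binding construct (e.g.\ \apptyp, \seqtyp, \reftyp, \dereftyp, \assigntyp, \condtyp, \allowtyp, \flowtyp, \migtyp) we push the substitution into the sub-expressions, apply the induction hypothesis to each premise, and reassemble with the same rule, reusing the side-conditions on $\Fpreceq{F}$ (noting that the rule \flowtyp~changes the current policy to $F \klmeet F'$, which is fine since the IH can be applied at that refined policy). The binding constructs \abstyp~and \rectyp~require the usual $\alpha$-renaming precaution to ensure that the bound variable is fresh with respect to $\fv{X}$ and $\dom{\typenv}$; then we extend the context by the bound variable using Lemma~\ref{app-prop-iflow-weakstrengthlemma}(1) on the derivation of $X$ before invoking the IH on the body.

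The main obstacle is the mismatch of the security-context parameters attached to the typing of $X$ and the typing of the occurrences of $x$ inside $M$. A naive induction attempt would try to reuse $(j',F',s')$ from the hypothesis on $X$, but at occurrences of $x$ nested inside a \flowtyp~or under a function body, the ambient policy and effect are different from the outer ones. This is precisely why the lemma restricts the substituted term to a pseudo-value: by Remark~\ref{app-remiflow-typeffval}(1), pseudo-values are polymorphic in $(j'',F'',s'')$, so we can always retype $X$ to match whatever local context the variable $x$ is typed in. Once this observation is in hand, the remaining cases are a mechanical check that each typing rule is preserved by the substitution.
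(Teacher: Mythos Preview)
Your proposal is correct and follows essentially the same approach as the paper: induction on the typing derivation with case analysis on the last rule, using Lemma~\ref{app-prop-iflow-weakstrengthlemma} for weakening/strengthening and Remark~\ref{app-remiflow-typeffval}(1) to retype the pseudo-value $X$ at the ambient security context. The paper's proof is terser and only spells out the cases for the non-standard constructs (\migtyp, \flowtyp, \allowtyp), but your more explicit identification of the pseudo-value polymorphism as the crux of the argument is exactly the right emphasis.
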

\begin{proof}  By induction on the inference of $\typenv,x:\tau\byint{j}{\secmap{}}{F} M:s,\sigma$, and by case analysis on the last rule used in this typing proof, using the previous lemma. Let us examine the cases related to the new language constructs:
\begin{description} 

\item [\migtyp]

Here $M = \threadnat l {\bar{M}} {\bar d}$ and we have that $\typenv,{x:\sigma} \byint{l}{\secmap{}}{\kltop} {\bar{M}} : \bar s, \tau$, with $\tau = \unit$ %
and $\bar s = {\eft {\bot}{l \join s.w}{\bot}}$.  By induction hypothesis, then $\typenv \byint{l}{\secmap{}}{\kltop} \substi x X{\bar{M}} : \bar s, \tau$.  Therefore, by rule $\migtyp$, $\typenv \byint{j}{\secmap{}}{F} \threadnat l {\substi x X{\bar{M}}} {\bar d} : s, \tau$.

\item [\flowtyp]

Here $M = \flow {\bar F} {\bar{M}}$, $\typenv,{x:\sigma} \byint{j}{\secmap{}}{F \klmeet \bar F} {\bar{M}} : s, \tau$.  By induction hypothesis, $\typenv \byint{j}{\secmap{}}{F \klmeet \bar F} \substi{x}{X}{\bar{M}} : s, \tau$.  Then, by $\flowtyp$, we have $\typenv \byint{j}{\secmap{}}{F} {\flow {\bar{F}} {\substi{x}{X}{\bar{M}}}} : s, \tau$.  

\item [\allowtyp]

Here $M = \allowed {\bar{F}} {N_t} {N_f}$ and we have $\typenv,{x:\sigma} \byint{j}{\secmap{}}{F} {N_t} : s_t, \tau$ and $\typenv,{x:\sigma} \byint{j}{\secmap{}}{F} {N_f} : s_f, \tau$ with $j \Fpreceq{F} s_t.w, s_f.w$ and $s = s_t \join s_f\join \eft{j}{\top}{j}$.  By induction hypothesis, $\typenv,{x:\sigma} \byint{j}{\secmap{}}{F} {\substi x X {N_t}} : s_t, \tau$ and $\typenv,{x:\sigma} \byint{j}{\secmap{}}{F} {\substi x X {N_f}} : s_f, \tau$.  Therefore, by rule $\allowtyp$, we have that $\typenv,{x:\sigma} \byint{j}{\secmap{}}{F} {\allowed {\bar F} {\substi x X{N_t}} {\substi x X{N_f}}} : s, \tau$. \qedhere

\end{description}
\end{proof}

\begin{lem}[Replacement] \label{app-prop-iflow-repllemma}\text{}\\ %
If $\typenv \byint{j}{\secmap{}}{F} \EC{M} : s, \tau$ is a valid judgment, then the proof gives $M$ a typing $\typenv \byint{j}{\secmap{}}{{F} \klmeet \extrf {\cE{E}}}{M} : \bar{s}, \bar{\tau}$ for some $\bar{s}$ and $\bar{\tau}$ such that ${\bar{s}} \preceq {s}$.
In this case, if $\typenv \byint{j}{\secmap{}}{{F} \klmeet \extrf {\cE{E}}} {N} : \bar{s}',\bar{\tau}$ with ${\bar{s}}' \preceq \bar{s}$, then $\typenv \byint{j}{\secmap{}}{F} \EC{N} : {s'}, \tau$, for some ${s'}$ such that ${s'} \preceq s$.
\end{lem}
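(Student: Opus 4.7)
I would prove the Replacement Lemma by induction on the structure of the evaluation context $\cE{E}$. The base case $\cE{E} = \trou$ is immediate: $\EC{M} = M$, $\extrf{\trou} = \kltop$ so $F \klmeet \extrf{\trou} = F$, and taking $\bar s = s$, $\bar\tau = \tau$ gives both conclusions trivially (for the second, use reflexivity of $\preceq$).

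For each inductive case $\cE{E}$ of the form $\app{\cE{E}'}{N}$, $\app{V}{\cE{E}'}$, $\seq{\cE{E}'}{N}$, $\rfrl{l}{\theta}{\cE{E}'}$, $\deref{\cE{E}'}$, $\assign{\cE{E}'}{N}$, $\assign{V}{\cE{E}'}$, $\cond{\cE{E}'}{N_t}{N_f}$, or $\flow{F'}{\cE{E}'}$, the judgment $\typenv \byint{j}{\secmap{}}{F} \EC{M} : s, \tau$ must end with the corresponding typing rule. Inverting that rule yields a judgment for $\cC{E'}{M}$ under the same flow policy $F$ (or, in the flow-declaration case, under $F \klmeet F'$), with some effect component $s_0$ of $s$ and some type $\tau_0$. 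The induction hypothesis applied to $\cE{E}'$ then gives $\typenv \byint{j}{\secmap{}}{F \klmeet \extrf{\cE{E}'}} M : \bar s, \bar\tau$ with $\bar s \preceq s_0$; when $\cE{E} = \flow{F'}{\cE{E}'}$ the ambient policy is $(F \klmeet F') \klmeet \extrf{\cE{E}'} = F \klmeet \extrf{\cE{E}}$, which matches. This establishes the first conclusion.

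For the second conclusion, given $\typenv \byint{j}{\secmap{}}{F \klmeet \extrf{\cE{E}}} N : \bar s', \bar\tau$ with $\bar s' \preceq \bar s$, the induction hypothesis on $\cE{E}'$ produces $\typenv \byint{j}{\secmap{}}{F} \cC{E'}{N}$ (or under $F \klmeet F'$ in the flow case) with an effect $s_0' \preceq s_0$ and type $\tau_0$. I then reapply the outer typing rule: the side conditions (such as $s_0.t \Fpreceq{F} s''.w$ and $s_0.r \Fpreceq{F} s'.w$ in $\apptyp$, or $s_0.r,s_0.t \Fpreceq{F} l$ in $\reftyp$, etc.) are preserved because they involve only reading/termination levels of $s_0$ being dominated by writing levels of the sibling effects, and those conditions are monotone in the direction of $\preceq$ on $s_0$. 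Reassembling yields $\typenv \byint{j}{\secmap{}}{F} \EC{N} : s', \tau$ with $s' \preceq s$, using monotonicity of $\join$ with respect to $\preceq$ on each coordinate; in the flow case, rule $\flowtyp$ closes the derivation. Remark~\ref{app-remiflow-typeffval}(2) may be needed to weaken flow policies in auxiliary branches such as $N_t, N_f$ in the conditional and assignment rules.

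The main obstacle, which is however routine, is the case $\cE{E}' = \flow{F'}{\cE{E}''}$: the ambient flow policy under which the hole is typed must be carefully threaded through, making sure that $\extrf{\cdot}$ composes associatively with $\klmeet$ across nested flow declarations. The other subtle point is verifying in each rule that replacing the sub-effect $s_0$ by the smaller $s_0'$ keeps all the Fpreceq-side conditions satisfied; this follows directly from the definition of $\preceq$ on security effects, which is covariant on the reading and termination components and contravariant on the writing component, combined with the fact that in each rule the hole's sub-effect appears only on the covariant side of the preserved conditions.
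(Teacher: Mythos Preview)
Your proposal is correct and follows essentially the same approach as the paper: induction on the structure of the evaluation context $\cE{E}$, with the flow-declaration case as the only non-standard one (the paper in fact shows only that case explicitly). Your treatment of the side conditions via the variance of the effect ordering is the right way to handle the ``standard'' cases; the reference to Remark~\ref{app-remiflow-typeffval}(2) is harmless but unnecessary, since in every evaluation-context frame the sibling subterms are typed under the same ambient flow policy $F$.
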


\begin{proof}  By induction on the structure of $\cE{E}$.  Let us examine the case of the flow declaration, which is the only non-standard evaluation context: {
\begin{description}

\item [$\boldsymbol{\cC{E}{M} = \flow {F'} {\cC{\bar{E}}{M}}}$]

By $\flowtyp$, we have $\typenv \byint{j}{\secmap{}}{F \klmeet F'} \cC{\bar{E}}{M} : s, \tau$.  By induction hypothesis, the proof gives $M$ a typing $\typenv \byint{j}{\secmap{}}{F \klmeet F' \klmeet \extrf{\cE{\bar{E}}}} {M} : \hat{s}, \hat{\tau}$, for $\hat{s},\hat{\tau}$ such that ${\hat{s}} \preceq {s}$.
Also by induction hypothesis, $\typenv \byint{j}{\secmap{}}{F \klmeet F'} \cC{\bar{E}}{N} : s', \tau$, for some ${s'}$ such that ${s'} \preceq s$.  Then, again by $\flowtyp$, we have $\typenv \byint{j}{\secmap{}}{F} \flow {F'}{\cC{\bar{E}}{N}} : s', \tau$. \qedhere
\end{description}}
\end{proof}

We check that the type of a thread is preserved by reduction, while its effects ``weaken''.

\begin{prop}[Subject Reduction -- Proposition~\ref{prop-iflow-subjectreduction}] \text{}\\\label{app-prop-iflow-subjectreduction}
Consider a thread $M^{m}$ such that ${\typenv \byint{\Upsilon(m)}{\secmap{}}{F} M : s, \tau}$, and suppose that $W \vdash \iconft{T}{\{M^{m}\}} S$ $\xarr{\var {F'}}{d}$ $\iconft{T'} {\{M'^{m}\} \cup P} {S'}$, for a memory $S$ that is $({\secmap{}},{\typenv})$-compatible. Then, there is an effect $s'$ such that $s' \preceq s$ and $\typenv\byint{\Upsilon(m)}{\secmap{}}{F} M':s',\tau$, and $S'$ is also $({\secmap{}},{\typenv})$-compatible.
Furthermore, if~$P = \{N^n\}$, for some expression $N$ and thread name $n$, then there exists %
$s''$ such that $s'' \preceq s$ %
and $\typenv\byint{\Upsilon(n)}{\secmap{}}{\kltop} N:s'',\unit$.
\end{prop}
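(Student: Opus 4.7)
The plan is to follow the Wright--Felleisen subject-reduction template, structured as a case analysis on the rule used to derive the transition $W \vdash \iconft{T}{\{M^m\}} S \xarr{F'}{d} \iconft{T'}{\{M'^m\} \cup P}{S'}$. In every case, $M = \EC{R}$ for some evaluation context $\cE{E}$ and some redex $R$. The first step is always to apply the Replacement Lemma (Lemma~\ref{app-prop-iflow-repllemma}) to extract from ${\typenv \byint{\Upsilon(m)}{\secmap{}}{F} \EC{R} : s, \tau}$ a typing ${\typenv \byint{\Upsilon(m)}{\secmap{}}{F \klmeet \extrf{\cE{E}}} R : \bar s, \bar\tau}$ of the redex under the flow policy actually declared at that point of the program, which is the one decorating the transition. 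The plan is then to analyse $R$ by inversion of the last rule in its typing derivation, type the contractum $R'$ with the same type $\bar\tau$ and some effect $\bar s' \preceq \bar s$, and reassemble via Replacement to obtain ${\typenv \byint{\Upsilon(m)}{\secmap{}}{F} \EC{R'} : s', \tau}$ with $s' \preceq s$.

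For the routine redexes (function application, recursion unfolding, sequential composition, conditional on a boolean, dereference, assignment, and reference creation) I would apply the Substitution Lemma (Lemma~\ref{app-prop-iflow-subslemma}) to handle $\beta$-reduction and $\fix x X$, and invoke Remark~\ref{app-remiflow-typeffval}(1) to type values at whatever effect is needed. Preservation of $({\secmap{}},{\typenv})$-compatibility of the store in the $\assign a V$ case follows because inversion of [\assigntyp] forces $V$ to have type $\secmap{2}(a)$, and similarly $\rfrl l \theta V$ only ever binds a fresh reference $a$ with $\secmap{}(a) = (l,\theta)$ to a value of type $\theta$; Remark~\ref{app-remiflow-typeffval}(1) lets us retype $V$ with the parameters required by Definition~\ref{def-iflow-compatibility}.

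The three distinctive cases of the language deserve separate treatment. For $\flow{F'}V \to V$, inversion of [\flowtyp] gives ${\typenv \byint{\Upsilon(m)}{\secmap{}}{F \klmeet \extrf{\cE{E}} \klmeet F'} V : \bar s, \tau}$; since $V$ is a pseudo-value, Remark~\ref{app-remiflow-typeffval}(1) retypes it under the looser policy $F \klmeet \extrf{\cE{E}}$, at the same effect, and Remark~\ref{app-remiflow-typeffval}(2) is used wherever a policy needs to be weakened. For the allowed-condition reducing to one of its branches, inversion of [\allowtyp] yields typings of $N_t$ and $N_f$ at effects $s_t,s_f$, both pointwise $\preceq s_t \join s_f \join \eft{j}{\top}{j}$, so either branch can take the role of $R'$. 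For $\threadnat l N {d'} \to \nil$ spawning $N^n$, inversion of [\migtyp] immediately supplies ${\typenv \byint{l}{\secmap{}}{\kltop} N : s_N, \unit}$ with $s_N.w \succeq l \meet s_N.w$; since the semantics sets $\Upsilon(n)=l$, this is precisely the typing required by the second clause of the conclusion, while $\nil$ trivially inherits type $\unit$ at any effect (Remark~\ref{app-remiflow-typeffval}(1)).

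The main technical nuisance is keeping careful track of the security level $\Upsilon(m)$ of the \emph{executing} thread across the Replacement step: because the typing parameter $j$ must match $\Upsilon(m)$ at the outer judgement but becomes $l = \Upsilon(n)$ inside the premise of [\migtyp], one has to be attentive not to conflate the two when a thread-creation redex appears within a deeper evaluation context. Once this bookkeeping is respected, the effect comparisons $\bar s' \preceq \bar s$ and hence $s' \preceq s$ follow mechanically from the monotonicity of $\join$ and from Remark~\ref{app-remiflow-typeffval}, completing the induction.
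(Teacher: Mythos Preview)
Your proposal is correct and follows essentially the same approach as the paper's proof: both use the Wright--Felleisen template, decompose $M=\EC{R}$, invoke Replacement (Lemma~\ref{app-prop-iflow-repllemma}) to isolate a typing of the redex under the flow policy $F\klmeet\extrf{\cE{E}}$, perform a case analysis on the reduction rule, use Substitution (Lemma~\ref{app-prop-iflow-subslemma}) and Remark~\ref{app-remiflow-typeffval} for the routine cases, treat \flowtyp, \allowtyp, and \migtyp\ explicitly, and reassemble via Replacement. The paper's appendix proof presents exactly these non-standard cases and defers the others as analogous, just as you do.
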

\begin{proof}

Suppose that $M=\cC{\bar{E}}{\bar{M}}$ and $W \semvdash \iconft{T}{\{{\bar{M}}^m\}}{S}$ $\xarr{\bar{F}}{d}$ $\iconft{\bar T'}{\{\bar{M}'^m\} \cup P'}{\bar{S}'}$.  We start by observing that this implies $F = \bar{F} \klmeet \extrf{\cE{\bar{E}}}$, $M'=\cC{\bar{E}}{\bar{M}'}$, $P = P'$, ${\bar{T}'} = {T'}$ and ${\bar{S}'} = {S'}$.  
We can assume, without loss of generality, that $\bar{M}$ is the smallest in the sense that there is no $\cE{\hat{E}},\hat{M},\hat{N}$ such that $\cE{\hat{E}}\neq[]$ and $\cC{\hat{E}}{\hat{M}}=\bar{M}$ for which we can write $W \semvdash \iconft{T}{\{\hat{M}^m\}}{S}$ $\xarr{\hat{F}}{d}$ $\iconft{T'}{\{\hat{M}'^m\} \cup P}{S'}$.  %

By Replacement (Lemma~\ref{app-prop-iflow-repllemma}), we have $\typenv \byint{\Upsilon(m)}{\secmap{}}{{F} \klmeet \extrf{\bar{\cE{E}}}} {{\bar{M}}} : \bar{s}, \bar{\tau}$ in the proof of $\typenv \byint{\Upsilon(m)}{\secmap{}}{F} {\cC{\bar{E}}{\bar{M}}} : s, \tau$, for some $\bar{s}$ and $\bar{\tau}$.  
We proceed by case analysis on the transition $W \semvdash \iconft{T}{\{{\bar{M}}^m\}}{S}$ $\xarr{\bar{F}}{d}$ $\iconft{T'}{\{\bar{M}'^m\} \cup P}{S'}$, and prove that if $S' \neq S$ then:
\begin{itemize}
\item There is an effect $\bar{s}'$ such that $\bar{s}' \preceq \bar{s}$ and ${\typenv \byint{\Upsilon(m)}{\secmap{}}{{F} \klmeet \extrf{\bar{\cE{E}}}} {{\bar{M}'} : \bar{s}', \bar{\tau}}}$.  Furthermore, for every reference $a \in \dom{S'}$ implies ${\typenv \byint{\hat j}{\secmap{}}{\hat F} S'(a) : \hat s, \secmap{2}(a)}$, for every security level $\hat j$, flow policy $\hat F$ and security effect $\hat s$.
\item If~$P=\{N^n\}$~for some expression $N$ and thread name $n$, then there is an effect $\bar{s}''$ such that $\bar{s}.w \preceq \bar{s}''.w$ and a thread name $d'$ such that ${\typenv \byint{\Upsilon(n)}{\secmap{}}{\kltop} {N : \bar{s}'', \unit}}$.  
\end{itemize}
We present only the cases corresponding to reductions of the non-standard language constructs:
\begin{description}
\item [$\boldsymbol{\bar{M} = \allowed{F'} {N_t}{N_f}}$]
Suppose that $W(d) \klpreceq F'$ (the other case is analogous).
Here we have $\bar{M}' = N_t$, $S=S'$ and $P=\emptyset$.  By $\allowtyp$, we have that $\typenv \byint{\Upsilon(m)}{\secmap{}}{{F \klmeet \extrf{\bar{\cE{E}}}}} {N_t} : s_t, \bar{\tau}$, where ${s_t} \preceq \bar{s}$.
\item [$\boldsymbol{\bar{M} = \flow {F'} {V}}$]

Here we have $\bar{M}' = V$, $S=S'$ and $P=\emptyset$.  By rule $\flowtyp$, we have that $\typenv \byint{\Upsilon(m)}{\secmap{}}{{F \klmeet \extrf{\bar{\cE{E}}}} \klmeet F'} V : {\tef {{\bar s}} {{\tau}}}$.
and by Remark~\ref{app-remiflow-typeffval}, we have $\typenv \byint{\Upsilon(m)}{\secmap{}}{{F \klmeet \extrf{\bar{\cE{E}}}}} V : {\tef {\bar s} {\bar{\tau}}}$.
\item [$\boldsymbol{\bar{M} = \threadnat k {N} {d'}}$]

Here we have $\bar{M}' = \nil$, $P=\{N^n\}$ for some thread name $n$, $S=S'$ and $T'(n)=d$.  By $\migtyp$, we have that $\typenv \byint{\Upsilon(n)}{\secmap{}}{\kltop} {N} : \hat s,\unit$, with $s.w \preceq \hat s.w$ and $\bar \tau = \unit$, and by $\niltyp$ we have that $\typenv \byint{\Upsilon(m)}{\secmap{}}{{F \klmeet \extrf{\bar{\cE{E}}}}} {\nil} : \bar s, \unit$.
\end{description}
By Replacement (Lemma~\ref{app-prop-iflow-repllemma}), we can finally conclude that $\typenv\byint{\Upsilon(m)}{\secmap{}}{F} {\cC{\bar{E}}{\bar{M}'}}: s', {\tau}$, for some $s'~\preceq~s$.
\end{proof}

\subsubsection{Basic Properties}

\paragraph{\emph{Properties of the Semantics.}}

One can show that if a thread is able to perform a step in one memory, while creating a new thread (or not), when executing on a low-equal memory it can also perform a step and produce a low-equal result, while creating a new thread (or not).
\begin{lem}[Guaranteed Transitions] \label{app-prop-iflow-guarantee} %
Consider an allowed-policy mapping $W$, a thread $M^m$ and two states $\confd {T_1}{S_1}, \confd {T_2}{S_2}$ such that $W \vdash \iconft {T_1} {\{{M}^m\}} {S_1}$ $\xarr{F}{d}$ $\iconft {T_1'}{P_1'}{S_1'}$, and for some $F'$ we have $\confd {T_1}{S_1} \memeqF{\secmap{1},\Upsilon}{F \klmeet F'}{\low} \confd {T_2}{S_2}$.  Then: 
\begin{itemize}
\item If $P_1' = \{M_1'^m\}$, and $(\dom{{S_1}'}-\dom{S_1})\cap\dom{S_2}=\emptyset$, %
then there exist $M_2'$, $T_2'$ and $S_2'$ such that $W \vdash \iconft {T_2}{\{{M}^m\}}{S_2} \xarr{F}{d} \iconft{T_2'}{\{M_2'^m\}}{S_2'}$ and $\confd {T_1'}{S_1'} \memeqF{\secmap{1},\Upsilon}{F \klmeet F'}{\low} \confd {T_2'}{S_2'}$.
\item If~$P_1' = \{M'^m,N^n\}$~for some expression $N$ and $(\dom{{T_1}'}-$ $\dom{T_1})\cap\dom{T_2}=\emptyset$, %
then there exist $M_2'$, $T_2'$ and $S_2'$ such that we have $W \vdash \iconft {T_2}{\{{M}^m\}}{S_2} \xarr{F}{d} \iconft{T_2'}{\{M'^m, N^n\}}{S_2'}$ and $\confd {T_1'}{S_1'}\memeqF{\secmap{1},{{\Upsilon}}}{F \klmeet F'}{\low} \confd {T_2'}{S_2'}$.
\end{itemize}
\end{lem}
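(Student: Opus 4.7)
The plan is a case analysis on the derivation of $W \vdash \iconft{T_1}{\{M^m\}}{S_1} \xarr{F}{d} \iconft{T_1'}{P_1'}{S_1'}$. Decompose $M = \cC{E}{\bar M}$ where $\bar M$ is the redex; then $F = \extrf{\cE{E}}$ and $d = T_1(m)$. The strategy is to show that in each case the same operational rule applies to $\iconft{T_2}{\{M^m\}}{S_2}$, yielding a matching transition, and that state-level low-equality is preserved.

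For the purely syntactic rules (beta-reduction, conditional on a boolean literal, sequencing past a value, recursion unfolding, and elimination of a terminated flow declaration), the redex step depends on neither $S_i$ nor $T_i$, so the same rule reduces $M^m$ in the second configuration, producing $M_2' = M_1'$, $T_2' = T_2$, $S_2' = S_2$. For dereference, $\bar M = \deref a$ reduces in both states but may yield $M_1' = \cC{E}{S_1(a)}$ and $M_2' = \cC{E}{S_2(a)}$; since neither store nor tracker is modified, low-equality of states is retained. For assignment $\bar M = \assign a V$ and reference creation $\bar M = \rfrl{l}{\theta}V$, the side-effects are induced by the program value $V$ itself; the disjointness assumption on $\dom{S_1'}\setminus\dom{S_1}$ and $\dom{S_2}$ permits choosing the same fresh reference name in the second transition, so the stores receive identical updates on identical addresses, preserving low-equality. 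For thread creation $\bar M = \threadnat{l}{N}{d'}$, the disjointness assumption on $\dom{T_1'}\setminus\dom{T_1}$ and $\dom{T_2}$ permits reusing the same fresh thread name $n$, so the same thread $N^n$ is spawned in the second transition, $T_2'$ agrees with $T_2$ extended by $(n,d')$, and position-tracker low-equality is preserved. The final rule, about composition with external pools, is immediate since the redex step is identical and $Q$ is untouched.

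The subtle case is the allowed-condition, $\bar M = \allowed{F''}{N_t}{N_f}$, whose branch depends on whether $W(T_i(m)) \klpreceq F''$. Because the two transitions are decorated by the same pair $(F,d)$ and $d = T_1(m)$, the second transition is in fact at $T_2(m)$; one then shows $T_1(m) = T_2(m)$, which holds precisely when the position of $m$ is observable at level $\low$ under $F\klmeet F'$, i.e.\ when $\Upsilon(m)\Fpreceq{F\klmeet F'}\low$, and follows from $T_1 \memeqF{\secmap{1},\Upsilon}{F\klmeet F'}{\low} T_2$. When this holds, the same test succeeds and $M_2' = M_1'$; since the step has no side-effects, the states remain low-equal. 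Where the position of $m$ is high, the redex step still produces a valid transition and leaves $\confd{T_i}{S_i}$ unchanged, so state low-equality is again preserved; the $M_2'$ produced may differ syntactically from $M_1'$, which is consistent with the lemma's existential conclusion.

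The main obstacle is therefore the allowed-condition case, where one must carefully reconcile the labelling of the second transition with the possible divergence of branches caused by a high-level thread position; handling this requires leaning on the low-equality of position trackers supplied by the hypothesis. All other cases are mechanical and use only the shape of the redex together with the freshness-disjointness assumptions supplied in the statement. Once the case analysis is complete, the two conclusions (the no-spawn and the spawn case) follow directly by inspecting which rules can produce a singleton versus a doubleton pool.
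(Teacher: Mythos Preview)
Your proposal is correct and follows the same approach as the paper: a case analysis on the transition derivation, with the allowed-condition and thread-creation cases singled out as the only non-routine ones. Your treatment of each case matches the paper's, and your handling of freshness via the disjointness hypotheses is exactly right.

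One remark: you spend effort worrying about whether the domain decoration $d$ on the second transition matches, splitting into sub-cases on whether $\Upsilon(m)$ is low. The paper's proof does not address this at all---it simply writes $\xarr{F}{d}$ for the second transition in the allowed-condition case and moves on, splitting only on whether $W(T_2(m)) \klpreceq A$. Your observation that $d = T_1(m)$ need not equal $T_2(m)$ when the thread's position is high is valid, and strictly speaking exposes a loose point in the lemma's statement; but since downstream uses of the lemma (the bisimulation arguments) only need the existence of a matching step under $\rarr$, the decoration mismatch is harmless. So your extra care is justified but not needed for the paper's purposes. Also, you can drop the mention of the pool-composition rule: the lemma is stated for a singleton pool $\{M^m\}$, so that rule never fires.
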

\begin{proof}
By case analysis on the proof of $W \vdash \iconft {T_1} {\{{M}^m\}} {S_1}$ $\xarr{F}{d}$ $\iconft {T_1'}{P_1'}{S_1'}$. In most cases, this transition does not modify or depend on the state $\confd {T_1}{S_1}$, and we may let $P_2'=P_1'$ and $\confd{T_2'}{S_2'}=\confd{T_2}{S_2}$. %
\begin{description}
\item [$\boldsymbol{M=\cC{E}{\allowed A {N_t} {N_f}}}$ \textbf{and} $\boldsymbol{W(T_1(m)) \klpreceq A}$]  Here, $P_1'=\{\cC{E}{N_t}^m\}$, $F = \extrf{\cE{E}}$, and $\confd{T_1'}{S_1'}=\confd{T_1}{S_1}$.  There are two possible cases:
\begin{itemize}
\item If $W(T_2(m)) \klpreceq A$, then $W \vdash \iconft {T_2}{\{{M}^m\}}{S_2} \xarr{F}{d} \iconft{T_2}{\{{N_t}^m\}}{S_2}$.  
\item If $W(T_2(m)) \not\klpreceq A$, then $W \vdash \iconft {T_2}{\{{M}^m\}}{S_2} \xarr{F}{d} \iconft{T_2}{\{{N_f}^m\}}{S_2}$.  
\end{itemize}
Clearly, in both cases, by assumption, $\confd {T_1}{S_1} \memeqF{\secmap{1},\Upsilon}{F \klmeet F'}{\low} \confd {T_2}{S_2}$.

The case where ${W(T_1(m)) \not\klpreceq A}$ is analogous.
\item [$\boldsymbol{M=\cC{E}{\threadnat {k} {N} {d'}}}$]  Here, for a thread name $n$, $P_1'=\{\cC{E}{\nil}^m, {N}^{n}\}$, $F = \extrf{\cE{E}}$, $T_1'=T_1 \cup \{n \mapsto d'\}$, and ${S_1'}={S_1}$.  Since we assume that $n \notin \dom{T_2}$, we also have that $W \vdash \iconft {T_2}{\{{M}^m\}}{S_2}$ $\xarr{F}{d}$ $\iconft{T_2 \cup \{n \mapsto d'\}}{\{{\cC{E}{\nil}}^m, N^n\}}{S_2}$.  Clearly, $\confd {T_1 \cup \{n \mapsto d'\}}{S_1}$ $\memeqF{\secmap{1},{{\Upsilon}}}{F \klmeet F'}{\low}$ $\confd {T_2 \cup \{n \mapsto d'\}}{S_2}$. \qedhere
  \end{description}
\end{proof}

\hide{
The following result states that, if the evaluation of a thread $M^m$ differs in the context of two distinct states while not creating two distinct reference names or thread names, this is because either $M^m$ is performing a dereferencing operation, which yields different results depending on the memory, or because $M^m$ is testing the allowed policy. %

\begin{lem}[Splitting Computations -- Lemma~\ref{prop-iflow-split}]\label{app-prop-iflow-split}\text{}\\ 
If we have $W \vdash \iconft{T_1}{\{M^m\}}{S_1}$ $\xarr{F}{d}$ $\iconft{T_1'}{P_1'}{S_1'}$ and $W \vdash \iconft{T_2}{\{M^m\}}{S_2}$ $\xarr{F'}{d}$ $\iconft{T_2'}{P_2'}{S_2'}$ with ${P_1}' \neq {P_2}'$, then $P_1' = {\{{M_1}'^m\}}$, $P_2' = {\{{M_2}'^m\}}$ and either:
\begin{itemize}
\item $\exists \cE{E}, a$ such that $F=\secpolcon{\cE{E}}=F'$, $M=\cC{E}{\deref {a}}$, and $M_1'=\cC{E}{S_1(a)}$, $M_2'=\cC{E}{S_2(a)}$ with $\confd{T_1'}{S_1'}=\confd{T_1}{S_1}$ and $\confd{T_2'}{S_2'}=\confd{T_2}{S_2}$, or
\item $\exists \cE{E}, \bar F, {N_t}, {N_f}$ such that $F=\secpolcon{\cE{E}}=F'$, $M=\cC{E}{\allowed {\bar F}{N_t}{N_f}}$, and $T_1(m) \neq T_2(m)$ with $\confd{T_1'}{S_1'}=\confd{T_1}{S_1}$ and $\confd{T_2'}{S_2'}=\confd{T_2}{S_2}$.
\end{itemize}
\end{lem}
\begin{proof}
By case analysis on the transition $W \vdash \iconft{T_1}{\{M^m\}}{S_1}$ $\xarr{F}{d}$ $\iconft{T_1'}{P_1'}{S_1'}$.
Note that the only rules that depend on the state are those for the reduction of $\cC{E}{\deref {a}}$ and of $\cC{E}{\allowed {F'}{N_t}{N_f}}$.  
\end{proof} 
}

\paragraph{\emph{Effects.}}

\begin{lem}[Update of Effects]\label{app-prop-iflow-updateeffects}\text{}    %
\begin{enumerate}
\item If $\typenv \byint{j}{\secmap{}}{F} {\EC {\deref a}} : s, \tau$ then $\secmap{1}(a) \preceq s.r$.
\item If $\typenv \byint{j}{\secmap{}}{F} {\EC{\assign a {V}}} : s, \tau$, then $s.w \preceq \secmap{1}(a)$.
\item If $\typenv \byint{j}{\secmap{}}{F} {\EC{\rfr {l,\theta} V}} : s, \tau$, then $s.w \preceq l$.
\item If $\typenv \byint{j}{\secmap{}}{F} {\EC{\threadnat l M {d'}}} : s, \tau$, then $s.w \preceq l$.  %
\item If $\typenv \byint{j}{\secmap{}}{F} {\EC{\allowed F {N_t}{N_f}}} : s, \tau$, then $j \preceq s.r, s.t$.
\end{enumerate}
\end{lem}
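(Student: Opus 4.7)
The proof plan is to combine the Replacement Lemma (Lemma~\ref{app-prop-iflow-repllemma}) with a direct inspection of the typing rule applied at the root of each of the five sub-expressions under focus. Concretely, for each part, I would first invoke Replacement on $\typenv \byint{j}{\secmap{}}{F} \EC{N} : s, \tau$ (where $N$ is one of $\deref a$, $\assign a V$, $\rfr{l,\theta}V$, $\threadnat l M {d'}$, or $\allowed F {N_t}{N_f}$) to obtain a typing $\typenv \byint{j}{\secmap{}}{F \klmeet \extrf{\cE{E}}} N : \bar s, \bar\tau$ with $\bar s \preceq s$ in the effect lattice. Unfolding the effect order, this gives three separate inequalities: $\bar s.r \preceq s.r$, $s.w \preceq \bar s.w$ (contravariant), and $\bar s.t \preceq s.t$. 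Hence the entire lemma reduces to verifying the five inequalities at the level of $\bar s$ for the sub-expression itself, after which transitivity with the appropriate component of $\bar s \preceq s$ yields the desired inequality on $s$.

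The five verifications at the level of $\bar s$ are essentially immediate from the respective typing rules. For (1), the rules $\dereftyp$ and $\loctyp$ force $\bar s$ to be of the form $s' \join \eft{\secmap{1}(a)}{\top}{\bot}$, whence $\secmap{1}(a) \preceq \bar s.r$. For (2), rules $\assigntyp$ and $\loctyp$ cause $\bar s$ to include the summand $\eft{\bot}{\secmap{1}(a)}{\bot}$, giving $\bar s.w \preceq \secmap{1}(a)$. For (3), rule $\reftyp$ yields $\bar s$ of the form $s' \join \eft{\bot}{l}{\bot}$, so $\bar s.w \preceq l$. For (4), rule $\migtyp$ forces $\bar s = \eft{\bot}{l \meet s'.w}{\bot}$, whence $\bar s.w \preceq l$. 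For (5), rule $\allowtyp$ yields $\bar s$ of the form $s_t \join s_f \join \eft{j}{\top}{j}$, so $j \preceq \bar s.r$ and $j \preceq \bar s.t$.

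Propagating each conclusion through the context is then a one-line transitivity argument using the appropriate direction of $\bar s \preceq s$: in (1) and (5) the reading (and termination) components move in the monotone direction, while in (2), (3) and (4) we exploit the contravariance $s.w \preceq \bar s.w$ to carry the bound on the writing component outward. No induction on $\cE{E}$ is needed once Replacement has been applied, since Replacement itself was proved by induction on $\cE{E}$ and already encodes the necessary structural lifting of the effect.

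The only subtlety, and the easiest place to slip up, is the contravariance of the writing component in the effect ordering; it is essential that $s.w \preceq \bar s.w$ (and not the reverse), so that a bound $\bar s.w \preceq \secmap{1}(a)$ (resp.\ $\bar s.w \preceq l$) actually transports to $s.w \preceq \secmap{1}(a)$ (resp.\ $s.w \preceq l$). Aside from being careful about this direction, the argument is mechanical and proceeds uniformly across the five parts.
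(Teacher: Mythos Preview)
Your argument is correct. The paper's own proof is terse---``By induction on the structure of $\cE{E}$''---and proceeds by directly traversing the evaluation context, verifying at each layer that the relevant effect component is propagated in the right direction. You instead factor this through the Replacement Lemma (Lemma~\ref{app-prop-iflow-repllemma}), which has already absorbed that induction, and then read off the required bound from a single application of the typing rule at the root of the redex. Both routes rely on the same underlying observation (monotone propagation of $r,t$ and contravariant propagation of $w$ through contexts), but your approach trades a bespoke induction for reuse of existing machinery, which is cleaner and avoids repeating the case analysis on context constructors; the paper's direct induction is more self-contained and does not depend on Replacement being stated in the precise form needed. Your handling of the contravariant writing component and the preservation of the $j$ parameter under Replacement (needed for part~(5)) are both correct.
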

\begin{proof}  By induction on the structure of $\cE{E}$.  \end{proof}

\paragraph{\emph{High Expressions.}}

\hide{
We can identify a class of threads that have the property of never performing any change in the ``low'' part of the memory.  These are classified as being ``high'' according to their behavior\footnote{The notion of ``operationally high thread'' that we define here should not not be confused with the notion of ``high thread''.  The former refers to the security level that is associated with a thread, while the latter refers to the changes that the thread performs on the state.}:

\begin{defi}[Operationally High Threads -- Definition~\ref{def-operationallyhigh}] \label{app-def-operationallyhigh}
Given an allowed-policy mapping $W$, a reference labeling $\secmap{}$, a thread labeling $\Upsilon$, a flow policy $F$ and a security level $l$, a set of threads $\fH$ is a set of \emph{operationally $(W,\secmap{},\Upsilon,\typenv,F,\ell)$-high threads} if the following holds for all $M^m \in \fH$, and for all states $\confd T S$ where $S$ is $({\secmap{}},{\typenv})$-compatible:
\[W \semvdash \iconft{T}{\{M^m\}}{S} \xarr{F'}{d} \iconft{T'}{P'}{S'} \text{ implies } \confd T S \memeqF{\secmap{1},\Upsilon}{F}{l} \confd {T'} {S'} \text{ and } P' \subseteq \fH\]
Furthermore, $S'$ are still $(\secmap{},\typenv)$-compatible.
The largest set of operationally $(W,\secmap{},\Upsilon,\typenv,F,\ell)$-high threads is denoted by~$\semhigh^{W,\secmap{},\Upsilon,\typenv}_{F,\ell}$.  We then say that a thread $M^m$ is operationally $(W,\secmap{},\Upsilon,\typenv,F,\ell)$-high, if $M^m \in \semhigh^{W,\secmap{},\Upsilon,\typenv}_{F,\ell}$.
\end{defi}

Remark that for any $W$, $\secmap{}$, $\Upsilon$, $\typenv$, $F$ and $\ell$, the set of threads with values as expressions is a set of operationally $(W,\secmap{},\Upsilon,\typenv,F,\ell)$-high threads.  Furthermore, the union of a family of sets of operationally $(W,\secmap{},\Upsilon,\typenv,F,\ell)$-high threads is a set of operationally $(W,\secmap{},\Upsilon,\typenv,F,\ell)$-high threads. Consequently, $\semhigh^{W,\secmap{},\Upsilon,\typenv}_{F,\ell}$ exists. %
Notice that if $F'\subseteq F$, then any operationally $(W,\secmap{},\Upsilon,\typenv,F,\ell)$-high thread is also operationally $(W,\secmap{},\Upsilon,\typenv,F',\ell)$-high.

}

Some expressions can be easily classified as ``high'' by the type system, which only considers their syntax.  These cannot perform changes to the ``low'' memory simply because their code does not contain any instruction that could perform them.  Since the \mentionind{security effect!writing effect}{writing effect} is intended to be a lower bound to the level of the references that the expression can create or assign to, expressions with a high writing effect can be said to be \emph{syntactically high}:
\begin{defi}[Syntactically High Expressions] \label{app-def-synhigh} %
An expression $M$ is syntactically $(\secmap{},\typenv,j,F,l)$-high if there exists $s,\tau$ such that ${\typenv \byint{j}{\secmap{}}{F} M : s, \tau}$ with $s.w \not\Fpreceq{F} l$. 
The expression $M$ is a syntactically $(\secmap{},\typenv,j,F,l)$-high function if there exists $j',F',s,\tau,\sigma$ such that ${\typenv \byint{j'}{\secmap{}}{F'} M: \bot, \tau \xarr{j,F}{s} \sigma}$ with $s.w\not\Fpreceq{F}~l$.
\end{defi}

Syntactically high expressions have an operationally high behavior.
\begin{lem}[High Expressions] \label{app-prop-iflow-highexpr} %
If $M$ is a syntactically $(\secmap{},\typenv,j,F,l)$-high expression, and $\Upsilon$ is such that $\Upsilon(m)=j$, then, for all allowed-policy mappings $W$, the thread $M^m$ is an operationally $(W,\secmap{},\Upsilon,\typenv,F,l)$-high thread.
\end{lem}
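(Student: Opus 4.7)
Following the hint in the paper, the plan is to exhibit an explicit set of operationally high threads containing every syntactically high thread, and verify the closure condition of Definition~\ref{app-def-operationallyhigh}. Concretely, fix $W$, $\secmap{}$, $\Upsilon$, $\typenv$, $F$ and $\ell$, and define
\[
\fH \;=\; \{\, M^m \mid \exists j.\ \Upsilon(m)=j \text{ and } M \text{ is syntactically } (\secmap{},\typenv,j,F,\ell)\text{-high}\,\}.
\]
Assuming $\fH$ is a set of operationally $(W,\secmap{},\Upsilon,\typenv,F,\ell)$-high threads, maximality of $\semhigh^{W,\secmap{},\Upsilon,\typenv}_{F,\ell}$ gives $\fH \subseteq \semhigh^{W,\secmap{},\Upsilon,\typenv}_{F,\ell}$, which is exactly the statement.

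To verify that $\fH$ meets the closure requirement, pick $M^m \in \fH$ and a $(\secmap{},\typenv)$-compatible state $\confd T S$, and suppose $W \vdash \iconft{T}{\{M^m\}}{S} \xarr{F'}{d} \iconft{T'}{P'}{S'}$. Writing $M = \cC{E}{\bar M}$ for the redex, the first step is a case analysis on $\bar M$, using the Replacement Lemma~\ref{app-prop-iflow-repllemma} to recover a typing $\typenv \byint{j}{\secmap{}}{F \klmeet \extrf{\cE{E}}} \bar M : \bar s, \bar\tau$ with $\bar s \preceq s$ and $s.w \not\Fpreceq{F} \ell$ by the hypothesis on $M$. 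The two interesting families are:
\begin{itemize}
\item \emph{State-modifying redexes} (assignment, reference creation): here the Update of Effects Lemma~\ref{app-prop-iflow-updateeffects} yields $\bar s.w \preceq \secmap{1}(a)$ (or $\bar s.w \preceq l$ for a fresh allocation), so the reference written/created lies above $\ell$ with respect to $F$, giving $\confd T S \memeqF{\secmap{1},\Upsilon}{F}{\ell} \confd{T'}{S'}$.
\item \emph{Thread-spawning redex} $\threadnat{k}{N}{d'}$: Update of Effects gives $\bar s.w \preceq k$, hence $k \not\Fpreceq{F} \ell$, so the position-tracker update is invisible at $\ell$; moreover the rule $\migtyp$ used inside the typing derivation types $N$ with $\typenv \byint{k}{\secmap{}}{\kltop} N : s_N, \unit$ where $s_N.w \preceq k$, so $N^n$ is syntactically $(\secmap{},\typenv,k,F,\ell)$-high and therefore belongs to $\fH$ (using that $\Upsilon(n)=k$ by the semantic rule).
\end{itemize}
All other redexes do not touch $S$ or $T$, so the low-equality is trivial.

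It remains to check that the residual thread $M'^m = \cC{E}{\bar M'}$ is itself in $\fH$. This is where Subject Reduction (Proposition~\ref{app-prop-iflow-subjectreduction}) does the work: it delivers an effect $s' \preceq s$ with $\typenv \byint{j}{\secmap{}}{F} M' : s', \tau$, and since $s.w \preceq s'.w$ componentwise (recall the writing level is contravariant), $s'.w \not\Fpreceq{F} \ell$, so $M'$ is still syntactically high. Subject Reduction also preserves $(\secmap{},\typenv)$-compatibility of $S'$, closing all the bookkeeping obligations.

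The main obstacle I anticipate is the thread-spawning case: one must carefully align the confidentiality level $k$ annotating the spawned thread with the writing level of the parent's effect in order to conclude that the child thread is syntactically high with respect to the \emph{same} parameters $(F,\ell)$. This relies critically on the condition $s.w \preceq k$ in the $\migtyp$ rule, together with the side condition $\Upsilon(n)=k$ imposed by the operational semantics when creating a new name. Once this alignment is in place, all other cases are routine applications of Replacement, Update of Effects and Subject Reduction.
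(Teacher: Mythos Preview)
Your proposal is correct and follows essentially the same route as the paper: define $\fH$ as the set of syntactically high threads, use Update of Effects (Lemma~\ref{app-prop-iflow-updateeffects}) for the state-changing redexes, and close with Subject Reduction (Proposition~\ref{app-prop-iflow-subjectreduction}) for the residual thread and for memory compatibility. One small slip in the spawning case: the $\migtyp$ rule does \emph{not} give $s_N.w \preceq k$; what you actually get (via Replacement or directly from Subject Reduction) is that the parent's writing effect satisfies $s.w \preceq k \meet s_N.w$, hence $s.w \preceq s_N.w$, which is what yields $s_N.w \not\Fpreceq{F} \ell$ and puts $N^n$ in $\fH$ --- the paper obtains this directly from the $s'' \preceq s$ clause of Subject Reduction together with Remark~\ref{app-remiflow-typeffval} to move from flow policy $\kltop$ to $F$.
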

\begin{proof} %
We show that, for any given allowed-policy mapping $W$, if $\Upsilon(m)=j$, then the set 
\[ \{ M^m ~|~ \exists j ~.~ M \textit{ is syntactically $(\secmap{},\typenv,j,F,l)$-high}\} \]
is a set of operationally $(W,\secmap{},\Upsilon,\typenv,F,\ell)$-high threads, i.e.: if $M$ is syntactically $(\secmap{},\typenv,j,F,l)$-high, that is if there exists $s,\tau$ such that ${\typenv \byint{j}{\secmap{}}{F} M : s, \tau}$ with $s.w \not\Fpreceq{F} l$, and, for some policy mapping $W$ and all states $\confd T S$ such that $S$ is $({\secmap{}},{\typenv})$-compatible, if $W \semvdash \iconft{T}{\{M^m\}}{S} \xarr{F'}{d} \iconft{T'}{\{M'^m\} \cup P}{S'}$ then $\confd T S \memeqF{\secmap{1},\Upsilon}{F}{l} \confd {T'} {S'}$.  This is enough since, by Subject Reduction (Theorem \ref{app-prop-iflow-subjectreduction}), $M'$ is syntactically $(\secmap{},\typenv,j,F,l)$-high and $S'$ is still $(\secmap{},\typenv)$-compatible, and if $P=\{N^n\}$ for some expression $N$ and thread name $n$, then by Remark~\ref{app-remiflow-typeffval} also $N$ is syntactically $(\secmap{},\typenv,k,F,l)$-high for some $k$.
We proceed by cases on the proof of the transition $W \semvdash \iconft{T}{\{M^m\}}{S}$ $\xarr{F'}{d}$ $\iconft{T'}{\{M'^m\} \cup P}{S'}$. The lemma is trivial in all the cases where $\confd{T}{S} = \confd{T'}{S'}$.
\begin{description}
\item [$\boldsymbol{M=\cC{E}{\threadnat k N {d'}}}$]  Here $P = \{N^n\}$~for some thread name $n$, $S'=S$, $T'=\update n {d'} T$ and $\Upsilon(n)=k$.  By Update of Effects (Lemma~\ref{app-prop-iflow-updateeffects}), $\weff{s}\preceq k$. This implies $k\not\Fpreceq{F}l$, thus $\Upsilon(n)\not\Fpreceq{F}l$, hence $T'\memeqF{\secmap{1},{\Upsilon}}{F}{l}T$. \qedhere
\end{description}
\end{proof}

\subsubsection{Soundness}

\paragraph{\emph{Behavior of ``Low''-Terminating Expressions.}}

According to the intended meaning of the termination effect, the termination or non-termination of expressions with low termination effect should only depend on the low part of the state.  In other words, two computations of a same typable thread running under two ``low''-equal states should either both terminate or both diverge.  In particular, this implies that termination-behavior of these expressions cannot be used to leak ``high'' information when composed with other expressions (via \mentionind{security leak!termination leak}{termination leaks}).

We aim at proving that any typable thread $M^m$ that has a low-termination effect always presents the same behavior according to a \emph{strong} bisimulation on low-equal states:  if two continuations $M_1^m$ and $M_2^m$ of $M^m$ are related, and if $M_1^m$ can perform an execution step over a certain state, then $M_2^m$ can perform the same low changes to any low-equal state in precisely one step, while the two resulting continuations are still related.
  This implies that any two computations of $M^m$ under low-equal states should have the same ``length'', and in particular they are either both finite or both infinite.  To this end, we design a reflexive binary relation on expressions with low-termination effects that is closed under the transitions of Guaranteed Transitions (Lemma \ref{app-prop-iflow-guarantee}). 

The inductive definition of $\fT^{\secmap{},\typenv}_{j,F,\low}$ is given in Figure~\ref{app-figfT}.  Notice that it is a symmetric relation.  In order to ensure that expressions that are related by $\fT^{\secmap{},\typenv}_{j,F,\low}$ perform the same changes to the low memory, its definition requires that the references that are created or written using (potentially) different values are high.
\begin{figure}
\figline
\begin{defi}[$\fT^{\secmap{},\typenv}_{j,F,\low}$]  
 We have that $M_1 ~\fT^{\secmap{},\typenv}_{j,F,\low}~ M_2$ if ${\typenv \byint{j}{\secmap{}}{F} M_1 : s_1, \tau}$ and ${\typenv \byint{j}{\secmap{}}{F} M_2 : s_2, \tau}$ for some $s_1$, $s_2$ and $\tau$ with $s_1.t \Fpreceq{F} \low$ and $s_2.t \Fpreceq{F} \low$ and one of the following holds:
\begin{description}
\item[Clause 1]$M_1$ and $M_2$ are both values, or
\item[Clause 2]$M_1=M_2$, or   %
\item[Clause 3]$M_1=\seq {\bar M_1}{\bar N}$ and $M_2=\seq {\bar M_2}{\bar N}$ where $\bar M_1~\fT^{\secmap{},\typenv}_{j,F,\low}~\bar M_2$, or %
\item[Clause 4]$M_1=\rfr{l,\theta} {\bar M_1}$ and $M_2=\rfr{l,\theta}{\bar M_2}$ where  $\bar M_1~\fT^{\secmap{},\typenv}_{j,F,\low}~\bar M_2$, and $l \not\Fpreceq{F} \low$, or
\item[Clause 5]$M_1=\deref {\bar M_1}$ and $M_2=\deref{\bar M_2}$ where  $\bar M_1~\fT^{\secmap{},\typenv}_{j,F,\low}~\bar M_2$, or
\item[Clause 6]$M_1=\assign{\bar{M}_1} {\bar{N}_1}$ and $M_2=\assign{\bar{M}_2}{\bar{N}_2}$ with $\bar{M}_1~\fT^{\secmap{},\typenv}_{j,F,\low}~\bar{M}_2$, and $\bar{N}_1~\fT^{\secmap{},\typenv}_{j,F,\low}~\bar{N}_2$, and $\bar{M}_1, \bar{M}_2$ both have type $\rfrt {{\theta}}{l}$ for some ${\theta}$ and $l$ such that ${l} \not\Fpreceq{F} \low$, or
\item[Clause 7]$M_1=\flow{F'}{\bar M_1}$ and $M_2=\flow{F'}{\bar M_2}$ with $\bar M_1~\fT^{\secmap{},\typenv}_{j,F \klmeet F',\low}~\bar M_2$. %
\end{description}
\end{defi}

\caption{The relation $\fT^{\secmap{},\typenv}_{j,F,\low}$} \label{app-figfT}
\figline
\end{figure}

\begin{rem}\label{app-remiflow-valT}
If for $\secmap{}$, $\typenv$, $j$, $F$ and $\low$ we have $M_1 ~\fT^{\secmap{},\typenv}_{j,F,\low}~ M_2$ and $M_1\in\Val$, then $M_2\in\Val$.
\end{rem}

From the following lemma one can conclude that the relation $\fT^{\secmap{},\typenv}_{j,F,\low}$ relates the possible outcomes of expressions that are typable with a low termination effect, and that perform a high read over low-equal memories.
\begin{lem} \label{app-prop-iflow-forkt}
If %
  $\typenv\byint{j}{\secmap{}}{F} \EC{\deref {a}}:s,\tau$ with ${s}.t\Fpreceq{F} \low$ and $\secmap{1}(a)\not\preceq_{F\klmeet\secpolcon{\cE{E}}} \low$, 
then for any values $V_0,V_1 \in \Val$ such that $\typenv\byint{j}{\secmap{}}{\kltop} V_i:\bot,\theta$ we have $\cC{E}{V_0}~\fT^{\secmap{},\typenv}_{j,F,\low}~\cC{E}{V_1}$.
\end{lem}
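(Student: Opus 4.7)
The plan is to prove this by structural induction on the evaluation context $\cE{E}$. The Replacement Lemma (Lemma~\ref{app-prop-iflow-repllemma}) is the key tool, as it both decomposes the typing of $\cC{E}{\deref a}$ to yield a typing of $\deref a$ under the flow policy $F \klmeet \extrf{\cE{E}}$, and allows one to reconstruct a typing of $\cC{E}{V_i}$ after substituting the values $V_i$, which by hypothesis carry the appropriate type $\theta$. The weakening of effects guaranteed by Replacement ensures that $\cC{E}{V_i}$ is typable with a termination effect bounded by $s.t \Fpreceq{F} \low$, so each of them meets the low-termination requirement of $\fT^{\secmap{},\typenv}_{j,F,\low}$.

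In the base case ($\cE{E} = []$), we have $\cC{E}{V_i} = V_i$, which are values, so Clause~1 of $\fT^{\secmap{},\typenv}_{j,F,\low}$ applies directly. For the inductive cases of flow declaration, sequencing, and dereference ($\cE{E} = \flow{F'}{\cE{E}'}$, $\cE{E} = \seq{\cE{E}'}{N}$, $\cE{E} = \deref{\cE{E}'}$), I would invoke the induction hypothesis on $\cE{E}'$---in the flow declaration case using the tightened flow policy $F \klmeet F'$, since $\extrf{\flow{F'}{\cE{E}'}} = F' \klmeet \extrf{\cE{E}'}$ and so the premise on $\secmap{1}(a)$ is preserved---and then close up with Clauses~7, 3, and 5 respectively.

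For the reference creation case $\cE{E} = \rfr{l,\theta'}{\cE{E}'}$ and the assignment cases $\cE{E} = \assign{\cE{E}'}{N}$ or $\cE{E} = \assign{V}{\cE{E}'}$, the additional side condition of Clauses~4 and~6 must be verified: the declared level $l$ (respectively the level of the reference being assigned) must fail $\Fpreceq{F \klmeet \extrf{\cE{E}}} \low$. This follows by combining the premises of rules $\reftyp$ and $\assigntyp$ (which force the reading effect of the sub-expression to be bounded by $l$) with the Update of Effects lemma (Lemma~\ref{app-prop-iflow-updateeffects}), which guarantees that the reading effect of any expression containing $\deref a$ is at least $\secmap{1}(a)$; together with the hypothesis $\secmap{1}(a) \not\Fpreceq{F \klmeet \extrf{\cE{E}}} \low$, this yields the required condition.

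The main obstacle---and what actually drives the precision of the lemma---is handling the remaining context shapes: function application ($\app{\cE{E}'}{N}$ or $\app{V}{\cE{E}'}$) and conditional testing $\cond{\cE{E}'}{N_t}{N_f}$. There is no clause of $\fT^{\secmap{},\typenv}_{j,F,\low}$ covering these, but fortunately these cases turn out to be vacuous under the stated hypotheses. Indeed, rules $\apptyp$ and $\condtyp$ propagate the reading effect of their operative sub-expression into the termination effect of the whole expression, so the reading effect of $\cC{E'}{\deref a}$---at least $\secmap{1}(a)$ by Update of Effects---would force $s.t \not\Fpreceq{F \klmeet \extrf{\cE{E}}} \low$, and a fortiori $s.t \not\Fpreceq{F} \low$, contradicting the hypothesis. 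Hence no admissible $\cE{E}$ can wrap the $\deref a$ in such a context, and the induction closes.
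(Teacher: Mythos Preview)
Your proposal is correct and follows the same approach as the paper: structural induction on the evaluation context $\cE{E}$, using Replacement (Lemma~\ref{app-prop-iflow-repllemma}) to decompose and reassemble typings, and Update of Effects (Lemma~\ref{app-prop-iflow-updateeffects}) to discharge the side conditions of Clauses~4 and~6 and to rule out the application and conditional contexts as vacuous. The paper's own proof is extremely terse---it states only ``by induction on the structure of $\cE{E}$'' and exhibits the flow-declaration case---so your write-up is in fact considerably more explicit than the original, including the key observation (left implicit in the paper) that $\apptyp$ and $\condtyp$ feed the reading effect of the scrutinee into the termination effect, forcing a contradiction with $s.t \Fpreceq{F} \low$ whenever $\deref a$ sits in such a position.

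One minor imprecision: the side condition required by Clauses~4 and~6 is $l \not\Fpreceq{F} \low$ (with the current induction parameter $F$), not $l \not\Fpreceq{F \klmeet \extrf{\cE{E}}} \low$ as you wrote; but since the latter is stronger and your argument via Update of Effects actually establishes it, this does not affect correctness.
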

\begin{proof} By induction on the structure of $\cE{E}$.
\begin{description}

\item [$\boldsymbol{\cC{E}{\deref {a}}= {\flow {F'}{\cC{E_1}{\deref {a}}}}}$]  By rule $\flowtyp$ we have $\typenv \byint{j}{\secmap{}}{F \klmeet F'} V : {\tef s {\tau}}$.  By induction hypothesis $\cC{E_1}{V_0}~\fT^{j}_{F \klmeet F',\low}~\cC{E_1}{V_1}$, so we conclude by Replacement (Lemma~\ref{app-prop-iflow-repllemma}) and Clause 7.
Therefore $\bar{s}.t \Fpreceq{F} \low$, and since $\secmap{1}(a)\not\Fpreceq{F \klmeet \cE{E}} \low$ implies $\secmap{1}(a)\not\Fpreceq{F \klmeet \cE{E_1}} \low$, then by induction hypothesis we have $\cC{E_1}{V_0}~\fT^{\secmap{},\typenv}_{j,F,\low}~\cC{E_1}{V_1}$.  By Replacement (Lemma~\ref{app-prop-iflow-repllemma}) and Clause 8 we can conclude. \qedhere
\end{description}
\end{proof}

We can now prove that $\fT^{\secmap{},\typenv}_{j,F,\low}$ behaves as a ``kind of'' \mentionind{bisimulation!for non-disclosure for networks}{strong bisimulation}:
\begin{prop}[Strong Bisimulation for Low-Termination]\label{app-prop-iflow-strong}\text{}\\ 
Consider a given allowed-policy mapping $W$, reference labeling $\secmap{}$, thread labeling $\Upsilon$, typing environment $\typenv$, flow policy $F$, security level $\low$, two expressions $M_1$ and $M_2$ and thread name $m$.  If, for states $\confd {T_1}{S_1}$,$\confd {T_2}{S_2}$ with $S_1, S_2$ being $({\secmap{}},{\typenv})$-compatible we have that: %
\myexample{
M_1~\fT^{\secmap{},\typenv}_{\Upsilon(m),F,\low}~M_2 ~\textit{and}~ %
W \vdash \iconft {T_1} {\{{M_1}^m\}} {S_1} \xarr{F'}{d} \iconft {T_1'}{P_1'}{S_1'} ~\textit{and}~ \confd {T_1}{S_1} \memeqF{\secmap{1},\Upsilon}{F \klmeet F'}{\low} \confd {T_2}{S_2}, 
}
with $(\dom{{S_1}'}-\dom{S_1})\cap\dom{S_2}=\emptyset$, %
then there exist $P_2'$, $T_2'$ and $S_2'$ such that 
\myexample{
W \vdash \iconft{T_2}{\{{M_2}^m\}}{S_2}\xarr{F'}{d} \iconft{T_2'}{P_2'}{S_2'} ~\textit{and}~ M_1'~\fT^{\secmap{},\typenv}_{j,F,\low}~M_2' ~\textit{and}~ \confd {T_1'}{S_1'} \memeqF{\secmap{1},\Upsilon}{F \klmeet F'}{\low} \confd {T_2'}{S_2'}
}
Furthermore, if $P_1'=\{M_1'^m\}$ then $P_2'=\{M_2'^m\}$, if $P_1'=\{M_1'^m,N^n\}$ for some thread $N^n$ and $(\dom{{T_1}'}-$ $\dom{T_1})\cap\dom{T_2}=\emptyset$ %
then $P_2'=\{M_2'^m,N^n\}$, and $S_1', S_2'$ are still $({\secmap{}},{\typenv})$-compatible.
\end{prop}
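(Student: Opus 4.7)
The plan is to prove this by induction on the derivation of $M_1 \fT^{\secmap{},\typenv}_{\Upsilon(m),F,\low} M_2$, i.e.\ by case analysis on which clause of Figure~\ref{app-figfT} was used, combined with an examination of how the transition $W \vdash \iconft{T_1}{\{M_1^m\}}{S_1} \xarr{F'}{d} \iconft{T_1'}{P_1'}{S_1'}$ is derived. Clause~1 is vacuous since values do not reduce. Clauses~3--7 are essentially structural: the redex of $M_1$ lies in a sub-expression that is also related by $\fT$ to the corresponding sub-expression of $M_2$; we would appeal to the induction hypothesis on the inner relation, reconstruct the matching move of $M_2$ via Replacement (Lemma~\ref{app-prop-iflow-repllemma}), and rebuild the appropriate clause. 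For Clause~4 (reference creation) and Clause~6 (assignment) we additionally use that the allocated/written references have level $l \not\Fpreceq{F} \low$, so low-equality of the resulting stores is preserved even when the values stored differ; Subject Reduction (Proposition~\ref{app-prop-iflow-subjectreduction}) ensures that the new stores remain compatible and that the low-termination effect is preserved by continuation.

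The core of the proof lies in Clause~2, where $M_1=M_2$. Here the matching transition of $M_2^m$ is given by Guaranteed Transitions (Lemma~\ref{app-prop-iflow-guarantee}), yielding some $M_2'^m$ and a state $\confd{T_2'}{S_2'}$ with $\confd{T_1'}{S_1'} \memeqF{\secmap{1},\Upsilon}{F\klmeet F'}{\low} \confd{T_2'}{S_2'}$. If $M_1'=M_2'$ then Clause~2 applies at the next step. Otherwise, Splitting Computations (Lemma~\ref{app-prop-iflow-split}) tells us that the two successors result from the reduction of either a $\deref{a}$ in an evaluation context, or an $\allowed{\bar F}{N_t}{N_f}$ in an evaluation context. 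The main obstacle, and the heart of the argument, is to rule out the allowed-condition case: by Update of Effects (Lemma~\ref{app-prop-iflow-updateeffects}) such a redex forces $\Upsilon(m) = j \Fpreceq{F} s_1.t \Fpreceq{F} \low$; but then, since $T_1 \memeqF{\secmap{1},\Upsilon}{F\klmeet F'}{\low} T_2$, we must have $T_1(m)=T_2(m)$, contradicting the location divergence that Lemma~\ref{app-prop-iflow-split} requires for the split.

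In the remaining dereference subcase, $M_1 = \EC{\deref{a}}$, $M_1' = \EC{S_1(a)}$ and $M_2' = \EC{S_2(a)}$. If $\secmap{1}(a) \Fpreceq{F\klmeet\extrf{\cE{E}}} \low$ then by low-equality $S_1(a)=S_2(a)$ and the two successors coincide, so Clause~2 still applies. Otherwise $a$ is high at this point and Lemma~\ref{app-prop-iflow-forkt} applies directly, producing $\EC{S_1(a)} \fT^{\secmap{},\typenv}_{\Upsilon(m),F,\low} \EC{S_2(a)}$; typability at $\unit$ of the two values (preserved by Subject Reduction and $({\secmap{}},{\typenv})$-compatibility of the stores) provides the typing premises required by the lemma. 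Finally, the ``furthermore'' clause on spawned threads $N^n$ follows because the only rule able to create a second thread is thread creation, which does not depend on the store and is therefore treated syntactically once the matching move is established; freshness of $n$ in $T_2$ is supplied by the disjointness hypothesis on $T_1'-T_1$ and $T_2$.
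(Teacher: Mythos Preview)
Your proposal is correct and follows essentially the same approach as the paper's proof: induction on the clause of $\fT$, with Clause~2 handled via Guaranteed Transitions and Splitting Computations (the allowed-condition branch being ruled out by combining the low-termination constraint with low-equality of the position trackers), and the remaining clauses handled structurally via the induction hypothesis. The only cosmetic differences are that the paper does not invoke Replacement for the structural clauses---the operational semantics on evaluation contexts already lifts the inner transition to the outer one---and it makes the value sub-case in Clause~7 explicit via Remark~\ref{app-remiflow-valT}.
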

\begin{proof} 
By case analysis on the clause by which $M_1~\fT^{\secmap{},\typenv}_{j,F,\low}~M_2$, and by induction on the definition of $\fT^{\secmap{},\typenv}_{j,F,\low}$.  In the following, we use Subject Reduction (Theorem~\ref{app-prop-iflow-subjectreduction}) to guarantee that the termination effect of the expressions resulting from $M_1$ and $M_2$ is still low with respect to $\low$ and $F$.  This, as well as typability (with the same type) for $j$, $F$ and ${\low}$, is a requirement for being in the $\fT^{\secmap{},\typenv}_{j,F,\low}$ relation.
\begin{description}

\item [Clause 2]  Here $M_1 = M_2$.  By Guaranteed Transitions (Lemma \ref{app-prop-iflow-guarantee}), then:
  \begin{itemize}
  \item If~$P_1' = \{M_1'^m\}$, and $(\dom{{S_1}'}-\dom{S_1})\cap\dom{S_2}=\emptyset$, %
then there exist $M_2'$, $T_2'$ and $S_2'$ such that $W \vdash \iconft {T_2}{\{{M}^m\}}{S_2} \xarr{F'}{d} \iconft{T_2'}{\{M_2'^m\}}{S_2'}$ and $\confd {T_1'}{S_1'} \memeqF{\secmap{1},\Upsilon}{F \klmeet F'}{\low} \confd {T_2'}{S_2'}$.  There are two cases to consider:
    \begin{description}
    \item [$\boldsymbol{M_2'=M_1'}$] Then we have $M_1'~\fT^{\secmap{},\typenv}_{j,F,\low}~M_2'$, by Clause 2 and Subject Reduction (Theorem \ref{app-prop-iflow-subjectreduction}).
    \item [$\boldsymbol{M_2' \neq M_1'}$] Then by Splitting Computations (Lemma \ref{app-prop-iflow-split}) we have two possibilities:  \\
      (1)~ there exist $\cE{E} $ and $a$ such that $M_1'=\cC{E}{S_1(a)}$, $F'=\extrf{\cE{E}}$, $M_2'=\cC{E}{S_2(a)}$, $\confd{T_1'}{S_1'}=\confd{T_1}{S_1}$ and $\confd{T_2'}{S_2'}=\confd{T_2}{S_2}$.  Since $S_1(a)\ne S_2(a)$, we have that $\secmap{1}(a)\not\Fpreceq{F\klmeet F'}\low$.  Therefore, $M_1'~\fT^{\secmap{},\typenv}_{j,F,\low}~M_2'$, by Lemma~\ref{app-prop-iflow-forkt} above. \\
      (2)~ there exists $\cE{E}$ such that $M_1'=$ $\cC{E}{\allowed {\bar F}{N_t}{N_f}}$, $F'=\secpolcon{\cE{E}}$, and $T_1(m) \neq T_2(m)$ with $\confd{T_1'}{S_1'}=\confd{T_1}{S_1}$ and $\confd{T_2'}{S_2'}=\confd{T_2}{S_2}$.  Since $T_1(m)\ne T_2(m)$, we have $\Upsilon(m) = j \not\Fpreceq{F}\low$, and by Update of Effects (Lemma~\ref{app-prop-iflow-updateeffects}) we have $j\Fpreceq{F} s.t$, so $s.t \not\Fpreceq{F}\low$, which contradicts the assumption.
    \end{description}
  \item If~$P_1' = \{M_1'^m,N^n\}$~for some expression $N$ and $(\dom{{T_1}'}-$ $\dom{T_1})\cap\dom{T_2}=\emptyset$, %
then for some $M$, $l$ and $d$ we have $M_1=\cC{E}{{\threadnat l M d}}$, and there exist $M_2'$, $T_2'$ and $S_2'$ such that $W \vdash \iconft {T_2}{\{{M}^m\}}{S_2} \xarr{F}{d} \iconft{T_2'}{\{M_2',N^n\}}{S_2'}$ and $\confd {T_1'}{S_1'}\memeqF{\secmap{1},{\Upsilon}}{F \klmeet F'}{\low} \confd {T_2'}{S_2'}$.  By Splitting Computations (Lemma \ref{app-prop-iflow-split}), necessarily $M_1'^m=M_2'^m$.  Then we have $M_1'~\fT^{\secmap{},\typenv}_{j,F,\low}~M_2'$, by Clause 2 and Subject Reduction (Theorem \ref{app-prop-iflow-subjectreduction}).
  \end{itemize}
\item [Clause 7] Here we have $M_1=\flow{\bar{F}}{\bar{M}_1}$ and $M_2=\flow{\bar{F}}{\bar{M}_2}$ and $\bar{M}_1~\fT^{\secmap{},\typenv}_{j,F \klmeet \bar F,\low}$ $\bar{M}_2$.  There are two cases.
  \begin{description}
    \item [$\boldsymbol{\bar{M}_1}$ can compute] In this case we have $M_1'=\flow{\bar{F}}{\bar{M}_1'}$ with $W \vdash \iconft{T_1} {\{{\bar{M}_1}^m\}} {S_1}$ $\xarr{F''}{d}$ $\iconft{T_1'}{\{\bar{M}_1'^m\}}{S_1'}$ with $F'=\bar{F} \klmeet F''$.  To use the induction hypothesis, there are three possible cases:
    \begin{itemize}
    \item If~$\bar P_1' = \{\bar{M}_1'^m \}$, and $(\dom{{S_1}'}-\dom{S_1})\cap\dom{S_2}=\emptyset$, %
then there exist $\bar{M}_2'$, $T_2'$ and $S_2'$ such that $W \vdash \iconft{T_2}{\{{\bar{M}_2}^m\}}{S_2} \xarr{F''}{d} \iconft{T_2'}{\{\bar{M}_2'^m\}}{S_2'}$ with $\bar{M}_1'~\fT^{\secmap{},\typenv}_{j,F \klmeet \bar F,\low}~\bar{M}_2'$ and $\confd {T_1'}{S_1'} \memeqF{\secmap{1},\Upsilon}{F \klmeet F' \klmeet \bar F}{\low} \confd {T_2'}{S_2'}$.  Notice that $\confd {T_1'}{S_1'} \memeqF{\secmap{1},\Upsilon}{F \klmeet F'}{\low} \confd {T_2'}{S_2'}$.
    \item If~$\bar{P}_1' = \{\bar{M}_1'^m, N^n\}~$ for some expression $N$ and $(\dom{{T_1}'}-$ $\dom{T_1})\cap\dom{T_2}=\emptyset$, %
then there exist $\bar{M}_2'$, $T_2'$, $S_2'$ such that $W\vdash  \iconft{T_2}{\{\bar{M}_2^m\}}{S_2} \xarr{F''}{d} \iconft{T_2'}{\{\bar{M}_2'^m, {N}^n\}}{S_2'}$ with $\bar{M}_1'~\fT^{\secmap{},\typenv}_{j,F \klmeet \bar F,\low}~\bar{M}_2'$ and $\confd {T_1'}{S_1'} \memeqF{\secmap{1},{\Upsilon}}{F \klmeet F' \klmeet \bar F}{\low} \confd {T_2'}{S_2'}$.  Notice that we have $\confd {T_1'}{S_1'}$ $\memeqF{\secmap{1},{\Upsilon}}{F \klmeet F'}{\low} \confd {T_2'}{S_2'}$.
    \end{itemize}
In all three cases, we use Clause 7 and Subject Reduction (Theorem \ref{app-prop-iflow-subjectreduction}) to conclude. 
    \item [$\boldsymbol{\bar{M}_1}$ is a value]  In this case $P_1' = \{\bar M_1^m\}$, $F'=\kltop$ and $\confd{T_1'}{S_1'}=\confd{T_1}{S_1}$. Then $\bar{M}_2\in\Val$ by Remark~\ref{app-remiflow-valT}, so $W \vdash \iconft{T_2}{\{{M_2}^m\}}{S_2}$ $\xarr{F'}{d}$ $\iconft{T_2}{\{{\bar{M}_2}^m\}}{S_2}$.  We conclude using Clause 1 and Subject Reduction (Theorem \ref{app-prop-iflow-subjectreduction}). \qedhere
\end{description}
\end{description}
\end{proof}

We have seen in Remark~\ref{app-remiflow-valT} that when two expressions are related by $\fT^{\secmap{},\typenv}_{j,F,\low}$ and one of them is a value, then the other one is also a value.  From a semantic point of view, when an expression has reached a value it means that it has successfully completed its computation.  We will now see that when two expressions are related by $\fT^{\secmap{},\typenv}_{j,F,\low}$ and one of them is unable to \emph{resolve} into a value, in any sequence of unrelated computation steps, then the other one is also unable to do so.
We shall use the notion of \emph{derivative of an expression} $M$: %
\begin{defi}[Derivative of an Expression]
Given an expression $M$, we say that $M'$ is a $(W,\secmap{},\Upsilon,\typenv,j)$-derivative of~$M$ if and only if
\begin{itemize}
\item $M'=M$, or 
\item there exist $m$ such that $\Upsilon(m)=j$, $F$, $d$, $P$, two states $\confd {T_1}{S_1}$ and $\confd {T_1'}{S_1'}$ such that $S_1,S_1'$ are $(\secmap{},\typenv)$-compatible,
  and a derivative $M''$ of $M$ such that:
\[
W \vdash \iconft{T_1}{\{M''^m\}}{S_1} \xarr{F}{d} \iconft{T_1'}{\{M'^m\} \cup P}{S_1'}
\]
\end{itemize}
\end{defi}

\begin{defi}[Non-resolvable Expressions]
 An expression $M$ is $(W,\secmap{},\typenv)$-non-resolvable, denoted $M \dagger^{W,\secmap{},\Upsilon,\typenv}_j$, if there is no $(W,\secmap{},\Upsilon,\typenv,j)$-derivative $M'$ of $M$ such that $M' \in \Val$.
\end{defi}
\begin{lem}\label{app-prop-iflow-nonresolvableT}
If $M ~\fT^{\secmap{},\typenv}_{j,F,\low}~ N$ for some $F$, $\low$ and $j$, then $M \dagger^{W,\secmap{},\Upsilon,\typenv}_j$ implies $N \dagger^{W,\secmap{},\Upsilon,\typenv}_j$.
\end{lem}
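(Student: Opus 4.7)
The plan is to prove the contrapositive: assuming $N$ is $(W,\secmap{},\Upsilon,\typenv,j)$-resolvable, show that $M$ is also resolvable. Since $\fT^{\secmap{},\typenv}_{j,F,\low}$ is symmetric (each clause in its inductive definition is symmetric in $M_1,M_2$), proving this direction gives both implications of the lemma by a single argument applied in the appropriate direction.

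First I would unfold the definition of derivative to obtain, from $N$'s resolvability, a finite reduction chain $N=N_0 \to N_1 \to \cdots \to N_k$ with $N_k \in \Val$, where each arrow corresponds to a transition $W \vdash \iconft{T_i}{\{N_{i-1}^{m_i}\}}{S_i} \xarr{F_i}{d_i} \iconft{T_i'}{\{N_i^{m_i}\} \cup P_i}{S_i'}$ with $\Upsilon(m_i)=j$ and $S_i,S_i'$ being $(\secmap{},\typenv)$-compatible. I would then prove by induction on $k$ the strengthened statement: for every $M$ with $N \fT^{\secmap{},\typenv}_{j,F,\low} M$, there exists a derivative $M_k$ of $M$ such that $N_k \fT^{\secmap{},\typenv}_{j,F,\low} M_k$. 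The base case $k=0$ is immediate with $M_0=M$.

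For the inductive step, the induction hypothesis yields a derivative $M_{k-1}$ of $M$ with $N_{k-1} \fT^{\secmap{},\typenv}_{j,F,\low} M_{k-1}$. I would then apply Proposition~\ref{app-prop-iflow-strong} (Strong Bisimulation for Low-Termination) to the $k$-th transition of $N$, instantiating the second configuration with the \emph{same} state $(T_k,S_k)$ used by $N_{k-1}$'s step. The low-equality premise $\confd{T_k}{S_k} \memeqF{\secmap{1},\Upsilon}{F \klmeet F_k}{\low} \confd{T_k}{S_k}$ holds by reflexivity, and the disjointness side conditions $(\dom{S_k'}-\dom{S_k})\cap\dom{S_k}=\emptyset$ and $(\dom{T_k'}-\dom{T_k})\cap\dom{T_k}=\emptyset$ are trivially satisfied since the two sides coincide. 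The conclusion of the proposition then produces a transition $W \vdash \iconft{T_k}{\{M_{k-1}^{m_k}\}}{S_k} \xarr{F_k}{d_k} \iconft{T_k''}{\{M_k^{m_k}\} \cup P_k'}{S_k''}$ with $N_k \fT^{\secmap{},\typenv}_{j,F,\low} M_k$, so $M_k$ is a derivative of $M_{k-1}$ and hence of $M$.

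To finish, once the induction reaches $k$, we have $N_k=V\in\Val$ related by $\fT^{\secmap{},\typenv}_{j,F,\low}$ to a derivative $M_k$ of $M$. By Remark~\ref{app-remiflow-valT}, using symmetry of $\fT$, $M_k \in \Val$, which witnesses that $M$ is resolvable and completes the contrapositive. The main obstacle I expect is purely bookkeeping: threading the freshness conditions and $(\secmap{},\typenv)$-compatibility of stores through the inductive construction so that Proposition~\ref{app-prop-iflow-strong} is applicable at every step; choosing coincident states on the $M$-side neutralizes the disjointness conditions, and Subject Reduction (Proposition~\ref{app-prop-iflow-subjectreduction}), invoked implicitly inside the strong bisimulation, preserves compatibility along the way.
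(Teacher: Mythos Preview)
Your proposal is correct and follows essentially the same approach as the paper: both argue the contrapositive, unroll the resolvability of $N$ into a finite chain of transitions, and use Proposition~\ref{app-prop-iflow-strong} step by step (with the same state on both sides, so the low-equality and disjointness premises are trivial) to build a matching chain for $M$, concluding with Remark~\ref{app-remiflow-valT}. The only differences are presentational: you phrase the argument as an explicit induction on the chain length, whereas the paper writes out the whole chain at once.
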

\begin{proof}
{
Let us suppose that $\lnot N\dagger^{W,\secmap{},\Upsilon,\typenv}_j$.  That means that there exists a finite number of states $\confd{T_1}{S_1},$ $\ldots,$ $\confd{T_n}{S_n}$, and $\confd{T_1'}{S_1'}$, \ldots, $\confd{T_n'}{S_n'}$, of expressions $N_1$, \ldots, $N_n$ and of thread names $m_1, \ldots, m_n$ with $\Upsilon(m_1)= \ldots =\Upsilon(m_n) = j$ and
\[
\begin{array}{rcl}
W \vdash \iconft{T_1}{\{N^{m_1}\}}{S_1} &\xarr{}{}& \iconft{T_1'}{\{N_1^{m_1}\} \cup P_1}{S_1'} ~\text{and}\\
W \vdash \iconft{T_2}{\{N_1^{m_2}\}}{S_2} &\xarr{}{}& \iconft{T_2'}{\{N_2^{m_2}\} \cup P_2}{S_2'} ~\text{and}\\
&\vdots\\
W \vdash \iconft{T_n}{\{N_{n}^{m_n}\}}{S_n} &\xarr{}{}& \iconft{T_n'}{\{N_n^{m_n}\} \cup P_n}{S_n'}
\end{array}
\]
and $S_1,\ldots,S_n,S_1',\ldots,S_n'$ are $(\secmap{},\typenv)$-compatible, and such that $N_n \in \Val$.  By Strong Bisimulation for Low-Termination (Proposition~\ref{app-prop-iflow-strong}), we have that there exists a finite number of states $\confd{\bar T_1'}{\bar S_1'}$, \ldots, $\confd{\bar T_n'}{\bar S_n'}$, of expressions $M_1$, \ldots, $M_n$, and of pools of threads $P_1, \ldots, P_n$ such that
\[
\begin{array}{rcl}
W \vdash \iconft{T_1}{\{M^{m_1}\}}{S_1} &\xarr{}{}& \iconft{\bar T_1'}{\{M_1^{m_1}\} \cup \bar P_1}{\bar S_1'} ~\text{and}\\
W \vdash \iconft{T_2}{\{M_1^{m_2}\}}{S_2} &\xarr{}{}& \iconft{\bar T_2'}{\{M_2^{m_2}\} \cup \bar P_2}{\bar S_2'} ~\text{and}\\
&\vdots\\
W \vdash \iconft{T_n}{\{M_{n-1}^{m_n}\}}{S_n} &\xarr{}{}& \iconft{\bar T_n'}{\{M_n^{m_n}\} \cup \bar P_n}{\bar S_n'}
\end{array}
\]
such that ${\bar S_1'},\ldots,{\bar S_n'}$ are $(\secmap{},\typenv)$-compatible and:
\[
M_1 ~\fT^{\secmap{},\typenv}_{j,F,\low}~ N_1, ~\textit{and}~ \ldots, ~\textit{and}~ M_n ~\fT^{\secmap{},\typenv}_{j,F,\low}~ N_n
\]
By Remark~\ref{app-remiflow-valT}, we then have that $M_n \in \Val$.  Since $M_n$ is a $(W,\secmap{},\Upsilon,\typenv,j)$-derivative of $M$, we conclude that $\lnot M\dagger^{W,\secmap{},\typenv}_{j}$.
}
\end{proof}

The following lemma deduces \mentionind{operationally high thread}{operational ``highness''} of threads from that of its sub-expressions.
\begin{lem}[Composition of High Expressions]  Suppose that $M$ is typable with respect to $\secmap{}$, $\typenv$, $j$ and $F$.  Then, for all allowed-policy mappings $W$: \label{app-prop-iflow-composehigh}
\begin{enumerate}
\item  If $M=\app{M_1} {M_2}$ and either
  \begin{itemize} 
  \item $M_1\dagger^{W,\secmap{},\Upsilon,\typenv}_{j}$ and ${M_1}^m$ $\in \semhigh^{W,\secmap{},\Upsilon,\typenv}_{F,\ell}$, or
  \item ${M_1}^m, {M_2}^m \in \semhigh^{W,\secmap{},\Upsilon,\typenv}_{F,\ell}$ and $M_1$ is a syntactically $(\secmap{},\typenv,j,F,l)$-high function,
  \end{itemize}
then $M^m \in \semhigh^{W,\secmap{},\Upsilon,\typenv}_{F,\ell}$.
\item If $M=\cond {M_1} {M_t} {M_f}$ and ${M_1}^m,{M_t}^m,{M_f}^m\!\in\!\semhigh^{W,\secmap{},\Upsilon,\typenv}_{F,\low}$, then $M^m\!\in\!\semhigh^{W,\secmap{},\Upsilon,\typenv}_{F,\low}$.
\item If $M=\rfr{l,\theta} {M_1}$ and $l \not\Fpreceq{F} \low$ and ${M_1}^m\in\semhigh^{W,\secmap{},\Upsilon,\typenv}_{F,\low}$, then $M^m\in\semhigh^{W,\secmap{},\Upsilon,\typenv}_{F,\low}$.
\item If $M=\seq {M_1} {M_2}$ and either
  \begin{itemize} 
  \item $M_1\dagger^{W,\secmap{},\Upsilon,\typenv}_{j}$ and ${M_1}^m$ $\in\semhigh^{W,\secmap{},\Upsilon,\typenv}_{F,\low}$, or
  \item ${M_1}^m, {M_2}^m \in\semhigh^{W,\secmap{},\Upsilon,\typenv}_{F,\low}$,
  \end{itemize}
then $M^m\in\semhigh^{W,\secmap{},\Upsilon,\typenv}_{F,\low}$.
\item If $M=\assign {M_1} {M_2}$ and $M_1$ has type $\rfrt {\theta} {l}$ with $l \not\Fpreceq{F}\low$ and either
  \begin{itemize} 
  \item $M_1\dagger^{W,\secmap{},\Upsilon,\typenv}_{j}$ and ${M_1}^m$ $\in\semhigh^{W,\secmap{},\Upsilon,\typenv}_{F,\low}$, or
  \item ${M_1}^m, {M_2}^m \in\semhigh^{W,\secmap{},\Upsilon,\typenv}_{F,\low}$,
  \end{itemize}
then $M^m\in\semhigh^{W,\secmap{},\Upsilon,\typenv}_{F,\low}$.
\item If $M=\flow {F'} {M_1}$ and ${M_1}^m \in\semhigh^{W,\secmap{},\Upsilon,\typenv}_{F,\low}$, then ${M}^m\in\semhigh^{W,\secmap{},\Upsilon,\typenv}_{F,\low}$.
\item If $M=\allowed {F} {M_t} {M_f}$ and %
${M_t}^m, {M_f}^m \in\semhigh^{W,\secmap{},\Upsilon,\typenv}_{F,\low}$, then $M^m\in\semhigh^{W,\secmap{},\Upsilon,\typenv}_{F,\low}$.
\end{enumerate}
\end{lem}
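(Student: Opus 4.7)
The plan is, for each clause $k$ of the statement, to construct a candidate set $\fH_k$ containing all threads $M^m$ of the stated shape and satisfying the stated hypotheses, and to verify that $\fH_k \cup \semhigh^{W,\typenv}_{F,\low}$ is a set of operationally $(W,\typenv,F,\low)$-high threads in the sense of Definition~\ref{app-def-operationallyhigh}. By maximality of $\semhigh^{W,\typenv}_{F,\low}$, this yields $\fH_k \subseteq \semhigh^{W,\typenv}_{F,\low}$, which is precisely the conclusion of clause $k$. Subject Reduction (Proposition~\ref{app-prop-iflow-subjectreduction}) will be invoked throughout to maintain typability of reducts and compatibility of stores, while Update of Effects (Lemma~\ref{app-prop-iflow-updateeffects}) will justify that any creation/assignment arising at the outermost position concerns a reference whose security level is not $\Fpreceq{F}\low$.

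For each clause I would decompose a transition $W\semvdash \iconft{T}{\{M^m\}}{S} \xarr{F'}{d} \iconft{T'}{P'}{S'}$ according to the evaluation context that fires it. Two kinds of cases arise. First, a sub-reduction inside one of the listed sub-expressions $M_i$: then $P'=\{M'^m\}\cup\bar P$ where $M'$ is the composite with $M_i$ replaced by its reduct, and the transition factors through a step of ${M_i}^m$, which by hypothesis is operationally high; thus low-equality of states is preserved, any forked thread $N^n \in \bar P$ is in $\semhigh^{W,\typenv}_{F,\low}$ (since it originated from ${M_i}^m$), and the reduct $M'^m$ remains in $\fH_k$ (using Subject Reduction to carry over the side-conditions such as the reference/thread security level of $\rfr{l,\theta}{\cdot}$ in clause 3, the reference type in clause 5, the latent flow policy in clause 6). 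Second, a head reduction at the outermost construct: e.g.\ $\cond{\vrai}{M_t}{M_f} \to M_t$ (clause 2), $\seq V {M_2} \to M_2$ (clause 4), $\flow{F'}{V}\to V$ (clause 6), $\allowed F {M_t}{M_f}$ selecting a branch (clause 7), or $\rfr{l,\theta}{V}\to a$ with $\secmap{1}(a)=l\not\Fpreceq{F}\low$ and $\assign a V\to \nil$ with $\secmap{1}(a)=l\not\Fpreceq{F}\low$ (clauses 3 and 5). In each of these the resulting expression is either a value or lies in $\semhigh^{W,\typenv}_{F,\low}$ by direct hypothesis, and the induced change to $S$ or $T$ concerns only high references/threads, so low-equality holds.

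The main obstacle is clause~1 in the \emph{syntactically high function} subcase. Here, after the sub-reduction phases exhaust $M_1^m$ and $M_2^m$, producing values $\lam{x}{N}$ and $V$, one must show that the final $\beta$-step $\app{\lam{x}{N}}{V} \to \substi{x}{V}{N}$ yields an operationally high thread. For this I would first argue, via Subject Reduction, that the reduct of a syntactically $(\secmap{},\typenv,j,F,\low)$-high function is still a syntactically $(\secmap{},\typenv,j,F,\low)$-high function (its type $\tau \xarr{j,F}{s}\sigma$ is preserved, and in particular so is the high writing component $s.w \not\Fpreceq{F}\low$). Then by the Substitution Lemma~\ref{app-prop-iflow-subslemma} applied to the typing of $N$ obtained by inverting rule \abstyp, $\substi{x}{V}{N}$ is typable with the latent effect $s$, hence is a syntactically $(\secmap{},\typenv,j,F,\low)$-high expression; High Expressions (Lemma~\ref{app-prop-iflow-highexpr}) then places $\substi{x}{V}{N}^m$ in $\semhigh^{W,\typenv}_{F,\low}$.

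The remaining subtlety is handled by the non-resolvability hypothesis used in clauses~1, 4, and 5: if $M_1\dagger^{W,\typenv}_j$ then the call-by-value order forces every transition of $M^m$ to be a transition of $M_1^m$ embedded in the left-hand evaluation context $\trou\,M_2$ (resp.\ $\seq{\trou}{M_2}$, $\assign{\trou}{M_2}$), and Lemma~\ref{app-prop-iflow-nonresolvableT}-style reasoning (direct from the definition of derivative and non-resolvability) shows that the reduct of $M_1$ remains non-resolvable, so the candidate composite is closed under reduction. Assembling these cases completes the verification that each $\fH_k\cup\semhigh^{W,\typenv}_{F,\low}$ is a set of operationally high threads, and the lemma follows.
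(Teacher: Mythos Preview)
The paper states this lemma without proof (both in the main text and in the appendix), so there is no authorial argument to compare against. Your coinductive strategy---for each clause $k$, exhibiting a candidate set $\fH_k$ and checking that $\fH_k\cup\semhigh^{W,\typenv}_{F,\low}$ is closed under the defining conditions of Definition~\ref{app-def-operationallyhigh}---is exactly the expected approach for membership in a largest post-fixpoint, and your case analysis (sub-reduction in a component versus head reduction at the outer construct) is sound. In particular, your treatment of the $\beta$-step in clause~1 via Subject Reduction, inversion of \textsc{Abs}, the Substitution Lemma, and then Lemma~\ref{app-prop-iflow-highexpr} is the right chain; and your observation that non-resolvability is preserved under reduction (any value derivative of $M_1'$ would also be a derivative of $M_1$) correctly closes the non-resolvable subcases of clauses~1, 4 and~5.

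Two minor points worth tightening in a full write-up. First, in clause~1 (second bullet) your candidate set must be stable when $M_1$ has already reduced to a value and evaluation has moved into $M_2$; this is fine since values are trivially in $\semhigh^{W,\typenv}_{F,\low}$ and Subject Reduction preserves the syntactically-high-function type, but it should be said explicitly. Second, in clause~5 the head reduction $\assign{a}{V}\to\nil$ updates $S$ at $a$; you should spell out that $\secmap{1}(a)=l$ follows from rule \textsc{Loc} together with the preserved typing $a:\rfrt{\theta}{l}$, so the update is invisible at $\memeqF{}{F}{\low}$. With those details made explicit, the argument goes through.
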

\begin{lem}\label{app-prop-iflow-highT}
If $M_1 ~\fT^{\secmap{},\typenv}_{\Upsilon(m),F,\low}~ M_2$ for some $\secmap{}$, $\Upsilon$, $\typenv$, $F$, $\low$ and $m$, then $M_1^m \in \semhigh^{W,\secmap{},\Upsilon,\typenv}_{F,\ell}$, implies $M_2^m \in \semhigh^{W,\secmap{},\Upsilon,\typenv}_{F,\ell}$.
\end{lem}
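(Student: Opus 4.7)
The plan is to exhibit a set of operationally high threads that contains $M_2^m$, by leveraging the Strong Bisimulation for Low-Termination (Proposition~\ref{app-prop-iflow-strong}) together with the symmetry of $\fT^{\secmap{},\typenv}_{j,F,\low}$. Specifically, I would define
\[
R = \{N^m : \exists N' \text{ with } N' \fT^{\secmap{},\typenv}_{\Upsilon(m),F,\low} N \text{ and } N'^m \in \semhigh^{W,\typenv}_{F,\ell}\}
\]
and argue that $R$ is a set of operationally $(W,\typenv,F,\ell)$-high threads. Maximality of $\semhigh^{W,\typenv}_{F,\ell}$ then yields $R \subseteq \semhigh^{W,\typenv}_{F,\ell}$, and taking the witness $N' = M_1$ with $N = M_2$ gives the desired $M_2^m \in \semhigh^{W,\typenv}_{F,\ell}$.

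To verify the closure condition for $R$, consider an arbitrary $N^m \in R$ with witness $N'$, and any transition $W \vdash \iconft{T}{\{N^m\}}{S} \xarr{F'}{d} \iconft{T'}{P'}{S'}$ on a $(\secmap{},\typenv)$-compatible store $S$. Since $\fT^{\secmap{},\typenv}_{\Upsilon(m),F,\low}$ is symmetric, $N \fT N'$. Applying Proposition~\ref{app-prop-iflow-strong} with both initial states taken to be $\confd{T}{S}$ (which is trivially low-equal to itself at any level under any flow policy), I obtain a matching transition $W \vdash \iconft{T}{\{N'^m\}}{S} \xarr{F'}{d} \iconft{T_1'}{P_1'}{S_1'}$ such that the resulting threads in $P_1'$ and $P'$ are pairwise $\fT$-related and $\confd{T'}{S'} \memeqF{\secmap{1},\Upsilon}{F \klmeet F'}{\low} \confd{T_1'}{S_1'}$. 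Since $N'^m$ is operationally high, the derived transition on $N'^m$ satisfies $\confd{T}{S} \memeqF{\secmap{1},\Upsilon}{F}{\ell} \confd{T_1'}{S_1'}$ and $P_1' \subseteq \semhigh^{W,\typenv}_{F,\ell}$; chaining these low-equalities (using that $F \klmeet F' \klpreceq F$ gives a coarser indistinguishability, and using transitivity) yields $\confd{T}{S} \memeqF{\secmap{1},\Upsilon}{F}{\ell} \confd{T'}{S'}$, which is exactly the invariance of the low state required by Definition~\ref{app-def-operationallyhigh}.

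It remains to check that every thread in $P'$ lies in $R$. For the principal thread $M_2'^m \in P'$, the $\fT$ lemma directly provides a witness $M_1'^m \in P_1' \subseteq \semhigh^{W,\typenv}_{F,\ell}$ with $M_1' \fT M_2'$, so $M_2'^m \in R$. For a spawned thread $Q^n$ (when $P' = \{M_2'^m, Q^n\}$), Proposition~\ref{app-prop-iflow-strong} guarantees that the same $Q^n$ appears in $P_1'$, so $Q^n \in \semhigh^{W,\typenv}_{F,\ell}$; taking the witness $Q' = Q$ (Clause~2 of $\fT$), one only needs $Q \fT Q$, which reduces to typability of $Q$ with a low termination effect. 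This is provided by Subject Reduction (Proposition~\ref{app-prop-iflow-subjectreduction}) applied to $N'^m$, which gives $\typenv \byint{\Upsilon(n)}{\secmap{}}{\kltop} Q : s'', \unit$ with $s''.t \preceq s.t \Fpreceq{F} \low$; combined with Remark~\ref{app-remiflow-typeffval} to import this typing into the ambient flow policy, we conclude $Q^n \in R$.

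The main obstacle I anticipate is keeping the two layers of low-equality consistent: the $\fT$-relation is parameterized by the level $\low$, while operational highness is parameterized by $\ell$, and the matching transition delivered by Proposition~\ref{app-prop-iflow-strong} lives with respect to $F \klmeet F'$ rather than $F$. The argument therefore hinges on the monotonicity observation that a finer flow policy yields a stronger (and hence implied) low-equality, which must be invoked carefully when combining the bisimulation's output with the high-thread property of the witness. The remaining work is essentially bookkeeping through the clauses of $\fT$ that is already absorbed into Proposition~\ref{app-prop-iflow-strong}, so no independent structural induction on $\fT$ is needed beyond what that proposition supplies.
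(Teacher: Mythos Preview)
Your coinductive strategy is genuinely different from the paper's: the paper proceeds by a direct induction on the clauses of $\fT^{\secmap{},\typenv}_{j,F,\low}$, invoking Composition of High Expressions (Lemma~\ref{app-prop-iflow-composehigh}) and the preservation of non-resolvability (Lemma~\ref{app-prop-iflow-nonresolvableT}) at each clause. Your route instead reuses the Strong Bisimulation for Low-Termination (Proposition~\ref{app-prop-iflow-strong}) to produce a matching step of the witness and then closes up via maximality of $\semhigh^{W,\typenv}_{F,\low}$. This is an attractive shortcut because it avoids replaying a clause-by-clause analysis; the price is that you depend on the full strength of Proposition~\ref{app-prop-iflow-strong}, whereas the paper's argument is more elementary. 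The core of your argument (low-invariance of the state via chaining the two low-equalities, using that $=_{F\klmeet F',\low}$ implies $=_{F,\low}$) is correct, assuming the evident identification $\ell=\low$ in the statement.

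There is, however, a real gap in the spawned-thread case. You claim that Subject Reduction gives $s''.t \preceq s.t \Fpreceq{F} \low$ for the spawned body $Q$, but the typing rule \migtyp\ imposes \emph{no} constraint on the termination effect of the body: the conclusion effect is $\eft{\bot}{l\meet s.w}{\bot}$, so only the writing component is bounded. Thus $Q$ need not be typable with a low termination effect, and Clause~2 of $\fT$ cannot be invoked to place $Q^n$ in your set $R$. The fix is straightforward: since Proposition~\ref{app-prop-iflow-strong} guarantees the \emph{same} spawned thread $Q^n$ on both sides, and the witness side already yields $Q^n\in\semhigh^{W,\typenv}_{F,\low}$, you should simply take the candidate set to be $R\cup\semhigh^{W,\typenv}_{F,\low}$ (or, equivalently, observe that the spawned thread lands directly in $\semhigh^{W,\typenv}_{F,\low}$, which is itself a set of operationally high threads). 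With this adjustment your argument goes through; also make sure your definition of $R$ quantifies over the thread name rather than fixing $m$, since spawned threads carry fresh names.
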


\begin{proof}  By induction on the definition of $M_1 ~\fT^{\secmap{},\typenv}_{j,F,\low}~ M_2$, using Lemma~\ref{app-prop-iflow-nonresolvableT}.

\begin{description}
\item [Clause 7] Here we have $M_1=\flow{F'}{\bar{M}_1}$ and $M_2=\flow{F'}{\bar{M}_2}$ with $\bar{M}_1~\fT^{\secmap{},\typenv}_{j,F\klmeet F',\low}$ $\bar{M}_2$.  Clearly we have that $\bar M_1 \in \semhigh^{W,\secmap{},\Upsilon,\typenv}_{F,\low}$, so by induction hypothesis also $\bar M_2 \in \semhigh^{W,\secmap{},\Upsilon,\typenv}_{F,\low}$.  Therefore, by Composition of High Expressions (Lemma~\ref{app-prop-iflow-composehigh}) we have that $M_2 \in \semhigh^{W,\secmap{},\Upsilon,\typenv}_{F,\low}$. \qedhere

\end{description}

\end{proof}

\paragraph{\emph{Behavior of Typable Low Expressions.}}

In this second phase of the proof, we consider the general case of threads that are typable with any termination level.  As in the previous subsection, we show that a typable expression behaves as a strong bisimulation, provided that it is operationally low.  For this purpose, we make use of the properties identified for the class of low-terminating expressions by allowing only these to be followed by low-writes.  Conversely, high-terminating expressions can only be followed by high-expressions (see Definitions~\ref{app-def-operationallyhigh} and \ref{app-def-synhigh}).

The following result shows that the behavior of typable high threads (i.e. those with a high security level) that are location sensitive (i.e. depend on their location) is operationally high.
\begin{lem}[Location Sensitive Typable High Threads -- Lemma~\ref{prop-iflow-potentially}] \label{app-prop-iflow-potentially}
For a given flow policy~$F$ and security level $\low$, consider a thread $M^m$ such that ${\typenv \byint{j}{\secmap{}}{F} M : s, \tau}$ and $M = \EC{\allowed {F'} {N_t}{N_f}}$ with $j \not\Fpreceq{F} \low$.  Then, for all allowed-policy mappings $W$ and thread labelings $\Upsilon$ such that $\Upsilon(m)=j$, we have that $M^m \in \semhigh^{W,\typenv}_{F,\low}$.
\end{lem}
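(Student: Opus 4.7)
The plan is to prove the lemma by structural induction on the evaluation context $\cE{E}$, leveraging High Expressions (Lemma~\ref{app-prop-iflow-highexpr}), Composition of High Expressions (Lemma~\ref{app-prop-iflow-composehigh}), and Update of Effects (Lemma~\ref{app-prop-iflow-updateeffects}) at each step. The driving observation is that the typing rule $\allowtyp$, together with $j \not\Fpreceq{F} \low$, forces the allowed-condition to be both syntactically high (it cannot write below $j$) and to have a high reading and termination effect; the typing rules for the outer context then propagate this highness upward through constraints on reference levels and latent effects.

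For the base case $\cE{E} = []$, we have $M = \allowed{F'}{N_t}{N_f}$. Rule $\allowtyp$ yields $j \Fpreceq{F} s_t.w, s_f.w$, so transitivity with $j \not\Fpreceq{F} \low$ gives $s_t.w, s_f.w \not\Fpreceq{F} \low$. Hence $N_t$ and $N_f$ are syntactically $(\secmap{},\typenv,j,F,\low)$-high, and High Expressions places $N_t^m, N_f^m$ in $\semhigh^{W,\typenv}_{F,\low}$; item~7 of the Composition Lemma then delivers $M^m \in \semhigh^{W,\typenv}_{F,\low}$.

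For each inductive form of $\cE{E}$, I would invoke the induction hypothesis on the sub-context containing the allowed-condition. Update of Effects applied to $\cC{E'}{\allowed{F'}{N_t}{N_f}}$ gives $j \Fpreceq{F \klmeet \extrf{\cE{E'}}} \bar{s}.r$, so the reading effect of the sub-expression is not low. For $\cE{E} = \rfrl{l}{\theta}{\cE{E'}}$ and $\cE{E} = \assign{a}{\cE{E'}}$, rules $\reftyp$ and $\assigntyp$ impose $\bar{s}.r \Fpreceq{F} l$, thus $l \not\Fpreceq{F} \low$, enabling items~3 and~5 of the Composition Lemma. For $\cE{E} = \flow{F''}{\cE{E'}}$ the induction hypothesis is used with flow policy $F \klmeet F''$ and concluded via item~6. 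Conditional and dereference contexts follow similarly, using item~2 and the fact that dereferencing has no write effect.

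The main obstacle will be the contexts where a sub-expression may resolve to a value and reduction subsequently fires an outer redex, namely $\app{\cE{E'}}{N}$, $\app{V}{\cE{E'}}$, $\seq{\cE{E'}}{N}$ and $\assign{\cE{E'}}{N}$. The ``non-resolvable'' bullets of items~1, 4 and 5 of the Composition Lemma handle diverging sub-computations, but a delicate argument is required when $\cE{E'}$ terminates: Subject Reduction (Proposition~\ref{app-prop-iflow-subjectreduction}) must be used to preserve the typing of the resulting continuation, and one must verify that the propagated high reading effect forces the outer operation (the application of the resulting value, or the sequenced/assigned continuation) to itself inherit a high writing effect. A cleaner alternative, which I would favor, is to avoid the case split after resolution altogether by first exhibiting a single set
\[
\mathcal{K} = \{\,\cC{E}{\bar{M}}^m \mid \typenv \byint{\Upsilon(m)}{\secmap{}}{F} \cC{E}{\bar{M}} : s,\tau,\ \Upsilon(m) \not\Fpreceq{F} \low,\ \bar{M}^m \in \semhigh^{W,\typenv}_{F \klmeet \extrf{\cE{E}},\low}\,\}
\]
and proving directly, by case analysis on a single reduction step of $\cC{E}{\bar{M}}^m$, that $\mathcal{K}$ is a set of operationally $(W,\typenv,F,\low)$-high threads; the lemma then follows by observing $M^m \in \mathcal{K}$ via the base-case argument above.
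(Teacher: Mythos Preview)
Your induction on $\cE{E}$ with the base case via $\allowtyp$ and the inductive steps via High Expressions and Composition of High Expressions is exactly the paper's route. Where you diverge is in your assessment of the ``main obstacle'': the paper does not treat the application, sequencing, and assignment contexts as delicate, and does not need the non-resolvable clauses of the Composition Lemma at all.

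The point you underuse is item~5 of Update of Effects (Lemma~\ref{app-prop-iflow-updateeffects}): for $\cC{E'}{\allowed{F'}{N_t}{N_f}}$ it gives $j \preceq \bar{s}.r$ \emph{and} $j \preceq \bar{s}.t$. You only invoke the reading effect, but the termination bound is what closes the sequencing and left-application cases. For $\seq{\cC{E_1}{M_0}}{N}$, rule $\seqtyp$ imposes $\bar{s}.t \Fpreceq{F} s_N.w$, so $s_N.w \not\Fpreceq{F} \low$ and $N$ is syntactically high; the second bullet of Composition item~4 applies directly. For $\app{\cC{E_1}{M_0}}{N}$, rule $\apptyp$ gives $\bar{s}.t \Fpreceq{F} s_N.w$ (so $N$ is syntactically high) and $\bar{s}.r \Fpreceq{F} s'.w$ (so the latent write of the function type is high, i.e.\ $\cC{E_1}{M_0}$ is a syntactically high function); the second bullet of item~1 applies. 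The cases $\app{V}{\cC{E_1}{M_0}}$ and $\assign{\cC{E_1}{M_0}}{N}$ are handled the same way. No resolution argument or Subject Reduction is needed here, and your alternative set $\mathcal{K}$ is unnecessary machinery.
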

\begin{proof} By induction on the structure of $\cE{E}$, using Update of Effects (Lemma~\ref{app-prop-iflow-updateeffects}) and High Expressions (Lemma~\ref{app-prop-iflow-highexpr}).
Consider that we have $M = \cC{E}{M_0}$, where $M_0={\allowed {F'} {N_t}{N_f}}$.
\begin{description}
\item [$\boldsymbol{\cC{E}{M_0} = M_0}$] Then, by~\allowtyp, we have $\typenv \byint{j}{\secmap{}}{F}$ $\allowed {F'} {N_t} {N_f} : s,{\tau}$ where $\typenv \byint{j}{\secmap{}}{F} N_t : {{s_t,{\tau}}}$,  $\typenv \byint{j}{\secmap{}}{F} N_f : {{s_f,{\tau}}}$ and ${j \Fpreceq{F} s_t.w, s_f.w}$.  This means $s_t.w, s_f.w \not\Fpreceq{F} \low$, so by High Expressions (Lemma~\ref{app-prop-iflow-highexpr}), then ${N_t}^m,{N_f}^m \in \semhigh^{W,\secmap{},\Upsilon,\typenv}_{F,\low}$.  By Composition of High Expressions (Lemma \ref{app-prop-iflow-composehigh}), $M^m \in \semhigh^{W,\secmap{},\Upsilon,\typenv}_{F,\low}$.
\item [$\boldsymbol{\cC{E}{M_0} = \app{\cC{E_1}{M_0}}{M_1}}$]  Then by rule $\apptyp$ we have that $\typenv \byint{j}{\secmap{}}{F} {\cC{E_1}{M_0}} : {\tef {s_1} {\tau_1 \xarr{F,j}{s_1'} \sigma_1}}$ and $\typenv \byint{j}{\secmap{}}{F} {M_1} : {{\tef{s_1''}{\tau_1}}}$ with %
$s_1.r \Fpreceq{F} s_1'.w$ and $s_1.t\Fpreceq{F} s_1''.w$.  By Update of Effects (Lemma~\ref{app-prop-iflow-updateeffects}) we have $j \preceq s_1.r$, which implies that $j \Fpreceq{F} s_1.r$ and $s_1.r \not\Fpreceq{F} \low$.  Therefore, $s_1'.w \not\Fpreceq{F} \low$, which means that ${\cC{E_1}{M_0}}$ is a syntactically $(\secmap{},\typenv,j,F,\low)$-high function, and $M_1$ is syntactically $(\secmap,\typenv,j,F,\low)$-high.  By High Expressions (Lemma \ref{app-prop-iflow-highexpr}) we have ${M_1}^m \in \semhigh^{W,\secmap{},\Upsilon,\typenv}_{F,\low}$.  By induction hypothesis ${\cC{E_1}{M_0}}^m \in \semhigh^{W,\secmap{},\Upsilon,\typenv}_{F,\low}$.  Then, by Lemma \ref{app-prop-iflow-composehigh}, $M^m \in \semhigh^{W,\secmap{},\Upsilon,\typenv}_{F,\low}$.
\item [$\boldsymbol{\cC{E}{M_0} = \app V {\cC{E_1}{M_0}}}$]  Then by $\apptyp$ we have $\typenv \byint{j}{\secmap{}}{F} V : {\tef {s_1} {\tau_1 \xarr{F,j}{s_1'} \sigma_1}}$ and $\typenv \byint{j}{\secmap{}}{F} {\cC{E_1}{M_0}} : {{\tef{s_1''}{\tau_1}}}$ with $s_1''.r\Fpreceq{F} s_1'.w$ and $s_1.t\Fpreceq{F} s_1''.w$.  By Update of Effects (Lemma~\ref{app-prop-iflow-updateeffects}) we have $j \preceq s''_1.r$, which implies that $j \Fpreceq{F} s''_1.r$ and $s_1''.r \not\Fpreceq{F} \low$.  Therefore, $s_1'.w \not\Fpreceq{F} \low$, which means that $V$ is a syntactically $(\secmap{},\typenv,j,F,\low)$-high function. %
By induction hypothesis ${\cC{E_1}{M_0}}^m \in \semhigh^{W,\secmap{},\Upsilon,\typenv}_{F,\low}$.  Then, by Composition of High Expressions (Lemma \ref{app-prop-iflow-composehigh}), $M^m \in \semhigh^{W,\secmap{},\Upsilon,\typenv}_{F,\low}$.
\item [$\boldsymbol{\cC{E}{M_0} = \cond {\cC{E_1}{M_0}} {M_t} {M_f}}$]  Then by $\condtyp$ we have that $\typenv \byint{j}{\secmap{}}{F} {\cC{E_1}{M_0}} : {\tef {s_1} {\bool}}$, and $\typenv \byint{j}{\secmap{}}{F} {M_t} : {{\tef{s_1'}{\tau_1}}}$ and $\typenv \byint{j}{\secmap{}}{F} {M_f} : {{\tef{s_1''}{\tau_1}}}$ with ${s}_1.r\Fpreceq{F} {s}_1'.w, {s}_1'.w$.  By Update of Effects (Lemma~\ref{app-prop-iflow-updateeffects}) we have $j \preceq s_1.r$, which implies that $j \Fpreceq{F} s_1.r$ and $s_1.r \not\Fpreceq{F} \low$.
Therefore, ${s}_1'.w, {s}_1'.w \not\Fpreceq{F} \low$, so by High Expressions (Lemma \ref{app-prop-iflow-highexpr}) we have ${M_t}^m,{M_t}^m \in \semhigh^{W,\secmap{},\Upsilon,\typenv}_{F,\low}$.  By induction hypothesis ${\cC{E_1}{M_0}}^m \in \semhigh^{W,\secmap{},\Upsilon,\typenv}_{F,\low}$.  Then, by Composition of High Expressions (Lemma \ref{app-prop-iflow-composehigh}), $M^m \in \semhigh^{W,\secmap{},\Upsilon,\typenv}_{F,\low}$.
\item [$\boldsymbol{\cC{E}{M_0} = \seq {\cC{E_1}{M_0}} {M_1}}$]  Then by $\seqtyp$ we have that $\typenv \byint{j}{\secmap{}}{F} {\cC{E_1}{M_0}} : {\tef {s_1} {\tau_1}}$ and $\typenv \byint{j}{\secmap{}}{F} {M_1} : {{\tef{s_1'}{\tau_1'}}}$ with ${s}_1.t\Fpreceq{F} {s}_1'.w$.  By Update of Effects (Lemma~\ref{app-prop-iflow-updateeffects}) we have $j \preceq s_1.t$, which implies that $j \Fpreceq{F} s_1.t$ and $s_1.t \not\Fpreceq{F} \low$.
Therefore, ${s}_1'.w \not\Fpreceq{F} \low$, and by High Expressions (Lemma \ref{app-prop-iflow-highexpr}) we have ${M_1}^m \in \semhigh^{W,\secmap{},\Upsilon,\typenv}_{F,\low}$.  By induction hypothesis ${\cC{E_1}{M_0}}^m \in \semhigh^{W,\secmap{},\Upsilon,\typenv}_{F,\low}$.  Then, by Composition of High Expressions (Lemma \ref{app-prop-iflow-composehigh}), $M^m \in \semhigh^{W,\secmap{},\Upsilon,\typenv}_{F,\low}$.
\item [$\boldsymbol{\cC{E}{M_0} = \rfr {l,\theta} {\cC{E_1}{M_0}}}$]  Then by $\reftyp$ we have that $\typenv \byint{j}{\secmap{}}{F} {\cC{E_1}{M_0}} : {\tef {s_1} {\theta}}$ with $s_1.t \Fpreceq{F} l$.  By Update of Effects (Lemma~\ref{app-prop-iflow-updateeffects}) we have $j \preceq s_1.t$, which implies that $j \Fpreceq{F} s_1.t$ and $s_1.t \not\Fpreceq{F} \low$.
Therefore, $l \not\Fpreceq{F} \low$, and by induction hypothesis ${\cC{E_1}{M_0}}^m \in \semhigh^{W,\secmap{},\Upsilon,\typenv}_{F,\low}$.  Then, by Composition of High Expressions (Lemma \ref{app-prop-iflow-composehigh}), $M^m \in \semhigh^{W,\secmap{},\Upsilon,\typenv}_{F,\low}$.
\item [$\boldsymbol{\cC{E}{M_0} = \deref {\cC{E_1}{M_0}}}$]  Easy, by induction hypothesis.
\item [$\boldsymbol{\cC{E}{M_0} = \assign {\cC{E_1}{M_0}} {M_1}}$]  Then by $\assigntyp$ we have $\typenv \byint{j}{\secmap{}}{F} {\cC{E_1}{M_0}} : {\tef {s_1} {\rfrt {\theta} {\bar l}}}$ and $\typenv \byint{j}{\secmap{}}{F} {M_1} : {{\tef{s_1'}{\tau_1}}}$ with ${s}_1.t\Fpreceq{F} {s}_1'.w$ and $s_1.r \Fpreceq{F} \bar l$.  By Update of Effects (Lemma~\ref{app-prop-iflow-updateeffects}) we have $j \preceq s_1.r$, which implies that $j \Fpreceq{F} s_1.r$ and $s_1.r \not\Fpreceq{F} \low$.
Therefore, $\bar l \not\Fpreceq{F} \low$ and ${s}_1'.w \not\Fpreceq{F} \low$.  Hence, by High Expressions (Lemma~\ref{app-prop-iflow-highexpr}) we have ${M_1}^m \in \semhigh^{W,\secmap{},\Upsilon,\typenv}_{F,\low}$.  By induction hypothesis ${\cC{E_1}{M_0}}^m \in \semhigh^{W,\secmap{},\Upsilon,\typenv}_{F,\low}$.  Then, by Composition of High Expressions (Lemma \ref{app-prop-iflow-composehigh}), $M^m \in \semhigh^{W,\secmap{},\Upsilon,\typenv}_{F,\low}$.
\item [$\boldsymbol{\cC{E}{M_0} = \assign V {\cC{E_1}{M_0}}}$]  Then by $\assigntyp$ we have $\typenv \byint{j}{\secmap{}}{F} V : {\tef {s_1} {\rfrt {\theta} {\bar l}}}$ and $\typenv \byint{j}{\secmap{}}{F} {\cC{E_1}{M_0}} : {{\tef{s_1'}{\tau_1}}}$ with $s_1'.r \Fpreceq{F} \bar l$.  By Update of Effects (Lemma~\ref{app-prop-iflow-updateeffects}) we have $j \preceq s'_1.r$, which implies that $j \Fpreceq{F} s'_1.r$ and $s'_1.r \not\Fpreceq{F} \low$.
Therefore, $\bar l \not\Fpreceq{F} \low$, and by induction hypothesis ${\cC{E_1}{M_0}}^m \in \semhigh^{W,\secmap{},\Upsilon,\typenv}_{F,\low}$.  Then, by Composition of High Expressions (Lemma \ref{app-prop-iflow-composehigh}), $M^m \in \semhigh^{W,\secmap{},\Upsilon,\typenv}_{F,\low}$.
\item [$\boldsymbol{\cC{E}{M_0} = \flow{F'}{\cC{E_1}{M_0}}}$]  Then by~$\flowtyp$ we have $\typenv \byint{j}{\secmap{}}{F \klmeet F'} {\cC{E_1}{M_0}} : {{\tef{s_1}{\tau_1}}}$.  By induction hypothesis ${\cC{E_1}{M_0}}^m \in \semhigh^{W,\secmap{},\Upsilon,\typenv}_{F \klmeet F',\low}$, which implies ${\cC{E_1}{M_0}}^m \in \semhigh^{W,\secmap{},\Upsilon,\typenv}_{F,\low}$.  Then, by Composition of High Expressions (Lemma \ref{app-prop-iflow-composehigh}), we conclude that $M^m \in \semhigh^{W,\secmap{},\Upsilon,\typenv}_{F,\low}$.
\end{description}

\end{proof}

\begin{figure}
\figline
\begin{defi}[$\fR^{W,\secmap{},\Upsilon,\typenv}_{j,F,\low}$]  \label{app-defR}
 We have that $M_1 ~\fR^{W,\secmap{},\Upsilon,\typenv}_{j,F,\low}~ M_2$ if ${\typenv \byint{j}{\secmap{}}{F} M_1 : s_1, \tau}$ and ${\typenv \byint{j}{\secmap{}}{F} M_2 : s_2, \tau}$ for some $\typenv$, $s_1$, $s_2$ and $\tau$ and one of the following holds:

\begin{description}
\item[Clause 1']${M_1}^m, {M_2}^m \in \semhigh^{W,\secmap{},\Upsilon,\typenv}_{F,\low}$, for all allowed-policy mappings $W$, or
\item[Clause 2']$M_1=M_2$, or
\item[Clause 3']$M_1=\cond{\bar{M}_1}{\bar N_t }{{\bar N}_f}$ and $M_2=\cond{\bar{M}_2}{\bar{N_t}}{\bar{N_f}}$ with $\bar{M}_1~\fR^{W,\secmap{},\Upsilon,\typenv}_{j,F,\low}~\bar{M}_2$, and ${\bar{N_t}}^m, {\bar{N_f}}^m \in \semhigh^{W,\secmap{},\Upsilon,\typenv}_{F,\low}$, or
\item[Clause 4']$M_1=\app{\bar{M}_1}{\bar{N}_1}$ and $M_2=\app{\bar{M}_2}{\bar{N}_2}$ with  $\bar{M}_1~\fR^{W,\secmap{},\Upsilon,\typenv}_{j,F,\low}~\bar{M}_2$, and ${\bar{N}_1}^m,{\bar{N}_2}^m \in \hbisim_{F,\low}$, and $\bar{M}_1,\bar{M}_2$ are syntactically $(\secmap{},\typenv,j,F,\low)$-high functions, or
\item[Clause 5']$M_1=\app{\bar{M}_1}{\bar{N}_1}$ and $M_2=\app{\bar{M}_2}{\bar{N}_2}$ with $\bar M_1~\fT^{\secmap{},\typenv}_{j,F,\low}~\bar M_2$, and $\bar{N}_1~\fR^{W,\secmap{},\Upsilon,\typenv}_{j,F,\low}~\bar{N}_2$, and $\bar{M}_1, \bar{M}_2$ are syntactically $(\secmap{},\typenv,j,F,\low)$-high functions, or
\item[Clause 6']$M_1=\seq{\bar{M}_1}{\bar{N}}$ and $M_2=\seq{\bar{M}_2}{\bar{N}}$ with $\bar{M}_1~\fR^{W,\secmap{},\Upsilon,\typenv}_{j,F,\low}~\bar{M}_2$, and ${\bar{N}}^m\in\semhigh^{W,\secmap{},\Upsilon,\typenv}_{F,\low}$, or
\item[Clause 7']$M_1=\seq{\bar{M}_1}{\bar{N}}$ and $M_2=\seq{\bar{M}_2}{\bar{N}}$ with $\bar{M}_1~\fT^{\secmap{},\typenv}_{j,F,\low}~\bar{M}_2$, or
\item[Clause 8']$M_1=\rfr{l,\theta} {\bar{M}_1}$ and $M_2=\rfr{l,\theta}{\bar{M}_2}$ with $\bar{M}_1~\fR^{W,\secmap{},\Upsilon,\typenv}_{j,F,\low}~\bar{M}_2$, and $l \not\Fpreceq{F} \low$, or
\item[Clause 9']$M_1=\deref {\bar{M}_1}$ and $M_2=\deref{\bar{M}_2}$ with $\bar{M}_1~\fR^{W,\secmap{},\Upsilon,\typenv}_{j,F,\low}~\bar{M}_2$, or
\item[Clause 10']$M_1=\assign{\bar{M}_1} {\bar{N}_1}$ and $M_2=\assign{\bar{M}_2}{\bar{N}_2}$ with $\bar{M}_1~\fR^{W,\secmap{},\Upsilon,\typenv}_{j,F,\low}~\bar{M}_2$, and ${\bar{N}_1}^m, {\bar{N}_2}^m \in \semhigh^{W,\secmap{},\Upsilon,\typenv}_{F,\low}$, and $\bar{M}_1, \bar{M}_2$ both have type $\rfrt {{\theta}}{l}$ for some ${\theta}$ and $l$ such that ${l} \not\Fpreceq{F} \low$, or
\item[Clause 11']$M_1=\assign{\bar{M}_1} {\bar{N}_1}$ and $M_2=\assign{\bar{M}_2}{\bar{N}_2}$ with $\bar{M}_1~\fT^{\secmap{},\typenv}_{j,F,\low}~\bar{M}_2$, and $\bar{N}_1~\fR^{W,\secmap{},\Upsilon,\typenv}_{j,F,\low}~\bar{N}_2$, and $\bar{M}_1, \bar{M}_2$ both have type $\rfrt {{\theta}}{l}$ for some ${\theta}$ and $l$ such that ${l} \not\Fpreceq{F} \low$, or
\item[Clause 12']$M_1=\flow{F'}{\bar{M}_1}$ and $M_2=\flow{F'}{\bar{M}_2}$ with $\bar{M}_1~\fR^{W,\secmap{},\Upsilon,\typenv}_{j,F\klmeet F',\low}$ $\bar{M}_2$. %
\end{description}
\end{defi}
\caption{The relation $\fR^{W,\secmap{},\Upsilon,\typenv}_{j,F,\low}$} \label{app-figfR}
\figline
\end{figure}

We now design a binary relation on expressions that uses $\fT^{\secmap{},\typenv}_{j,F,\low}$ to ensure that high-terminating expressions are always followed by operationally high ones.
The definition of $\fR^{W,\secmap{},\Upsilon,\typenv}_{j,F,\low}$ is given in Figure~\ref{app-figfR}.  Notice that it is a symmetric relation.  In order to ensure that expressions that are related by $\fR^{W,\secmap{},\Upsilon,\typenv}_{j,F,\low}$ perform the same changes to the low memory, its definition requires that the references that are created or written using (potentially) different values are high, and that the body of the functions that are applied are syntactically high.
\begin{rem}\label{app-prop-iflow-remTtoR}
If $M_1 ~\fT^{\secmap{},\typenv}_{j,F,\low}~ M_2$, then $M_1 ~\fR^{W,\secmap{},\Upsilon,\typenv}_{j,F,\low}~ M_2$.
\end{rem}
The above remark is used to prove the following lemma.
\begin{lem}\label{app-prop-iflow-highR}
If $M_1 ~\fR^{W,\secmap{},\Upsilon,\typenv}_{\Upsilon(m),F,\low}~ M_2$, for some $\secmap{}$, $\Upsilon$, $\typenv$, $F$, $\low$ and $m$, then $M_1^m\in\semhigh^{W,\secmap{},\Upsilon,\typenv}_{F,\low}$ implies $M_2^m\in\semhigh^{W,\secmap{},\Upsilon,\typenv}_{F,\low}$.
\end{lem}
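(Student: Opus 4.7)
The plan is to proceed by induction on the construction of the pair $(M_1, M_2)$ in $\fR^{W,\secmap{},\Upsilon,\typenv}_{j,F,\low}$, doing a case analysis on the clause by which $M_1$ and $M_2$ are related. Clauses~1' and~2' are immediate: in the former both threads are operationally high by definition, and in the latter $M_1 = M_2$. For the remaining clauses, the template of the argument is the same: from the hypothesis $M_1^m \in \semhigh^{W,\secmap{},\Upsilon,\typenv}_{F,\low}$, first extract operational highness (and, where relevant, non-resolvability) of the immediate sub-expressions of $M_1$; next transport these properties across the relation to the sub-expressions of $M_2$; finally reassemble using Composition of High Expressions (Lemma~\ref{app-prop-iflow-composehigh}) to conclude $M_2^m \in \semhigh^{W,\secmap{},\Upsilon,\typenv}_{F,\low}$.

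For the transport step I would rely on three tools: the induction hypothesis (for sub-expressions related by $\fR$), Lemma~\ref{app-prop-iflow-highT} (to transfer operational highness across $\fT$), and Lemma~\ref{app-prop-iflow-nonresolvableT} (to transfer non-resolvability across $\fT$). For example, in Clause~3' where $M_i = \cond{\bar M_i}{\bar N_t}{\bar N_f}$ and $\bar M_1 \fR \bar M_2$, operational highness of $M_1^m$ entails that $\bar M_1^m \in \semhigh$, the induction hypothesis gives $\bar M_2^m \in \semhigh$, and since $\bar N_t^m, \bar N_f^m \in \semhigh$ are already part of the clause, Lemma~\ref{app-prop-iflow-composehigh}(2) produces $M_2^m \in \semhigh$. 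Clauses~4', 6', 8', 9', 10' and~12' are handled analogously, each invoking the matching point of Lemma~\ref{app-prop-iflow-composehigh}; note that Clause~12' needs the remark that $\semhigh^{W,\typenv}_{F\klmeet F',\low} \subseteq \semhigh^{W,\typenv}_{F,\low}$ in order to align the flow-policy parameter before applying Lemma~\ref{app-prop-iflow-composehigh}(6).

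The main obstacle will be the clauses in which the ``head'' sub-expression is related only by $\fT$, namely Clauses~5', 7' and~11'. Here Composition of High Expressions offers a disjunctive premise (either the head is non-resolvable and operationally high, or both the head and the continuation are operationally high), and I must feed it via case analysis on the behaviour of $\bar M_1$. If $\bar M_1$ is non-resolvable, then Lemma~\ref{app-prop-iflow-nonresolvableT} yields non-resolvability of $\bar M_2$, while Lemma~\ref{app-prop-iflow-highT} yields $\bar M_2^m \in \semhigh$ (since operational highness of $M_1^m$ already forces $\bar M_1^m \in \semhigh$), and we conclude by the first branch of the composition lemma. Otherwise $\bar M_1$ resolves, so the continuation of $M_1$ must itself be operationally high; for the continuation one then uses the induction hypothesis (Clause~5', 11') or the fact that the continuation is syntactically shared (Clause~7') together with Lemma~\ref{app-prop-iflow-highT} to obtain operational highness on the $M_2$ side, and then conclude by the second branch of the composition lemma.

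Throughout the proof I would invoke Subject Reduction (Proposition~\ref{app-prop-iflow-subjectreduction}) implicitly to maintain typability of derivatives, which is required both to stay inside the relation $\fR$ during any intermediate reasoning and to satisfy the typing premises of Lemma~\ref{app-prop-iflow-composehigh}. The conceptual novelty compared with the proof of Lemma~\ref{app-prop-iflow-highT} is precisely this interplay between the weaker $\fR$-relation on heads and the stronger $\fT$-relation where it occurs; once the bookkeeping between the two is set up via Lemmas~\ref{app-prop-iflow-nonresolvableT} and~\ref{app-prop-iflow-highT}, the composition lemma does all of the remaining work.
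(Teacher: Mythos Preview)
Your proposal is correct and follows exactly the paper's approach: induction on the inductive definition of $\fR^{W,\secmap{},\Upsilon,\typenv}_{j,F,\low}$, invoking Lemma~\ref{app-prop-iflow-highT} (and, for Clauses~5', 7', 11', Lemma~\ref{app-prop-iflow-nonresolvableT}) to transport highness and non-resolvability across the $\fT$-related sub-expressions, and reassembling via Lemma~\ref{app-prop-iflow-composehigh}. The paper's own proof is the one-line sketch ``by induction on the definition of $M_1~\fR~M_2$, using Lemma~\ref{app-prop-iflow-highT}'', so your expanded case analysis is a faithful elaboration; the only superfluous ingredient is the appeal to Subject Reduction, which is not needed here since no reduction step is taken in the argument.
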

\begin{proof}  By induction on the definition of $M_1 ~\fR^{W,\secmap{},\Upsilon,\typenv}_{\Upsilon(m),F,\low}~ M_2$, using Lemma~\ref{app-prop-iflow-highT}.
\end{proof}

We have seen in Splitting Computations (Lemma \ref{app-prop-iflow-split}) that two computations of the same expression can split only if the expression is about to read a reference that is given different values by the memories in which they compute.  In Lemma~\ref{app-prop-iflow-fork} we saw that the relation $\fT^{\secmap{},\typenv}_{j,F,\low}$ relates the possible outcomes of expressions that are typable with a low termination effect.  Finally, from the following lemma one can conclude that the above relation $\fR^{W,\secmap{},\Upsilon,\typenv}_{j,F,\low}$ relates the possible outcomes of typable expressions in general.
\begin{lem} \label{app-prop-iflow-fork}
  If %
  $\typenv\byint{j}{\secmap{}}{F} \EC{\deref {a}}:s,\tau$ with $l\not\preceq_{F\klmeet\secpolcon{\cE{E}}} \low$, 
then for any values $V_0,V_1 \in \Val$ such that $\typenv\byint{j}{\secmap{}}{F} V_i:\theta$ we have $\cC{E}{V_0}~\fR^{W,\secmap{},\Upsilon,\typenv}_{j,F,\low}~\cC{E}{V_1}$.
\end{lem}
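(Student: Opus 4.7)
The plan is to prove the lemma by structural induction on $\cE{E}$, mirroring the proof of Lemma~\ref{app-prop-iflow-forkt} for $\fT$ but exploiting the richer set of clauses in the definition of $\fR$ (Definition~\ref{app-defR}). Recall that, by assumption, $\secmap{1}(a) = l$ is high with respect to $\low$ under $F\klmeet\secpolcon{\cE{E}}$, so Update of Effects (Lemma~\ref{app-prop-iflow-updateeffects}) ensures that any sub-expression whose evaluation context contains the hole has a high reading effect.

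For the base case $\cE{E} = []$, both $\cC{E}{V_0} = V_0$ and $\cC{E}{V_1} = V_1$ are values and therefore trivially operationally $(W,\secmap{},\Upsilon,\typenv,F,\low)$-high, so Clause~1' of Definition~\ref{app-defR} applies. For the flow-declaration case $\cE{E} = \flow{F'}{\cE{E'}}$, rule $\flowtyp$ gives a typing under $F\klmeet F'$; the induction hypothesis yields $\cC{E'}{V_0}~\fR^{W,\secmap{},\Upsilon,\typenv}_{j,F\klmeet F',\low}~\cC{E'}{V_1}$, and Clause~12' closes the case.

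For each compound context $\cE{E}$ with a single sub-context (namely $\rfr{l,\theta}{\cdot}$, $\deref{\cdot}$, and $\cond{\cdot}{N_t}{N_f}$), the plan is to invert the relevant typing rule, apply the IH to the sub-context containing the hole, and assemble using Clause~8', Clause~9' or Clause~3' respectively. In the reference-creation and conditional cases, the required side conditions (that $l\not\Fpreceq{F}\low$ for $\rfr{l,\theta}{\cdot}$, and that both branches of the conditional are operationally high) follow from the typing side-conditions on the reading effect of the sub-context combined with Update of Effects and High Expressions (Lemma~\ref{app-prop-iflow-highexpr}).

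The main technical obstacle lies in the two-sub-context cases $\cE{E}=\app{\cE{E'}}{N}$, $\cE{E}=\app{V}{\cE{E'}}$, $\cE{E}=\seq{\cE{E'}}{N}$, $\cE{E}=\assign{\cE{E'}}{N}$ and $\cE{E}=\assign{V}{\cE{E'}}$, where one must pick the right clause of $\fR$ according to the position of the hole. When the hole is in the head sub-expression, the IH delivers $\cC{E'}{V_0}~\fR~\cC{E'}{V_1}$ and the effect conditions of the relevant typing rule (e.g.\ $s.r\Fpreceq{F\klmeet\secpolcon{\cE{E}}}s'.w$ for $\apptyp$, $s.t\Fpreceq{F\klmeet\secpolcon{\cE{E}}}s'.w$ for $\seqtyp$, and $s.r\Fpreceq{F\klmeet\secpolcon{\cE{E}}}l$ for $\assigntyp$) together with the high reading effect of $\cC{E'}{\deref a}$ force the sibling to be syntactically high, yielding Clauses~4', 6' or 10' via High Expressions (Lemma~\ref{app-prop-iflow-highexpr}); when the hole is in the argument/right sub-expression, the head is the same value $V$ on both sides so $V~\fT~V$ by Clause~2 of $\fT$, the IH gives the right sub-expression in $\fR$, and the effect conditions force $V$ to be a syntactically high function or the target reference to have a high level, so Clauses~5', 7' or 11' apply. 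Throughout, the care needed is to ensure that the flow-policy parameter of $\fR$ is the same $F$ outside flow declarations and $F\klmeet F'$ inside, in accordance with how $\secpolcon{\cE{E}}$ unfolds.
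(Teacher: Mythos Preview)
Your inductive structure is right, and the base case, the flow-declaration case, and the single-sub-context cases go through as you describe. The gap is in the head-position cases for sequence, application and assignment.

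Take $\cE{E}=\seq{\cE{E'}}{N}$. You claim that the side condition $s.t\Fpreceq{F}s'.w$ of rule $\seqtyp$, ``together with the high reading effect of $\cC{E'}{\deref a}$'', forces $N$ to be syntactically high, so Clause~6' applies. But Update of Effects (Lemma~\ref{app-prop-iflow-updateeffects}) gives you $\secmap{1}(a)\preceq s.r$, which says nothing about $s.t$: the termination effect of $\cC{E'}{\deref a}$ need not be high merely because its reading effect is. Hence $s'.w$ need not be high, $N$ need not be syntactically high, and Clause~6' is not available in general. The same issue arises for $\app{\cC{E'}{\deref a}}{N}$ (Clause~4' requires $N^m\in\semhigh^{W,\secmap{},\Upsilon,\typenv}_{F,\low}$, but only $s.t\Fpreceq{F}s''.w$ constrains $N$'s writing effect) and for $\assign{\cC{E'}{\deref a}}{N}$ (Clause~10').

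What the paper does---and this is why its proof cites Lemma~\ref{app-prop-iflow-forkt}---is to split on whether $s.t\Fpreceq{F}\low$. If $s.t\Fpreceq{F}\low$, then Lemma~\ref{app-prop-iflow-forkt} gives $\cC{E'}{V_0}~\fT^{\secmap{},\typenv}_{j,F,\low}~\cC{E'}{V_1}$, and since $N$ is \emph{the same} on both sides one concludes via Clause~7' (respectively Clause~5' or 11', using $N~\fR~N$ by Clause~2'). If $s.t\not\Fpreceq{F}\low$, then the side condition does force $N$ to be syntactically high, and your argument via Clause~6' (respectively 4', 10') goes through. So Clauses~5', 7', 11' are needed not only when the hole is in the argument position, as you suggest, but also in the head-position cases whenever the termination effect of the head happens to be low.
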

\begin{proof} By induction on the structure of $\cE{E}$ using Replacement (Lemma~\ref{app-prop-iflow-repllemma}), Lemma~\ref{app-prop-iflow-forkt}, Lemma \ref{app-prop-iflow-highexpr}.
\begin{description}
\item [$\boldsymbol{\cC{E}{\deref {a}}= {\flow {F'}{\cC{E_1}{\deref {a}}}}}$]  By rule $\flowtyp$ we have $\typenv \byint{j}{\secmap{}}{F \klmeet F'} V : {\tef s {\tau}}$.  By induction hypothesis $\cC{E_1}{V_0}~\fT^{j}_{F\klmeet F',\low}~\cC{E_1}{V_1}$, so we conclude by Replacement (Lemma~\ref{app-prop-iflow-repllemma}) and Clause 12'.  \qedhere

\end{description}
\end{proof}

We now state a crucial result of the paper: the relation $\fT^{\secmap{},\typenv}_{j,F,\low}$ is a ``sort of'' \mentionind{bisimulation!for non-disclosure for networks}{strong bisimulation}.  
\begin{prop}[Strong Bisimulation for Typable Low Threads]  \label{app-prop-iflow-weak} \text{}\\
Consider a given allowed-policy mapping $W$, reference labeling $\secmap{}$, thread labeling $\Upsilon$, typing environment $\typenv$, flow policy $F$, security level $\low$, two expressions $M_1$ and $M_2$ and thread name $m$ such that $M_1^m \notin \semhigh^{\secmap{},\Upsilon,\typenv}_{F,\low}$.  If, for states $\confd {T_1}{S_1}$,$\confd {T_2}{S_2}$ with $S_1, S_2$ being $({\secmap{}},{\typenv})$-compatible we have that there exist $d,F',P_1',T_1',S_1'$ such that
\myexample{
M_1~\fR^{W,\secmap{},\Upsilon,\typenv}_{\Upsilon(m),F,\low}~M_2 ~\textit{and}~ %
W \vdash \iconft {T_1} {\{{M_1}^m\}} {S_1} \xarr{F'}{d} \iconft {T_1'}{P_1'}{S_1'} ~\textit{and}~ \confd {T_1}{S_1} \memeqF{\secmap{1},\Upsilon}{F \klmeet F'}{\low} \confd {T_2}{S_2}, 
}
with $(\dom{{S_1}'}-\dom{S_1})\cap\dom{S_2}=\emptyset$, %
then there exist $P_2'$, $T_2'$ and $S_2'$ such that 
\myexample{
W \vdash \iconft{T_2}{\{{M_2}^m\}}{S_2}\xarr{F'}{d} \iconft{T_2'}{P_2'}{S_2'} ~\textit{and}~ M_1'~\fR^{W,\secmap{},\Upsilon,\typenv}_{j,F,\low}~M_2' ~\textit{and}~ \confd {T_1'}{S_1'} \memeqF{\secmap{1},\Upsilon}{F \klmeet F'}{\low} \confd {T_2'}{S_2'}
}
Furthermore, if $P_1'=\{M_1'^m\}$ then $P_2'=\{M_2'^m\}$, if $P_1'=\{M_1'^m,N^n\}$ for some thread $N^n$ and $(\dom{{T_1}'}-$ $\dom{T_1})\cap\dom{T_2}=\emptyset$ %
then $P_2'=\{M_2'^m,N^n\}$, and $S_1', S_2'$ are still $({\secmap{}},{\typenv})$-compatible.
\end{prop}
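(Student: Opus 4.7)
The plan is to mimic the proof of Strong Bisimulation for Low-Termination (Proposition~\ref{app-prop-iflow-strong}), but now accounting for the fact that $M_1$ may have a high termination effect. I would proceed by case analysis on the clause of Definition~\ref{app-defR} by which $M_1~\fR^{W,\secmap{},\Upsilon,\typenv}_{j,F,\low}~M_2$, and by induction on this definition. In each case, Subject Reduction (Proposition~\ref{app-prop-iflow-subjectreduction}) is invoked to guarantee that the resulting expressions are still typable with the same type, which is required to stay in $\fR^{W,\secmap{},\Upsilon,\typenv}_{j,F,\low}$. For the clauses whose components are related by $\fT^{\secmap{},\typenv}_{j,F,\low}$ (namely Clauses~5', 7' and 11'), the matching transition on the $M_2$ side is provided by Strong Bisimulation for Low-Termination (Proposition~\ref{app-prop-iflow-strong}), and Remark~\ref{app-prop-iflow-remTtoR} is used to re-embed $\fT$ derivatives back into $\fR$.

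First I would dispatch Clause~2' (the reflexive case, $M_1=M_2$). Here Guaranteed Transitions (Lemma~\ref{app-prop-iflow-guarantee}) produces the matching move, and Splitting Computations (Lemma~\ref{app-prop-iflow-split}) analyses the situation when the two reducts disagree. In the dereferencing sub-case, the reference being dereferenced must be high (since the resulting stores are assumed low-equal), and Lemma~\ref{app-prop-iflow-fork} places the split outcomes back into $\fR^{W,\secmap{},\Upsilon,\typenv}_{j,F,\low}$ via the appropriate evaluation context clause. In the allowed-condition sub-case we have $M_1=\cC{E}{\allowed {\bar F}{N_t}{N_f}}$ with $T_1(m)\ne T_2(m)$: then $\Upsilon(m)=j\not\Fpreceq{F}\low$, so Location Sensitive Typable High Threads (Lemma~\ref{app-prop-iflow-potentially}) gives that both reducts are operationally $(W,\typenv,F,\low)$-high threads, placing them in $\fR^{W,\secmap{},\Upsilon,\typenv}_{j,F,\low}$ via Clause~1', which contradicts the assumption that $M_1^m$ is not operationally high --- so this sub-case cannot actually arise under our premises.

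Next I would treat the compound clauses (3', 4', 6', 8', 9', 10', 12'). For each of these the reducing expression is either a sub-expression in $\fR^{W,\secmap{},\Upsilon,\typenv}$-position, for which the induction hypothesis produces a matching step on the $M_2$ side and we recompose using the corresponding clause, or a sub-expression in $\fT^{\secmap{},\typenv}$-position handled as described above. Special attention is needed when one of the sub-expressions reaches a value and a reduction step is taken at the outer layer (e.g.\ in $\app V V'$ after reducing the function): in Clauses~4'/5', 10'/11', one must verify that the assumed high-ness of the side components (syntactic highness of the function, or operational highness of $N_1,N_2$) survives and, when necessary, that the body of a syntactically high function application remains operationally high by virtue of High Expressions (Lemma~\ref{app-prop-iflow-highexpr}) and substitution (Lemma~\ref{app-prop-iflow-subslemma}). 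Clause~12' (flow declaration) is handled by invoking the induction hypothesis under the refined parameter $F \klmeet F'$ and re-wrapping with Clause~12'.

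The main obstacle will be the allowed-condition reductions inside the compound clauses, where two related configurations might be located at distinct domains with different allowed flow policies. In those sub-cases, one of the two executions takes the $N_t$ branch while the other takes the $N_f$ branch; to close the diagram, one must lift the resulting pair into the operationally-high clause (Clause~1') by arguing that the outer evaluation context has a high termination effect forced by $j\not\Fpreceq{F}\low$ (Lemma~\ref{app-prop-iflow-potentially}) and that both branches are syntactically high by rule \allowtyp. This, together with the composition rules of Lemma~\ref{app-prop-iflow-composehigh} and the transfer Lemma~\ref{app-prop-iflow-highR}, is what ties the analysis together and requires the most careful bookkeeping of the parameters $F$, $F'$ and $\secpolcon{\cE{E}}$ across the inductive descent.
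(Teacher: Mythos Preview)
Your strategy is the paper's: case analysis on the clause of $\fR^{W,\secmap{},\Upsilon,\typenv}_{j,F,\low}$ and induction on its definition, with Subject Reduction maintaining typability, Proposition~\ref{app-prop-iflow-strong} handling the $\fT$-components, and Lemmas~\ref{app-prop-iflow-guarantee}, \ref{app-prop-iflow-split}, \ref{app-prop-iflow-fork}, \ref{app-prop-iflow-potentially} covering Clause~2'. Two points need tightening.

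First, in the allowed-condition sub-case of Clause~2', Lemma~\ref{app-prop-iflow-potentially} does not say the \emph{reducts} are operationally high; it says that $M_1^m$ itself (which has the shape $\cC{E}{\allowed{\bar F}{N_t}{N_f}}$ with $j\not\Fpreceq{F}\low$) lies in $\semhigh^{W,\secmap{},\Upsilon,\typenv}_{F,\low}$, which is already the contradiction with the standing hypothesis --- no detour through Clause~1' is required. Relatedly, your final paragraph anticipates allowed-condition splits arising inside the compound clauses, but they never surface there: once you establish (by contrapositive of Lemma~\ref{app-prop-iflow-composehigh}) that the $\fR$-sub-component is itself not operationally high, the induction hypothesis absorbs any such split internally through its own Clause~2' case. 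No extra bookkeeping of $F$, $F'$, $\secpolcon{\cE{E}}$ is needed at each compound level.

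Second, for Clauses~5', 7' and~11' you plan to use Remark~\ref{app-prop-iflow-remTtoR} to ``re-embed'' $\fT$-derivatives into $\fR$, but that is not the mechanism: after Proposition~\ref{app-prop-iflow-strong} matches a step of the $\fT$-component, the result remains in $\fT$-position within the \emph{same} clause (5', 7' or 11'), so no coercion into $\fR$ is involved. The paper additionally cites Lemma~\ref{app-prop-iflow-nonresolvableT} for these three clauses, which you do not mention; it is what lets the first (non-resolvability) bullets of Lemma~\ref{app-prop-iflow-composehigh} be used symmetrically across a $\fT$-related pair when pushing the non-high hypothesis down to the appropriate sub-expression.
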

\begin{proof} 

By case analysis on the clause by which $M_1~\fR^{W,\secmap{},\Upsilon,\typenv}_{j,F,\low}~M_2$, and by induction on the definition of $\fR^{W,\secmap{},\Upsilon,\typenv}_{j,F,\low}$.  In the following, we use Subject Reduction (Theorem \ref{app-prop-iflow-subjectreduction}) to guarantee typability (with the same type) for $j$, ${\low}$ and ${F}$, which is a requirement for being in the $\fR^{W,\secmap{},\Upsilon,\typenv}_{j,F,\low}$ relation.  We also use the Strong Bisimulation for Low Termination Lemma (Lemma~\ref{app-prop-iflow-strong}).  Lemma~\ref{app-prop-iflow-nonresolvableT} is used to prove the cases for Clauses~5', 7' and~11'. 
\begin{description}
\item [Clause 2']  Here $M_1 = M_2$.  
  By Guaranteed Transitions (Lemma \ref{app-prop-iflow-guarantee}), then:
  \begin{itemize}
  \item If~$P_1' = \{M_1'^m\}$, and $(\dom{{S_1}'}-\dom{S_1})\cap\dom{S_2}=\emptyset$, %
then there exist $M_2'$, $T_2'$ and $S_2'$ such that $W \semvdash \iconft{T_2}{\{M_2^m\}}{S_2} \xarr{F'}{d}\iconft{T_2'}{\{M_2'^m\}}{S_2'}$ with $\confd{T_1'}{S_1'} \memeqF{\secmap{1},\Upsilon}{F\klmeet F'}{\low}\confd{T_2'}{S_2'}$.  There are two cases to consider:
    \begin{description}
    \item [$\boldsymbol{M_2'=M_1'}$] Then we have $M_1'~\fT^{\secmap{},\typenv}_{j,F,\low}~M_2'$ and $S_1', S_2'$ still $({\secmap{}},{\typenv})$-compatible, by Clause 2' and Subject Reduction (Theorem \ref{app-prop-iflow-subjectreduction}).
    \item [$\boldsymbol{M_2' \neq M_1'}$] Then by Splitting Computations (Lemma \ref{app-prop-iflow-split}) we have two possibilities:\\
      (1)~ there exists $\cE{E} $ and $a$ such that $M_1'=\cC{E}{S_1(a)}$, $F'=\extrf{\cE{E}}$, $M_2'=\cC{E}{S_2(a)}$, $\confd{T_1'}{S_1'}=\confd{T_1}{S_1}$ and $\confd{T_2'}{S_2'}=\confd{T_2}{S_2}$.  Since $S_1(a)\ne S_2(a)$, we have that $\secmap{1}(a)\not\Fpreceq{F\klmeet F'}\low$.  Therefore, $M_1'~\fR^{W,\secmap{},\Upsilon,\typenv}_{j,F,\low}~M_2'$, by Lemma~\ref{app-prop-iflow-fork} above.\\
      (2)~ there exists $\cE{E}$ such that $M_1'=$ $\cC{E}{\allowed {\bar F}{N_t}{N_f}}$, $F'=\secpolcon{\cE{E}}$, and $T_1(m) \neq T_2(m)$ with $\confd{T_1'}{S_1'}=\confd{T_1}{S_1}$ and $\confd{T_2'}{S_2'}=\confd{T_2}{S_2}$.  Since $T_1(m)\ne T_2(m)$, we have $\Upsilon(m) = j \not\Fpreceq{F}\low$, and by %
Lemma~\ref{app-prop-iflow-potentially} we have $M_1 \in \semhigh^{W,\secmap{},\Upsilon,\typenv}_{F,\low}$, which contradicts our assumption.
      \end{description}

  \item If~$P_1' = \{M_1'^m,N^n\}$, for some expression $N$ and $(\dom{{T_1}'}-$ $\dom{T_1})\cap\dom{T_2}=\emptyset$, %
then for some $M$, $l$ and $d$ we have $M_1=\cC{E}{{\threadnat l M d }}$, and there exist $M_2'$, $T_2'$ and $S_2'$ such that we have $W \semvdash$ $\iconft{T_2}{\{M_2^m\}}{S_2} \xarr{F'}{d}\iconft{T_2'}{\{M_2'^m,N^n\}}{S_2'}$ and with $\confd{T_1'}{S_1'} \memeqF{\secmap{1},\Upsilon}{F\klmeet F'}{\low}\confd{T_2'}{S_2'}$ where $M_1'=M_2'$.  Then we have $M_1'~\fT^{\secmap{},\typenv}_{j,F,\low}~M_2'$ and $S_1', S_2'$ are still $({\secmap{}},{\typenv})$-compatible, by Clause 2' and Subject Reduction (Theorem \ref{app-prop-iflow-subjectreduction}).

  \end{itemize}
\item [Clause 12'] Here $M_1=\flow{F'}{\bar{M}_1}$ and $M_2=\flow{F'}{\bar{M}_2}$ with $\bar{M}_1~\fR^{j}_{F \klmeet F',\low}$ $\bar{M}_2$.  We can assume that ${\bar{M}_1}^m \notin \semhigh^{W,\secmap{},\Upsilon,\typenv}_{F\klmeet F',\low}$, since otherwise ${\bar{M}_1}^m \in \semhigh^{W,\secmap{},\Upsilon,\typenv}_{F,\low}$ and by Composition of High Expressions (Lemma \ref{app-prop-iflow-composehigh}) $M_1^m\in\semhigh^{W,\secmap{},\Upsilon,\typenv}_{F,\low}$.  There are then two possibilities:
\begin{itemize}
\item  If~$P_1' = \{M_1'^m \}$, and $(\dom{{S_1}'}-\dom{S_1})\cap\dom{S_2}=\emptyset$, %
then $W \semvdash \iconft {T_1} {\{{\bar{M}_1}^m\}} {S_1}$ $\xarr{F''}{d}$ $\iconft {T_1'}{\{\bar{M}_1'^m\} \cup P}{S_1'}$ with $F'=\bar{F}\klmeet F''$, for some $P$.  By induction hypothesis, we have that $W \semvdash \iconft {T_2} {\{{\bar{M}_2}^m\}} {S_2}$ $\xarr{F''}{d}$ $\iconft {T_2'}{\{\bar{M}_2'^m\}}{S_2'}$, and $M_1'~\fR^{j}_{F\klmeet \bar{F},\low}~M_2'$, and $S_1', S_2'$ still $({\secmap{}},{\typenv})$-compatible, and also $\confd {T_1'}{S_1'}$ $\memeqF{\secmap{1},\Upsilon}{F\klmeet \bar{F}}{\low}$ $\confd {T_2'}{S_2'}$.  Notice that $\confd {T_1'}{S_1'}\memeqF{\secmap{1},\Upsilon}{F}{\low} \confd {T_2'}{S_2'}$.  We use Subject Reduction (Theorem \ref{app-prop-iflow-subjectreduction}) and Clause 12' to conclude.
\item  If~$P_1' = \{M_1'^m, N^n\}~$ for some expression $N$ and $(\dom{{T_1}'}-$ $\dom{T_1})\cap\dom{T_2}=\emptyset$, %
then $W \semvdash \iconft {T_1} {\{{\bar{M}_1}^m\}} {S_1}$ $\xarr{F''}{d}$ $\iconft {T_1'}{\{\bar{M}_1'^m, N^n\} \cup P}{S_1'}$ with $F'=\bar{F}\klmeet F''$.  By induction hypothesis, we have that $W \semvdash \iconft {T_2} {\{{\bar{M}_2}^m\}} {S_2}$ $\xarr{F''}{d}$ $\iconft {T_2'}{\{\bar{M}_2'^m, N^n\}}{S_2'}$, and $M_1'~\fR^{j}_{F\klmeet \bar{F},\low}~M_2'$, and $S_1', S_2'$ still $({\secmap{}},{\typenv})$-compatible, and also $\confd {T_1'}{S_1'}$ $\memeqF{\secmap{1},\Upsilon}{F\klmeet \bar{F}}{\low}$ $\confd {T_2'}{S_2'}$.  Notice that $\confd {T_1'}{S_1'}\memeqF{\secmap{1},\Upsilon}{F}{\low} \confd {T_2'}{S_2'}$.  We use Subject Reduction (Theorem \ref{app-prop-iflow-subjectreduction}) and Clause 12' to conclude. \qedhere
\end{itemize}

\end{description}

\end{proof}

\paragraph{\emph{Behavior of Sets of Typable Threads.}}

To conclude the proof of the Soundness Theorem, it remains to exhibit an appropriate bisimulation on thread configurations.

\begin{defi}[$\fA^{W,\secmap{},\Upsilon,\typenv}_{\low}$] \label{app-defbis} 
Given an allowed-policy mapping $W$, a reference labeling $\secmap{}$, a thread labeling $\Upsilon$, a typing environment $\typenv$ and a security level $\low$, the relation $\fA^{W,\secmap{},\Upsilon,\typenv}_{\low}$ is inductively defined as follows: %
\[
\begin{array}{c}
a)~ 
\fra{M^m \in \semhigh^{W,\secmap{},\Upsilon,\typenv}_{\kltop,\low}}
{\{M^m\} ~\fA^{W,\secmap{},\Upsilon,\typenv}_{\low}~ \emptyset} \quad
b)~ \fra{M^m \in \semhigh^{W,\secmap{},\Upsilon,\typenv}_{\kltop,\low}}{\emptyset ~\fA^{W,\secmap{},\Upsilon,\typenv}_{\low}~ \{M^m\}} \quad 
c)~ \fra{ M_1 ~\fR^{W,\secmap{},\Upsilon,\typenv}_{\Upsilon(m),\kltop,\low}~ M_2}{\{{M_1}^m\} ~\fA^{W,\secmap{},\Upsilon,\typenv}_{\low}~ \{ {M_2}^m \}} \\[6mm]
d)~ \fra{ P_1 ~\fA^{W,\secmap{},\Upsilon,\typenv}_{\low}~ P_2 ~~~ Q_1 ~\fA^{W,\secmap{},\Upsilon,\typenv}_{\low}~ Q_2}{P_1 \cup Q_1 ~\fA^{W,\secmap{},\Upsilon,\typenv}_{\low}~ P_2 \cup Q_2} 
\end{array}
\]
\end{defi}

\begin{prop} \label{app-prop-iflow-bis}
Given an allowed-policy mapping $W$, a reference labeling $\secmap{}$, a thread labeling $\Upsilon$, a typing environment $\typenv$ and a security level $\low$, for all allowed-policy mappings $W$ the relation 
\[\fB^{W,\secmap{},\Upsilon,\typenv}_{\low}~ = \{(\confd{P_1}{T_1},\confd{P_2}{T_2})~|~P_1~\fA^{W,\secmap{},\Upsilon,\typenv}_{\low}~P_2 \textit{ and } T_1 \memeqF{\secmap{1},\Upsilon}{\kltop}{l} T_2\}\]
is a $(W,{\secmap{}},\Upsilon,{\typenv},\ell)$-bisimulation according to Definition~\ref{def-bisimdefNDN2}.
\end{prop}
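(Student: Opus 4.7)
The plan is to verify that $\fB^{W,\secmap{},\Upsilon,\typenv}_{\low}$ satisfies the conditions of Definition~\ref{def-bisimdefNDN2} by induction on the derivation of $P_1 \fA^{W,\secmap{},\Upsilon,\typenv}_{\low} P_2$, combined with a case analysis on which component pool actually performs the transition. Since transitions in the semantics are compositional with respect to the pool (the last rule in Figure~\ref{fig-semantics}), a step from $P_1 \cup Q_1$ necessarily arises from a step of either $P_1$ or $Q_1$, so the inductive case $(d)$ reduces immediately to the base cases, up to carrying along the other (unchanged) part of the union and observing that the disjoint-naming side conditions propagate to the summands we will analyze.

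For the base cases I would proceed as follows. Case $(b)$ is vacuous since $\emptyset$ cannot step. For case $(a)$, the stepping configuration has a singleton operationally high thread, so by Definition~\ref{app-def-operationallyhigh} the step leaves the state low-equal and any spawned threads are themselves operationally high; the matching side is $\emptyset$, which matches using the reflexive closure $\rarr$, and the resulting pools remain in $\fA$ by re-applying rule $(a)$ (possibly combined with rule $(d)$ if a new thread is spawned, pairing the spawned high thread with $\emptyset$). For case $(c)$, where $P_1 = \{M_1^m\}$ and $P_2 = \{M_2^m\}$ with $M_1 \fR^{W,\secmap{},\Upsilon,\typenv}_{\Upsilon(m),\kltop,\low} M_2$, I split on whether $M_1^m$ is operationally high. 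If it is, then by Lemma~\ref{app-prop-iflow-highR} so is $M_2^m$, and we again fall back to the reasoning used in case $(a)$ applied to each singleton paired with $\emptyset$ via rule $(d)$. Otherwise, I invoke the Strong Bisimulation for Typable Low Threads (Proposition~\ref{app-prop-iflow-weak}) to obtain a matching step from $M_2^m$ producing related continuations; Subject Reduction (Proposition~\ref{app-prop-iflow-subjectreduction}) keeps typability, and compatibility of the resulting stores is preserved.

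A subtle point I would be careful about is the hypothesis of Proposition~\ref{app-prop-iflow-weak}: it requires $\confd{T_1}{S_1} \memeqF{\secmap{1},\Upsilon}{F \klmeet F'}{\low} \confd{T_2}{S_2}$, whereas Definition~\ref{def-bisimdefNDN2} only gives $\memeqF{\secmap{1},\Upsilon}{F}{l}$ for the declared flow policy $F$ of the step and $T_1 \memeqF{\secmap{1},\Upsilon}{\kltop}{l} T_2$ on position trackers. I would show that the latter is strong enough: the starting pools are related in $\fA$ with parameter $\kltop$ (the top/strictest flow policy), which means the relation $\fR$ that feeds the bisimulation was built at $F = \kltop$, and low-equality at $\kltop$ implies low-equality at any more permissive $F \klmeet F'$ since $\Fpreceq{\kltop}$ is the finest flow relation. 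Together with the position-tracker low-equality assumption, this provides the required compatible states to feed into the strong bisimulation lemma.

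The main obstacle will be the bookkeeping when a transition spawns a new thread: the resulting $P_1'$ is $\{M_1'^m, N^n\}$, and we need to exhibit a matching transition on the other side that spawns a thread with the same name $n$ (using the freshness conditions) and place the pair $\{N^n\}, \{N^n\}$ in $\fA$ via rule $(c)$ using Clause~2' of $\fR$ (identity), combined with $\{M_1'^m\}, \{M_2'^m\}$ via rule $(c)$ and the union rule $(d)$. Verifying that the spawned thread is typable (needed for $\fR$), and that if it happens to be operationally high we can instead pair it with $\emptyset$ using rule $(a)$ and combine via $(d)$, is where I expect the argument to require the most care, and where the second part of Subject Reduction (providing a typing of the spawned thread) will be essential.
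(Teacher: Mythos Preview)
Your proposal is correct and matches the paper's proof closely: the paper proceeds by the same induction on the derivation of $\fA$, handles rules $a)$ and $b)$ via operational highness and the reflexive closure, rule $c)$ via the same case split on whether $M_1^m \in \semhigh^{W,\secmap{},\Upsilon,\typenv}_{\kltop,\low}$ (invoking Lemma~\ref{app-prop-iflow-highR} in the high case and Proposition~\ref{app-prop-iflow-weak} otherwise, with Subject Reduction plus Clause~2' for spawned threads), and rule $d)$ by compositionality of the semantics. Your ``subtle point'' is resolved more directly than you suggest: since $\fR$ is instantiated with parameter $F = \kltop$, the hypothesis of Proposition~\ref{app-prop-iflow-weak} asks for low-equality at $\kltop \klmeet F' = F'$, which is exactly what the bisimulation premise supplies---no separate monotonicity argument is needed.
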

\begin{proof}
 It is easy to see, by induction on the definition of $\fA^{W,\secmap{},\Upsilon,\typenv}_{\low}$, that the relation $\fB^{W,\secmap{},\Upsilon,\typenv}_{\low}$ is symmetric. 
We show, by induction on the definition of $\fA^{W,\secmap{},\Upsilon,\typenv}_{\low}$, that if $\confd{P_1}{T_1}~\fB^{W,\secmap{},\Upsilon,\typenv}_{\low}~\confd{P_2}{T_2}$ and if for any given $({\secmap{}},{\typenv})$-compatible stores $S_1, S_2$ such that $\confd{T_1}{S_1}\memeqF{\secmap{1},\Upsilon}{F}{\low}\confd{T_2}{S_2}$ we have $W \semvdash \iconft{T_1}{P_1}{S_1} \xarr{F}{d} \iconft{T_1'}{P_1'}{S_1'}$, and $(\dom{{S_1}'}-\dom{S_1})\cap\dom{S_2}=\emptyset$ and $(\dom{{T_1}'}-$ $\dom{T_1})\cap\dom{T_2}=\emptyset$, 
then there exist $T_2'$, $P_2'$ and $S_2'$ such that $W \semvdash \iconft{T_2}{P_2}{S_2} \rarr \iconft{T_2'}{P_2'}{S_2'}$ and $\confd{P_1'}{T_1'}~\fB^{W,\secmap{},\Upsilon,\typenv}_{\low}~\confd{P_2'}{T_2'}$ and $\confd{T_1'}{S_1'}\memeqF{\secmap{1},\Upsilon}{\kltop}{\low}\confd{T_2'}{S_2'}$. Furthermore, $S_1',S_2'$ are still $(\secmap{},\typenv)$-compatible.
\begin{description}
  \item [Rule $\boldsymbol{a)}$]  Then $P_1= \{M^m\}$, $P_2 =\emptyset$, and $M^m \in\semhigh^{W,\secmap{},\Upsilon,\typenv}_{\kltop,\low}$.  Therefore, $P_1' \subseteq \semhigh^{W,\secmap{},\Upsilon,\typenv}_{G,\low}$, and $\confd{T_1'}{S_1'}\memeqF{\secmap{1},\Upsilon}{\kltop}{\low}\confd{T_1}{S_1}$ and $S'$ is still $(\secmap{},\typenv)$-compatible.  We have that $W \semvdash \iconft{T_2}{P_2}{S_2} \rarr \iconft{T_2}{P_2}{S_2}$ and by transitivity $\confd{T_1'}{S_1'}\memeqF{\secmap{1},\Upsilon}{\kltop}{\low}\confd{T_2}{S_2}$.  By Rules a) and d), we have $P_1'  ~\fA^{W,\secmap{},\Upsilon,\typenv}_{\low}~ \emptyset$. Then, $\confd{P_1'}{T_1'}~\fB^{W,\secmap{},\Upsilon,\typenv}_{\low}~\confd{\emptyset}{T_2'}$.  Rule $\boldsymbol{b)}$ is analogous.
  \item [Rule $\boldsymbol{c)}$]  Then $P_1 = \{{M_1}^m\}$ and $P_2 = \{ {M_2}^m \}$, and we have $M_1 ~\fR^{W,\secmap{},\Upsilon,\typenv}_{\Upsilon(m),\kltop,\low}~ M_2$.  If $M_1^m\in \semhigh^{W,\secmap{},\Upsilon,\typenv}_{\kltop,\low}$, then by Rule a), we have that $P_1' ~\fA^{W,\secmap{},\Upsilon,\typenv}_{\low}~ \emptyset$ and $\confd{T_1'}{S_1'}\memeqF{\secmap{1},\Upsilon}{\kltop}{\low}\confd{T_2'}{S_2'}$.  Also, by Lemma~\ref{app-prop-iflow-highR}, we have that $M_2^m\in \semhigh^{W,\secmap{},\Upsilon,\typenv}_{\kltop,\low}$, so by Rule b) $\emptyset ~\fA^{W,\secmap{},\Upsilon,\typenv}_{\low}~ P_2$.  By Rule d), we have $P_1' ~\fA^{W,\secmap{},\Upsilon,\typenv}_{\low}~ P_2$.  Then, $\confd{P_1'}{T_1'}~\fB^{W,\secmap{},\Upsilon,\typenv}_{\low}~\confd{P_2'}{T_2'}$.

If $M_1^m\notin \semhigh^{W,\secmap{},\Upsilon,\typenv}_{\kltop,\low}$, there are two cases to be considered:
    \begin{description}
    \item [$\boldsymbol{P_1'=\{{M_1'}^m\}}$]  Then by Strong Bisimulation for Typable Low Threads (Proposition~\ref{app-prop-iflow-weak}) there exist $T_2'$, $M_2'$ and $S_2'$ such that $W \semvdash \iconft {T_2}{\{{M_2}^m\}}{S_2}$ $\xarr{F'}{d}$ $\iconft{T_2'}{\{{M_2'}^m\}}{S_2'}$ with $M_1'~\fR^{W,\secmap{},\Upsilon,\typenv}_{\Upsilon(m),\kltop,\low}~M_2'$, and $\confd {T_1'}{S_1'}\memeqF{\secmap{1},\Upsilon}{\kltop}{\low} \confd {T_2'}{S_2'}$ and $S_1', S_2'$ are still $({\secmap{}},{\typenv})$-compatible.  Then, by Rule c), $\{{M_1'}^m\}$ $\fA^{W,\secmap{},\Upsilon,\typenv}_{\low}$ $\{{M_2'}^m\}$.   Then, $\confd{\{M_1'^m\}}{T_1'}~\fB^{W,\secmap{},\Upsilon,\typenv}_{\low}~\confd{\{M_2'^m\}}{T_2'}$.

    \item [$\boldsymbol{{P_1'=\{{M_1'}^m,N^n\}}}$]  We proceed as in the previous case to conclude that there exists $M_2'^m$ such that $\{{M_1'}^m\} ~\fA^{W,\secmap{},\Upsilon,\typenv}_{\low}~ \{{M_2'}^m\}$.  By Subject Reduction (Theorem~\ref{app-prop-iflow-subjectreduction}), by Lemma~\ref{app-prop-iflow-weakstrengthlemma}, and by Clause 2' we have $N~\fR^{W,\secmap{},\Upsilon,\typenv}_{\Upsilon(n),\kltop,\low}~N$, and so by Rule c) we have $\{{N^n}\} ~\fA^{W,\secmap{},\Upsilon,\typenv}_{\low}~ \{{N}^n\}$.  Therefore, by Rule d), we have $\{{M_1'}^m,N^n\}$ $\fA^{W,\secmap{},\Upsilon,\typenv}_{\low}$ $\{{M_2'}^m, N^n\}$.   Then, $\confd{\{{M_1'}^m,N^n\}}{T_1'}$ $\fB^{W,\secmap{},\Upsilon,\typenv}_{\low}$ $\confd{\{{M_2'}^m,N^n\}}{T_2'}$.
    \end{description}
  \item [Rule $\boldsymbol{d)}$]  Then $P_1 = \bar{P}_1 \cup \bar{Q}_1$ and $P_2 = \bar{P}_2 \cup \bar{Q}_2$, with $\bar{P}_1~\fA^{W,\secmap{},\Upsilon,\typenv}_{\low}~\bar{P}_2$ and $\bar{Q}_1~\fA^{W,\secmap{},\Upsilon,\typenv}_{\low}~\bar{Q}_2$.  Suppose that $W \semvdash \iconft{T_1}{\bar{P}_1}{S_1} \xarr{F}{d} \iconft{T_1'}{\bar{P}_1'}{S_1'}$ -- the case where $\bar{Q}_1$ reduces is analogous.  By induction hypothesis, there exist $T_2'$, $\bar{P}_2'$ and $S_2'$ such that $W \semvdash \iconft {T_2}{\bar{P}_2}{S_2} \rarr \iconft{T_2'}{\bar{P}_2'}{S_2'}$ with $\bar{P}_1'~\fA^{W,\secmap{},\Upsilon,\typenv}_{\low}~\bar{P}_2'$, and $\confd {T_1'}{S_1'}\memeqF{\secmap{1},\Upsilon}{\kltop}{\low} \confd {T_2'}{S_2'}$ and $S_1', S_2'$ are still $({\secmap{}},{\typenv})$-compatible.  Then, we have $W \semvdash \iconft {T_2}{\bar{P}_2 \cup \bar{Q}_2}{S_2}$ $\rarr$ $\iconft{T_2'}{\bar{P}_2' \cup \bar{Q}_2}{S_2'}$, and by Rule d) we have $\bar{P}_1' \cup \bar{Q}_1~\fA^{W,\secmap{},\Upsilon,\typenv}_{\low}~\bar{P}_2' \cup \bar{Q}_2$.   Then, $\confd{\bar{P}_1' \cup \bar{Q}_1}{T_1'}~\fB^{W,\secmap{},\Upsilon,\typenv}_{\low}$ $\confd{\bar{P}_2' \cup \bar{Q}_2}{T_2'}$. \qedhere
\end{description}
\end{proof}

\hide{
We can now prove the main-result regarding Distributed Non-disclosure:
\begin{thm}[Soundness of Typing Non-disclosure for Networks -- Theorem~\ref{prop-iflow-soundness}] \text{} \label{app-prop-iflow-soundness}
\noindent Consider a pool of threads $P$, an allowed-policy mapping $W$, a reference labeling $\secmap{}$, a thread labeling $\Upsilon$ and a typing environment $\typenv$.  If for all $M^{m} \in P$ there exist $s$, and $\tau$ such that ${\typenv \byint{\Upsilon(m)}{\secmap{}}{\kltop} M : s, \tau}$, then $P$ satisfies the Distributed Non-disclosure policy, i.e. $P\in\SecDND(W,\secmap{},\Upsilon,\typenv)$.
\end{thm}
\begin{proof}
For all $M^m \in P$ and for all choices of security levels $\low$, by assumption and by Clause 2 of Definition~\ref{app-defR}, we have that $M ~\fR^{W,\secmap{},\Upsilon,\typenv}_{\Upsilon(m),\kltop,\low}~ M$.  By Rule c) of Definition~\ref{app-defbis} we then have  ${\{{M}^m\} ~\fB^{W,\secmap{},\Upsilon,\typenv}_{\low}~ \{ {M}^m \}}$.  Therefore, by Rule d) we have that ${P ~\fB^{W,\secmap{},\Upsilon,\typenv}_{\low}~ P}$, from which we conclude that for all position trackers $T_1,T_2$ such that $\dom{P}=\dom{T_1}=\dom{T_2}$ and $T_1 \memeqF{\secmap{1},\Upsilon}{\kltop}{l} T_2$ we have $\confd{P}{T_1} ~\fB^{W,\secmap{},\Upsilon,\typenv}_{\low}~ \confd{P}{T_2}$.  By Proposition~\ref{app-prop-iflow-bis} we conclude that $\confd{P}{T_1} ~\biseqt{W,\secmap{},\Upsilon}{\typenv}{\low}~ \confd{P}{T_2}$.
\end{proof}
}

\section{Proofs for `Controlling Declassification'}

\subsection{Formalization of Flow Policy Confinement} \label{subsec-formalizationFPC}

In~\cite{Alm09,AC13} Flow Policy Confinement is defined for networks, considering a distributed setting with code mobility, by means of a bisimulation on \emph{located threads}.  In this paper we define Flow Policy Confinement for pools of threads without fixing the initial position of threads, and is based on a bisimulation on thread configurations.  We prove that the two variations of the definition are equivalent.

\paragraph{\emph{Definition on located threads.}}

The property is defined co-inductively on sets of located threads, consisting of pairs $\confd d {M^m}$ that carry information about the location $d$ of a thread $M^m$. The location of each thread determines which allowed flow policy it should obey at that point, and is used to place a restriction on the flow policies that decorate the transitions:  at any step, they should comply to the allowed flow policy of the domain where the thread who performed it is located.
\begin{defi}[$(W,\secmap{},\typenv)$-Confined Located Threads] \label{app-def-operationallyconf1}
Given an allowed-policy mapping $W$, a set of $(W,\secmap{},\typenv$)-\emph{confined located threads} is a set $\semconfLT$ of located threads that satisfies, for all $\confd d {M^{m}} \in \semconfLT$ and states $\confd T S$ with $S$ being $(W,\secmap{},\typenv)$-compatible memory $S$:
\myexample{
\begin{array}{l}
\confd d {M^{m}} \in \semconfLT ~\textit{and}~ T(m)=d ~\textit{and}~ %
W \vdash \iconft {T} {\{{M}^m\}} {S}$ $\xarr{F'}{d}$ $\iconft {T'}{\{M'^m\}\cup P}{S'} ~\textit{implies} \\
W(T(m)) \klpreceq F ~\text{and} ~\confd {T'(m)} {M'^{m}} \in \semconfLT.
\end{array}
}
Furthermore, if $P=\{N^n\}$ then also $\confd {T'(n)} {N^{n}}\in\semconfLT$, and $S'$ is still $(W,\secmap{},\typenv)$-compatible.\\
The largest set of $(W,\secmap{},\typenv)$-confined located threads %
is denoted~$\semconfLT_W^{\secmap{},\typenv}$.
\end{defi}
For any $W$, ${\secmap{}}$ and $\typenv$, the set of located threads where threads are values is a set of $(W,\secmap{},\typenv)$-confined located threads.  Furthermore, the union of a family of sets of $(W,\secmap{},\typenv)$-confined located threads is a set of $(W,\secmap{},\typenv)$-confined located threads. Consequently, $\semconf^{W,\secmap{},\typenv}$ exists. %

The notion of confinement used in~\cite{Alm09,AC13} is implicitly defined for thread configurations (by means of located threads), instead of for pools of threads, such as in the present paper.  The following definition generalizes that notion to one that works for any initial position tracker, and is thus defined for pools of threads.
\begin{defi}[Flow Policy Confinement (on located threads)]  \label{app-def-propertyFPC1}
A pool of threads $P$ satisfies Flow Policy Confinement with respect to an allowed-policy mapping $W$, a reference labeling $\secmap{}$ and a typing environment $\typenv$, if for all domains $d \in \Dom$ and threads $M^m \in P$ we have that $\confd{d}{M^m} \in \semconfLT_W^{\secmap{},\typenv}$.  We then write $P \in \SecFPCLTplus(W,\secmap{},\typenv)$.
\end{defi}

\hide{
\paragraph{\emph{Definition on thread configurations.}}

\begin{defi}[$(W,\secmap{},\typenv)$-Confined Thread Configurations -- Definition~\ref{def-operationallyconf}] \label{app-def-operationallyconf2}
Consider an allowed-policy mapping $W$, a reference labeling $\secmap{}$, and a typing environment $\typenv$.  A set $\semconfTC$ of thread configurations is a set of $(W,\secmap{},\typenv$)-\emph{confined thread configurations} if it satisfies, for all $P,T$, and for all $(W,\secmap{},\typenv)$-compatible memories $S$:
\myexample{
\begin{array}{l}
\confd{P}{T} \in \semconf ~\textit{and}~ W \vdash \iconft{T}{P}{S} \xarr{F}{d} \iconft{T'}{P'}{S'} ~\textit{implies}\\
W(d) \klpreceq F ~\textit{and}~ \confd{P'}{T'} \in \semconf
\end{array}
}
Furthermore, $S'$ is still $(W,\secmap{},\typenv)$-compatible.\\
The largest set of $(W,\secmap{},\typenv)$-confined thread configurations %
is denoted~$\semconf^{W,\secmap{},\typenv}$.
\end{defi}
For any $W$, ${\secmap{}}$ and $\typenv$, the set of thread configurations where threads are values is a set of $(W,\secmap{},\typenv)$-confined thread configurations.  Furthermore, the union of a family of $(W,\secmap{},\typenv)$-confined thread configurations is a $(W,\secmap{},\typenv)$-confined thread configurations. Consequently, $\semconf^{W,\secmap{},\typenv}$ exists. %

\begin{defi}[Flow Policy Confinement (on thread configurations) -- Definition~\ref{def-propertyFPC}]  \label{app-def-propertyFPC2}
A pool of threads $P$ satisfies Flow Policy Confinement with respect to an allowed-policy mapping $W$, a reference labeling $\secmap{}$ and a typing environment $\typenv$, if for all thread configurations $\confd P T$ such that $\dom{P} = \dom{T}$ we have that $\confd P T \in \semconfTC_W^{\secmap{},\typenv}$.  We then write $P \in \SecFPC(W,\secmap{},\typenv)$.
\end{defi}
Note that for any $P_1 \subseteq P$ and $T_1 \subseteq T$ such that $\dom{P} = \dom{T}$, if $\confd P T \in \semconfTC_W^{\secmap{},\typenv}$ then $\confd {P_1} {T_1} \in \semconfTC_W^{\secmap{},\typenv}$.
}

\paragraph{\emph{Comparison.}}

Flow Policy Confinement, defined over thread configurations, is equivalent to when defined over located threads.
\begin{prop}[Proposition~\ref{prop-FPCcomparison}] \label{app-prop-FPCcomparison}
  $\SecFPC(W,\secmap{},\typenv) = \SecFPCLTplus(W,\secmap{},\typenv)$.
\end{prop}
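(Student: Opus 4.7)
The plan is to prove the two inclusions separately by exhibiting, in each direction, a witness set satisfying the appropriate co-inductive closure property, then appealing to the maximality of $\semconf^{W,\secmap{},\typenv}$ and $\semconfLT^{\secmap{},\typenv}_W$ respectively. The bridge between the two formalisms is the obvious pointwise reading: a thread configuration $\confd P T$ ``is'' the family of located threads $\{\confd{T(m)}{M^m} \mid M^m\in P\}$.

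For the inclusion $\SecFPCLTplus(W,\secmap{},\typenv) \subseteq \SecFPC(W,\secmap{},\typenv)$, given $P\in\SecFPCLTplus(W,\secmap{},\typenv)$ and an arbitrary position tracker $T$ with $\dom{P}=\dom{T}$, I would consider the set
\myexample{
C \;=\; \bigl\{ \confd{Q}{U} \ \bigm|\ \forall N^n\in Q,\ \confd{U(n)}{N^n}\in\semconfLT^{\secmap{},\typenv}_W \bigr\}
}
and show it is a set of $(W,\secmap{},\typenv)$-confined thread configurations. If $\confd{Q}{U}\in C$ and $W\vdash\iconft{U}{Q}{S}\xarr{F}{d}\iconft{U'}{Q'}{S'}$, then the last rule of the semantics (compositionality) gives a single thread $M^m\in Q$, with $U(m)=d$, whose local transition produces the step; because $\confd{d}{M^m}\in\semconfLT^{\secmap{},\typenv}_W$, we immediately obtain both $W(d)\klpreceq F$ and the fact that the resulting located thread (and any spawned thread) lies in $\semconfLT^{\secmap{},\typenv}_W$, while all other located threads in $Q$ are unchanged and remain confined. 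Compatibility of $S'$ transfers from the located-thread definition. Hence $\confd{Q'}{U'}\in C$.

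For the converse, assume $P\in\SecFPC(W,\secmap{},\typenv)$ and consider
\myexample{
C' \;=\; \bigl\{ \confd{d}{N^n} \ \bigm|\ \exists Q,U.\ N^n\in Q,\ U(n)=d,\ \confd{Q}{U}\in\semconfTC^{W,\secmap{},\typenv} \bigr\}.
}
I would show $C'$ is a set of $(W,\secmap{},\typenv)$-confined located threads. Suppose $\confd{d}{N^n}\in C'$ via witness $\confd{Q}{U}$, and that for some $\confd{T}{S}$ with $T(n)=d$ and $S$ compatible the thread performs $W\vdash\iconft{T}{\{N^n\}}{S}\xarr{F}{d}\iconft{T'}{\{N'^n\}\cup P_0}{S'}$. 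Without loss of generality (by freshness of spawned names) this transition lifts, via the compositional rule, to $W\vdash\iconft{U}{Q}{S}\xarr{F}{d}\iconft{U'}{Q'}{S'}$ where $Q'=(Q\setminus\{N^n\})\cup\{N'^n\}\cup P_0$ and $U'$ extends $U$ accordingly. Since $\confd{Q}{U}\in\semconfTC^{W,\secmap{},\typenv}$, we get $W(d)\klpreceq F$ and $\confd{Q'}{U'}\in\semconfTC^{W,\secmap{},\typenv}$, and thus both the residual $\confd{U'(n)}{N'^n}$ and any spawned $\confd{U'(n')}{N_0^{n'}}$ (for $N_0^{n'}\in P_0$) are in $C'$. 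Compatibility of $S'$ is inherited from the thread-configuration definition.

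The mildly delicate point, and the only step requiring care, is the back-and-forth use of the compositional semantics rule: one needs that a step of $\confd{P}{T}$ is always induced by a single thread's local transition, and conversely that a local transition lifts (after suitably refreshing new names so that well-formedness is preserved) to a pool transition. Both directions are immediate from the form of the compositional rule in Figure~\ref{fig-semantics}, so the argument is ultimately routine.
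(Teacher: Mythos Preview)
Your proposal is correct and follows essentially the same strategy as the paper: both directions are proved by exhibiting the natural witness sets (thread configurations all of whose located threads are confined, and located threads extracted from some confined thread configuration) and showing they satisfy the respective co-inductive closure conditions. The only cosmetic difference is in the second inclusion, where the paper reduces to the singleton configuration $\confd{\{M^m\}}{\{m\mapsto d\}}$ rather than lifting the local step to the full witnessing pool via the compositional rule; both routes work and require the same care about freshness of spawned names.
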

\begin{proof}
We assume that $P \in \SecFPCLTplus(W,\secmap{},\typenv)$, and show that the set
\[\semconfTC = \{\confd{P}{T}~|~ \forall M^m \in P ~.~ \confd {T(m)} {M^m} \in \semconfLT_W^{\secmap{},\typenv} \textit{ and } \dom{P}=\dom{T}\}\]
satisfies $\semconfTC \subseteq \semconfTC_W^{\secmap{},\typenv}$.
Given any thread configuration $\confd P T \in \semconfTC$, and any thread $M^m \in P$, it is clear that $\confd {T(m)} {M^m} \in \semconfLT_W^{\secmap{},\typenv}$.  If 
\[W \semvdash \iconft{T}{P}{S} \xarr{F}{d} \iconft{T'}{P'}{S'} \]
then for some $P'',T''$ such that $P=\{M^m\} \cup P''$ and $T = \{m \mapsto d\} \cup T''$ we have that either:
\begin{itemize}
\item $W \semvdash \iconft{T}{\{M^m\}}{S} \xarr{F}{d} \iconft{T'}{\{M'^m\}}{S'}$ where $P' = {\{M'^m\}} \cup P''$ and $T' = \{m \mapsto d\} \cup T''$.  In this case, by hypothesis, $W(T(m)) \klpreceq F$ and also $\confd {T'(m)} {M'^{m}} \in \semconfLT_W^{\secmap{},\typenv}$.
\item $W \semvdash \iconft{T}{\{M^m\}}{S} \xarr{F}{d} \iconft{T'}{\{M'^m,N^n\}}{S'}$ where $P' = {\{M'^m,N^n\}} \cup P''$  and $T' = \{m \mapsto d, n \mapsto d'\} \cup T''$ for some $d'$.  In this case, by hypothesis, $W(T(m)) \klpreceq F$ and also $\confd {T'(m)} {M'^{m}},\confd {T'(n)} {N^{n}}\in\semconfLT_W^{\secmap{},\typenv}$.
\end{itemize}
In both cases, we can conclude that $\forall M^m \in P' ~.~ \confd {T'(m)} {M^m} \in \semconfLT_W^{\secmap{},\typenv}$ and $\dom{P'}=\dom{T'}$, so $\confd{P'}{T'}\in\semconfTC$.  We then have that $\semconfTC \subseteq \semconfTC_W^{\secmap{},\typenv}$, and also that for all $\confd P T$ such that $\dom{P}=\dom{T}$, we have that $\confd P T \in \semconfTC$.  Then, $P \in \SecFPC(W,\secmap{},\typenv)$.

We now assume that $P \in \SecFPC(W,\secmap{},\typenv)$, and show that the set
\[\semconfLT = \{\confd{T(m)}{M^m}~|~ \exists P,T ~.~ \confd P T \in \semconfTC_W^{\secmap{},\typenv}  \textit{ and } M^m \in P\}\]
satisfies $\semconfLT \subseteq \semconfLT_W^{\secmap{},\typenv}$.
Given any located thread $\confd d {M^{m}} \in \semconfLT$ and position tracker $T$ such that $T(m)=d$, and any $(W,\secmap{},\typenv)$-compatible memory $S$, it is clear that also $\confd{\{M^m\}}{\{m \mapsto d\}} \in \semconfTC_W^{\secmap{},\typenv}$.  Then, if:
\begin{itemize}
\item $W \semvdash \iconft{T}{\{M^m\}}{S} \xarr{F}{d} \iconft{T'}{\{M'^m\}}{S'}$, then by hypothesis, we have $W(d) \klpreceq F$ and $\confd{\{M'^m\}}{T'} \in \SecFPC(W,\secmap{},\typenv)$.  Therefore, $\confd{T'(m)}{M'^m} \in \semconfLT$.
\item $W \semvdash \iconft{T}{\{M^{m}\}}{S} \xarr{F}{d} \iconft{T'}{\{M'^{m},N^n\}}{S'}$, then by hypothesis, we have $W(d) \klpreceq F$ and $\confd{\{M'^{m},N^n\}}{T'} \in \SecFPC(W,\secmap{},\typenv)$.  Therefore, both $\confd{T'(m)}{M'^m},$ $\confd{T'(n)}{N'^n}  \in \semconfLT$.
\end{itemize}
We then have that $\semconfLT \subseteq \semconfLT_W^{\secmap{},\typenv}$, and also that for all $\confd d {M^m}$ such that $M^m \in P$, we have that $\confd d {M^m} \in \semconfLT$.  Then, $P \in \SecFPCLTplus(W,\secmap{},\typenv)$.
\end{proof}

\subsection{Type System}

\subsubsection{Subject reduction and safety}

In order to establish the soundness of the type system of Figure~\ref{fig-confinementI-typesystem} we need a Subject Reduction result, stating that types that are given to expressions are preserved by computation.  To prove it we follow the usual steps \cite{WF94}.
In the following, $\Pse$ is the set of pseudo-values, as defined in Figure~\ref{fig-syntaxexpressions}.

\begin{rem}\label{app-remconfine-typeffval}
\text{}
\begin{enumerate}
\item If $X\in\Pse$ and $W;\typenv \byund{A}{\secmap{}}{} X : {\tau}$, then for all flow policies $A'$, we have that $W;\typenv \byund{A'}{\secmap{}}{} X : {\tau}$.
\item For any flow policies $A$ and $A'$ such that $A' \klpreceq A$, we have that ${W;\typenv \byund{A}{\secmap{}}{} M : \tau}$ implies ${W;\typenv \byund{A'}{\secmap{}}{} M : \tau}$.
\end{enumerate}
\end{rem}

\begin{lem}\label{app-propconfinement-weakstrengthlemma}\text{} 
\begin{enumerate}
\item If $W;\typenv \byund{A}{\secmap{}}{} M : {{{\tau}}}$ and $x\notin\dom{\typenv}$ then $W;\typenv,x:\sigma\byund{A}{\secmap{}}{} M : {{{\tau}}}$.
\item If $W;\typenv,x:\sigma\byund{A}{\secmap{}}{} M : {{{\tau}}}$ and $x\notin\fv{M}$ then $W;\typenv \byund{A}{\secmap{}}{} M : {{{\tau}}}$.
\end{enumerate}
\end{lem}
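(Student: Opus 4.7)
The plan is to prove both statements by straightforward induction on the structure of the typing derivation, exploiting the fact that the type system of Figure~\ref{fig-confinementI-typesystem} is entirely syntax-directed with respect to the expression being typed, and that the typing environment $\typenv$ appears uniformly in the premises of each rule (without the kind of context manipulation that would complicate a standard weakening/strengthening argument).

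For statement~(1), I would induct on the derivation of $W;\typenv \byund{A}{\secmap{}}{} M : \tau$. Most cases are immediate: if the conclusion was reached by a rule whose premises type subexpressions in $\typenv$ (possibly extended with fresh bindings, as in \abstyp, \rectyp), the induction hypothesis supplies typings in $\typenv, x{:}\sigma$ (extended analogously), and reapplying the same rule yields the desired judgment. The only rules needing slight care are \vartyp, where one must verify that adding the binding $x{:}\sigma$ preserves the lookup of the other variable (which is immediate since $x \notin \dom{\typenv}$), and the binder rules \abstyp, \rectyp: here, if the bound variable happens to clash with $x$, one invokes an implicit $\alpha$-conversion so that the inductive step applies to an environment of shape $\typenv, y{:}\tau_1, x{:}\sigma$ which can be reordered to $\typenv, x{:}\sigma, y{:}\tau_1$ thanks to the disjointness of names.

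For statement~(2), I would similarly induct on the derivation of $W;\typenv, x{:}\sigma \byund{A}{\secmap{}}{} M : \tau$, using $x \notin \fv{M}$ to propagate the hypothesis into each subderivation (noting $\fv{M_i} \subseteq \fv{M}$ for every subexpression $M_i$ appearing in the premises, up to the usual adjustment for binders). The critical case is \vartyp: the rule only applies when $M$ is some variable $y$, and since $x \notin \fv{y}$ we have $y \neq x$, so the variable lookup succeeds already in $\typenv$ alone. All other cases amount to invoking the induction hypothesis on each premise and reapplying the same typing rule.

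I do not anticipate any substantive obstacle; the argument is the standard weakening/strengthening proof for a simply-typed, Curry-style type system. The only real subtleties are the bookkeeping around binder cases (handled by $\alpha$-renaming) and ensuring that the auxiliary parameters $A$, $\secmap{}$, and $W$ that decorate the turnstile play no role in the induction, which is evident from Figure~\ref{fig-confinementI-typesystem} since none of the rules inspect $\typenv$ beyond extending it with fresh bindings or looking up a variable.
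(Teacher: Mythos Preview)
Your proposal is correct and matches the paper's approach exactly: the paper's proof is the single line ``By induction on the inference of the type judgment,'' and what you have written is precisely the standard elaboration of that induction, with the usual care for \vartyp\ and the binder rules.
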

\begin{proof} By induction on the inference of the type judgment.  
\end{proof}

\begin{lem}[Substitution]\label{app-propconfinement-subslemma}\text{}\\  
If $W;\typenv,x:\sigma\byund{A}{\secmap{}}{} M : {\tau}$ and $W;\typenv \byund{A'}{\secmap{}}{} {X} : \sigma$ then $W;\typenv \byund{A}{\secmap{}}{}  {\substi x X M} : \tau$.  %
\end{lem}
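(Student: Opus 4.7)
The plan is to proceed by induction on the inference of $W;\typenv,x:\sigma\byund{A}{\secmap{}}{} M : \tau$, with case analysis on the last typing rule applied. For the axiomatic base cases (\niltyp, \boolttyp, \boolftyp, \loctyp), the substitution leaves $M$ unchanged and the conclusion is immediate from the corresponding axiom. For \vartyp, there are two subcases: if $M = x$, then $\tau = \sigma$ and $\substi{x}{X}{x} = X$; since $X$ is a pseudo-value, Remark~\ref{app-remconfine-typeffval}(1) lets us retype $X$ under the required $A$ to obtain $W;\typenv \byund{A}{\secmap{}}{} X : \sigma$. If $M = y \neq x$, the substitution is vacuous and we invoke Lemma~\ref{app-propconfinement-weakstrengthlemma}(2) to drop $x{:}\sigma$ from the environment.

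The standard inductive cases (\apptyp, \seqtyp, \reftyp, \dereftyp, \assigntyp, \condtyp) apply the induction hypothesis to each sub-expression under the same allowed flow policy $A$ and reassemble the typing with the corresponding rule. For the binder cases \abstyp~and~\rectyp, assuming without loss of generality (by $\alpha$-conversion) that the bound variable is distinct from $x$ and not free in $X$, we apply the induction hypothesis to the body after using Lemma~\ref{app-propconfinement-weakstrengthlemma}(1) to accommodate the extended environment, and then reapply the rule. Note that in \abstyp~the latent allowed policy of the arrow type is not affected by the substitution.

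The cases to highlight are those that interact with the allowed flow policy parameter. For \flowtyp, where $M = \flow{F}{N}$ with premise $W;\typenv,x{:}\sigma \byund{A}{\secmap{}}{} N : \tau$ and side condition $A \klpreceq F$, the induction hypothesis gives $W;\typenv \byund{A}{\secmap{}}{} \substi{x}{X}{N} : \tau$, and reapplying \flowtyp~preserves the side condition. For \allowtyp, where $M = \allowed{F}{N_t}{N_f}$, the allowed-branch premise uses the extended allowed policy $A \klmeet F$; we apply the induction hypothesis at $A \klmeet F$ for $N_t$ and at $A$ for $N_f$, which is permissible because the hypothesis on $X$ is independent of these (via Remark~\ref{app-remconfine-typeffval}(1)). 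For \migtyp, where $M = \threadnat{l}{N}{d}$ and the body $N$ is typed under $W(d)$, the induction hypothesis applies at allowed policy $W(d)$, again made available by Remark~\ref{app-remconfine-typeffval}(1) for $X$.

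The main subtlety, rather than an obstacle, is that the allowed-policy parameter varies between premise and conclusion in \allowtyp~and especially \migtyp, so one must check that the typability assumption on $X$ can be transferred to whichever allowed policy is in force. Because the substitution operation is only meaningfully applied to pseudo-values drawn from $\Pse$, Remark~\ref{app-remconfine-typeffval}(1), which says that pseudo-values can be typed under any allowed flow policy, handles this uniformly. Beyond this bookkeeping, the proof is entirely routine.
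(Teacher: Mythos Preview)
Your proposal is correct and follows essentially the same approach as the paper's proof: induction on the typing derivation with case analysis on the last rule, using Lemma~\ref{app-propconfinement-weakstrengthlemma} for weakening/strengthening and Remark~\ref{app-remconfine-typeffval}(1) to retype the pseudo-value $X$ under whichever allowed policy is in force (needed for \vartyp, \abstyp, \allowtyp, and \migtyp). The cases you single out as requiring care---the policy shift in \allowtyp\ and \migtyp---are exactly the ones the paper spells out, and your handling matches theirs.
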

\begin{proof}  By induction on the inference of $W;\typenv,x:\sigma\byund{A}{\secmap{}}{} M : {\tau}$, and by case analysis on the last rule used in this typing proof, using the previous lemma.
\begin{description} 
\item [\niltyp]

Here ${\substi x X M} = M$, and since $x \notin \fv{M}$ then by Lemma~\ref{app-propconfinement-weakstrengthlemma} we have $W;\typenv \byund{j}{\secmap{}}{F} {M} : s, \tau$.

\item [\vartyp]

If $M=x$ then $\sigma=\tau$ and $\substi x X{M} = X$.   By Remark~\ref{app-remconfine-typeffval}, we have $W;\typenv \byund{A}{\secmap{}}{} {X} : \tau$.
If $M \neq x$ then $\substi x X{M} = M$, where $x \notin \fv{M}$.  Therefore, by Lemma~\ref{app-propconfinement-weakstrengthlemma},  we have $W;\typenv \byund{A}{\secmap{}}{} {M} : \tau$.

\item [\abstyp]

Here $M = \lam y {\bar{M}}$, and $W;\typenv,{x:\sigma},{y:\bar{\tau}} \byund{\bar A}{\secmap{}}{} {\bar{M}} : \bar{\sigma}$ where $\tau = \bar{\tau} \xarr{\bar{A}}{} \bar{\sigma}$.  We can assume that $y \notin \dom{W;\typenv,{x:\sigma}}$ (otherwise rename $y$).  Therefore $\substi x X{\lam y {\bar{M}}} = {\lam y {\substi x X{\bar{M}}}}$.  By assumption and Lemma~\ref{app-propconfinement-weakstrengthlemma} we can write $W;\typenv,{y:\bar{\tau}} \byund{A'}{\secmap{}}{} {X} : \sigma$.  By induction hypothesis, $W;\typenv,{y:\bar{\tau}} \byund{\bar A}{\secmap{}}{} \substi x X{\bar{M}} : \bar{\sigma}$.  Then, by $\abstyp$, $W;\typenv \byund{A}{\secmap{}}{} {\lam y {\substi x X{\bar{M}}}} : \tau$.

\item [\rectyp]

Here $M = \fix y {\bar{X}}$, and $W;\typenv,{x:\sigma},{y:\tau} \byund{A}{\secmap{}}{} {\bar{X}} : \tau$.  We can assume $y \notin \dom{W;\typenv,{x:\sigma}}$ (otherwise rename $y$).  Therefore $\substi x X{\fix y {\bar{X}}} = {\fix y {\substi x X{\bar{X}}}}$.  By assumption and Lemma~\ref{app-propconfinement-weakstrengthlemma} we have $W;\typenv,{y:\tau} \byund{A'}{\secmap{}}{} {X} : \sigma$.  By induction hypothesis, $W;\typenv,{y:\tau} \byund{A}{\secmap{}}{} {\substi x X{\bar{X}}} : \tau$.  Then, by $\rectyp$, $W;\typenv \byund{A}{\secmap{}}{} {\fix y {\substi x X{\bar{X}}}} : \tau$.

\item [\condtyp]

Here $M = \cond {\bar{M}} {{\bar N}_t} {{\bar N}_f}$ and we have $W;\typenv,{x:\sigma} \byund{A}{\secmap{}}{} {\bar{M}} : \bool$, and $W;\typenv,{x:\sigma} \byund{A}{\secmap{}}{} {N_t} : \tau_1$ and $W;\typenv,{x:\sigma} \byund{A}{\secmap{}}{} {N_f} : \tau_2$.  By induction hypothesis, $W;\typenv,{x:\sigma} \byund{A}{\secmap{}}{} {\substi x X{\bar{M}}} : \bool$, $W;\typenv,{x:\sigma} \byund{A}{\secmap{}}{} {\substi x X {N_t}} : \tau_1$ and $W;\typenv,{x:\sigma} \byund{A}{\secmap{}}{} {\substi x X {N_f}} : \tau_2$.  Therefore, we have $W;\typenv,{x:\sigma} \byund{A}{\secmap{}}{} {\cond {\substi x X{\bar{M}}} {\substi x X{N_t}} {\substi x X{N_f}}} : \tau$ by rule $\condtyp$.

\item [\migtyp]

Here $M = \threadnat l {\bar M} {d}$ and we have that $W;\typenv,{x:\sigma} \byund{W(d)}{\secmap{}}{} {\bar{M}} : \tau$, with $\tau = \unit$.  By induction hypothesis, then $W;\typenv,{x:\sigma} \byund{W(d)}{\secmap{}}{} \substi x X{\bar{M}} : \tau$.  Therefore, by rule $\migtyp$, $W;\typenv,{x:\sigma} \byund{A}{\secmap{}}{} \threadnat l {\substi x X{\bar{M}}} {d}: \tau$.

\item [\flowtyp]

Here $M = \flow {\bar F} {\bar{M}}$ and $W;\typenv,{x:\sigma} \byund{A}{\secmap{}}{} {\bar{M}} : \tau$, with  $A \klpreceq {\bar F}$.  By induction hypothesis, $W;\typenv,{x:\sigma} \byund{A}{\secmap{}}{} \substi{x}{X}{\bar{M}} : \tau$.  Then, by $\flowtyp$, $W;\typenv,{x:\sigma} \byund{A}{\secmap{}}{} {\flow {\bar{F}} {\substi{x}{X}{\bar{M}}}} : \tau$.  

\item [\allowtyp]

Here $M = \allowed {\bar F} {{\bar N}_t} {{\bar N}_f}$ and we have that $W;\typenv,{x:\sigma} \byund{A}{\secmap{}}{} {\bar{M}} : \bool$ and $W;\typenv,{x:\sigma} \byund{A \klmeet {\bar F}}{\secmap{}}{} {{\bar N}_t} : \tau_1$ and $W;\typenv,{x:\sigma} \byund{A}{\secmap{}}{} {{\bar N}_f} : \tau_2$.  By induction hypothesis, $W;\typenv,{x:\sigma} \byund{A \klmeet {\bar F}}{\secmap{}}{} {\substi x X {{\bar N}_t}} : \tau_1$ and $W;\typenv,{x:\sigma} \byund{A}{\secmap{}}{} {\substi x X {{\bar N}_f}} : \tau_2$.  Therefore, by rule $\condtyp$, we have that $W;\typenv,{x:\sigma} \byund{A}{\secmap{}}{} {\allowed {\bar F} {\substi x X{N_t}} {\substi x X{N_f}}} : s, \tau$.

\end{description}
The proofs for the cases $\loctyp$, $\boolttyp$ and $\boolftyp$ are analogous to the one for $\niltyp$, while the proofs for $\reftyp$, $\apptyp$, $\seqtyp$, $\dereftyp$ and $\assigntyp$ are analogous to the one for $\condtyp$.
\end{proof}

\begin{lem}[Replacement]\label{app-propconfinement-repllemma}\text{}\\ %
If $W;\typenv \byund{A}{\secmap{}}{} \EC{M} : \tau$ is a valid judgment, then the proof gives $M$ a typing $W;\typenv \byund{A \klmeet \extrf {\cE{E}}}{\secmap{}}{} {M} : \bar{\tau}$ for some $\bar{\tau}$.  
In this case, if $W;\typenv \byund{A \klmeet \extrf {\cE{E}}}{\secmap{}}{} {N} : \bar{\tau}$, then $W;\typenv \byund{A}{\secmap{}}{} \EC{N} : \tau$.
\end{lem}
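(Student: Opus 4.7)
The plan is to proceed by induction on the structure of the evaluation context $\cE{E}$, mirroring the corresponding Replacement Lemma for the information flow type system (Lemma~\ref{app-prop-iflow-repllemma}). The statement has two directions that are proved simultaneously: an \emph{inversion} direction, showing that typing $\EC{M}$ with allowed flow policy $A$ induces a typing of $M$ with $A \klmeet \extrf{\cE{E}}$; and a \emph{reconstruction} direction, showing that any $N$ typable in that extracted context can be plugged back.

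For the base case $\cE{E} = \trou$, the statement is immediate since $\extrf{\trou} = \kltop$ and $A \klmeet \kltop = A$, with $\bar\tau = \tau$. For the standard evaluation contexts (application, sequencing, dereferencing, assignment, reference creation, and conditional tests), the typing rules in Figure~\ref{fig-confinementI-typesystem} propagate the allowed flow policy parameter $A$ unchanged to each premise, so one simply inverts the corresponding typing rule, applies the induction hypothesis to the inner context, and then reapplies the same rule for the reconstruction direction. None of these cases require reasoning about $\extrf{\cE{E}}$ beyond noting that it equals $\extrf{\cE{E'}}$ when $\cE{E}$ wraps $\cE{E'}$ in an evaluation context with no flow declarations.

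The only nontrivial case is $\cC{E}{M} = \flow{F'}{\cC{\bar{E}}{M}}$, where $\extrf{\cE{E}} = F' \klmeet \extrf{\cE{\bar{E}}}$. By inversion of rule $\flowtyp$, a derivation of $W;\typenv \byund{A}{\secmap{}}{} \flow{F'}{\cC{\bar{E}}{M}} : \tau$ requires $A \klpreceq F'$ together with $W;\typenv \byund{A}{\secmap{}}{} \cC{\bar{E}}{M} : \tau$. By the induction hypothesis on $\cE{\bar{E}}$, we then obtain $W;\typenv \byund{A \klmeet \extrf{\cE{\bar{E}}}}{\secmap{}}{} {M} : \bar{\tau}$ for some $\bar{\tau}$. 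The crux is the algebraic identity $A \klmeet F' \klmeet \extrf{\cE{\bar{E}}} = A \klmeet \extrf{\cE{\bar{E}}}$, which follows because $A \klpreceq F'$ and $\klmeet$ is the greatest lower bound in the permissiveness lattice (hence $A \klmeet F' = A$). This delivers exactly the required typing $W;\typenv \byund{A \klmeet \extrf{\cE{E}}}{\secmap{}}{} {M} : \bar{\tau}$. Conversely, given $W;\typenv \byund{A \klmeet \extrf{\cE{E}}}{\secmap{}}{} {N} : \bar{\tau}$, the same identity lets us rewrite the policy as $A \klmeet \extrf{\cE{\bar{E}}}$, apply the induction hypothesis to obtain $W;\typenv \byund{A}{\secmap{}}{} \cC{\bar{E}}{N} : \tau$, and then reapply rule $\flowtyp$ (using the already-established $A \klpreceq F'$) to conclude.

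The main subtlety, and the only place where the structure of this proof diverges from a generic replacement lemma, is precisely the handling of the flow declaration context: one must verify that the extracted policy $A \klmeet \extrf{\cE{E}}$ agrees with the policy that would be obtained by repeatedly applying the $\flowtyp$ rule, and this agreement relies entirely on the absorption property $A \klpreceq F' \Rightarrow A \klmeet F' = A$ in the lattice of flow policies. Since the remaining typing rules do not inspect $A$ beyond propagating it, no further algebraic reasoning is needed, and the induction closes cleanly.
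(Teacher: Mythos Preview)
Your proposal is correct and follows essentially the same approach as the paper: induction on the structure of $\cE{E}$, with the flow-declaration context as the only nontrivial case. You are in fact more explicit than the paper's proof about the absorption identity $A \klpreceq F' \Rightarrow A \klmeet F' = A$, which the paper uses silently when writing $A \klmeet \extrf{\cE{\bar{E}}}$ in place of $A \klmeet \extrf{\cE{E}}$.
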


\begin{proof}  By induction on the structure of $\cE{E}$.
\begin{description}
\item [$\boldsymbol{\cC{E}{M} = M}$]

This case is direct.
\item [$\boldsymbol{{\cC{E}{M} = \cond {\cC{\bar{E}}{M}} {{\bar N}_t} {{\bar N}_f}}}$]

By $\condtyp$, we have $W;\typenv \byund{A}{\secmap{}}{} \cC{\bar{E}}{M} : \bool$, and also $W;\typenv \byund{A}{\secmap{}}{} {{\bar N}_t} : \tau$ and $W;\typenv \byund{A}{\secmap{}}{} {{\bar N}_f} : \tau$.  By induction hypothesis, the proof gives $M$ a typing $W;\typenv \byund{A \klmeet \extrf{\cE{\bar{E}}}}{\secmap{}}{} {M} : \hat{\tau}$, for some $\hat{\tau}$.

Also by induction hypothesis, $W;\typenv \byund{A}{\secmap{}}{} \cC{\bar{E}}{N} : \bool$.  Again by $\condtyp$, we have $W;\typenv \byund{A}{\secmap{}}{} \cond {\cC{\bar{E}}{N}} {{\bar N}_t} {{\bar N}_f} : \tau$.

\item [$\boldsymbol{\cC{E}{M} = \flow {\bar F} {\cC{\bar{E}}{M}}}$]

By $\flowtyp$, we have $W;\typenv \byund{A}{\secmap{}}{} \cC{\bar{E}}{M} : \tau$ and $A \klpreceq \bar F$.  By induction hypothesis, the proof gives $M$ a typing $W;\typenv \byund{A \klmeet \extrf{\cE{\bar{E}}}}{\secmap{}}{} {M} : \hat{\tau}$, for some $\hat{\tau}$.

Also by induction hypothesis, $W;\typenv \byund{A}{\secmap{}}{}  \cC{\bar{E}}{N} : \tau$.
Then, again by $\flowtyp$, we have $W;\typenv \byund{A}{\secmap{}}{} \flow {\bar F}{\cC{\bar{E}}{N}} : \tau$.

\end{description}
The proofs for the cases $\cC{E}{M} = \extrf{\assign{\cC{E}{M}}N}$, $\cC{E}{M} = \extrf{\assign V {\cC{E}{M}}}$, $\cC{E}{M} = \extrf{\deref{\cC{E}{M}}}$, $\cC{E}{M} = \extrf{{\app {\cC{E}{M}} N}}$, $\cC{E}{M} = \extrf{{\app V {\cC{E}{M}}}}$, $\cC{E}{M} = \extrf{\seq{\cC{E}{M}}N}$ and $\cC{E}{M} = \extrf{\rf_{l,\theta}{\cC{E}{M}}}$,  are all analogous to the proof for the case $\cC{E}{M} =$ $\cond {\cC{\bar{E}}{M}} {N_t} {N_f}$.
\end{proof}

We check that the type of a thread and the compatibility of memories is preserved by reduction.
\begin{prop}[Subject Reduction -- Proposition~\ref{prop-subjectreduction-confinementI}]\label{app-prop-subjectreduction-confinementI}
Consider an allowed-policy mapping $W$ and a thread $M^{m}$ such that ${W;\typenv \byund{A}{\secmap{}}{} M : \tau}$, and suppose $W \vdash \iconft{T}{\{M^{m}\}} S$ $\xarr{\var {F}}{d}$ $\iconft{T'} {\{M'^{m}\} \cup P} {S'}$, for a memory $S$ that is $(W,{\secmap{}},{\typenv})$-compatible. Then, $W;\typenv\byund{A \klmeet W(T(m))}{\secmap{}}{} M':\tau$, and $S'$ is also $(W,{\secmap{}},{\typenv})$-compatible.
\end{prop}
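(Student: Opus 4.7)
My plan is to mimic the standard Wright–Felleisen technique, reducing the problem to a case analysis on the one-step redex inside an evaluation context. The auxiliary lemmas (Weakening/Strengthening, Substitution, and Replacement), as well as Remark~\ref{app-remconfine-typeffval}, are the only tools I will use; the statement is set up precisely so that Replacement handles the evaluation context and each rule is handled by a small calculation on flow policies.

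First, I decompose $M = \cC{\bar E}{\bar M}$, where $\bar M$ is the smallest redex triggered by the transition, so that $M' = \cC{\bar E}{\bar M'}$, $F = \extrf{\cE{\bar E}}$, and $S,T,P$ are affected only by the reduction of $\bar M$. By the Replacement Lemma applied to $W;\typenv \byund{A}{\secmap{}}{} \cC{\bar E}{\bar M} : \tau$, the typing derivation contains a sub-derivation $W;\typenv \byund{A \klmeet \extrf{\cE{\bar E}}}{\secmap{}}{} \bar M : \bar\tau$ for some $\bar\tau$. I will then perform a case analysis on the rule used to derive the transition $\iconft{T}{\{\bar M^m\}}{S} \xarr{F}{d} \iconft{T'}{\{\bar M'^m\} \cup P}{S'}$, in each case exhibiting a derivation $W;\typenv \byund{A \klmeet \extrf{\cE{\bar E}}}{\secmap{}}{} \bar M' : \bar\tau$; Replacement then reassembles the full typing $W;\typenv \byund{A}{\secmap{}}{} M' : \tau$, and Remark~\ref{app-remconfine-typeffval} weakens the left turnstile parameter from $A$ to $A \klmeet W(T(m))$ (since $A \klmeet W(T(m)) \klpreceq A$). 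Memory compatibility is vacuously preserved by all the rules that do not change $S$.

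The standard cases are routine: for $\beta$-reduction I invoke the Substitution Lemma on the premises of $\abstyp$; for $\fix$ I apply Substitution with the $\fix$ term itself as the pseudo-value, which is well-typed by the premise of $\rectyp$; the conditional, sequence, and dereference cases simply pick a typed sub-derivation (with the dereference case using $(W,\secmap{},\typenv)$-compatibility of $S$ to re-type $S(a)$ with any policy, thanks to Remark~\ref{app-remconfine-typeffval}(1)). For $\flow{F'}{V}$, the reduct $V$ is a pseudo-value, so Remark~\ref{app-remconfine-typeffval}(1) re-types it at the ambient policy $A \klmeet \extrf{\cE{\bar E}}$. For assignment $\assign a V$ and reference creation $\rfrl{l}{\theta}V$, I use Definition~\ref{def-confinementI-compatibility} to check that the updated store $S'$ remains compatible: the value $V$ inserted at $a$ is typable at any allowed policy (in particular at $\kltop$) with type $\secmap{2}(a) = \theta$, which is exactly the premise extracted from $\assigntyp$ or $\reftyp$.

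The two non-routine cases are the allowed-condition and remote thread creation, and the allowed-condition is the main obstacle. For $\bar M = \allowed{F'}{N_t}{N_f}$, rule $\allowtyp$ gives $W;\typenv \byund{(A \klmeet \extrf{\cE{\bar E}}) \klmeet F'}{\secmap{}}{} N_t : \bar\tau$ in the true branch and $W;\typenv \byund{A \klmeet \extrf{\cE{\bar E}}}{\secmap{}}{} N_f : \bar\tau$ in the false branch. If the semantics takes the true branch, then $W(T(m)) \klpreceq F'$, hence $A \klmeet \extrf{\cE{\bar E}} \klmeet W(T(m)) \klpreceq A \klmeet \extrf{\cE{\bar E}} \klmeet F'$, and Remark~\ref{app-remconfine-typeffval}(2) produces the desired typing of $N_t$; in the false branch no adjustment is needed beyond the same Remark applied with $W(T(m))$. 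This is the only place where the refinement of the policy from $A$ to $A \klmeet W(T(m))$ is genuinely used, and it is where I expect to have to justify the inequalities most carefully. Finally, for $\bar M = \threadnat l N {d'}$, rule $\migtyp$ yields $W;\typenv \byund{W(d')}{\secmap{}}{} N : \unit$ directly, and since the semantics sets $T'(n) = d'$, this is precisely $W;\typenv \byund{W(T'(n))}{\secmap{}}{} N : \unit$, as required; the residual $\nil$ is trivially typed by $\niltyp$ at any policy, closing the first conjunct of the conclusion.
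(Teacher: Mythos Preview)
Your approach is essentially the paper's, and all the ingredients are correct, but there is one ordering issue in how you combine Replacement with the weakening Remark~\ref{app-remconfine-typeffval}(2). Your plan is: type $\bar M'$ at $A \klmeet \extrf{\cE{\bar E}}$, reassemble via Replacement to get $M'$ typed at $A$, then weaken to $A \klmeet W(T(m))$. That works for every case \emph{except} the true branch of the allowed-condition, where (as you yourself compute) you can only obtain $W;\typenv \byund{A \klmeet \extrf{\cE{\bar E}} \klmeet W(T(m))}{\secmap{}}{} N_t : \bar\tau$, not typing at $A \klmeet \extrf{\cE{\bar E}}$. With only that in hand, the Replacement Lemma as stated will not let you reassemble at outer parameter $A$: it requires the replacement term to be typed at exactly the policy of the hole.

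The fix is to reverse the order: first apply Remark~\ref{app-remconfine-typeffval}(2) to the \emph{whole} judgment $W;\typenv \byund{A}{\secmap{}}{} \cC{\bar E}{\bar M} : \tau$ to obtain $W;\typenv \byund{A \klmeet W(T(m))}{\secmap{}}{} \cC{\bar E}{\bar M} : \tau$, and only then invoke Replacement. The hole is then at $A \klmeet W(T(m)) \klmeet \extrf{\cE{\bar E}}$, which is precisely the policy at which you type $N_t$ in the allowed-true case, and Replacement reassembles directly to the desired conclusion at $A \klmeet W(T(m))$. This is what the paper does (it uniformly types $\bar M'$ at $A \klmeet \extrf{\cE{\bar E}} \klmeet W(T(m))$ in every case, including the routine ones, and applies Replacement once at the end with outer parameter $A \klmeet W(T(m))$). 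With that one adjustment your argument goes through unchanged.
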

\begin{proof}
Suppose that we have $M=\cC{\bar{E}}{\bar{M}}$ and that $W \vdash \iconft{T}{\{{\bar{M}}^m\}}{S}$ $\xarr{\bar{F}}{d}$ $\iconft{\bar T'}{\{\bar{M}'^m\} \cup P'}{\bar{S}'}$.  We start by observing that this implies $F = \bar{F} \klmeet \extrf{\cE{\bar{E}}}$, $M'=\cC{\bar{E}}{\bar{M}'}$, $P = P'$, ${\bar{T}'} = {T'}$ and ${\bar{S}'} = {S'}$.  
We can assume, without loss of generality, that $\bar{M}$ is the smallest in the sense that there is no $\cE{\hat{E}},\hat{M},\hat{N}$ such that $\cE{\hat{E}}\neq[]$ and $\cC{\hat{E}}{\hat{M}}=\bar{M}$ for which we can write $W \vdash \iconft{T}{\{\hat{M}^m\}}{S}$ $\xarr{\hat{F}}{d}$ $\iconft{T'}{\{\hat{M}'^m\} \cup P}{S'}$.  %

By Replacement (Lemma~\ref{app-propconfinement-repllemma}), we have $W;\typenv \byund{{A} \klmeet \extrf{\bar{\cE{E}}}}{\secmap{}}{} {{\bar{M}}} : \bar{\tau}$ in the proof of $W;\typenv \byund{A}{\secmap{}}{} {\cC{\bar{E}}{\bar{M}}} : \tau$, for some $\bar{\tau}$.  
We proceed by case analysis on the transition $W \vdash \iconft{T}{\{{\bar{M}}^m\}}{S}$ $\xarr{\bar{F}}{d}$ $\iconft{T'}{\{\bar{M}'^m\} \cup P}{S'}$, and prove that if $S' \neq S$ then:
\begin{itemize}
\item There exists $\bar \tau'$ such that $W;\typenv\byund{{A} \klmeet \extrf{\bar{\cE{E}}} \klmeet W(T(m))}{\secmap{}}{} {\bar M'}:{\bar \tau'}$ and $\bar \tau \klpreceqtyp \bar \tau'$.  Furthermore, for every reference $a \in \dom{S'}$ we have ${\typenv \byund{\kltop}{\secmap{}}{} S'(a) : \secmap{2}(a)}$.
\item If~$P=\{N^n\}$~for some expression $N$ and thread name $n$, then also %
$W;\typenv\byund{{W(T'(n))}}{\secmap{}}{} N:\unit$.  (Note that in this case $S=S'$.)
\end{itemize}

By case analysis on the structure of $\bar M$:
\begin{description}
\item [$\boldsymbol{\bar{M} = \app {\lam x {\hat{M}}} V}$]

Here we have $\bar{M}' = \substi x V {\hat{M}}$, $S=S'$ and $P=\emptyset$.  By rule $\apptyp$, there exist $\hat{\tau}$ and $\hat{\sigma}$ such that $W;\typenv \byund{{A \klmeet \extrf{\bar{\cE{E}}}}}{\secmap{}}{} {\lam x {\hat{M}}} : \hat{\tau} \xarr{A \klmeet \extrf{\bar{\cE{E}}}}{} \hat{\sigma}$ and $W;\typenv \byund{{A \klmeet \extrf{\bar{\cE{E}}}}}{\secmap{}}{} {V} : \hat{\tau}$ with $\hat \sigma = \bar \tau$.
By $\abstyp$, then $W;\typenv,x:\hat{\tau} \byund{{A \klmeet \extrf{\bar{\cE{E}}}}}{\secmap{}}{} {{\hat{M}}} : \hat{\sigma}$. %
Therefore, by Lemma~\ref{app-propconfinement-subslemma}, we get $W;\typenv \byund{{A \klmeet \extrf{\bar{\cE{E}}}}}{\secmap{}}{} {\substi x V {\hat{M}}} : {\bar \tau}$.  By Remark~\ref{app-remconfine-typeffval}, $W;\typenv \byund{{A \klmeet \extrf{\bar{\cE{E}}}} \klmeet {W(T(m))}}{\secmap{}}{} {\substi x V {\hat{M}}} : \bar{\tau}$.
\item [$\boldsymbol{\bar{M} = \fix x X}$]

Here we have $\bar{M}' = {\app {\substi x {\fix x X}} X}$, $S=S'$ and $P=\emptyset$.  By rule $\rectyp$, we have $W;\typenv,x:\bar{\tau} \byund{{A \klmeet \extrf{\bar{\cE{E}}}}}{\secmap{}}{} {X} : {\bar \tau}$. %
Therefore, by Lemma~\ref{app-propconfinement-subslemma}, we get $W;\typenv \byund{{A \klmeet \extrf{\bar{\cE{E}}}}}{\secmap{}}{} {\substi x {\fix x X} {X}} : {\bar \tau}$.  By Remark~\ref{app-remconfine-typeffval}, $W;\typenv \byund{{A \klmeet \extrf{\bar{\cE{E}}}} \klmeet {W(T(m))}}{\secmap{}}{} {\substi x {\fix x X} {X}} : \bar{\tau}$.
\item [$\boldsymbol{\bar{M} = \cond {\tt} {N_t}{N_f}}$]

Here we have $\bar{M}' = N_t$, $S=S'$ and $P=\emptyset$.  By $\condtyp$, we have that $W;\typenv \byund{{A \klmeet \extrf{\bar{\cE{E}}}}}{\secmap{}}{} {N_t} : \bar{\tau}$.  By Remark~\ref{app-remconfine-typeffval}, $W;\typenv \byund{{A \klmeet \extrf{\bar{\cE{E}}}} \klmeet {W(T(m))}}{\secmap{}}{} {N_t} : \bar{\tau}$.
\item [$\boldsymbol{\bar{M} = \rfr {l,\theta} V}$]

Here we have $\bar{M}' = a$, $lab = a:\rfrt \theta {l}$~for some reference name $a$, type $\theta$ and security level $l$, $S' = S \cup \{(a,V)\}$ and $P = \emptyset$.  
By \reftyp, $\bar \tau = \rfrt {\theta} {}$ and $W;\typenv \byund{{A \klmeet \extrf{\bar{\cE{E}}}}}{\secmap{}}{} V : {\theta}$, and by Remark~\ref{app-remconfine-typeffval} then $W;\typenv \byund{\kltop}{\secmap{}}{} V : {\theta}$.  By Lemma~\ref{app-propconfinement-weakstrengthlemma} we have $W;\typenv \byund{\kltop}{\secmap{}}{} S'(a) : {\theta}$ for every $a \in \dom{S'}$.
By $\loctyp$, we have $W;\typenv \byund{{A \klmeet \extrf{\bar{\cE{E}}}}}{\secmap{}}{} a : {\rfrt {\theta} {}}$, and $\bar \tau = {\rfrt {\theta} {}}$.
\item [$\boldsymbol{\bar{M} = \deref {a}}$]

Here we have $\bar{M}' = S(a)$, $S=S'$ and $P=\emptyset$.  
By $\dereftyp$, we have that $W;\typenv \byund{{A \klmeet \extrf{\bar{\cE{E}}}}}{\secmap{}}{} {a} : {\rfrt {\bar \tau} {}}$, and by $\loctyp$ we know that $\secmap{2}(a)=\bar \tau$.
By compatibility assumption, then $W;\typenv \byund{\kltop}{\secmap{}}{} S(a) :\secmap{2}(a)$, and by Remark~\ref{app-remconfine-typeffval}, then $W;\typenv \byund{A \klmeet \extrf{\bar{\cE{E}}} \klmeet {W(T(m))}}{\secmap{}}{} S(a) :\bar \tau$.
\item [$\boldsymbol{\bar{M} = \assign a V}$]

Here we have $\bar{M}' = \nil$, and $P = \emptyset$.  
By \assigntyp, $\bar \tau = \unit$, and $W;\typenv \byund{{A \klmeet \extrf{\bar{\cE{E}}}}}{\secmap{}}{} a : \rfrt {\theta} {}$ and $W;\typenv \byund{{A \klmeet \extrf{\bar{\cE{E}}}}}{\secmap{}}{} V : {\theta}$, for some $\theta$.  By \loctyp, $\theta = \secmap{2}(a)$ and by Remark~\ref{app-remconfine-typeffval} we have $W;\typenv \byund{\kltop}{\secmap{}}{} V : \secmap{2}(a)$.
By $\niltyp$, we have that $W;\typenv \byund{A \klmeet \extrf{\bar{\cE{E}}} \klmeet {W(T(m))}}{\secmap{}}{} \nil :\bar \tau$, with $\bar \tau = \unit$.
\item [$\boldsymbol{\bar{M} = \flow {F'} {V}}$]

Here we have $\bar{M}' = V$, $S=S'$ and $P=\emptyset$.  By rule $\flowtyp$, we have that $W;\typenv \byund{A \klmeet \extrf{\bar{\cE{E}}}}{\secmap{}}{} V : {\bar \tau}$
and by Remark~\ref{app-remconfine-typeffval}, we have $W;\typenv \byund{A \klmeet \extrf{\bar{\cE{E}}} \klmeet {W(T(m))}}{\secmap{}}{} V : {\bar{\tau}}$.
\item [$\boldsymbol{\bar{M} = \allowed {F'} {N_t}{N_f}}$ and $\boldsymbol{W(T(m)) \klpreceq F'}$]

Here we have $\bar{M}' = N_t$, $S=S'$ and $P=\emptyset$.  By $\allowtyp$, we have that $W;\typenv \byund{{A \klmeet \extrf{\bar{\cE{E}}}} \klmeet {F'}}{\secmap{}}{} {N_t} : \bar{\tau}$.  By Remark~\ref{app-remconfine-typeffval}, then $W;\typenv \byund{{A \klmeet \extrf{\bar{\cE{E}}}} \klmeet {W(T(m))}}{\secmap{}}{} {N_t} : \bar{\tau}$.
\item [$\boldsymbol{\bar{M} = \allowed {F'} {N_t}{N_f}}$ and $\boldsymbol{W(T(m)) \not\klpreceq F'}$]

Here we have $\bar{M}' = N_f$, $S=S'$ and $P=\emptyset$.  By $\allowtyp$, we have that $W;\typenv \byund{{A \klmeet \extrf{\bar{\cE{E}}}}}{\secmap{}}{} {N_f} : \bar{\tau}$.  By Remark~\ref{app-remconfine-typeffval}, then $W;\typenv \byund{{A \klmeet \extrf{\bar{\cE{E}}}} \klmeet {W(T(m))}}{\secmap{}}{} {N_f} : \bar{\tau}$.
\item [$\boldsymbol{\bar{M} = \threadnat k {N} {\bar d}}$]

Here we have $\bar{M}' = \nil$,  $P=\{\hat N^n\}$ for some thread name $n$, $S=S'$, and $T'(n)=\bar d$.  By $\niltyp$, we have $W;\typenv \byund{A}{\secmap{}}{} {\nil} : \unit$.  By Remark~\ref{app-remconfine-typeffval}, then $W;\typenv \byund{A \klmeet \extrf{\bar{\cE{E}}}\klmeet W(d)}{\secmap{}}{} {\nil} : \unit$.
Furthermore, by $\migtyp$, we have that $W;\typenv \byund{W(\bar{d})}{\secmap{}}{} {N} : \unit$ and $\bar \tau = \unit$.
\end{description}
The cases $\bar{M} = \cond {\ff} {N_t}{N_f}$ and $\bar{M} = \seq V {\hat{M}}$ are analogous to the one for $\bar{M} = \cond {\tt} {N_t}{N_f}$.

By Replacement (Lemma~\ref{app-propconfinement-repllemma}), we finally conclude that
$W;\typenv\byund{{A} \klmeet W(T(m))}{\secmap{}}{} {\cC{\bar{E}}{\bar{M}'}}:{\tau}$.
\end{proof}

\begin{prop}[Safety -- Proposition~\ref{prop-staticmigrcontrol-safety}] \label{app-prop-staticmigrcontrol-safety}
Given an allowed-policy mapping $W$, consider a closed thread $M^{m}$ such that ${W;\emptyset \byund{A}{\secmap{}}{} M : \tau}$.  Then, for any memory $S$ that is $(W,{\secmap{}},{\emptyset})$-compatible and position-tracker $T$, either the program $M$ is a value, or
\end{prop}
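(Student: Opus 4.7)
The plan is to prove this by structural induction on the typing derivation ${W;\emptyset \byund{A}{\secmap{}}{} M : \tau}$, in the standard progress-theorem style. Since $M$ is closed, the variable case is vacuous; values and pseudo-values (including $\fix{x}{X}$, which itself has its own small-step rule) are handled directly. The key auxiliary ingredient I would first establish is a Canonical Forms lemma: if $V$ is a closed value with ${W;\emptyset \byund{A}{\secmap{}}{} V : \tau}$, then (i) if $\tau=\bool$, $V\in\{\vrai,\faux\}$; (ii) if $\tau=\unit$, $V=\nil$; (iii) if $\tau=\rfrt{\theta}{}$, $V$ is a reference name $a$ with $\secmap{2}(a)=\theta$; (iv) if $\tau=\sigma\xarr{A'}{}\sigma'$, $V=\lam{x}{M}$. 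This is immediate by inspection of the value-producing typing rules ($\niltyp$, $\boolttyp$, $\boolftyp$, $\loctyp$, $\abstyp$).

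For the composite expression cases I would proceed uniformly: by the induction hypothesis applied to each sub-expression that appears in an evaluation context hole, either that sub-expression is a value or it can already take a step. If it can step, the last rule of Figure~\ref{fig-semantics} (evaluation contexts) lifts the step to the whole expression. If all sub-expressions in head position are values, I apply Canonical Forms to match the head redex against the appropriate semantic rule. For example, for $\app{M}{N}$, if $M$ is a value of type $\tau\xarr{}{}\sigma$ then by Canonical Forms $M=\lam{x}{M'}$ and if $N$ is also a value then the beta rule fires; similarly for $\cond{M}{N_t}{N_f}$ the guard reduces to $\vrai$ or $\faux$ and the conditional fires; for $\assign{M}{N}$ and $\deref M$ we get reference names, and the assumption that $S$ is $(W,\secmap{},\emptyset)$-compatible guarantees $a\in\dom{S}$ (since closed typable expressions only contain reference names introduced either by $\rfrl{l}{\theta}{\cdot}$ or already present in the memory, and compatibility ensures the well-formedness condition $\rn{M}\subseteq\dom{S}$ holds).

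The non-standard constructs are the easy cases: $\flow{F'}{M}$ either lets $M$ step by the IH (lifted through the $\flow{F'}{[]}$ evaluation context) or, when $M$ is a value, reduces by the flow-value rule. The allowed-condition $\allowed{F'}{N_t}{N_f}$ always reduces in one step, since the side condition $W(T(m))\klpreceq F'$ is decidable and one of the two semantic rules applies unconditionally. Remote thread creation $\threadnat{l}{M}{d}$ always fires, using any fresh $n\notin\dom{T}$; crucially, unlike Enforcement Mechanism~II, rule $\migtyp$ of Figure~\ref{fig-confinementI-typesystem} imposes no side condition on the semantic step, so no migration can block.

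The one subtlety worth flagging is that we must ensure $F'=\extrf{\cE{E}}$ can be read off for whichever evaluation context we peel off, and that the decorated reduction $\xarr{F'}{T(m)}$ is well-formed; this is immediate since $\extrf{\cE{E}}$ is defined for every $\cE{E}$ and $T(m)\in\dom{W}$ by well-formedness of the configuration (which follows from $\dn{M}\subseteq\dom{W}$, a syntactic consequence of typability together with the assumption that the starting configuration is well formed). There is no genuine obstacle in this proof, as Enforcement Mechanism~I places restrictions only at type-checking time and never introduces runtime premises in the operational rules; this is exactly what is lost when moving to Mechanisms~II and~III, where the corresponding safety statements have to be weakened as in Propositions~\ref{prop-migrcontrol-safety} and~\ref{prop-decleffect-safety}.
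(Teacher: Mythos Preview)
Your proposal is correct and follows essentially the same approach as the paper: induction on the typing derivation in the standard progress style, with the non-standard constructs ($\flow{F'}{\cdot}$, $\allowed{F'}{\cdot}{\cdot}$, $\threadnat{l}{\cdot}{d}$) handled exactly as you describe. The paper's appendix proof only spells out those three cases and leaves the standard ones implicit; your explicit Canonical Forms lemma and treatment of the compound constructs are the routine details the paper elides. One minor terminological slip: the guarantee that $a\in\dom{S}$ for dereference and assignment comes from the well-formedness assumption on configurations ($\rn{P}\subseteq\dom{S}$), not from $(W,\secmap{},\emptyset)$-compatibility of $S$, which concerns typability of stored values.
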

\begin{proof} By induction on the derivation of ${W;\typenv \byund{A}{\secmap{}}{} M : \tau}$.  If $M \in Val$, then case~\ref{case1'} holds.
\begin{description}
\item [$\flowtyp$] If $M = \flow {F} N$, then ${W;\typenv \byund{A}{\secmap{}}{} N : \tau}$.  By induction hypothesis, then one of the following cases holds for $N$:
  \begin{description}
  \item [Case~\ref{case1'}]  Then case~\ref{case2'} holds for $M$, with $M'=N$, $d=T(m)$ and $F'=F$.
  \item [Case~\ref{case2'}]  Suppose that there exists $F'$, $N'$, $P$, $S'$ and $T'$ such that $W \vdash \iconft{T}{\{N^{m}\}} S$ $\xarr{\var {F'}}{d}$ $\iconft{T'} {\{N'^{m}\} \cup P} {S'}$.  Then $W \vdash \iconft{T}{\{M^{m}\}} S$ $\xarr{\var {F' \cup F}}{d}$ $\iconft{T'} {\{{\flow F {N'}}^{m}\} \cup P} {S'}$, so case~\ref{case2} holds for $M$.
  \end{description}
\item [$\allowtyp$] If $M = \allowed {F} {N_t} {N_f}$, then we have that $W \vdash \iconft{T}{\{M^{m}\}} S$ $\xarr{\emptyset}{T(m)}$ $\iconft{T} {\{M'^{m}\}} {S}$, with $M'=N_t$ of $M'=N_f$ depending on whether $W(T(m)) \klpreceq F$ or $W(T(m)) \not\klpreceq F$ (respectively).  Therefore, case~\ref{case2'} holds for $M$.
\item [$\migtyp$] If $M = \threadnat l N d$, then $W \vdash \iconft{T}{\{M^{m}\}} S$ $\xarr{W}{T(m)}$ $\iconft{T} {\{{\nil}^m,N^{n}\}} {S}$, so case~\ref{case2} holds for~$M$. \qedhere
\end{description}
\end{proof}

\subsubsection{Soundness}

\begin{thm}[Soundness of Enforcement Mechanism~I -- Theorem~\ref{prop-confinementI-soundness}]  \label{app-prop-confinementI-soundness}
Consider an allowed-policy mapping $W$, reference labeling $\secmap{}$, typing environment $\typenv$, and a thread configuration $\confd P T$ such that for all $M^{m} \in P$ there exists $\tau$ such that ${W;\typenv \byund{W(T(m))}{\secmap{}}{} M : \tau}$.  Then $\confd P T$ is a $(W,\secmap{},\typenv)$-confined thread configuration.
\end{thm}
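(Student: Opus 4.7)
The plan is to exhibit the set
\myexample{
C = \{ \confd{P}{T} ~|~ \forall M^{m} \in P, \exists \tau.~ W;\typenv \byund{W(T(m))}{\secmap{}}{} M : \tau \}
}
and verify that $C$ is a set of $(W,\secmap{},\typenv)$-confined thread configurations in the sense of Definition~\ref{def-operationallyconf}. The conclusion then follows immediately since the hypothesis of the theorem places the given $\confd{P}{T}$ into $C$, and $\semconf^{W,\secmap{},\typenv}$ is the largest such set.

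Fix $\confd{P}{T} \in C$ and a transition $W \vdash \iconft{T}{P}{S} \xarr{F}{d} \iconft{T'}{P'}{S'}$ with $S$ compatible. By the compositionality rule of Figure~\ref{fig-semantics}, this transition stems from a single reducing thread $M^{m} \in P$ with $T(m) = d$; the thread rewrites to some $M'$ and may spawn a fresh thread $N^{n}$ at some domain $d'$, while all other entries of $P$ and $T$ remain unchanged. Writing $M = \EC{\bar{M}}$ for the contracted redex, we have $F = \extrf{\cE{E}}$. Two obligations remain: (i)~$W(d) \klpreceq F$, and (ii)~$\confd{P'}{T'} \in C$ with $S'$ still $(W,\secmap{},\typenv)$-compatible.

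For (i), the idea is to show by induction on $\cE{E}$ that every flow policy $F'$ appearing in a declaration $\flow{F'}{\cdot}$ inside $\cE{E}$ satisfies $W(d) \klpreceq F'$; meeting these inequalities then yields $W(d) \klpreceq \extrf{\cE{E}} = F$. The key observation is that the allowed-policy parameter of the typing judgment remains exactly $W(d)$ throughout the evaluation context, because rule $\flowtyp$ passes $A$ unchanged into its premise and no other evaluation-context production touches $A$; meanwhile, the side-condition of $\flowtyp$ is precisely the desired $A \klpreceq F'$.

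For (ii), Subject Reduction (Proposition~\ref{prop-subjectreduction-confinementI}) gives $W;\typenv \byund{W(d) \klmeet W(d)}{\secmap{}}{} M' : \tau$, that is, typability of $M'$ at $W(T'(m)) = W(d)$, together with compatibility of $S'$. If a new thread $N^{n}$ is created with $T'(n) = d'$, Subject Reduction additionally yields $W;\typenv \byund{W(d')}{\secmap{}}{} N : \unit$. All remaining threads in $P$ are untouched by the transition, so their typings transfer verbatim to $\confd{P'}{T'}$, establishing $\confd{P'}{T'} \in C$. The hard part is really (i): one has to track both the invariance of the allowed-policy parameter along evaluation contexts and the accumulation of side-conditions from all $\flowtyp$ applications encountered inside $\cE{E}$; once this invariant is in place, the rest is routine bookkeeping on top of Subject Reduction.
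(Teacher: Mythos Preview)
Your proposal is correct and follows essentially the same approach as the paper: both exhibit the same coinductive witness $C$, handle obligation~(ii) via Subject Reduction, and for obligation~(i) rely on the observation that the allowed-policy parameter stays at $W(d)$ along the evaluation context while each \textsc{Flow} rule contributes its side-condition $W(d)\klpreceq F'$. The only cosmetic difference is that the paper phrases the argument for~(i) as an induction on the typing derivation of $M$ (with a case analysis on the last rule), whereas you do it as an induction on the evaluation context~$\cE{E}$; since the type system is syntax-directed these two inductions traverse the same structure.
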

\begin{proof} 
Consider the following set:
\myexample{
C = \{ \confd{P}{T} ~|~ \forall M^m \in P, \exists \tau ~.~ {W;\typenv \byund{W(T(m))}{\secmap{}}{} M : \tau}\} %
}
We show that $C$ is a set of $(W,\secmap{},\typenv)$-confined thread configurations.  
If $\confd P T \in C$ and for a given $(W,\secmap{},\typenv)$-compatible store $S$ we have that there exist $P',T',S'$ such that $W \vdash \iconft{T}{P} S$ $\xarr{F}{d}$ $\iconft{T'} {P'} {S'}$, then, there is a thread $M^m$ such that $P = \{M^m\} \cup \bar P$ and $W \vdash \iconft{T}{\{M^{m}\}} S$ $\xarr{F}{d}$ $\iconft{T'} {\{M^{m}\} \cup \bar{P}'} {S'}$, with $P' = {\{M^{m}\} \cup \bar{P}'} \cup \bar P$ and $T(m)=d$.
By induction on the inference of ${W;\typenv \byund{W(T(m))}{\secmap{}}{} M : \tau}$, we prove that $W(d) \klpreceq F$, and ${W;\typenv \byund{W(T'(m))}{\secmap{}}{} M' : \tau}$.  Furthermore, if $\bar P =N^n$ for some expression $N$ and thread name $n$, then ${W;\typenv \byund{W(T'(n))}{\secmap{}}{} N : \unit}$.

Since typability of the threads that result from the transition step, as well as the $(W,\secmap{},\typenv)$-compatibility of the stores, is guaranteed by Subject Reduction (Proposition~\ref{prop-subjectreduction-confinementI}), we prove only the conditions regarding the compliance of the declared flow policies to the current domain's allowed flow policy.
Assuming that ${W;\typenv \byund{W(d)}{\secmap{}}{} M : \tau}$, %
and by case analysis on the last rule in the corresponding typing proof:
\begin{description}
\item [\flowtyp]

Here $M = \flow {\bar F} {\bar{M}}$, and we have ${W;\typenv \byund{W(d)}{\secmap{}}{} {\bar M} : \tau}$, with 
$W(d) \klpreceq {\bar F}$.  There are two cases to consider:
\begin{description}
\item [$\bar M$ can compute] Then $W \vdash \iconft{T}{\{{\bar M}\}} S$ $\xarr{{\bar F}'}{d}$ $\iconft{T'} {\{{{\bar M}'}\} \cup \bar P'} {S'}$, with $F = \bar F \klmeet \bar F'$.  By induction hypothesis, then $W(d) \klpreceq {\bar F}'$. %
  Since $W(d) \klpreceq {\bar F}$, then $W(d) \klpreceq F$.  %
\item [$\bar M \in \Val$] Then we have $F = \kltop$, so $W(T(m)) \klpreceq F$ holds vacuously. %
  
\end{description}

\item [\allowtyp]
Here $M = {\allowed {\bar A} {\bar{N_t}} {\bar{N_f}}}$, and we have %
$F = \kltop$.  Therefore $W(T(m)) \klpreceq F$ holds vacuously.

\item [\migtyp]

In this case $M = \threadnat l {\bar{M}} {\bar d}$, with ${W;\typenv \byund{W(\bar{d})}{\secmap{}}{} {\bar M} : \unit}$.  Then we have $F = \kltop$, so $W(T(m)) \klpreceq F$ holds vacuously.

\end{description}

The cases for \reftyp, \dereftyp, \assigntyp, \seqtyp, \apptyp~and \condtyp~are similar to \flowtyp, since for the cases where the sub-expressions are not all values, these sub-expressions require typing assumptions that have the same $A$ parameter as the concluding judgment, and for the cases where the expression is ready to reduce, the flow policy that decorates the transition is $\kltop$.  The cases for \rectyp~and \migtyp~are similar to \allowtyp, since the constructs are not evaluation contexts, and therefore the transition is decorated with the top flow policy $F = \kltop$.
\end{proof}

\subsection{Runtime Type Checking}

\subsubsection{Subject Reduction and weak safety}

\begin{prop}[Subject Reduction -- Proposition~\ref{prop-subjectreduction-confinementII}]\label{app-prop-subjectreduction-confinementII}
  Consider a thread $M^{m}$ such that ${\typenv \byund{A}{\secmap{}}{} M : \tau}$ and suppose that $W \vdash \iconft{T}{\{M^{m}\}} S$ $\xarr{\var {F}}{d}$ $\iconft{T'} {\{M'^{m}\} \cup P} {S'}$, for a memory $S$ that is $(W,{\secmap{}},{\typenv})$-compatible. Then, $\typenv\byund{A \klmeet W(T(m))}{\secmap{}}{} M':\tau$, and $S'$ is also $(W,{\secmap{}},{\typenv})$-compatible.
\end{prop}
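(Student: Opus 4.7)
}
The plan is to mirror the structure of the proof of Proposition~\ref{app-prop-subjectreduction-confinementI} (Subject Reduction for Enforcement Mechanism~I), adapting it to the slightly different type system (only the \migtyp{} rule has changed, dropping the dependence on $W$) and to the modified operational semantics (whose only substantive change is the runtime typing side-condition on the migration rule). First I would verify that the auxiliary results from Section~\ref{subsec-confinement-static} transfer verbatim: a weakening/strengthening lemma analogous to Lemma~\ref{app-propconfinement-weakstrengthlemma}, a Substitution lemma analogous to Lemma~\ref{app-propconfinement-subslemma}, and a Replacement lemma analogous to Lemma~\ref{app-propconfinement-repllemma}. Their proofs go through unchanged because the rules for evaluation contexts and for substitutable forms are identical; also Remark~\ref{app-remconfine-typeffval} still holds, giving the usual monotonicity of typability in the allowed-policy parameter.

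Next I would decompose $M = \cC{\bar{E}}{\bar{M}}$ with $\bar{M}$ the head redex, so that $M' = \cC{\bar{E}}{\bar{M}'}$ and $F = \bar{F} \klmeet \extrf{\bar{\cE{E}}}$. By Replacement the typing derivation assigns $\bar{M}$ some type $\bar{\tau}$ under the policy $A \klmeet \extrf{\bar{\cE{E}}}$. The main argument is then a case analysis on the transition $W \semvdash \iconft{T}{\{\bar{M}^m\}}{S} \xarr{\bar{F}}{d} \iconft{T'}{\{\bar{M}'^m\}\cup P}{S'}$. For every redex that is not a migration (i.e.\ $\beta$-reduction, recursion, conditional, sequential composition, dereferencing, assignment, reference creation, flow declaration, and the two cases of the allowed-condition) the reasoning is literally the one in the proof of Proposition~\ref{app-prop-subjectreduction-confinementI}, using Substitution (Lemma~\ref{app-propconfinement-subslemma}) in the $\beta$ case, inspecting the relevant typing rule in the other cases, and invoking $(W,\secmap{},\typenv)$-compatibility of $S$ together with Lemma~\ref{app-propconfinement-weakstrengthlemma} in the dereference case to propagate compatibility to $S'$. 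In every such case $S' = S$ unless a reference is created or updated, and the values involved are easily checked to remain typable at $\secmap{2}(a)$ under any allowed-policy parameter, by Remark~\ref{app-remconfine-typeffval}.

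The only genuinely new case is $\bar{M} = \threadnat{l}{N}{\bar{d}}$. Here $\bar{M}' = \nil$, $P = \{N^n\}$ with $T'(n) = \bar d$, and $S' = S$. The conclusion for $\bar{M}'$ is immediate from rule \niltyp{} together with Remark~\ref{app-remconfine-typeffval}. What changes relative to Mechanism~I is how one obtains typability of $N$: the new \migtyp{} rule gives only ${\typenv \byund{\klbot}{\secmap{}}{} N : \unit}$, which would not suffice by itself. However, the modified migration rule of Figure~\ref{fig-confinementII-semantics} fires only when the side-condition ${\typenv \byund{W(\bar{d})}{\secmap{}}{} N : \unit}$ holds, and this is precisely $\typenv\byund{W(T'(n))}{\secmap{}}{} N:\unit$. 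Thus the runtime premise of the semantics hands us exactly the fact the proposition demands about the spawned thread, and the case closes.

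I expect no real obstacle: the proof is essentially a transcription of the Mechanism~I argument, and the subtle point is only a bookkeeping one, namely noticing that the runtime check has been designed to supply the hypothesis that the relaxed \migtyp{} rule no longer provides. Finally, one application of Replacement recomposes the local typing judgment for $\bar{M}'$ into the global one for $M' = \cC{\bar{E}}{\bar{M}'}$ under the parameter $A \klmeet W(T(m))$, concluding the proof.
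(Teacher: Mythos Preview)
Your proposal is correct and follows essentially the same approach as the paper: the paper's proof explicitly states that it is the same as that of Proposition~\ref{app-prop-subjectreduction-confinementI} except for the thread-creation case, where the runtime side-condition ${\typenv \byund{W(d)}{\secmap{}}{} N : \unit}$ of the modified migration rule supplies the typing of the spawned thread, exactly as you identified.
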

\begin{proof}
The proof is the same as the one for Proposition~\ref{app-prop-subjectreduction-confinementI}, with exception for the treatment of the case where the step corresponds to the creation of a new thread:
\begin{description}
\item [$\boldsymbol{\bar{M} = \threadnat l {N} d}$]

Here we have $\bar{M}' = \nil$, $P=\{\hat N^n\}$ for some thread name $n$, $T'(n)=d$ and (by the migration condition) ${\typenv \byund{W(d)}{\secmap{}}{} {N} : {\unit}}$, where $\bar \tau = \unit$.  By $\niltyp$ we have that $\typenv \byund{A}{\secmap{}}{} {\nil} : \unit$. \qedhere
\end{description}
\end{proof}

\begin{prop}[Safety (weakened) -- Proposition~\ref{prop-migrcontrol-safety}] \label{app-prop-migrcontrol-safety}
  Consider a closed thread $M^{m}$ such that ${\emptyset \byund{A}{\secmap{}}{} M : \tau}$.  Then, for any allowed-policy mapping $W$, memory $S$ that is $(W,{\secmap{}},{\emptyset})$-compatible and position-tracker $T$, either:
  \begin{enumerate}
\item the program $M$ is a value, or \label{case1}
\item $W \vdash \iconft{T}{\{M^{m}\}} S$ $\xarr{\var {F'}}{T(m)}$ $\iconft{T'} {\{M'^{m}\} \cup P} {S'}$, for some $F'$, $M'$, $P$, $S'$ and $T'$, or \label{case2}
\item $M=\cC{{E}}{\threadnat {l} {N} {d}}$, for some ${E}$, $l$, $d$ such that ${\emptyset \byund{\klbot}{\secmap{}}{} {N} : {\unit}}$ but ${\emptyset \not\byund{W({d})}{\secmap{}}{} {N} : {\unit}}$.  \label{case3}
\end{enumerate}
\end{prop}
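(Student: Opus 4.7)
The plan is to proceed by induction on the derivation of ${\emptyset \byund{A}{\secmap{}}{} M : \tau}$, following the same overall structure as the proof of Proposition~\ref{app-prop-staticmigrcontrol-safety} (Safety for Enforcement Mechanism~I), but tracking the additional escape case~\ref{case3} through evaluation contexts. For each compound expression $M = \cC{E}{\bar M}$, I would use the Replacement lemma (Lemma~\ref{app-propconfinement-repllemma}, adapted to the type system at hand) to extract a typing derivation for $\bar M$, apply the induction hypothesis to $\bar M$, and then propagate the resulting case back up through $\cE{E}$: case~\ref{case1} on $\bar M$ typically enables a contraction/primitive step (giving case~\ref{case2}); case~\ref{case2} on $\bar M$ lifts to case~\ref{case2} for $M$ by applying the congruence/context rule; and case~\ref{case3} on $\bar M$, where $\bar M = \cC{E'}{\threadnat{l}{N}{d}}$, lifts to case~\ref{case3} for $M$ by taking the composed context $\cE{E}[\cE{E'}]$.

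The base cases and purely local steps are routine: values satisfy case~\ref{case1}; the primitive reductions for $\app{\lam{x}{M}}{V}$, $\fix{x}{X}$, $\cond{\tt}{N_t}{N_f}$, $\cond{\ff}{N_t}{N_f}$, $\seq{V}{N}$, $\deref{a}$, $\assign{a}{V}$, $\rfrl{l}{\theta}{V}$, and $\flow{F}{V}$ all fire unconditionally (case~\ref{case2}), and the allowed-condition $\allowed{F'}{N_t}{N_f}$ always takes one of its two branches depending on $W(T(m)) \klpreceq F'$ (case~\ref{case2}). For each non-trivial evaluation context (application, sequencing, conditional, reference creation, dereference, assignment, flow declaration), I apply the induction hypothesis to the hole's occupant.

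The only genuinely new case is $\migtyp$. Here $M = \threadnat{l}{N}{d}$ and the relaxed rule gives ${\emptyset \byund{\klbot}{\secmap{}}{} N : \unit}$. I then split on whether ${\emptyset \byund{W(d)}{\secmap{}}{} N : \unit}$: if so, the modified migration rule of Figure~\ref{fig-confinementII-semantics} fires and case~\ref{case2} holds with $\cE{E} = \trou$; if not, case~\ref{case3} holds directly by taking $\cE{E} = \trou$. The main (only) obstacle is being careful with the lifting step for case~\ref{case3} through evaluation contexts: I need to observe that if $\bar M = \cC{E'}{\threadnat{l}{N}{d}}$ witnesses case~\ref{case3} for the subexpression, then $M = \cC{E}{\cC{E'}{\threadnat{l}{N}{d}}} = (\cC{E}{\cC{E'}{\trou}})[\threadnat{l}{N}{d}]$ witnesses case~\ref{case3} for $M$, with the typability condition on $N$ being independent of the surrounding context. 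This is the crux, and it follows simply from the fact that the condition in case~\ref{case3} pertains only to the offending $\threadnat{l}{N}{d}$ subterm and is inherited verbatim.

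Typability of $N$ with $\emptyset \byund{\klbot}{\secmap{}}{} N : \unit$ in case~\ref{case3} is always available because, by Replacement applied to the outermost typing, $\threadnat{l}{N}{d}$ is typable with some allowed policy $A'$, and by rule $\migtyp$ this forces ${\emptyset \byund{\klbot}{\secmap{}}{} N : \unit}$ regardless of $A'$. This completes the inductive argument.
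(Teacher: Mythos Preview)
Your approach is essentially the same as the paper's: both proceed by induction on the typing derivation, handle the $\migtyp$ case by splitting on whether ${\emptyset \byund{W(d)}{\secmap{}}{} N : \unit}$ holds, and lift case~\ref{case3} through evaluation contexts by composing contexts. The paper's proof is terser (it explicitly writes out only the $\flowtyp$, $\allowtyp$, and $\migtyp$ cases, treating the remaining context cases as routine), and it reads off the typing of subexpressions directly from the premises of the last typing rule rather than invoking Replacement, but this is a presentational difference rather than a methodological one.
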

\begin{proof} By induction on the derivation of ${\typenv \byund{A}{\secmap{}}{} M : \tau}$.  If $M \in Val$, then case~\ref{case1} holds.
\begin{description}
\item [$\flowtyp$] If $M = \flow {F} N$, then ${\typenv \byund{A}{\secmap{}}{} N : \tau}$.  By induction hypothesis, then one of the following cases holds for $N$:
  \begin{description}
  \item [Case~\ref{case1}]  Then case~\ref{case2} holds for $M$, with $M'=N$, $d=T(m)$ and $F'=F$.
  \item [Case~\ref{case2}]  Suppose that there exists $F'$, $N'$, $P$, $S'$ and $T'$ such that $W \vdash \iconft{T}{\{N^{m}\}} S$ $\xarr{\var {F'}}{d}$ $\iconft{T'} {\{N'^{m}\} \cup P} {S'}$.  Then $W \vdash \iconft{T}{\{M^{m}\}} S$ $\xarr{\var {F' \cup F}}{d}$ $\iconft{T'} {\{{\flow F {N'}}^{m}\} \cup P} {S'}$, so case~\ref{case2} holds for $M$.
  \item [Case~\ref{case3}]  Suppose that there exists an evaluation context ${\bar E}$, a security level $l$ and a domain name $d$ such that $N=\cC{{E}}{\threadnat {l} {N'} {d}}$ and ${\typenv \byund{\klbot}{\secmap{}}{} {N'} : {\unit}}$ but ${\typenv \not\byund{W({d})}{\secmap{}}{} {N'} : {\unit}}$.  Then, case~\ref{case3} also holds for $M$ for the evaluation context $E={\flow F {\cC{\bar E}{}}}$.
  \end{description}
\item [$\allowtyp$] If $M = \allowed {F} {N_t} {N_f}$, then we have that $W \vdash \iconft{T}{\{M^{m}\}} S$ $\xarr{\emptyset}{T(m)}$ $\iconft{T} {\{M'^{m}\}} {S}$, with $M'=N_t$ of $M'=N_f$ depending on whether $W(T(m)) \klpreceq F$ or $W(T(m)) \klpreceq F$ (respectively).  Therefore, case~\ref{case2} holds for $M$.
\item [$\migtyp$] If $M = \threadnat l N d$, then $\typenv \byund{\klbot}{\secmap{}}{} {{N}} : {\unit}$, and either:
  \begin{description}
  \item [$\boldsymbol{{\typenv \byund{W({d})}{\secmap{}}{} {N} : {\unit}}}$]  Then, $W \vdash \iconft{T}{\{M^{m}\}} S$ $\xarr{\emptyset}{T(m)}$ $\iconft{T} {\{{\nil}^m,N^{n}\}} {S}$.  Therefore, case~\ref{case2} holds for $M$.
  \item [$\boldsymbol{{\typenv \not\byund{W({d})}{\secmap{}}{} {N} : {\unit}}}$]  Then, case~\ref{case3} holds for $M$ for the evaluation context $E={\cE{\bar E}}$.  \qedhere
  \end{description}
\end{description}
\end{proof}

\subsection{Declassification effect}

\subsubsection{Subject Reduction and weak safety}  \label{app-decleffect-subjreduction}

In order to prove Subject Reduction, we follow the usual steps \cite{WF94}.  In the following, $\Pse$ is the set of pseudo-values, as defined in Figure~\ref{fig-syntaxexpressions}.

\begin{rem}\label{app-rem-valTransf}
If ${\typenv \byinov{\secmap{}}{} M \trans N : s, \tau}$ then $M \in\Val$ iff %
$N\in\Val$ and $s=\kltop$. %
\end{rem}

\begin{rem}\label{app-rem-pseudovalues}
If ${\typenv \byinov{\secmap{}}{} M \trans N : s, \tau}$ and $M \in\Pse$ with $\rn{M} \subseteq \dom{\secmap{}}$, then $N\in\Pse$ and $s=\kltop$.  %
\end{rem}

\begin{lem}\label{app-prop-decleffect-confinement-weakstrengthlemma}\text{} 
\begin{enumerate}
\item If $\typenv \byinov{\secmap{}}{} M \hookrightarrow \hat{M}: s, {{{\tau}}}$ and $x\notin\dom{\typenv}$ then $W;\typenv,x:\sigma\byinov{\secmap{}}{} M \hookrightarrow \hat{M}: s, {{{\tau}}}$.
\item If $W;\typenv,x:\sigma\byinov{\secmap{}}{} M \hookrightarrow \hat{M}: s, {{{\tau}}}$ and $x\notin\fv{M}$ then $\typenv \byinov{\secmap{}}{} M \hookrightarrow \hat{M}: s, {{{\tau}}}$.
\end{enumerate}
\end{lem}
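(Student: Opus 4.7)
The plan is to prove both parts simultaneously by induction on the derivation of the typing judgment $\typenv \byinov{\secmap{}}{} M \hookrightarrow \hat{M}: s, \tau$ (respectively $\typenv,x:\sigma \byinov{\secmap{}}{} M \hookrightarrow \hat{M}: s, \tau$), with a case analysis on the last rule applied. This is the standard recipe for weakening/strengthening results, mirroring Lemmas \ref{app-prop-iflow-weakstrengthlemma} and \ref{app-propconfinement-weakstrengthlemma} already proved for the other two type systems in the paper.

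For part (1), the base cases are the axiom rules \niltypI, \boolttypI, \boolftypI, \loctypI, and \vartypI. The first four do not consult $\typenv$ at all, so extending the environment with $x:\sigma$ is immediate. For \vartypI, the judgment $\typenv,y:\tau \byinov{\secmap{}}{} y \hookrightarrow y : \kltop, \tau$ still holds when we further extend with $x:\sigma$, since $x\notin\dom{\typenv}$ guarantees $x\neq y$ and so the extended environment is well-formed and still maps $y$ to $\tau$. The inductive cases for \apptypI, \seqtypI, \reftypI, \dereftypI, \assigntypI, \condtypI, \allowtypI, \flowtypI, and \migtypI\ are all routine: the side conditions on types ($\klpreceqtyp$ and $\kleqtyp$) and on declassification effects are unaffected by the environment, so one simply applies the induction hypothesis to each premise and re-applies the same rule. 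The only cases that require a little care are \abstypI\ and \rectypI, where a fresh bound variable $y$ is introduced; here we may $\alpha$-rename so that $y\neq x$ and $y\notin\dom{\typenv}$, then apply the induction hypothesis with the environment $\typenv,y:\tau$ to extend it to $\typenv,y:\tau,x:\sigma$, and finally use exchange (which is admissible for this type system since bindings commute when their variables are distinct) to obtain $\typenv,x:\sigma,y:\tau$ before concluding with \abstypI\ or \rectypI.

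Part (2) is proved by an entirely symmetric induction. The axiom cases are trivial (for \vartypI, the assumption $x\notin\fv{M}=\{y\}$ ensures $x\neq y$, so $y$ is still bound by $\typenv$). For each inductive case one observes that the free variables of the sub-expressions are contained in those of the whole expression, so $x\notin\fv{M}$ implies $x$ does not occur free in any premise's expression, allowing the induction hypothesis to strip $x:\sigma$ from each premise before re-applying the rule. The binder cases \abstypI\ and \rectypI\ again require $\alpha$-renaming to separate the bound variable $y$ from $x$ and an appeal to exchange.

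No step in this proof is genuinely hard; the only mildly delicate point is the handling of the binder cases, where one must be careful that the $\alpha$-renaming and exchange manipulations preserve both the annotated output expression $\hat{M}$ and the declassification effect $s$ --- but since neither the annotation process nor the effect calculation depend on the specific names of bound variables, this presents no real obstacle.
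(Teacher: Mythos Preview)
Your proposal is correct and takes exactly the same approach as the paper, which simply states ``By induction on the inference of the type judgment'' with no further detail. Your write-up is a faithful (and more explicit) expansion of that one-line proof, mirroring the analogous Lemmas~\ref{app-prop-iflow-weakstrengthlemma} and~\ref{app-propconfinement-weakstrengthlemma} for the other two type systems.
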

\begin{proof} By induction on the inference of the type judgment.  
\end{proof}

\begin{lem}[Substitution]\label{app-prop-decleffect-confinement-subslemma}\text{}\\  
If $W;\typenv,x:\sigma\byinov{\secmap{}}{} M \hookrightarrow N : s, {\tau}$ and $\typenv \byinov{\secmap{}}{} {{X_1}} \hookrightarrow X_2: \kltop, \sigma'$ with $\sigma \klpreceqtyp \sigma'$, then $\typenv \byinov{\secmap{}}{}  {\substi x {X_1} M} \hookrightarrow {\substi x {X_2} N} : s, \tau'$ with $\tau \klpreceq \tau'$.  %
\end{lem}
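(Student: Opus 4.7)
The plan is to prove the statement by induction on the inference of $\typenv,x{:}\sigma \byinov{\secmap{}}{} M \hookrightarrow N : s, \tau$, with a case analysis on the last rule used. The auxiliary lemmas I would invoke are Lemma~\ref{app-prop-decleffect-confinement-weakstrengthlemma} (weakening/strengthening of the informative type system) and Remark~\ref{app-rem-pseudovalues} (substitution of pseudo-values into pseudo-values produces pseudo-values with declassification effect $\kltop$).

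The base cases are straightforward. For \vartypI, if $M = x$ then $N = x$, $s = \kltop$, $\tau = \sigma$, and $\substi{x}{X_i}{x} = X_i$; the hypothesis $\typenv \byinov{\secmap{}}{} X_1 \hookrightarrow X_2 : \kltop, \sigma'$ together with $\sigma \klpreceqtyp \sigma'$ closes the case with $\tau' = \sigma'$. If $M = y \neq x$, then $x \notin \fv{M}$ so substitution is the identity; applying Lemma~\ref{app-prop-decleffect-confinement-weakstrengthlemma}(2) yields the desired judgment with $\tau' = \tau$. For the remaining literals (\niltypI, \boolttypI, \boolftypI, \loctypI) the same argument applies since $x$ does not occur free.

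For the binder cases \abstypI{} and \rectypI{} I would first alpha-rename the bound variable to be fresh in $X_1$, $X_2$, and $\typenv$; then, using Lemma~\ref{app-prop-decleffect-confinement-weakstrengthlemma}(1), lift the hypothesis on $X_1, X_2$ to the extended environment $\typenv, y{:}\bar\tau$ (respectively $\typenv, y{:}\tau$), and apply the induction hypothesis to the body. The resulting latent declassification effect on the arrow type of the abstracted value may become stricter, which is allowed by the $\klpreceqtyp$ relation between arrow types. For the compositional cases \seqtypI, \flowtypI, \dereftypI, \condtypI, \allowtypI, \apptypI, \reftypI, \assigntypI{} and \migtypI, I would apply the induction hypothesis to each immediate sub-derivation, obtaining annotated sub-terms with types strictened with respect to $\klpreceqtyp$ and with declassification effects that can only refine (become more restrictive in the $\klpreceq$ order). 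The corresponding typing rule is then re-applied to the substituted components.

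The main obstacle is verifying that the side conditions $\tau \klpreceqtyp \tau''$ (in \apptypI, \reftypI, \assigntypI) and $\tau_t \kleqtyp \tau_f$ (in \condtypI, \allowtypI) are preserved after substitution, since the induction hypothesis may produce types that are strictly larger than those in the original derivation. This requires two monotonicity facts that I would need to check (or establish as small auxiliary lemmas): transitivity of $\klpreceqtyp$, so that if $\tau_1 \klpreceqtyp \tau_2$ and $\tau_2 \klpreceqtyp \tau_2'$ then $\tau_1 \klpreceqtyp \tau_2'$; and stability of $\kleqtyp$ under strictening, so that if $\tau_t \kleqtyp \tau_f$, $\tau_t \klpreceqtyp \tau_t'$ and $\tau_f \klpreceqtyp \tau_f'$ then $\tau_t' \kleqtyp \tau_f'$, together with monotonicity of the meet $\tau_t \klmeet \tau_f \klpreceqtyp \tau_t' \klmeet \tau_f'$. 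Once these are in place, each case of the induction assembles its conclusion by re-applying the original typing rule, reading off the strictened type $\tau'$ and effect $s$ (where in the \migtypI{} case the annotation on the migration instruction is updated from the body's new declassification effect, using Remark~\ref{app-rem-pseudovalues} when the sub-term is a pseudo-value).
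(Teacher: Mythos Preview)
Your proposal is correct and follows essentially the same route as the paper: induction on the typing derivation, case analysis on the last rule, weakening/strengthening for the binder cases, and the transitivity/stability facts for $\klpreceqtyp$ and $\kleqtyp$ that the paper invokes tacitly (e.g.\ ``Still $\bar\tau_t'\kleqtyp\bar\tau_f'$'').

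One point deserves correction. The declassification effect $s$ is preserved \emph{exactly} by the lemma, not merely refined; your phrasing ``effects that can only refine'' and ``the annotation on the migration instruction is updated from the body's new declassification effect'' is off. In the \textsc{Mig}$_{\textsc I}$ case, $N=\threadanot{l}{\bar N}{d}{\bar s}$ carries $\bar s$ as syntax, so $\substi{x}{X_2}{N}=\threadanot{l}{\substi{x}{X_2}{\bar N}}{d}{\bar s}$ still has annotation $\bar s$; to rebuild this via \textsc{Mig}$_{\textsc I}$ you need the body's effect after substitution to be exactly $\bar s$, which is precisely what the induction hypothesis gives. The appeal to Remark~\ref{app-rem-pseudovalues} is unnecessary here.
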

\begin{proof}  By induction on the inference of $W;\typenv,x:\sigma\byinov{\secmap{}}{} M \hookrightarrow N : s, {\tau}$, and by case analysis on the last rule used in this typing proof, using Lemma~\ref{app-prop-decleffect-confinement-weakstrengthlemma}.
\begin{description} 
\item [\niltypI]

Here ${\substi x {X_1} M} = M$, and ${\substi x {X_2} N} = N$, and since $x \notin \fv{M}$ then by Lemma~\ref{app-prop-decleffect-confinement-weakstrengthlemma} we have $\typenv \byinov{\secmap{}}{F} {M} \hookrightarrow N : s, \tau$.

\item [\vartypI]

If $M=x$ then $N=x$, $s=\kltop$, $\sigma=\tau$, $\substi x {X_1}{M} = {X_1}$, and $\substi x {X_2}{N} = {X_2}$.   %
It is then direct that $\typenv \byinov{\secmap{}}{} {{X_1}} \hookrightarrow {X_2}: s, \sigma'$, and we take $\tau'=\sigma'$.
If $M \neq x$ then $N \neq x$, $\substi x {X_1}{M} = M$ and $\substi x {X_1}{M} = M$ where $x \notin \fv{M}$.  Therefore, by Lemma~\ref{app-prop-decleffect-confinement-weakstrengthlemma},  we have $\typenv \byinov{\secmap{}}{} {M} \hookrightarrow N: s, \tau$.

\item [\abstypI]

Here $M = \lam y {\bar{M}}$, $N = \lam y {\bar{N}}$, $s = \kltop$, and $W;\typenv,{x:\sigma},{y:\bar{\tau}} \byinov{\secmap{}}{} {\bar{M}} \hookrightarrow {\bar{N}}: \bar s, \bar{\sigma}$ where $\tau = \bar{\tau} \xarr{}{\bar{s}} \bar{\sigma}$.  We can assume that $y \notin \dom{W;\typenv,{x:\sigma}}$ (otherwise rename $y$).  Then $\substi x {X_1}{\lam y {\bar{M}}} = {\lam y {\substi x {X_1}{\bar{M}}}}$ and $\substi x {X_2}{\lam y {\bar{N}}} = {\lam y {\substi x {X_2}{\bar{N}}}}$.  By assumption and Lemma~\ref{app-prop-decleffect-confinement-weakstrengthlemma} we can write $W;\typenv,{y:\bar{\tau}} \byinov{\secmap{}}{} {{X_1}} \hookrightarrow X_2: \kltop, \sigma'$.  By induction hypothesis, $W;\typenv,{y:\bar{\tau}} \byinov{\secmap{}}{} \substi x {X_1}{\bar{M}} \hookrightarrow \substi x {X_2}{\bar{N}} : \bar s, \bar{\sigma}'$ with $\bar{\sigma} \klpreceqtyp \bar{\sigma}'$.  By \abstypI, $\typenv \byinov{\secmap{}}{} {\lam y {\substi x {X_1}{\bar{M}}}} \hookrightarrow {\lam y {\substi x {X_2}{\bar{N}}}} : s, \bar{\tau} \xarr{}{\bar{s}} \bar{\sigma}'$, and we take $\tau' = \bar{\tau} \xarr{}{\bar{s}} \bar{\sigma}'$.

\item [\rectypI]

Here $M = \fix y {\bar{{X_1}}}$, $M = \fix y {\bar{{X_2}}}$, by Remark~\ref{app-rem-pseudovalues} we have $s=\kltop$, and $W;\typenv,{x:\sigma},{y:\tau}$ $\byinov{\secmap{}}{} {\bar{{X_1}}} \hookrightarrow {\bar{{X_2}}}: s, \tau$.  We can assume that $y \notin \dom{W;\typenv,{x:\sigma}}$ (otherwise rename $y$).  Then $\substi x {X_1}{\fix y {\bar{{X_1}}}} = {\fix y {\substi x {X_1}{\bar{{X_1}}}}}$ and $\substi x {X_2}{\fix y {\bar{{X_2}}}} = {\fix y {\substi x {X_2}{\bar{{X_2}}}}}$.  By assumption and Lemma~\ref{app-prop-decleffect-confinement-weakstrengthlemma} we have $W;\typenv,{y:\tau} \byinov{\secmap{}}{} {{X_1}} \hookrightarrow X_2: s, \sigma'$.  By induction hypothesis, $W;\typenv,{y:\tau} \byinov{\secmap{}}{} {\substi x {X_1}{\bar{{X_1}}}} \hookrightarrow {\substi x {X_2}{\bar{{X_2}}}} : \kltop, \tau'$, with $tau \klpreceqtyp \tau'$.  Then, by \rectypI, $\typenv \byinov{\secmap{}}{} {\fix y {\substi x {X_1}{\bar{{X_1}}}}}$ $\hookrightarrow {\fix y {\substi x {X_2}{\bar{{X_2}}}}}: \kltop, \tau'$.

\item [\reftypI]

Here $M = \rfr {\theta} {\bar M}$, $N = \rfr {\theta} {\bar N}$ and we have $\typenv \byinov{\secmap{}}{} \bar {M} \hookrightarrow \bar {N}: \bar s, {\theta'}$ where $\theta \klpreceqtyp \theta'$ and $\tau = \rfrt{{\theta}}{}$. 
By induction hypothesis, $\typenv \byinov{\secmap{}}{} {\substi{x}{{X_1}}{\bar{M}}} \hookrightarrow {\substi{x}{{X_2}}{\bar{N}}}: \bar s, \theta''$ with $\theta' \klpreceqtyp \theta''$.  Then, $\theta \klpreceqtyp \theta''$, so we conclude by \reftypI~that $\typenv \byinov{\secmap{}}{} {\rfr {\theta} {\substi{x}{{X_1}}{\bar{M}}}} \hookrightarrow {\rfr {\theta} {\substi{x}{{X_2}}{\bar{N}}}}: s, \tau$.

\item [\condtypI]

Here $M = \cond {\bar{M}} {{\bar M}_t} {{\bar M}_f}$, $N = \cond {\bar{N}} {{\bar N}_t} {{\bar N}_f}$ and we have that $W;\typenv,{x:\sigma} \byinov{\secmap{}}{} {\bar{M}} \hookrightarrow \bar N : \bar s, \bool$, $W;\typenv,{x:\sigma} \byinov{\secmap{}}{} \bar {M_t} \hookrightarrow \bar {N_t}: \bar{s_t}, \bar{\tau_t}$ and $W;\typenv,{x:\sigma} \byinov{\secmap{}}{} \bar{M_f} \hookrightarrow \bar{N_f}: \bar{s_f}, \bar{\tau_f}$ with $s = \bar s \klmeet \bar{s_t} \klmeet \bar{s_f}$, $\bar{\tau_t} \kleqtyp \bar{\tau_f}$ and $\tau = \bar{\tau_t} \klmeet \bar{\tau_f}$.  
By induction hypothesis, $W;\typenv,{x:\sigma} \byinov{\secmap{}}{} {\substi x {X_1}{\bar{M}}} \hookrightarrow {\substi x {X_2}{\bar{N}}} : \bar s, \bool$, $W;\typenv,{x:\sigma} \byinov{\secmap{}}{} {\substi x {X_1} {\bar{M_t}}} \hookrightarrow {\substi x {X_2} {\bar{N_t}}} : \bar{s_t}, \bar{\tau_t}'$ and $W;\typenv,{x:\sigma} \byinov{\secmap{}}{} {\substi x {X_1} {\bar{M_f}}} \hookrightarrow {\substi x {X_2} {\bar{N_f}}}: \bar{s_f}, \bar{\tau_f}$' with $\bar{\tau_t} \klpreceqtyp \bar{\tau_t}'$ and $\bar{\tau_f} \klpreceqtyp \bar{\tau_f}'$.  Still $\bar{\tau_t}' \kleqtyp \bar{\tau_f}'$, so by  \condtypI~we have $W;\typenv,{x:\sigma} \byinov{\secmap{}}{}$ $(\lkw{if} {\substi x {X_1}{\bar{M}}}$ $\mkw{then}$ ${\substi x {X_1}{\bar{M_t}}}\!\mkw{else} \!{\substi x {X_1}{\bar{M_f}}})$ $\hookrightarrow$ ${\cond {\substi x {X_2}{\bar{N}}} {\substi x {X_2}{\bar{N_t}}} {\substi x {X_2}{\bar{N_f}}}}: s, \bar{\tau_t}' \klmeet \bar{\tau_f}'$.  We then take $\tau' = \bar{\tau_t}' \klmeet \bar{\tau_f}'$.

\item [\apptypI]

Here $M = \app {\bar M_1} {\bar M_2}$ and we have that $\typenv \byinov{\secmap{}}{} \bar {M_1} \hookrightarrow \bar {N_1}: \bar{s_1}, \bar{\theta} \xarr{}{\bar{s_3}} \bar{\sigma}$ and $\typenv \byinov{\secmap{}}{} \bar {M_2} \hookrightarrow \bar {N_2}: \bar{s_2}, \bar{\theta}''$ where $s = \bar{s_1} \klmeet \bar{s_2} \klmeet \bar{s_3}$, $\bar{\theta} \klpreceqtyp \bar{\theta}''$, and $\tau = \bar{\sigma}$.
By induction hypothesis, $W;\typenv,{x:\sigma} \byinov{\secmap{}}{} {\substi x {X_1} {\bar{M_1}}} \hookrightarrow {\substi x {X_2} {\bar{N_1}}} : \bar{s_t}, \bar{\theta} \xarr{}{\bar{s_3}'} \bar{\sigma}'$ and $W;\typenv,{x:\sigma} \byinov{\secmap{}}{} {\substi x {X_2} {\bar{M_f}}} \hookrightarrow {\substi x {X_2} {\bar{N_2}}}: \bar{s_f}, \bar{\theta}'''$ with ${\bar{s_3}} \klpreceq {\bar{s_3}'}$, $\bar{\sigma} \klpreceqtyp \bar{\sigma}'$ and $\bar{\theta}'' \klpreceqtyp \bar{\theta}'''$.  It is still the case that $\bar \theta \klpreceqtyp \bar \theta'''$.  Therefore, by rule \apptypI~we have that $W;\typenv,{x:\sigma} \byinov{\secmap{}}{} {\app {\substi x {X_1}{\bar{M_1}}} {\substi x {X_1}{\bar{M_2}}}}$ $\hookrightarrow$ ${\cond {\substi x {X_2}{\bar{M_2}}} {\substi x {X_2}{\bar{N_t}}} {\substi x {X_2}{\bar{N_2}}}}: s, \bar \sigma'$, and we take $\tau' = \bar \sigma'$.

\item [\migtypI]

Here $M = \threadnat l {\bar M} {d}$, $N = \threadanot l {\bar N} {d} {\bar s}$, $s = \kltop$ and we have $W;\typenv,{x:\sigma} \byinov{\secmap{}}{} {\bar{M}} \hookrightarrow \bar{N}: \bar s, \bar{\tau}$ and $\tau = \unit$.  By induction hypothesis, then $\typenv \byinov{\secmap{}}{} {\substi x {X_1}{\bar{M}}} \hookrightarrow {\substi x {X_2}{\bar{N}}} : \bar s, \bar{\tau}$.  Therefore, by rule \migtypI, we have $\typenv \byinov{\secmap{}}{} {\threadnat l {\substi x {X_1}{\bar{M}}} {d}}$ $\hookrightarrow$ ${\threadanot l {\substi x {X_2}{\bar{N}}} {d} {\bar s}}: \kltop, \tau$.

\item [\flowtypI]

Here $M = \flow {\bar F} {\bar{M}}$, $N = \flow {\bar F} {\bar{N}}$ and $W;\typenv,{x:\sigma} \byinov{\secmap{}}{} {\bar{M}} \hookrightarrow {\bar{N}} : \bar s, \tau$ with $s = \bar s \klmeet \bar F$.  By induction hypothesis, $\typenv \byinov{\secmap{}}{} {\substi{x}{{X_1}}{\bar{M}}} \hookrightarrow {\substi{x}{{X_2}}{\bar{N}}}: \bar s, \tau'$ with $\tau \klpreceqtyp \tau'$.  By \flowtypI, $\typenv \byinov{\secmap{}}{} {\flow {\bar{F}} {\substi{x}{{X_1}}{\bar{M}}}}$ $\hookrightarrow {\flow {\bar{F}} {\substi{x}{{X_2}}{\bar{N}}}}: s, \tau'$.  

\item [\allowtypI] 

Here $M = \allowed {\bar F} {{\bar M}_t} {{\bar M}_f}$, and $N =$ $\allowed {\bar F} {{\bar N}_t} {{\bar N}_f}$ and $W;\typenv,{x:\sigma} \byinov{\secmap{}}{} {\bar{M}} \hookrightarrow {\bar{M}}: \bar s, \bool$ and also $W;\typenv,{x:\sigma} \byinov{\secmap{}}{} {{\bar M}_t} \hookrightarrow {{\bar N}_t}: {{\bar s}_t}, \bar{\tau_t}$ and $W;\typenv,{x:\sigma} \byinov{\secmap{}}{} {{\bar M}_f} \hookrightarrow {{\bar N}_f}: \bar{s_f}, \bar{\tau_f}$, with $s = {\bar s}_t \klpseudominus \bar F \klmeet \bar{s_f}$, $\bar{\tau_t} \kleqtyp \bar{\tau_f}$ and $\tau = \bar{\tau_t} \klmeet \bar{\tau_f}$.  
By induction hypothesis, $\typenv \byinov{\secmap{}}{} {{\substi x {X_1} {{\bar M}_t}}} \hookrightarrow {{\substi x {X_2} {{\bar N}_t}}}: {{\bar s}_t}, \bar{\tau_t'}$ and $\typenv \byinov{\secmap{}}{} {{\substi x {X_1} {{\bar M}_f}}} \hookrightarrow {{\substi x {X_2} {{\bar N}_f}}}: {{\bar s}_f}, \bar{\tau_f'}$ with $\bar{\tau_t} \klpreceqtyp \bar{\tau_t'}$ and $\bar{\tau_f} \klpreceqtyp \bar{\tau_f'}$.  Still, we have that $\bar{\tau_t'} \kleqtyp \bar{\tau_f'}$.  We then have, by rule \allowtypI, that $\typenv \byinov{\secmap{}}{} {\allowed {\bar F} {\substi x {X_1}{\bar{M_t}}} {\substi x {X_1}{\bar{M_f}}}} \hookrightarrow$ ${\allowed {\bar F} {\substi x {X_1}{\bar{N_t}}} {\substi x {X_1}{\bar{N_f}}}} : s, {{\bar s}_t}' \klpseudominus \bar F \klmeet {{\bar s}_f}'$.  We then take $\tau' = \bar{\tau_t} \klmeet \bar{\tau_f}$.

\end{description}
The proofs for the cases \loctypI, \boolttypI~and \boolftypI~are analogous to the one for \niltypI, while the proofs for \seqtypI, \dereftypI~and \assigntypI~are analogous to (or simpler than) the one for \apptypI.
\end{proof}

\begin{lem}[Replacement]\label{app-prop-decleffect-confinement-repllemma} \text{} %
\begin{enumerate}
\item If $\typenv \byinov{\secmap{}}{} \cC{E_1}{M}  \hookrightarrow {N_{E_2}} : s, \tau$ is a valid judgment, then the proof gives $M$ a typing $\typenv \byinov{\secmap{}}{} {M} \hookrightarrow N: \bar s, \bar{\tau}$ for some $N$, $\bar s$ and $\bar{\tau}$ such that $s \klpreceq \bar s$ and for which there exists $\cE{E_2}$ such that ${N_{E_2}} = \cC{E_1}{N}$.  
In this case, if $\typenv \byinov{\secmap{}}{} {M'} \hookrightarrow N': \bar s', \bar{\tau}'$ with $\bar s \klmeet A \klpreceq \bar s'$ for some $A$ and $\bar{\tau} \klpreceqtyp \bar{\tau}'$, then $\typenv \byinov{\secmap{}}{} {\cC{E_1}{M'}} \hookrightarrow \cC{E_2}{N'}: s', \tau'$ for some $s'$, $\tau'$ such that $s \klmeet A \klpreceq s'$ and $\tau \klpreceqtyp \tau'$.
\item If $\typenv \byinov{\secmap{}}{} {M_{E_1}} \hookrightarrow \cC{E_2}{N} : s, \tau$ is a valid judgment, then the proof gives $N$ a typing $\typenv \byinov{\secmap{}}{} {M} \hookrightarrow N: \bar s, \bar{\tau}$ for some $M$, $\bar s$ and $\bar{\tau}$ such that $s \klpreceq \bar s$ and for which there exists $\cE{E_1}$ such that ${M_{E_1}} = \cC{E_1}{M}$.  
In this case, if $\typenv \byinov{\secmap{}}{} {M'} \hookrightarrow N': \bar s', \bar{\tau}'$ with $\bar s \klmeet A \klpreceq \bar s'$ for some $A$ and $\bar{\tau} \klpreceqtyp \bar{\tau}'$, then $\typenv \byinov{\secmap{}}{} {\cC{E_1}{M'}} \hookrightarrow \cC{E_2}{N'}: s', \tau'$ for some $s'$, $\tau'$ such that $s \klmeet A \klpreceq s'$ and $\tau \klpreceqtyp \tau'$.
\end{enumerate}
\end{lem}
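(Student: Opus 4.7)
The proof of Part~1 proceeds by structural induction on the evaluation context $\cE{E_1}$. The proof of Part~2 is entirely symmetric, exploiting the fact (noted after Figure~\ref{fig-typesystem-declassif}) that the informative type system is deterministic: a non-annotated expression determines uniquely its annotated image, its declassification effect, and its type. Thus the mapping $\cE{E_1}\mapsto\cE{E_2}$ induced by Part~1 has an inverse, which underlies Part~2.

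For the base case $\cE{E_1}=[\,]$, I take $\cE{E_2}=[\,]$, $N=N_{E_2}$, $\bar s=s$ and $\bar\tau=\tau$; the second clause is then immediate. For the inductive step, I examine the outermost constructor of $\cE{E_1}$ and invert the unique typing rule whose conclusion matches. All application, sequencing, reference creation, dereferencing, assignment, and conditional cases follow the same pattern: decompose the hypothesis via the corresponding rule, apply the induction hypothesis to the sub-context containing the hole, and reassemble with the same rule. The two key observations that make the bookkeeping go through are (i)~the declassification effect of a composite expression is obtained by $\klmeet$ of the effects of its sub-expressions, so if one sub-effect is replaced by a more restrictive one the composite effect is also refined; and (ii)~the subtyping relation $\klpreceqtyp$ and matching relation $\kleqtyp$ propagate upwards through all the rules involved, since the non-hole sub-derivations and their types remain unchanged.

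The only two genuinely interesting cases are \flowtypI{} and \apptypI{}. For $\cE{E_1}=\flow{\bar F}{\cE{E_1'}}$, inversion gives a sub-derivation with effect $\bar s_0$ such that $s=\bar s_0\klmeet \bar F$; the induction hypothesis yields $\bar s_0\klpreceq\bar s$, hence $s\klpreceq\bar s$, and on reassembly $s'=s''\klmeet\bar F$ with $\bar s_0\klmeet A\klpreceq s''$, giving $s\klmeet A\klpreceq s'$. For $\cE{E_1}=\app{\cE{E_1'}}{N}$, inversion via \apptypI{} produces an arrow type $\tau_1\xarr{}{s_3}\sigma$ for the sub-context; the induction hypothesis returns a type $\tau_1'$ with $\tau_1\xarr{}{s_3}\sigma\klpreceqtyp\tau_1'$, which by definition of $\klpreceqtyp$ on arrows must itself be an arrow $\tau_1\xarr{}{s_3'}\sigma'$ with $s_3\klpreceq s_3'$ and $\sigma\klpreceqtyp\sigma'$; the side condition $\tau_1\klpreceqtyp\tau_2$ of \apptypI{} is therefore still satisfied for the argument, and the reassembled effect $s_1'\klmeet s_2\klmeet s_3'$ dominates $s\klmeet A=s_1\klmeet s_2\klmeet s_3\klmeet A$ as required.

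The main obstacle is keeping track of the interplay between $\klpreceq$ on effects and $\klpreceqtyp$ on types in the \apptypI, \reftypI, and \assigntypI{} cases, where a latent effect embedded inside a type may be refined by the induction hypothesis; one must verify that the side conditions relating the types of two sub-expressions (such as $\tau\klpreceqtyp\tau''$ in \apptypI{} or $\theta\klpreceqtyp\theta'$ in \reftypI{}) continue to hold after refinement. The \condtypI{} case deserves a brief remark: since the evaluation contexts place the hole only in the scrutinee position, the two branches keep their original derivations intact and the side condition $\tau_t\kleqtyp\tau_f$ is trivially preserved; crucially, the \allowtypI{} construct is not an evaluation context in the syntax of $\cE{E}$, so the delicate $\klpseudominus$ operation never needs to be re-established during replacement.
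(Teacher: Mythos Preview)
Your proposal is correct and follows essentially the same approach as the paper: structural induction on the evaluation context, inverting the unique typing rule at each layer, applying the induction hypothesis to the sub-context, and reassembling. The paper presents Part~2 in detail (by induction on $\cE{E_2}$) and declares Part~1 symmetric, whereas you do the reverse; the two are interchangeable. Your added discussion of the \apptypI{} case---tracking how $\klpreceqtyp$ on the arrow type forces the domain to stay fixed while the latent effect and result type may be refined---is more explicit than the paper, which folds all such cases into ``analogous to \condtypI''. Your observation that \allowtypI{} is not an evaluation context, so the $\klpseudominus$ operation never interferes, is exactly the point that keeps the argument clean.
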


\begin{proof}  \text{}
  
\begin{enumerate} 

\item By induction on the structure of $\cE{E_1}$.  The proof is symmetric to the second case.
  
\item By induction on the structure of $\cE{E_2}$.
\begin{description}
\item [$\boldsymbol{\cC{E_2}{N} = N}$]

This case is direct.
\item [$\boldsymbol{{\cC{E_2}{N} = \cond {\cC{\hat{E}_2}{N}} {\hat N_t} {\hat N_f}}}$]

By \condtypI, ${M_{E_1}} = \cond {M_{\hat{E}_1}} {\hat M_t} {\hat M_f}$, and $\typenv \byinov{\secmap{}}{} {M_{\hat{E}_1}} \hookrightarrow {\cC{\hat{E}_2}{N}}: \hat s, \bool$, and also $\typenv \byinov{\secmap{}}{} {\hat M_t} \hookrightarrow {\hat N_t} : \hat{s}_t, {\hat{\tau}_t}$ and $\typenv \byinov{\secmap{}}{} {\hat M_f} \hookrightarrow {\hat N_f} : \hat{s}_f, \hat{\tau}_f$ with $s = \hat s \klmeet \hat{s}_t \klmeet \hat{s}_f$, $\hat{\tau}_t \kleqtyp \hat{\tau}_f$ and $\tau = {\hat{\tau}_t} \klmeet {\hat{\tau}_f}$.  By induction hypothesis, the proof gives $N$ a typing $\typenv \byinov{\secmap{}}{} {M} \hookrightarrow N: {\bar s}, {\bar \tau}$, for some ${M}$, ${\bar s}$, ${\bar \tau}$ such that $\hat s \klpreceq {\bar s}$, and for which there exists $\cE{\hat{E}_1}$ such that ${M_{\hat{E}_1}} = \cC{\hat{E}_1}{M}$.

Also by induction hypothesis, $\typenv \byinov{\secmap{}}{} {\cC{\hat{E_1}}{M'}} \hookrightarrow \cC{\hat{E}_2}{N'} : \hat s', \bool$, for some $\hat s'$ such that $\hat s \klmeet A \klpreceq \hat s'$.  Again by rule \condtypI we have that $\typenv \byinov{\secmap{}}{} \cond {\cC{\hat{E_1}}{M'}} {\hat M_t} {\hat M_f}$ $\hookrightarrow$ $\cond {\cC{\hat{E}_2}{N'}} {\hat N_t} {\hat N_f}  : s', \tau'$ with $s' = \hat s' \klmeet \hat{s}_t \klmeet \hat{s}_f$ and $\tau' = {\hat{\tau}_t} \klmeet {\hat{\tau}_f}$.  Notice that $s \klmeet A = \hat s \klmeet \hat{s}_t \klmeet \hat{s}_f \klmeet A \klpreceq \hat s' \klmeet \hat{s}_t \klmeet \hat{s}_f  = s'$.

\item [$\boldsymbol{\cC{E_2}{M} = \flow {\hat F} {\cC{\hat{E}_2}{M}}}$]

By \flowtypI, ${M_{E_1}} = \flow {\hat F} {M_{\hat{E}_1}}$, and $\typenv \byinov{\secmap{}}{} {M_{\hat{E}_1}} \hookrightarrow \cC{\hat{E}_2}{N} : \hat s, \tau$ and $s = \hat s \klmeet \hat F$.  By induction hypothesis, the proof gives $N$ a typing $\typenv \byinov{\secmap{}}{} {M} \hookrightarrow N : \bar s, \bar{\tau}$, for some $M$, $\bar s$, $\bar{\tau}$ such that $\hat s \klpreceq \bar s$, and for which there exists $\cE{\hat{E}_1}$ such that ${M_{\hat{E}_1}} = \cC{\hat{E}_1}{M}$.

Also by induction hypothesis, $\typenv \byinov{\secmap{}}{} {\cC{\hat{E_1}}{M'}} \hookrightarrow {\cC{\hat{E}_2}{N'}} : \hat s', \tau$, for some $\hat s'$ such that $\hat s \klmeet A \klpreceq \hat s'$.
Then, again by \flowtypI, we have $\typenv \byinov{\secmap{}}{} \flow {\hat F} {\cC{\hat{E_1}}{M'}} \hookrightarrow \flow {\bar F}{\cC{\hat{E}_2}{N'}}  : s', \tau$ with $s' = \hat s' \klmeet \hat F$.  Notice that $s \klmeet A = \hat s \klmeet \hat{F} \klmeet A \klpreceq \hat s' \klmeet \hat{F}  = s'$.

\end{description}
The proofs for the cases $\cC{E_2}{M} = \extrf{\assign{\cC{\hat{E}_2}{M}}N}$, $\cC{E_2}{M} = \extrf{\assign V {\cC{\hat{E}_2}{M}}}$, $\cC{E_2}{M} = \extrf{\deref{\cC{\hat{E}_2}{M}}}$, $\cC{E_2}{M} = \extrf{{\app {\cC{\hat{E}_2}{M}} N}}$, $\cC{E_2}{M} = \extrf{{\app V {\cC{\hat{E}_2}{M}}}}$, $\cC{E_2}{M} = \extrf{\seq{\cC{\hat{E}_2}{M}}N}$ and $\cC{E_2}{M} = \extrf{\rfrl{l}{\theta}{\cC{\hat{E}_2}{M}}}$,  are all analogous to the proof for the case where $\cC{E_2}{M} =$ $\cond {\cC{\hat{E}_2}{M}} {N_t} {N_f}$.  \qedhere
\end{enumerate}
\end{proof}

The following proposition ensures that the annotation processing is preserved by the annotated semantics.  This is formulated by stating that after reduction, programs are still well annotated.  %
More precisely, the following result states that if a program is the result of an annotation process, a certain declassification effect and type, then after one computation step it is still the result of annotating a program, and is given a not more permissive declassification effect and type.

\begin{prop}[Subject Reduction, or Preservation of Annotations -- Proposition~\ref{prop-preservanot}] \label{app-prop-preservanot} 
Consider a thread $M^{m}$ such that ${\typenv \byinov{\secmap{}}{} {M} \hookrightarrow {N}: s, {\tau}}$ and suppose that
$W \semvdash \iconft{T}{\{{N}^{m}\}} S$ $\xarr{\var {F}}{d}$ $\iconft{T'} {\{{N'}^{m}\} \cup P} {S'}$, for a memory $S$ that is $(W,{\secmap{}},{\typenv})$-compatible. Then there exist $M'$, $s'$, $\tau'$ such that $s \klmeet {W(T(m))} \klpreceq s'$, and $\tau \klpreceqtyp \tau'$, and $\typenv\byinov{\secmap{}}{} M' \hookrightarrow {N'}:s',\tau'$, and $S'$ is also $(W,{\secmap{}},{\typenv})$-compatible.
Furthermore, if $P = \{N''^n\}$ for some expression $N''$ and thread name $n$, then there exist $M''$, $s''$ such that $W(T'(n)) \klpreceq s''$ and $\typenv\byinov{\secmap{}}{} M'' \hookrightarrow N'':s'',\unit$.
\end{prop}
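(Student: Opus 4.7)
The proof will proceed along the same standard lines as the two earlier subject-reduction proofs (Propositions~\ref{app-prop-subjectreduction-confinementI} and~\ref{app-prop-subjectreduction-confinementII}), combining the Substitution Lemma (Lemma~\ref{app-prop-decleffect-confinement-subslemma}) and the Replacement Lemma (Lemma~\ref{app-prop-decleffect-confinement-repllemma}) with a case analysis on the reduction rule. Concretely, write $N=\cC{E_2}{\bar N}$ where $\bar N$ is the redex; by part~2 of Replacement we extract from the typing of $N$ a source decomposition $M=\cC{E_1}{\bar M}$ together with a typing $\typenv\byinov{\secmap{}}{} \bar M \hookrightarrow \bar N : \bar s, \bar\tau$ such that $s \klpreceq \bar s$. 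For each reduction rule we then exhibit the source-level reduct $\bar M'$ (e.g.\ $\substi{x}{V}{M}$ for beta, $S(a)$ for dereferencing, $\nil$ for assignment and thread creation, $N_t$ or $N_f$ for the branches of conditionals and allowed-conditions), produce a typing $\typenv\byinov{\secmap{}}{} \bar M' \hookrightarrow \bar N' : \bar s', \bar\tau'$ with $\bar s \klmeet W(T(m)) \klpreceq \bar s'$ and $\bar\tau \klpreceqtyp \bar\tau'$, and finally lift this back through the context via part~1 of Replacement to obtain the required $\typenv\byinov{\secmap{}}{} M' \hookrightarrow N' : s', \tau'$ with $s \klmeet W(T(m)) \klpreceq s'$ and $\tau \klpreceqtyp \tau'$.

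The routine cases (beta-reduction, fix-unfolding, conditionals, sequencing, dereferencing, assignment, reference creation) will be handled exactly as in the earlier proofs, invoking Substitution for the binders and the $(W,\secmap{},\typenv)$-compatibility of~$S$ to type the value fetched by a dereference step and to preserve compatibility after assignments and allocations; here we also use Remark~\ref{app-rem-valTransf} to note that annotating a value produces a value of effect~$\kltop$, so that the updated annotated store remains in the image of~$\annot{}{}$. The flow-declaration case $\flow{F'}{V} \to V$ is immediate: the typed effect of the redex is $\kltop \klmeet F' = F'$ while the reduct has effect $\kltop$, and $F' \klmeet W(T(m)) \klpreceq \kltop$ holds vacuously since $\kltop$ is the top of the flow-policy lattice.

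The main technical obstacle is the allowed-condition. By \allowtypI, the effect of $\allowed{F'}{N_t}{N_f}$ is $s_t \klpseudominus F' \klmeet s_f$ while the then-branch and else-branch have typed effects $s_t$ and $s_f$ respectively. For the else-branch no hypothesis on $W(T(m))$ is needed since $(s_t \klpseudominus F') \klmeet s_f \klpreceq s_f$ by definition of $\klmeet$. For the then-branch, the semantic precondition of the rule gives $W(T(m)) \klpreceq F'$; monotonicity of $\klmeet$ and the defining property $(s_t \klpseudominus F') \klmeet F' \klpreceq s_t$ of pseudo-subtraction then yield
\[
(s_t \klpseudominus F') \klmeet s_f \klmeet W(T(m)) \klpreceq (s_t \klpseudominus F') \klmeet F' \klpreceq s_t,
\]
which is precisely the required bound. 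The matching relations $\kleqtyp$ and $\klmeet$ on types in the premise of \allowtypI~ensure that whichever branch is selected, its type refines $\tau$ under $\klpreceqtyp$.

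Finally, the clause concerning a spawned thread reduces to inspecting \migtypI: the annotated construct $\threadanot{l}{\hat N}{d}{s}$ carries on its annotation exactly the declassification effect of its body, so after the step we take $M''$ to be the source of $\hat N$ and set $s''$ equal to that annotation, obtaining $\typenv\byinov{\secmap{}}{} M'' \hookrightarrow \hat N : s'', \unit$; the inequality $W(T'(n)) \klpreceq s''$ is delivered directly by the side condition $W(d) \klpreceq s$ of the modified migration rule of Figure~\ref{fig-confinementIII-semantics} together with $T'(n) = d$.
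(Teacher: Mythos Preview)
Your proposal is correct and follows essentially the same approach as the paper's proof: decompose $N=\cC{E_2}{\bar N}$ via the Replacement Lemma to obtain a sub-derivation $\typenv\byinov{\secmap{}}{}\bar M\hookrightarrow\bar N:\bar s,\bar\tau$ with $s\klpreceq\bar s$, do a case analysis on the redex, and re-assemble via Replacement with $A=W(T(m))$. Your treatment of the allowed-condition case, spelling out how the semantic hypothesis $W(T(m))\klpreceq F'$ combines with the defining property $(s_t\klpseudominus F')\klmeet F'\klpreceq s_t$ of pseudo-subtraction, is exactly the calculation the paper performs (it writes the chain $\hat s_t\klpseudominus\hat F\klmeet\hat s_f\klmeet W(T(m))\klpreceq\hat s_t$ without naming the ingredients). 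One small remark: both the extraction and the lifting are handled by the \emph{same} part of Lemma~\ref{app-prop-decleffect-confinement-repllemma} (part~2, since you start from the decomposition of the annotated term~$N$); the paper simply invokes ``Replacement'' without distinguishing.
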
 
\begin{proof}
Suppose that $N=\cC{\bar{E}}{\bar{N}}$ and $W \semvdash \iconft{T}{\{{\bar{N}}^m\}}{S}$ $\xarr{\bar{F}}{d}$ $\iconft{\bar T'}{\{\bar{N}'^m\} \cup P'}{\bar{S}'}$.  We start by observing that this implies $F = \bar{F} \klmeet \extrf{\cE{\bar{E}}}$, $M'=\cC{\bar{E}}{\bar{M}'}$, $P = P'$, ${\bar{T}'} = {T'}$ and ${\bar{S}'} = {S'}$.  
We can assume, without loss of generality, that $\bar{N}$ is the smallest in the sense that there is no $\cE{\hat{E}},\hat{N}$ such that $\cE{\hat{E}}\neq[]$ and $\cC{\hat{E}}{\hat{N}}=\bar{N}$ for which we can write $W \semvdash \iconft{T}{\{\hat{N}^m\}}{S}$ $\xarr{\hat{F}}{d}$ $\iconft{T'}{\{\hat{N}'^m\} \cup P}{S'}$.  %

By Replacement (Lemma~\ref{app-prop-decleffect-confinement-repllemma}), we have $\typenv \byinov{\secmap{}}{} {{\bar{M}}} \hookrightarrow \bar{N}: \bar s, \bar{\tau}$, for some $\bar M$, $\bar s$' $\bar \tau$ such that $s \klpreceq \bar s$, in the proof of $\typenv \byinov{\secmap{}}{} {M} \hookrightarrow {N}: s, {\tau}$.
We proceed by case analysis on the transition $W \semvdash \iconft{T}{\{{\bar{N}}^m\}}{S}$ $\xarr{\bar{F}}{d}$ $\iconft{T'}{\{\bar{N}'^m\} \cup P}{S'}$, and prove that:
\begin{itemize}
\item There exist $\bar M'$, $\bar s'$ and $\bar \tau'$ such that $\typenv\byinov{\secmap{}}{} {\bar M'} \hookrightarrow {\bar N'}: {\bar s'}, {\bar \tau'}$, and $\bar s \klmeet W(T(m)) \klpreceq \bar s'$ and $\bar \tau \klpreceqtyp \bar \tau'$.  Furthermore, for every reference $a \in \dom{S'}$ implies ${\typenv \byint{\kltop}{\secmap{}}{} V \hookrightarrow S'(a) : \secmap{2}(a)}$ for some value $V$.
\item If~$P=\{N''^n\}$~for some expression $N''$ and thread name $n$, then there exist $M''$ and $\bar s''$ such that $\typenv\byinov{\secmap{}}{} {M''} \hookrightarrow {N''}: {\bar s''}, {\unit}$, and $W(T'(n)) \klpreceq \bar s'$.  (Note that in this case $S=S'$.)
\end{itemize}

By case analysis on the structure of $\bar N$:
\begin{description}
\item [$\boldsymbol{\bar{N} = \app {\lam x {\hat{N}}} {V_2}}$]

Here we have $\bar{N}' = \substi x {V_2} {\hat{N}}$, $S=S'$ and $P=\emptyset$.  By rule \apptypI, there exist $\hat M$, $V_1$, $\hat{s_1}$, $\hat{s_2}$, $\hat{s_3}$, $\hat{\tau}$, $\hat{\sigma}$ and $\hat{\tau}''$ such that $\typenv \byinov{\secmap{}}{} {\lam x {\hat{M}}} \hookrightarrow  {\lam x {\hat{N}}}: \hat{s_1}, \hat{\tau} \xarr{}{\hat{s_3}} \hat{\sigma}$ and $\typenv \byinov{\secmap{}}{} {{V_1}} \hookrightarrow {{V_2}} : \hat{s_2}, \hat{\tau}''$ with $\bar s = \hat{s_1} \klmeet \hat{s_2} \klmeet \hat{s_3}$, $\hat \tau \klpreceqtyp \hat \tau''$ and $\bar{\tau} = \hat{\sigma}$.
By \abstypI, then $\typenv,x:\hat{\tau} \byinov{\secmap{}}{} {{\hat{M}}} \hookrightarrow {{\hat{N}}} : \hat{s_3}, \hat{\sigma}$. %
Therefore, by Lemma~\ref{app-prop-decleffect-confinement-subslemma}, we get $\typenv \byinov{\secmap{}}{} {\substi x {V_1} {\hat{M}}} \hookrightarrow {\substi x {V_2} {\hat{N}}} : \hat{s_3}', {\hat \sigma'}$ with $\hat{s_3} \klpreceq \hat{s_3}'$ and $\hat{\sigma} \klpreceqtyp \bar{\tau}'$.  We take $\bar s' = \hat{s_3}$ and $\bar \tau' = \hat \sigma'$.
\item [$\boldsymbol{\bar{N} = \flow {\hat F} {V_2}}$]

Here we have $\bar{N}' = V_2$, $S=S'$ and $P=\emptyset$.  By rule \flowtypI~and by Remark~\ref{app-rem-valTransf}, there exist $V_1$, $\hat{s}$, such that $\typenv \byinov{\secmap{}}{} V_1 \hookrightarrow V_2 : \kltop, {\bar \tau}$.  We take $\bar s' = \kltop$.
\item [$\boldsymbol{\bar{N} = \allowed {\hat F} {N_t}{N_f}}$ and $\boldsymbol{W(T(m)) \klpreceq \hat F}$]

Here we have $\bar{N}' = N_t$, $S=S'$ and $P=\emptyset$.  By \allowtypI, there exist $M_t$, $N_t$, $\hat{s}_t$, $\hat{s}_f$, $\hat{\tau}_t$ and $\hat{\tau}_f$ such that $\typenv \byinov{\secmap{}}{} {M_t} \hookrightarrow {N_t} : \hat{s}_t, \hat{\tau}_t$, and $\typenv \byinov{\secmap{}}{} {M_f} \hookrightarrow {N_f} : \hat{s}_f, \hat{\tau}_f$, where $\bar s = \hat{s}_t \klpseudominus \hat F \klmeet \hat{s}_f$, $\tau_t \kleqtyp \tau_f$ and $\bar{\tau} = \hat{\tau}_t \klmeet \hat{\tau}_f$.  We take $\bar s' = \hat{s}_t$ and $\bar \tau' = \hat{\tau}_t$.  Notice that $\bar s \klmeet W(T(m)) = \hat{s}_t \klpseudominus \hat F \klmeet \hat{s}_f \klmeet W(T(m)) \klpreceq \hat{s}_t$ and $\bar \tau \klpreceqtyp \bar \tau'$.  
\item [$\boldsymbol{\bar{N} = \threadanot k {\hat N} d {\dot s}}$]

Here we have $\bar{N}' = \nil$, $S=S'$, $P=\{\hat N^n\}$ for some thread name $n$, $T'(n)=d$ and $W(d) \klpreceq {\dot s}$.  By \migtypI, we have that there exists $\hat M$ such that $\typenv \byinov{\secmap{}}{} {\hat M} \hookrightarrow {\hat N} : \dot s,\unit$ and $\bar \tau = \unit$, and by \niltypI~we have that $\typenv \byinov{\secmap{}}{} {\nil} \hookrightarrow {\nil} : \kltop, \unit$.
\end{description}

By Replacement (Lemma~\ref{app-prop-decleffect-confinement-repllemma}), we can finally conclude that 
$\typenv\byinov{\secmap{}}{} M'_E \hookrightarrow {\cC{\bar{E}}{\bar{N}'}} : s', {\tau}'$ for some $M'_E$, $s'$, $\tau'$ such that $s \klmeet W(T(m)) \klpreceq s'$ and $\tau \klpreceqtyp \tau'$.
\end{proof}

\begin{prop}[Safety (weakened) -- Proposition~\ref{prop-decleffect-safety}] \label{app-prop-decleffect-safety}
Consider a closed thread $M^{m}$ such that ${\emptyset \byinov{\secmap{}}{} M \hookrightarrow {\hat M}: s,\tau}$.  Then, for any allowed-policy mapping $W$, memory $S$ that is $(W,{\secmap{}},{\emptyset})$-compatible and position-tracker $T$, either the program $\hat M$ is a value, or:
\begin{enumerate}
\item the program $\hat M$ is a value, or \label{case1'}
\item $W \vdash \iconft{T}{\{{\hat M}^{m}\}} S$ $\xarr{\var {F'}}{T(m)}$ $\iconft{T'} {\{{\hat M}'^{m}\} \cup P} {S'}$, for some $F'$, ${\hat M}'$, $P$, $S'$ and $T'$, or
  \label{case2'}
\item $M=\cC{{E}}{\threadnat {l} {N} {d}}$, for some ${E}$, $l$, $d$ and $\hat s$ such that $\hat M=\cC{{E}}{\threadanot {l} {\hat N} {d} {\hat s}}$ and ${\emptyset \byinov{\secmap{}}{} N \hookrightarrow {\hat N}: {\hat s},\tau}$ but ${ W(d) \not\klpreceq \hat{s}}$.
  \label{case3'}
\end{enumerate}
\end{prop}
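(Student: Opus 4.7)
The plan is to argue by induction on the derivation of ${\emptyset \byinov{\secmap{}}{} M \hookrightarrow \hat M : s, \tau}$, mirroring the structure of Proposition~\ref{prop-migrcontrol-safety} but tracking the correspondence between $M$ and its annotation $\hat M$. The invariant I would maintain is that whenever the claim of case~(3) is established for some subexpression, the same shape can be lifted to the whole expression by extending the relevant evaluation context, since evaluation contexts do not contain thread-creation constructs and the informative type system leaves the non-migration structure of the expression untouched.

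For the base cases (rules \niltypI, \boolttypI, \boolftypI, \loctypI, \abstypI, \rectypI), $\hat M$ is a value by Remark~\ref{app-rem-valTransf}, so case~(1) holds; the \vartypI~case is impossible because $M$ is closed. For each standard call-by-value construct (rules \apptypI, \seqtypI, \reftypI, \dereftypI, \assigntypI, \condtypI, \flowtypI), I would apply the induction hypothesis to the leftmost subexpression that the evaluation contexts can analyse:  if the subexpression falls under case~(1), then either the redex at the root is immediately fireable (so the whole expression reduces, giving case~(2)) or, for \flowtypI~over a value, the flow declaration discharges; if the subexpression falls under case~(2), we lift the step through the outer evaluation context; and if it falls under case~(3) with some inner context $\cE{E_0}$, we form the extended context that places that redex at the same position in the compound expression and inherit case~(3). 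The allowed-condition (\allowtypI) is handled directly: the operational rule fires unconditionally by inspecting $W(T(m))$ against $F$, so case~(2) always holds.

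The decisive case is \migtypI, where $M = \threadnat{l}{N}{d}$ and $\hat M = \threadanot{l}{\hat N}{d}{\hat s}$ with $\emptyset \byinov{\secmap{}}{} N \hookrightarrow \hat N : \hat s, \unit$. Here one performs the dichotomy on the side condition of the annotated migration rule: if $W(d) \klpreceq \hat s$, the configuration reduces and case~(2) holds; otherwise case~(3) holds directly with $\cE{E} = \trou$. This single case is what makes the safety statement weakened rather than absolute, and every other rule of the typing system either reduces by itself or simply relays the disjunction produced by the induction hypothesis.

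The main obstacle I expect is not the logic of each individual case but the bookkeeping of the correspondence between the evaluation context in $M$ and that in $\hat M$ when propagating case~(3): one must confirm that the annotation function, restricted to expressions whose top-level shape matches some $\cC{E}{\cdot}$, commutes with context decomposition, so that an inner witness $(\cE{E_0}, l, N_0, d_0, \hat{s}_0)$ lifts to a witness for the enclosing context without altering $\hat{s}_0$ or disturbing the typing judgment ${\emptyset \byinov{\secmap{}}{} N_0 \hookrightarrow {\hat N}_0 : {\hat s}_0, \unit}$. This is routine given that the informative typing rules for non-migration constructs are syntax-directed and annotation-preserving on evaluation contexts, and can be justified by a small auxiliary observation analogous to the Replacement lemma (Lemma~\ref{app-prop-decleffect-confinement-repllemma}).
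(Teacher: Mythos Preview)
Your proposal is correct and follows essentially the same approach as the paper's proof: induction on the typing derivation, with the decisive dichotomy at \migtypI~on whether $W(d) \klpreceq \hat s$, and routine propagation of the three cases through evaluation contexts for the remaining rules. One small slip: $\fix{x}{X}$ is a pseudo-value but not a value, so \rectypI~does not fall under case~(1) via Remark~\ref{app-rem-valTransf}; rather, the $\varrho$-redex always fires, giving case~(2) directly.
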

\begin{proof} By induction on the derivation of ${\typenv \byinov{\secmap{}}{} M \hookrightarrow {\hat M}: s,\tau}$.  If %
$\hat M \in Val'$ then case~\ref{case1'} holds.
\begin{description}
\item [\flowtypI] If $M = \flow {F} {N}$ and $\hat M = \flow {F} {\hat N}$, then ${\typenv \byinov{\secmap{}}{} N \hookrightarrow {\hat N}: s', \tau}$.  By induction hypothesis, then one of the following cases holds for $\hat N$:
  \begin{description}
  \item [Case~\ref{case1'}]  Then case~\ref{case2'} holds for $\hat M$, with ${\hat M}'={\hat N}$, $d=T(m)$ and $F'=F$.
  \item [Case~\ref{case2'}]  Suppose that there exists $F'$, ${\hat N}'$, $P$, $S'$ and $T'$ such that $W \vdash \iconft{T}{\{{\hat N}^{m}\}} S$ $\xarr{\var {F'}}{d}$ $\iconft{T'} {\{{\hat N}'^{m}\} \cup P} {S'}$.  Then $W \vdash \iconft{T}{\{{\hat M}^{m}\}} S$ $\xarr{\var {F' \cup F}}{d}$ $\iconft{T'} {\{{\flow F {{\hat N}'}}^{m}\} \cup P} {S'}$, so case~\ref{case2'} holds for $\hat M$.
  \item [Case~\ref{case3'}]  Suppose that there exists an evaluation context ${\bar E}$, a security level $l$, a domain name $d$ and a security effect $\hat s$ such that $N=\cC{{E}}{\threadnat {l} {N'} {d}}$, $\hat N=\cC{{E}}{\threadanot {l} {{\hat N}'} {d} {\hat s}}$ and ${\typenv \byinov{\secmap{}}{} {N'} \hookrightarrow {{\hat N}'}: {\hat s},\tau}$ but ${ W(d) \not\klpreceq \hat s}$.  Then, case~\ref{case3'} also holds for $\hat M$ for the evaluation context $E={\flow F {\cC{\bar E}{}}}$. 
  \end{description}
\item [$\allowtyp$] If $M = \allowed {F} {N_t} {N_f}$ and $M = \allowed {F} {{\hat N}_t} {{\hat N}_f}$, then we have that $W \vdash \iconft{T}{\{{\hat M}^{m}\}} S$ $\xarr{\emptyset}{T(m)}$ $\iconft{T} {\{{\hat M}'^{m}\}} {S}$, with ${\hat M}'={\hat N}_t$ of ${\hat M}'={\hat N}_f$ depending on whether $W(T(m)) \klpreceq F$ or $W(T(m)) \klpreceq F$ (respectively).  Therefore, case~\ref{case2'} holds for $\hat M$.
\item [$\migtyp$] If $M = \threadnat l N d$ and $\hat M = \threadanot l {\hat N} d {\hat s}$, then $\typenv \byinov{\secmap{}}{} {{N}} \hookrightarrow {\hat N} : {\hat s}, {\unit}$, and either:
  \begin{description}
  \item [$\boldsymbol{W(d) \klpreceq \hat{s}}$]  Then, $W \vdash \iconft{T}{\{M^{m}\}} S$ $\xarr{\emptyset}{T(m)}$ $\iconft{T} {\{{\nil}^m,N^{n}\}} {S}$.  Therefore, case~\ref{case2'} holds for $M$.
  \item [$\boldsymbol{W(d) \not\klpreceq \hat{s}}$]  Then, case~\ref{case3'} holds for $M$ for the evaluation context $E={\cE{\bar E}}$.  \qedhere
  \end{description}
\end{description}
\end{proof}

\subsubsection{Preservation of the semantics}  \label{app-deceffect-preservation}

\begin{prop}[Proposition~\ref{prop-preservsem}]
If ${\typenv \byinov{\secmap{}}{} {M} \hookrightarrow {N}: s, \tau}$, then for all allowed-policy mappings $W$ and thread names $m \in \Nam$ we have that $\{M^m\} \realbiseq{W,\secmap{},{\typenv}} \{N^m\}$.
\end{prop}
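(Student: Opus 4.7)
The plan is to exhibit a $(W,\secmap{},\typenv)$-simulation $\fS$ containing the pair $(\{M^m\},\{N^m\})$. Following the suggestion in the paper, I would actually work with the larger set
\[
B = \bigl\{\,(P_1,P_2) \,\bigm|\, \dom{P_1}=\dom{P_2}\textit{ and } \forall m\!\in\!\dom{P_1},\, \exists s,\tau.\; \typenv \byinov{\secmap{}}{} P_1(m) \hookrightarrow P_2(m) : s,\tau \,\bigr\},
\]
which is closed under spawning of new threads and hence suited to a compositional simulation proof, and then show $B \subseteq \realbiseq{W,\secmap{},{\typenv}}$.

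First I would record a simple decomposition consequence of Replacement (Lemma~\ref{app-prop-decleffect-confinement-repllemma}): whenever $\typenv \byinov{\secmap{}}{} M \hookrightarrow \cC{E_2}{\bar N} : s,\tau$, there exist $\cE{E_1}$ and $\bar M$ with $M = \cC{E_1}{\bar M}$ and a sub-derivation $\typenv \byinov{\secmap{}}{} \bar M \hookrightarrow \bar N : \bar s,\bar\tau$, and conversely replacing $\bar N$ by any $\bar N'$ coming from annotating some $\bar M'$ (with compatible type) yields an annotation of $\cC{E_1}{\bar M'}$. This is the syntactic bridge that lets us read off the head redex of $M$ from that of $N$.

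Then, assuming $(P_1,P_2)\in B$ and $W\vdash \iconft T {P_2}{\hat S} \xarr{F}{d} \iconft{T'}{P_2'}{\hat S'}$ with $\annot{S}=\hat S$, a single thread $N^m\in P_2$ fires the transition on a head redex $\bar N$. I would case split on $\bar N$ and match with the corresponding redex $\bar M$ in the decomposition of the partner thread $M^m\in P_1$. For the purely functional redexes ($\app{\lam x{\hat N}}{V}$, $\fix x X$, $\cond{\tt}{\cdot}{\cdot}$, $\cond{\ff}{\cdot}{\cdot}$, $\seq V \cdot$, $\flow{F'}{V}$), the structure of the typing inference forces the shape of $\bar M$, and Substitution (Lemma~\ref{app-prop-decleffect-confinement-subslemma}) together with Remark~\ref{app-rem-valTransf} guarantees that the reduct of $\bar M$ still annotates to the reduct of $\bar N$. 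For the allowed-condition both semantics test the same $W(T(m))\klpreceq F'$, so the same branch is chosen. For memory operations the assumption $\annot{S}=\hat S$ gives $\typenv \byinov{\secmap{}}{} S(a) \hookrightarrow \hat S(a) : \kltop, \secmap{2}(a)$, which matches dereferencing, assignment and reference creation, and the freshly updated stores satisfy $\annot{S'}=\hat S'$ directly from the definition of $\annot{}{}$. The crucial case is $\bar N = \threadanot l{\hat N} d{\hat s}$: the modified rule of Figure~\ref{fig-confinementIII-semantics} fires only when $W(d)\klpreceq \hat s$, while the original rule of Figure~\ref{fig-semantics} carries no such side-condition, so $M$'s migration step is always available and produces the child thread $\hat M^n$ with $\typenv \byinov{\secmap{}}{} \hat M \hookrightarrow \hat N : \hat s,\unit$, keeping both new pools in $B$. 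Reassembling the evaluation context via Lemma~\ref{app-prop-decleffect-confinement-repllemma} and lifting to the full pool using the compositional closure of $B$ completes this step.

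The main obstacle I anticipate is the bookkeeping around the invariant $\annot{S'}=\hat S'$: we must check that for every value that appears in a store update (either through substitution-driven reduction, through branch selection in conditionals, or through reference creation), the annotation function commutes with the corresponding syntactic operation. This is where Remark~\ref{app-rem-pseudovalues} together with the clause of the Substitution Lemma for pseudo-values does most of the work, but freshness of reference and thread names has to be chosen consistently between $M$ and $N$ so that the $\annot{}{}$-relation between stores and the thread-by-thread annotation relation in $B$ are maintained simultaneously. Once this bookkeeping is established, the remaining case analysis is routine.
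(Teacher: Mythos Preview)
Your approach is essentially the paper's: exhibit a candidate $(W,\secmap{},\typenv)$-simulation, decompose the annotated redex via Replacement (Lemma~\ref{app-prop-decleffect-confinement-repllemma}), case-split on the head redex, and use Substitution and the side-condition-free migration rule of Figure~\ref{fig-semantics} for the thread-creation case. The one difference is your choice of candidate: you take the thread-pointwise annotation relation on arbitrary pools, whereas the paper's stated $B$ contains only singleton-pool pairs. Your choice is in fact the right one for the simulation to be closed under spawning---after a remote thread creation the resulting pool has two threads and hence falls outside the paper's $B$ as written; the paper's appendix quietly appeals to the two-thread pair $\{M'^m,M''^n\}\,B\,\{N'^m,N''^n\}$ in its last line, which only makes sense under your generalised definition.
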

\begin{proof}
We prove that, for all allowed-policy mappings $W$, the set
\[
B = \{\confd{\{M^m\}}{\{N^m\}} ~|~ m \in \Nam \textit{ and } \exists s,\tau ~.~ {\typenv \byinov{\secmap{}}{} {M} \hookrightarrow {N}: s, \tau}\}
\]
is a $({W},{\secmap{}},{\typenv})$-simulation according to Definition~\ref{def-realbisim}.

Let us assume that $W \semvdash \iconft{T}{\{N^m\}}{{\hat S}}$ $\xarr{F}{d}$ $\iconft{T'}{\{N'^m\} \cup {P}_2'}{{\hat S}'}$.  Consider $\cE{\bar{E}_2},\bar N$ such that $N=\cC{\bar{E}_2}{\bar{N}}$ and $W \semvdash \iconft{T}{\{{\bar{N}}^m\}}{{\hat S}}$ $\xarr{\bar{F}}{d}$ $\iconft{\bar T'}{\{\bar{N}'^m\} \cup {\bar P}_2'}{\bar{{\hat S}}'}$ and $N'=\cC{\bar{E}_2}{\bar{N}'}$.  We start by observing that $F = \bar{F} \klmeet \extrf{\cE{\bar{E}_2}}$, $N'=\cC{\bar{E}_2}{\bar{N}'}$, $P_2' = {\bar P}_2'$, ${\bar{T}'} = {T'}$ and ${\bar{{\hat S}}'} = {{\hat S}'}$.  
We can assume, without loss of generality, that $\bar{N}$ is the smallest in the sense that there is no $\cE{\hat{E}_2},\hat{N}$ such that $\cE{\hat{E}_2}\neq[]$ and $\cC{\hat{E}_2}{\hat{N}}=\bar{N}$ for which we can write $W \semvdash \iconft{T}{\{\hat{N}^m\}}{{\hat S}}$ $\xarr{\hat{F}}{d}$ $\iconft{T'}{\{\hat{N}'^m\} \cup P_2'}{{\hat S}'}$.  %

By Replacement (Lemma~\ref{app-prop-decleffect-confinement-repllemma}), we have $\typenv \byinov{\secmap{}}{} {{\bar{M}}} \hookrightarrow \bar{N}: \bar s, \bar{\tau}$, for some $\bar M$, $\bar s$, $\bar \tau$ such that $s \klpreceq \bar s$, in the proof of $\typenv \byinov{\secmap{}}{} {M} \hookrightarrow {N}: s, {\tau}$.  Furthermore, there exists $\cE{\bar E_1}$ such that $M = \cC{\bar E_1}{\bar M}$.
We proceed by case analysis on the transition $W \semvdash \iconft{T}{\{{\bar{N}}^m\}}{{\hat S}}$ $\xarr{\bar{F}}{d}$ $\iconft{T'}{\{\bar{N}'^m\} \cup {P}_2'}{{\hat S}'}$, and prove that:
\begin{itemize}
\item if $\annot{S} = \hat S$, there exist $\bar M'$, $S'$ s.t. $W \semvdash \iconft{T}{\{{\bar{M}}^m\}}{S}$ $\xarr{\bar{F}}{d}$ $\iconft{T'}{\{\bar{M}'^m\} \cup {P}_1'}{S'}$ and $\annot{S'} = \hat S'$, and for which there exist $\bar s'$ and $\bar \tau'$ such that $\typenv\byinov{\secmap{}}{} {\bar M'} \hookrightarrow {\bar N'}: {\bar s'}, {\bar \tau'}$.
\item If~$P_2'=\{N''^n\}$~for some expression $N''$ and thread name $n$, then there exist $M''$ and $s''$ such that $P_1'=\{M''^n\}$ and $\typenv\byinov{\secmap{}}{} {M''} \hookrightarrow {N''}: {s''}, {\unit}$. %
\end{itemize}

By case analysis on the structure of $\bar N$:
\begin{description}
\item [$\boldsymbol{\bar{N} = \flow {\hat F} {V_2}}$]

Here we have $\bar{N}' = V_2$, $S=S'$, $T=T'$ and $P_2'=\emptyset$.  By rule \flowtypI~and Remark~\ref{app-rem-valTransf}, there exist $V_1$, $\hat{s}$, such that $\bar M = \flow {\hat F} {V_1}$ and $\typenv \byinov{\secmap{}}{} V_1 \hookrightarrow V_2 : \hat s, {\bar \tau}$.  We then have $W \vdash \iconft {T}{\{{\bar M}^m\}}{S} \xarr{F}{d} \iconft{T}{\{{V_1}^m\}}{S}$, so we take $\bar M' = {V_1}$. %
\item [$\boldsymbol{\bar{N} = \allowed {\hat F} {N_t}{N_f}}$ and $\boldsymbol{W(T(m)) \klpreceq \hat F}$]

Here we have that $\bar{N}' = N_t$, $S=S'$, $T=T'$ and $P_2'=\emptyset$.  By \allowtypI, there exist $M_t$, $M_f$, $\hat{s}_t$, $\hat{s}_f$, $\hat{\tau}_t$ and $\hat{\tau}_f$ such that $\bar M = \allowed {\hat F} {M_t}{M_f}$ and $\typenv \byinov{\secmap{}}{} {M_t} \hookrightarrow {N_t} : \hat{s}_t, \hat{\tau}_t$, and $\typenv \byinov{\secmap{}}{} {M_f} \hookrightarrow {N_f} : \hat{s}_f, \hat{\tau}_f$, where $\bar s = \hat{s}_t \klpseudominus \hat F \klmeet \hat{s}_f$, $\tau_t \kleqtyp \tau_f$ and $\bar{\tau} = \hat{\tau}_t \klmeet \hat{\tau}_f$.  We take $\bar s' = \hat{s}_t$ and $\bar \tau' = \hat{\tau}_t$.   We then have $W \vdash \iconft {T}{\{{\bar M}^m\}}{S} \xarr{F}{d} \iconft{T}{\{{M_t}^m\}}{S}$, so we take $\bar M' = {M_t}$. %
\item [$\boldsymbol{\bar{N} = \threadanot k {\hat N} d {\dot s}}$]

Here we have $\bar{N}' = \nil$, $S=S'$, $P=\{\hat N^n\}$ for some thread name $n$, $T'(n)=d$ and $W(d) \klpreceq {\dot s}$.  By \migtypI, we have that there exists $\hat M$ such that $\bar M = \threadnat k {\hat M} d$ and $\typenv \byinov{\secmap{}}{} {\hat M} \hookrightarrow {\hat N} : \dot s,\unit$ and $\bar \tau = \unit$, and by \niltypI~we have that $\typenv \byinov{\secmap{}}{} {\nil} \hookrightarrow {\nil} : \kltop, \unit$.  %
Therefore, $W \vdash \iconft {T}{\{{\bar M}^m\}}{S} \xarr{F}{d} \iconft{T}{\{{\nil}^m,{\hat M}^n\}}{S}$, so we take $\bar M' = {\nil}$ and $M''=\hat M$. %
\end{description}

We then have that $W \semvdash \iconft{T}{\{M^m\}}{S} \xarr{F}{d} \iconft{T'}{\{M'^m\} \cup P_1'}{{S'}}$ with $M'=\cC{\bar E_1}{\bar M'}$.  
By Replacement (Lemma~\ref{app-prop-decleffect-confinement-repllemma}), we can conclude that 
$\typenv\byinov{\secmap{}}{} \cC{\bar E_1}{\bar M'} \hookrightarrow {\cC{\bar E_2}{\bar{N}'}} : s', {\tau}'$ for some $s'$, $\tau'$. %
Therefore, $\{M'^m\} B \{N'^m\}$, and if $P_2'=\{N''^n\}$, then $P_1'=\{M''^n\}$ and $\{M'^m,M''^n\} B \{N'^m,N''^n\}$.
\end{proof}

\subsubsection{Soundness}

\begin{thm}[Soundness of Enforcement Mechanism~III -- Theorem~\ref{prop-soundness-declassif}] \label{app-prop-soundness-declassif}
  Consider an allowed-policy mapping $W$, reference labeling $\secmap{}$, typing environment $\typenv$, and a thread configuration $\confd P T$ such that for all $M^{m} \in P$ there exist $\hat M$, $s$ and $\tau$ such that ${\typenv \byinov{\secmap{}}{} M \hookrightarrow \hat{M}: s,\tau}$ and ${W(T(m))} \klpreceq s$.  Then ${\confd {\hat P} T}$, formed by annotating the threads in $\confd P T$, is $(W,\secmap{},\typenv)$-confined.
\end{thm}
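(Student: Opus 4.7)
The plan is to exhibit the candidate set
\[
C = \{ \confd{\hat P}{T} \mid \forall \hat M^m \in \hat P,\ \exists M, s, \tau\ .\ \typenv \byinov{\secmap{}}{} M \hookrightarrow \hat M : s, \tau \text{ and } W(T(m)) \klpreceq s \}
\]
and prove that $C$ is a set of $(W,\secmap{},\typenv)$-confined thread configurations; since the hypothesis of the theorem places $\confd{\hat P}{T}$ in $C$, the claim will then follow by maximality of $\semconf^{W,\secmap{},\typenv}$. Fix $\confd{\hat P}{T} \in C$, a compatible store $S$, and a transition $W \vdash \iconft{T}{\hat P}{S} \xarr{F}{d} \iconft{T'}{\hat P'}{S'}$. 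By the compositionality rule of the semantics, a single thread $\hat M^m \in \hat P$ is responsible for the step, with $T(m) = d$, and the membership condition supplies $M$, $s$, $\tau$ such that $\typenv \byinov{\secmap{}}{} M \hookrightarrow \hat M : s, \tau$ and $W(d) \klpreceq s$.

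Two clauses of Definition~\ref{def-operationallyconf} then require verification. For $W(d) \klpreceq F$, I would show by induction on the evaluation context $\cE{E}$, where $\hat M = \cC{E}{\hat M_0}$ with $\hat M_0$ the redex, that a Replacement-style argument forces $s \klpreceq \extrf{\cE{E}} = F$: rule \flowtypI meets in each declared flow policy traversed by $\cE{E}$, and the other evaluation-context rules only meet in further sub-expression effects without discarding prior contributions. Combining with $W(d) \klpreceq s$ by transitivity yields $W(d) \klpreceq F$. For the resulting configuration to remain in $C$, I would invoke Preservation of Annotations (Proposition~\ref{prop-preservanot}) to obtain $M'$, $s'$, $\tau'$ with $\typenv \byinov{\secmap{}}{} M' \hookrightarrow \hat M' : s', \tau'$ and $s \klmeet W(d) \klpreceq s'$; since $W(d) \klpreceq s$ implies $s \klmeet W(d) = W(d)$, this gives $W(d) \klpreceq s'$, and with $T'(m) = d$ the continuation $\hat M'^m$ witnesses the membership predicate. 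If a new thread $\hat N^n$ is spawned, the same proposition supplies $M''$, $s''$ with $W(T'(n)) \klpreceq s''$ and $\typenv \byinov{\secmap{}}{} M'' \hookrightarrow \hat N : s'', \unit$; threads of $\hat P$ unaffected by the step retain their witnesses since $T'$ agrees with $T$ on them, and store compatibility of $S'$ is also supplied by the proposition.

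The main obstacle I anticipate is the Replacement-style bound $s \klpreceq \extrf{\cE{E}}$: the informative type system accumulates declassification effects through meet operations that can also, in rule \allowtypI, interact with pseudo-subtraction, so one must verify case by case that every flow policy declared syntactically in the evaluation context genuinely contributes to $s$ and is not cancelled by an intervening construct. Once this aggregation lemma is in place, the remainder is routine: reductions of redexes other than remote thread creation and value-unwrapping of a flow declaration emit $F = \kltop$, making $W(d) \klpreceq F$ immediate there, while the quantitative bound from Proposition~\ref{prop-preservanot} handles preservation of the membership predicate of $C$ uniformly across all redex cases.
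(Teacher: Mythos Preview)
Your proposal is correct and follows essentially the same strategy as the paper: exhibit the same candidate set $C$, establish $W(d)\klpreceq F$ by structural induction, and invoke Preservation of Annotations (Proposition~\ref{prop-preservanot}) for closure of $C$ under transitions. The paper phrases the induction as one on the typing derivation with case analysis on the last rule, whereas you factor out the intermediate bound $s \klpreceq \extrf{\cE{E}}$ and prove it by induction on $\cE{E}$; these are the same induction viewed from two angles.

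Two small points. First, your anticipated obstacle about \allowtypI{} and pseudo-subtraction does not arise: the allowed-condition is not an evaluation context, so $\cE{E}$ never traverses it, and along $\cE{E}$ only \flowtypI{} and pure meets occur---your aggregation lemma goes through without that complication. Second, the remark in your final paragraph that ``reductions of redexes other than remote thread creation and value-unwrapping of a flow declaration emit $F=\kltop$'' is not accurate: the label $F=\extrf{\cE{E}}$ is determined entirely by the surrounding context, independently of the redex (e.g.\ $\flow{F_1}{\deref a}$ emits $F=F_1$). This does not damage your argument, since the aggregation lemma already handles all cases uniformly and renders any per-redex case analysis unnecessary.
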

\begin{proof} 
Consider the following set:
\myexample{
C = \{ \confd{P}{T} ~|~ \forall {\hat M}^m \in P,~ \exists M,s,\tau ~.~ {\typenv \byinov{\secmap{}}{} M \hookrightarrow \hat M: s,\tau} \text{ and } W(T(m)) \klpreceq s\}   %
}
We show that $C$ is a set of $(W,\secmap{},\typenv)$-confined thread configurations.
As in the proof of Theorem~\ref{app-prop-confinementI-soundness}, if for a given $(W,\secmap{},\typenv)$-compatible store $S$ we have that there exist $P',T',S'$ such that $W \vdash \iconft{T}{P} S$ $\xarr{F}{d}$ $\iconft{T'} {P'} {S'}$, then, there is a thread $M^m$ such that $P = \{M^m\} \cup \bar P$ and $W \vdash \iconft{T}{\{M^{m}\}} S$ $\xarr{F}{d}$ $\iconft{T'} {\{M^{m}\} \cup \bar{P}'} {S'}$, with $P' = {\{M^{m}\} \cup \bar{P}'} \cup \bar P$ and $T(m)=d$. %
By induction on the inference of ${\typenv \byinov{\secmap{}}{} M \hookrightarrow \hat M: s,\tau}$, we prove that $W(d) \klpreceq F$ and there exists $M'$, $s'$ and $\tau'$ such that ${\typenv \byinov{\secmap{}}{} M' \hookrightarrow \hat M': s',\tau'}$ with $W(T'(m)) \klpreceq s'$.
Furthermore, if $\bar P =N^n$ for some expression $N$ and thread name $N$, then there exists $M'$, $s'$, $\tau'$ such that ${\typenv \byinov{\secmap{}}{} M' \hookrightarrow \hat M': s',\tau'}$  with $W(T'(m)) \klpreceq s$, and there exists $N$, $s''$ such that ${\typenv \byinov{\secmap{}}{} N \hookrightarrow \hat N: s'',\unit}$ with $W(T'(n)) \klpreceq s''$.

Notice that since we necessarily have $T'(m) = T(m) = d$, then typability of the threads that result from the transition step is guaranteed by Subject Reduction (Proposition~\ref{prop-preservanot}).  Also, by the same result, we have compliance of the new declassification effect $s'$ to the domain's allowed flow policy, for:  If $s \klmeet W(T(m)) \klpreceq s'$, and since $W(d) = W(T(m)) \klpreceq s$, then $W(T(m)) \klpreceq s'$.  Furthermore, if $\bar P = N^n$, then %
$W(T'(n)) \klpreceq s''$.

It remains to prove the conditions regarding the compliance of the declared flow policies to the current domain's allowed flow policy.
Assuming that ${\typenv \byinov{\secmap{}}{} M \hookrightarrow N:s, \tau}$ and $W(T(m)) \klpreceq s$ and by case analysis on the last rule in the corresponding typing proof:
\begin{description}
\item [\flowtypI]

Here $M = \flow {\bar F} {\bar{M}}$, $N = \flow {\bar F} {\bar{N}}$, and ${\typenv \byinov{\secmap{}}{} {\bar M} \hookrightarrow {\bar N} : \bar s, \tau}$, with 
$s = \bar s \klmeet \bar F$.  Then, we have that $W(d) \klpreceq \bar s$.  %
There are two cases to consider:
  \begin{description}
  \item [$\bar N$ can compute] Then %
$W \semvdash \iconft{T}{\{{\bar N}^m\}} S$ $\xarr{{\bar F}'}{d}$ $\iconft{T'} {\{{{\bar N}'^m}\} \cup P} {S'}$, with $F = \bar F \klmeet \bar F'$.  %
By induction hypothesis, then $W(T(m)) \klpreceq {\bar F}'$. %
Since $W(T(m))=W(d) \klpreceq {\bar F}$, then $W(T(m)) \klpreceq F$.  %
  \item [$\bar N \in \Val$] Then we have $F = \kltop$, so $W(T(m)) \klpreceq F$ holds vacuously. %
  \end{description}

\item [\allowtypI]
Here $N = {\allowed {\bar A} {\bar{N_t}} {\bar{N_f}}}$, and we have %
$F = \kltop$.  Therefore $W(T(m)) \klpreceq F$ holds vacuously.

\item [\migtypI]

In this case we have $M = \threadnat l {\bar{M}} {\bar d}$ and $N = \threadanot l {\bar{N}} {\bar d} {\bar s}$, with $\typenv \byinov{\secmap{}}{}$ ${\bar M} \hookrightarrow {\bar N} :$ $\bar s, \unit$.  Then we have $F = \kltop$, so $W(T(m)) \klpreceq F$ holds vacuously.
 Since $N$ can reduce, then $W(\bar d) = W(T'(n)) \klpreceq \bar s$.

\end{description}

The cases for \reftypI, \dereftypI, \assigntypI, \seqtypI, \apptypI~and \condtypI~are similar to \flowtypI, since for the cases where the sub-expressions are not all values, these sub-expressions are typed with a declassification effect $s'$ that is at least as restrictive as that $s'$ of the concluding judgment, i.e. $s \klpreceq s'$, and so since $W(d) \klpreceq s'$ then the induction hypothesis can be applied.  Furthermore, for the cases where the sub-expression is ready to reduce, the flow policy that decorates the transition is the same as $F$.  When the sub-expressions are all values (more precisely, those that are to be evaluated in the evaluation context provided by each of these constructs), then $F = \kltop$.
The case for \rectypI~is similar to \allowtypI, since the constructs are not evaluation contexts, and therefore the transition is decorated with the top flow policy $F = \kltop$.
\end{proof}

\subsubsection{Comparison of Enforcement Mechanisms~II and~III}

\begin{prop}[Proposition~\ref{prop-checkinform-comparison}] \label{app-prop-checkinform-comparison} \label{app-prop-equivalencetypsys}
  If for an allowed policy $A$ and type $\tau$ we have that ${\typenv \byund{A}{\secmap{}}{} M : \tau}$, then there exist $N$, $s$ and $\tau'$ such that ${\typenv \byinov{\secmap{}}{} M \hookrightarrow N: s, \tau'}$ with $A \klpreceq s$ and $\tau \klpreceq \tau'$.
\end{prop}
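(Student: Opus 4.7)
The plan is to proceed by structural induction on the derivation of $\typenv \byund{A}{\secmap{}}{} M : \tau$, with case analysis on the last typing rule used. To handle function types consistently, I would first note that the relation $\tau \klpreceq \tau'$ in the statement must be understood across the two systems: when $\tau = \tau_1 \xarr{A'}{} \tau_2$ comes from Mechanism II and $\tau' = \tau_1' \xarr{}{s'} \tau_2'$ comes from Mechanism III, the relation should enforce $\tau_1 \klpreceq \tau_1'$, $A' \klpreceq s'$ and $\tau_2 \klpreceq \tau_2'$; for non-function types it collapses to equality. Strengthening the induction hypothesis in this way makes it compatible with the checks performed by \apptypI, \condtypI and \allowtypI.

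The purely axiomatic cases (\niltyp, \boolttyp, \boolftyp, \loctyp, \vartyp) are immediate, as their informative counterparts give effect $\kltop$ and the same type, and $A \klpreceq \kltop$ holds trivially. For the compositional cases \seqtyp, \reftyp, \dereftyp, \assigntyp, \apptyp, \abstyp, \rectyp, \condtyp, I would apply the IH to each sub-derivation, observe that the additional premises of the corresponding informative rule (e.g.\ $\tau \klpreceqtyp \tau''$ in \apptypI, $\tau_t \kleqtyp \tau_f$ in \condtypI) follow from the IH since in Mechanism II both branches/positions share the same type up to latent effects, and combine the sub-effects using $\klmeet$. Since $A \klpreceq s_i$ for each component and $\klmeet$ is the greatest lower bound, $A \klpreceq \klmeet_i s_i$ follows. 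The \flowtyp case combines $A \klpreceq F$ (from the Mechanism II premise) with $A \klpreceq s$ (from the IH on the body) to get $A \klpreceq s \klmeet F$, which is the effect assigned by \flowtypI. The \migtyp case is straightforward: \migtypI produces effect $\kltop$, so $A \klpreceq \kltop$ trivially, and the IH supplies the declassification effect $s$ needed to form the annotation $\threadanot l {\hat M} d s$.

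The main obstacle is the \allowtyp case, which is precisely what motivates the pseudo-subtraction operation $\klpseudominus$ in the informative system. From the Mechanism II premises I obtain, via IH, effects $s_t$, $s_f$ with $A \klmeet F \klpreceq s_t$ and $A \klpreceq s_f$; the informative rule assigns to $\allowed F {N_t} {N_f}$ the effect $(s_t \klpseudominus F) \klmeet s_f$. Here I would invoke the defining property of the relative pseudo-complement recalled in Section~\ref{subsec-secsetting}: $s_t \klpseudominus F$ is the greatest $G$ with $G \klmeet F \klpreceq s_t$. Applied to $A$, this yields $A \klpreceq s_t \klpseudominus F$ directly from $A \klmeet F \klpreceq s_t$. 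Meeting with $A \klpreceq s_f$ then gives $A \klpreceq (s_t \klpseudominus F) \klmeet s_f$, as required. The type side condition $\tau_t \kleqtyp \tau_f$ is obtained from the two IH-derived types, which by the strengthened IH differ only in latent effects since both come from a single Mechanism II type $\tau$.

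Finally, one would conclude by re-assembling the conclusion: the derivation built from the informative rules produces some $N$, $s$ and $\tau'$ with $\typenv \byinov{\secmap{}}{} M \hookrightarrow N : s, \tau'$, and the inequalities $A \klpreceq s$ and $\tau \klpreceq \tau'$ have been maintained at every step of the induction. The counterexample in Equation~(\ref{ex-strictlymorepermissive}) then justifies that the converse fails, so the inclusion is strict, confirming that Enforcement Mechanism~III is strictly more permissive than Mechanism~II.
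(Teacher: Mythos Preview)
Your proposal is correct and follows essentially the same approach as the paper: induction on the Mechanism~II derivation with case analysis on the last rule, the $\klmeet$-combination argument for compositional constructs, and the pseudo-complement property of $\klpseudominus$ to discharge the \textsc{Allow} case. Your explicit treatment of the cross-system type relation $\tau \klpreceq \tau'$ is a useful clarification that the paper leaves implicit; note only that in the paper's $\klpreceqtyp$ the argument position of a function type is preserved exactly rather than up to $\klpreceq$, which is what the induction actually delivers (and what you need in \textsc{App}$_{\textsc{I}}$).
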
 
\begin{proof}
By induction on the inference of $\typenv\byund{A}{\secmap{}}{} M : {\tau}$, and by case analysis on the last rule used in this typing proof. 
\begin{description} 
{
\item [\niltyp]

Here $M = \nil$ and $\tau = \unit$.  We conclude using \niltypI, $N=\nil$, $s=\kltop$ and $\tau' = \unit$.

\item [\vartyp]

Here $M=x$ and $\tau=\typenv(x)$.  We conclude using \vartypI, $N=x$, $s=\kltop$ and $\tau' = \typenv(x)$.

\item [\abstyp]

Here $M = \lam x {\bar{M}}$ and ${\typenv,{x:\bar{\theta}} \byund{\bar A}{\secmap{}}{} \bar{M} : \bar \sigma}$ where ${\tau} = \bar{\theta} \xarr{}{\bar A} \bar{\sigma}$.  
By induction hypothesis, there exist $\bar N$, $\bar s$ and $\bar \sigma'$ such that $\typenv,{x:\bar{\theta}} \byinov{\secmap{}}{} {\bar{M}} \hookrightarrow {\bar{N}}: \bar s, \bar{\sigma}'$ and $\bar A \klpreceq \bar s$ with $\bar\sigma \klpreceq \bar\sigma'$.
We conclude using \abstypI, $N = \lam x {\bar{N}}$, $s = \kltop$, and $\tau' = \bar{\theta} \xarr{}{\bar s} \bar{\sigma}'$.

\item [\rectyp] 

Here $M = \fix x {\bar{X_1}}$, and $\typenv,{x:\tau} \byund{A}{\secmap{}}{} {\bar{X_1}} : \tau$.  
By induction hypothesis, there exist $\bar {X_2}$, $\bar s$ and $\bar \tau '$ such that $\typenv,{x:\tau} \byinov{\secmap{}}{} {\bar{X_1}}\hookrightarrow  \bar {X_2}: \bar s, \bar \tau'$ and $A \klpreceq \bar s$ with $\tau \klpreceq \bar \tau'$.  
We conclude using \rectypI, $N = {\fix x {\bar{X_2}}}$, $s = \bar s$, and $\tau' = \bar \tau'$.  %

\item [\condtyp]

Here $M = \cond {\bar{M}} {{\bar M}_t} {{\bar M}_f}$ and we have $\typenv \byund{A}{\secmap{}}{} {\bar{M}} : \bool$, $\typenv \byund{A}{\secmap{}}{} \bar {M_t} : \bar{\tau}$ and $\typenv \byund{A}{\secmap{}}{} \bar {M_f} : \bar{\tau}$ where $\bar \tau = \tau$.  
By induction hypothesis, there exist $\bar {N}$, $\bar {N_t}$, $\bar {N_f}$, $\bar s$, ${\bar s}_t$, ${\bar s}_f$, $\bar \tau_t$ and $\bar \tau_f$ such that $\typenv \byinov{\secmap{}}{} {\bar M} \hookrightarrow \bar N: \bar s, \bool$, $\typenv \byinov{\secmap{}}{} {{\bar M}_t} \hookrightarrow {\bar {N_t}}: {\bar s}_t,\bar {\tau_t}$ and $\typenv \byinov{\secmap{}}{} {{\bar M}_f} \hookrightarrow {\bar {N_f}} : \bar{s_f}, \bar {\tau_f}$ and $A \klpreceq \bar s$, $A \klpreceq {\bar s}_t$, $A \klpreceq {\bar s}_f$ with $\bar \tau \klpreceq \bar \tau_t$ and $\bar \tau \klpreceq \bar \tau_f$.
Therefore, $A \klpreceq \bar s \klmeet {{\bar s}_t}' \klmeet {{\bar s}_f}'$, and we conclude using rule \condtypI, $N=\cond {\bar{N}} {{\bar N}_t} {{\bar N}_f}$, $s = \bar s \klmeet {{\bar s}_t}' \klmeet {{\bar s}_f}'$ and $\tau' = \bar{\tau_t} \klmeet \bar{\tau_f}$.

\item [\reftyp]

Here $M = \rfr {\theta} {\bar M}$ and we have $\typenv \byund{A}{\secmap{}}{} \bar {M} : {\theta}$ where $\tau = \rfrt{{\theta}}{}$.  
By induction hypothesis, there exist $\bar {N}$, $\bar {s}$ and $\theta'$ such that $\typenv \byinov{\secmap{}}{} {\bar {M}} \hookrightarrow {\bar {N}}: \bar {s},{\theta}'$ and $A \klpreceq \bar {s}$ with ${\theta} \klpreceq \theta'$.
We conclude using rule \reftypI, $N=\rfr {}{\bar N}$, $s = \bar{s}$ and $\tau' = \rfrt{\theta}{}$.

\item [\dereftyp]

Here $M = \deref {\bar M}$ and we have $\typenv \byund{A}{\secmap{}}{} \bar {M} : \bar\tau$ with $\bar\tau = \rfrt{{\tau}}{}$.
By induction hypothesis, there exist $\bar {N}$, $\bar {s}$ and $\bar\tau'$ such that $\typenv \byinov{\secmap{}}{} {\bar {M}} \hookrightarrow {\bar {N}}: \bar {s},{\bar{\tau}'}$ and $A \klpreceq \bar {s}$ with ${\bar \tau} \klpreceq \bar\tau'$.  
Notice that $\bar \tau' = \rfrt{\tau}{}$.  %
We conclude using rule \dereftypI, $N=\deref {\bar N}$, $s = \bar{s}$ and $\tau' = \tau$.  %

\item [\assigntyp]

Here $M = \assign {\bar M_1} {\bar M_2}$ and we have $\typenv \byund{A}{\secmap{}}{} \bar {M_1} : \rfrt{\bar \theta}{}$ and $\typenv \byund{A}{\secmap{}}{} \bar {M_2} : \bar{\theta}$ where $\tau = \unit$.  
By induction hypothesis, there exist $\bar {N_1}$, $\bar {N_2}$, $\bar {s_1}$, $\bar {s_2}$, $\bar \tau'$, and $\bar \theta'$ such that $\typenv \byinov{\secmap{}}{} {\bar {M_1}} \hookrightarrow {\bar {N_1}}: \bar {s_1},\bar {\tau}'$ and $\typenv \byinov{\secmap{}}{} {\bar {M_2}} \hookrightarrow {\bar {N_2}} : \bar{s_2}, \bar {\theta}'$ and $A \klpreceq \bar {s_1}'$, $A \klpreceq \bar {s_2}'$ with $\rfrt{\bar \theta}{} \klpreceq \bar \tau'$ and $\bar{\theta} \klpreceq \bar{\theta}'$. 
Notice that $\bar \tau' = \rfrt{\bar \theta}{}$.  %
Therefore, $A \klpreceq \bar{s_1}' \klmeet \bar{s_2}'$, and we conclude using rule \apptypI, $N=\app {\bar N_1} {\bar N_2}$, $s = \bar{s_1}' \klmeet \bar{s_2}'$ and $\tau' = \unit$.  %

\item [\apptyp]

Here $M = \app {\bar M_1} {\bar M_2}$ and we have $\typenv \byund{A}{\secmap{}}{} \bar {M_1} : \bar{\theta} \xarr{}{A} \bar{\sigma}$ and $\typenv \byund{A}{\secmap{}}{} \bar {M_2} : \bar{\theta}$ where $\tau = \bar{\sigma}$.  
By induction hypothesis, there exist $\bar {N_1}$, $\bar {N_2}$, $\bar {s_1}$, $\bar {s_2}$, $\bar \tau'$, and $\bar \theta'$ such that $\typenv \byinov{\secmap{}}{} {\bar {M_1}} \hookrightarrow {\bar {N_1}}: \bar {s_1},\bar {\tau}'$ and $\typenv \byinov{\secmap{}}{} {\bar {M_2}} \hookrightarrow {\bar {N_2}} : \bar{s_2}, \bar {\theta}'$ and $A \klpreceq \bar {s_1}'$, $A \klpreceq \bar {s_2}'$ with $\bar{\theta} \xarr{}{A} \bar{\sigma} \klpreceq \bar \tau'$ and $\bar{\theta} \klpreceq \bar{\theta}'$.  Notice that $\bar \tau' = \bar{\theta} \xarr{}{A''} \bar{\sigma}''$, for some $A''$ and $\bar{\sigma}''$ such that $A \klpreceq A''$ and $\bar \sigma \klpreceq \bar{\sigma}''$.
Therefore, $A \klpreceq \bar{s_1}' \klmeet \bar{s_2}' \klmeet A''$, and we conclude using rule \apptypI, $N=\app {\bar N_1} {\bar N_2}$, $s = \bar{s_1}' \klmeet \bar{s_2}' \klmeet A''$ and $\tau' = \bar{\sigma}''$.    %
}
\item [\migtyp]

Here $M = \threadnat l {\bar M} {d}$ and we have that $\typenv \byund{\klbot}{\secmap{}}{} {\bar{M}} : \bar \tau$, with $\tau = \bar \tau =\unit$.  
By induction hypothesis,  there exist $\bar {N}$, $\bar s$, $\bar \tau'$ such that $\typenv \byinov{\secmap{}}{} {\bar{M}} \hookrightarrow {\bar N}: \bar s, \bar \tau'$ and $\klbot \klpreceq \bar s$ with $\bar \tau \klpreceq \bar \tau'$.  
We conclude using rule \migtypI, $N= \threadanot l {\bar{N}} d {\bar s}$, $s = \kltop$ and $\tau' = \unit$.

\item [\flowtyp]

Here $M = \flow {\bar F} {\bar{M}}$ and $\typenv \byund{A}{\secmap{}}{} {\bar{M}} : \bar \tau$, with  $A \klpreceq {\bar F}$ and $\tau = \bar \tau$.  By induction hypothesis, there exist $\bar {N}$, $\bar s$, $\bar \tau'$ such that $\typenv \byinov{\secmap{}}{} {\bar{M}} \hookrightarrow \bar N: \bar s, \bar \tau'$  and $A \klpreceq \bar s$ with $\bar \tau \klpreceq \bar \tau'$.  
Therefore $A \klpreceq \bar s \klmeet \bar F$ and we conclude using rule \flowtypI, $N= {\flow {\bar{F}} {\bar{N}}}$, $s = \bar s \klmeet \bar F$ and $\tau' = \bar\tau'$.

\item [\allowtyp]

Here $M = \allowed {\bar{F}} {{\bar M}_t} {{\bar M}_f}$ and we have $\typenv \byund{A \klmeet \bar F}{\secmap{}}{} \bar {M_t} : \bar{\tau}$ and $\typenv \byund{A}{\secmap{}}{} \bar {M_f} : \bar{\tau}$ where $\bar \tau = \tau$.  
By induction hypothesis, there exist $\bar {N_t}$, $\bar {N_f}$, ${\bar s}_t$, ${\bar s}_f$ and $\bar \tau'$ such that $\typenv \byinov{\secmap{}}{} {{\bar M}_t} \hookrightarrow {\bar {N_t}}: {\bar s}_t,\bar {\tau}'$ and $\typenv \byinov{\secmap{}}{} {{\bar M}_f} \hookrightarrow {\bar {N_f}} : \bar{s_f}, \bar {\tau}'$ and $A \klmeet \bar F \klpreceq {\bar s}_t'$, $A \klpreceq {\bar s}_f'$ with $\bar \tau \klpreceq \bar \tau'$.
Therefore, $A \klpreceq {{\bar s}_t}' \klpseudominus \bar F  \klmeet {{\bar s}_f}'$, and we conclude using rule \allowtypI, $N=\allowed {\bar{F}} {{\bar N}_t} {{\bar N}_f}$, $s = {{\bar s}_t}' \klpseudominus \bar F \klmeet {{\bar s}_f}'$ and $\tau' = \bar{\tau}'$.

\end{description}
{The proofs for the cases $\loctyp$, $\boolttyp$ and $\boolftyp$ are analogous to the one for $\niltyp$, while the proof for $\seqtyp$ is analogous to the one for $\condtyp$.}
\end{proof}

\end{document}